\date{}
\begin{document}
\title{Supersymmetric gauge theory and the Yangian}
\author{Kevin Costello}
\thanks{Partially supported by NSF grant DMS 1007168, by a Sloan Fellowship, and by a Simons Fellowship in Mathematics.}

\address{Department of Mathematics, Northwestern University.}
\email{costello@math.northwestern.edu}

\maketitle

\renewcommand{\L}{\mathscr{L}}

\section*{Introduction}
\newcommand{\F}{\mscr{F}}
\newcommand{\Dirac}{\slashed{\partial}}

In recent years, great progress has been made in performing exact calculations in the $N=2$ and $N=4$ supersymmetric gauge theories.  Exact calculations can be made in the $N=2$ theory by considering the topological twist and using the method of localization \cite{Nek02}.  Inspired by the AdS/CFT correspondence, an integrable structure has been discovered in the $N=4$ gauge theory, leading to many exact calculations \cite{AdaBulMas11, ArkBouCac10}.  However, the $N=1$ gauge theory has proved more difficult; it is not integrable, and does not admit a topological twist. 

The $N=1$ gauge theory does, however, admit a \emph{holomorphic} twist \cite{Joh94}. In this paper, I perform some exact calculations to all orders in perturbation theory (ignoring instanton effects) in a deformation of the $N=1$ gauge theory, using the holomorphic twist.   The twisted deformed theory is a mixture of holomorphic and topological, and is an $N=1$ version of a twisted $N=2$ theory studied by Kapustin \cite{Kap06}.  I show that this twisted, deformed gauge theory is controlled by the Yangian, in the same way that Chern-Simons theory is controlled by the quantum group.  

The main theorem is a rather abstract relationship between the Yangian and the field theory I consider.  However, this abstract theorem is used to give an exact calculation of the vacuum expectation values of a certain net of $n+m$ Wilson operators in the theory compactified on a $2$-torus.   This vacuum expectation value coincides with the partition function of a spin-chain integrable lattice model on an $n \times m$ doubly-periodic lattice.  More generally, the whole spin-chain integrable system is embedded into the Wilson operators of this twisted, deformed $N=1$ gauge theory: the Hilbert space of the spin-chain system correspond to the Hilbert space of the gauge theory in the presence of Wilson operators, and the transfer matrix of the spin-chain system corresponds to a Wilson operator acting on this Hilbert space.  Integrability is evident from the field theory point of view.  

This theorem is stated and proved in the framework of (rigorous) perturbative quantum field theory developed in \cite{Cos11} and \cite{CosGwi11}.  Thus, the expectation values of the Wilson operators are defined from first principles of perturbative QFT: they can be written as a sum of Feynman diagrams which at all orders in perturbation theory is finite but very large.  However, the proof of this result involves the computation of only two simple Feynman diagrams (which are needed to fix a normalization). Instead, the proof relies on formal associativity properties of the operator product developed in \cite{CosGwi11}. It is somewhat remarkable that these associativity properties are strong enough that they can be used to give exact calculations of Wilson operators. 

The appearance of the Yangian in this paper is quite different from the well-known appearance of the Yangian in $N=4$ supersymmetric gauge theory (see \cite{Fer11} for a survey).   In the $N=4$ theory, the Yangian of $\mf{su}(4 \mid 4)$ appears as a symmetry of the gauge theory for $SU(N)$ in the large $N$ limit.   Here, we find the Yangian associated to the gauge group, and not as a symmetry but as part of the algebra of operators.   

Another connection between the Yangian and gauge theory was discovered by Nekrasov and Shatashvili \cite{NekSha09}, and further developed by Nekrasov-Pestun \cite{NekPes12} and Maulik-Okounkov \cite{MauOko12}.  In these papers, it is shown that an $N=2$ quiver gauge theory associated to an ADE or affine ADE quiver is related to the Yangian for the ADE group. Again, this is quite different to what is shown in this paper: here we find the Yangian for the gauge group, not the group associated to the quiver. 

\subsection{}
Let me introduce some notation before stating the main results.  

Let $\g$ be a semi-simple Lie algebra.  The fields of the $N=1$ supersymmetric gauge theory with Lie algebra $\g$ consist of a connection $A \in \Omega^1(\R^4) \otimes \g$, and an adjoint valued spinor $\Psi \in \mscr{S} \otimes \g$, with action functional
$$
S(A,\Psi) = \int \ip{F(A),  \ast F(A)}_{\g}  + \int \ip{\Psi, \Dirac_A \Psi }_{\g}
$$
where $\ip{-,-}_{\g}$ refers to an invariant inner product on $\g$. 

We are interested in a deformation of this theory which adds to it a Chern-Simons term.  Let us choose a complex structure on $\R^4$, and let $z,w$ be complex coordinates.    Let 
$$
\op{CS}(A) = \tfrac{1}{2} \ip{A, \d A} + \tfrac{1}{6} \ip{A, F(A)}. 
$$
be the Chern-Simons three-form associated to the connection $A$.  Our deformed theory has action functional $S(A,\Psi) + S'(A)$, where 
$$
S'(A) = \frac{1}{2 \pi i}\int \d z \op{CS}(A) + c_{\alpha \beta} \int \Psi^\alpha_+ \Psi^{\beta}_- .
$$
Thus, a Chern-Simons Lagrangian is added and some of the fermions are given certain masses (the constants $c_{\alpha \beta}$ will be specified later).  The normalization of $\frac{1}{2 \pi i}$ is to match the standard normalizations in the theory of integrable systems. 

This deformation breaks Lorentz invariance, and breaks some but not all supersymmetry: the theory is invariant under one of the supercharges in $Q \in \mc S_+$.  I call this the Yangian deformation, for reasons that will become clear later.  

\begin{remark}
Integration by parts allows one to rewrite the bosonic part of this deformation as
$$
\int \d z CS(A) = - \frac{1}{2} \int z F(A) \wedge F(A). 
$$
Thus, we have an $N=1$ gauge theory with a $\theta$-angle which varies linearly on space time.  A similar deformation in $N=4$ supersymetric gauge theory was considered in \cite{GaiWit08}.  Gaiotto and Witten consider a theta-angle which varies in a more general way over space time; they prove that the deformation of $N=4$ gauge theory they consider is $\tfrac{1}{2}$ BPS, i.e.\ preserved by half of the supercharges of the theory.  In this paper it is shown that the $N=1$ deformation is at least $\tfrac{1}{4}$ BPS. Probably Gaiotto-Witten's argument applies to show that this $N=1$ deformation is in fact $\tfrac{1}{2}$ BPS. 
\end{remark}

\subsection{}
We are interested in a twist of this deformed $N=1$ gauge theory.  Because twists of $N=1$ theories are not often considered, I should say a little about what I mean by twisting.

The twisting procedure can be described by the following three steps.
\begin{enumerate}
\item Change the action of $\op{Spin}(4)$ on the fields of your theory, using a map from $\op{Spin}(4)$ to the $R$-symmetry group $G_R$.
\item Choose an $S^1$ in the $R$-symmetry group $G_R$, which is used to change the cohomological degrees of the theory. 
\item Choose a $\op{Spin}(4)$-invariant supercharge $Q$ with $[Q,Q] = 0$, which is of weight $1$ under the chosen $S^1 \subset G_R$.  By adding $Q$ to the BRST differential of our theory, we find a new theory, which is simper than the original theory but still contains a lot of information about it. For example, correlation functions between $Q$-invariant observables of the original theory are captured by the twisted theory. 
\end{enumerate}
This procedure does not work with $N=1$ supersymmetry: it is not possible to choose a nilpotent supercharge $Q$ whose stabilizer, in $\op{Spin}(4) \times G_R$, contains a copy of $\op{Spin}(4)$.     This is why twisted $N=1$ theories are rarely considered in the physics literature. 

Nevertheless, we can perform steps 2 and 3 of the twisting procedure. The resulting twisted field theory will not be $\op{Spin}(4)$-invariant, but will be invariant under one of the $SU(2)$'s in the decomposition $\op{Spin}(4) = SU(2) \times SU(2)$.  (The observant reader might also object to step $2$, because the $N=1$ gauge theory has an $R$-symmetry anomaly.  However, it turns out that the anomaly is exact for the supercharge $Q$, so that we can still twist).

\subsection{}
The Yangian deformation of the $N=1$ theory is still acted on by one of the supercharges $Q \in \mc S_+$, and by the $R$-symmetry group $\C^\times$.  Thus, we can consider the twist of the Yangian deformation.

The twist of the Yangian deformation of the $N=1$ gauge theory can be described as a mixture of holomorphic and topological Chern-Simons theory.  It is equivalent to the gauge theory on $\C_z \times \C_w$ where the fields are 
$$A = A_0 \d \zbar + A_1 \d \br{w}  + A_2 \d w$$
where $A_i$ are $\g$-valued smooth functions on $\C_z \times \C_w$, with action functional 
$$
S(A) = \frac{1}{2 \pi i} \int \d z CS(A)
$$
where $CS(A)$ is the Chern-Simons three-form as above. The field $A$ should be interpreted as a partial connection on $\C_z \times \C_w$, giving a $\dbar$-operator in the $z$-direction and a full connection in the $w$-direction.   This theory has the property that solutions to the equations of motion are holomorphic bundles on $\C^2$ (with coordinates $z,w$), together with a flat holomorphic connection in the $w$-direction.      Thus, the theory is holomorphic in the $z$-direction and topological in the $w$-direction.   

One derives this expression for the twisted theory as follows.  Before we twist, there are $7 \op{dim} \g$ independent bosonic fields and $4 \op{dim} \g$ independent fermionic fields. The bosonic components are the three $\g$-valued components of $B \in \Omega^2_+ \otimes \g$ and the four $\g$-valued components of $A \in \Omega^1$; the fermionic ones are the two $\g$-valued components of $\psi_+ \in \mc{S}_+ \otimes \g$ and of $\psi_- \in \mc{S}_- \otimes \g$.  In the twisting procedure, we consider $Q$-closed observables modulo $Q$-exact observables. This amounts to taking the $Q$-cohomology of the space of fields; when we do this, field cancel out in pairs, so we are left with $3$ bosonic components.  These components correspond to the partial connection 
$$A = A_0 \d w + A_1 \d \zbar + A_2 \d \br{w}.$$

In a similar way, I prove that the undeformed, twisted gauge theory is the holomorphic BF theory, with fields $A' \in \Omega^{0,1}(\C^2) \otimes \g$ and $B \in \Omega^{2,0}(\C^2) \otimes \g$, with action functional $\int \ip{B, F^{0,2}(A')}$.  Under the deformation, the field $B$ becomes the $\d w$ component of the partial connection $A$ above. 

\begin{theorem*}
The twisted, deformed $N=1$ gauge theory admits a unique perturbative quantization (compatible with certain symmetries) on any complex surface $X$ equipped with a holomorphic volume form and a closed holomorphic $1$-form.
\end{theorem*}
This is proved in the formalism of \cite{Cos11}, by a cohomological analaysis. 

The case we have discussed above is $\C \times \C$ with volume form $\d z \wedge \d w$ and $1$-form $\d z$.  When the $1$-form is zero, we recover the twisted, undeformed $N=1$ gauge theory.   Thus, this theorem applies to the twisted undeformed gauge theory. 

\subsection{Representations of the Yangian and Wilson operators}
 We are interested in the observables (or operators) of this theory, and the structure of the operator product.   Our main theorem says that the Yangian encodes the structure of the operator product of the twist of the Yangian deformation  of the $N=1$ supersymmetric gauge theory.   The most powerful version of this statement is a little abstract, so I will start by explaining a concrete corollary.

Consider the $N=1$ theory on $\C \times \C^\times$, where $\C$ has coordinate $z$ and $\C^\times$ has coordinate $w$.   Then, we can consider as above the Yangian deformation, involving the Chern-Simons term $\frac{1}{2 \pi i}\d z \op{CS}(A)$, and the twist of this deformation.    The twisted theory is holomorphic in the $z$-direction and topological in the $w$-direction.

We are interested in circle operators or observables in this theory, supported on the circles where $z$ is fixed at $z_0$ and $\abs{w} = r$.  We will call the space of such observables $\Obs( z_0 \times \{\abs{w} = r\})$, where $\Obs$ is short for observables\footnote{Strictly speaking, the theory of factorization algebras only assigns cochain complexes to open subsets.  In the introduction we will be more informal, and talk at various points about observables supported on closed subsets.  Precise definitions of what I mean by observables on closed subsets are given in the body of the paper.}.

The BRST/BV differential makes $\Obs(z_0 \times \{\abs{w} =r \})$ into a cochain complex; the space of physically relevant observables is $H^0$ of this cochain complex.  This cochain complex forms part of a rich structure called a \emph{factorization algebra}, which I will describe later.  For now, we are only interested in a small part of this structure, namely the operator product on the spaces $\Obs(z_0 \times \{\abs{w} = r\})$ for fixed $z_0$. 

 The operator product in the $w$-direction gives the complex $\Obs(z_0 \times \{\abs{w} = r\})$ the structure of an associative algebra.   Let me explain this briefly.  The fact that our theory is topological in the $w$-direction tells us that this space of operators is independent of $r$.  Then, if $\alpha,\beta \in \Obs( z_0 \times \{\abs{w} = r\})$ are such operators, we define the product $\alpha \cdot \beta$ as follows.  View $\alpha$ as an operator on $z_0 \times\{\abs{w} = r\}$, and $\beta$ is an operator on $z_0 \times \{\abs{w} = r'\}$ where $r' > r$.  Then, the operator product of $\alpha$ and $\beta$, as $r' \to r$, gives a new operator in $\Obs(z_0 \times \{\abs{w} = r\})$, which represents the product $\alpha \cdot \beta$. 

The first theorem I will state identifies this associative algebra with an associative algebra constructed from the Yangian.   Let $C(\g)$ be the space of $\C[[\hbar]]$-linear maps
$$
l : Y(\g) \to \C[[\hbar]]
$$
satisfying
$$
l ([a,b] ) = 0
$$
for all $a,b \in Y(\g)$.   The coproduct on the Yangian makes this space $C(\g)$ into an associative sub-algebra of the dual of the Yangian, $Y^\ast(\g)$. 
\begin{theorem*}
\label{theorem_circle_observables}
There is an isomorphism of associative algebra
$$
H^0 (\Obs( z_0 \times \{\abs{w} = r\} )) \iso C(\g) 
$$
between $C(\g)$ and the space of physical circle operators in the twisted, deformed $N=1$ gauge theory. 
\end{theorem*}
As a corollary, we find the following.
\begin{corollary*}
\begin{enumerate}
\item Every finite dimensional representation $V$ of the Yangian gives a circle operator 
$$
\chi_V \in H^0(\Obs(z_0 \times \{\abs{w} = r\} )) .
$$
\item Tensor product of such finite dimensional representations (using the Hopf algebra structure on the Yangian) corresponds to the associative operator product on $H^0(\Obs(z_0 \times \{\abs{w} = r\} ))$:
$$
\chi_{V \otimes W} = \chi_V \cdot \chi_W. 
$$ 
\end{enumerate}
\end{corollary*}
\begin{proof}
If $V$ is a finite-dimensional representation of $Y(\g)$, then taking trace in $V$ gives a linear map $\op{Tr}_V : Y(\g) \to \C[[\hbar]]$ satisfying $\op{Tr}([a,b]) = 0$. Thus, $\op{Tr}_V$ is an element of the algebra $C(\g)$, which the theorem identifies with the algebra of circle operators.  The second statement in the corollary is immediate: the tensor product of $Y(\g)$-modules and the product on $C(\g)$ are both defined using the coproduct on $Y(\g)$. 
\end{proof}
\begin{remarks}
\begin{enumerate}
\item
 I use a completed version of the Yangian, which quantizes $U(\g[[z]])$ rather than $U(\g[z])$ as is more usual.   Finite-dimensional representations of this completed Yangian are representations of the ordinary one where, modulo $\hbar$, $z^k \g$ acts trivially for sufficiently large $k$.  Also, by finite-dimensional I mean free of finite rank as $\C[[\hbar]]$-modules. 

\item As usual, one expects that observables (or operators) for the twisted theory correspond to $Q$-closed observables of the physical theory, modulo $Q$-exact observables.  (More precisely, one expects to find a spectral sequence starting with observables of the untwisted theory, with their BRST differential, and converging to observables of the twisted theory). 

At the classical level, the circle operator associated above to a representation of $\g$ (viewed as a representation of the classical limit of the Yangian $U(\g[[z]])$) correspond to ordinary Wilson operators: they come from the holonomy of a connection in the $w$-plane.  In the deformed, untwisted theory, this connection is given by adding to the connection $A$ a one-form which is one of the components of $B$ times $\d w$.  This combined connection, when restricted to the $w$-plane, is invariant under our chosen supercharge.  (This is in contrast to the undeformed untwisted theory, which has no supersymmetric Wilson operators). 

One thing that is missing from the story is a rigorous proof that the deformation of the $N=1$ theory (including these supersymmetric Wilson operators) exists at the quantum level. In this paper I only prove existence at the quantum level after twisting.  Si Li has constructed the untwisted $N=1$ super-symmetric gauge theory using the techniques of \cite{Cos11, CosGwi11}.  Hopefully, his approach can be generalized to include the deformation of the $N=1$ gauge theory we use, thus providing a rigorous proof that the Yangian computes information about the operator products in the untwisted theory. 

\item The algebra $C(\g)$ is the $\C[[\hbar]]$-linear dual to the $0^{th}$ Hochschild homology of $Y(\g)$.  The entire differential graded algebra $\Obs(z_0 \times\{ \abs{w} =r\})$ is isomorphic to the dual of the whole Hochschild chain complex of $Y(\g)$. 
\end{enumerate}
\end{remarks}

\subsection{Wilson loops and lattice models}
The Yangian plays a key role in the theory of integrable lattice models: for instance, $Y(\mf{sl}_2)$ controls the $XXX$ model, and Yangians for other semi-simple Lie algebras play the same role in more  general spin-chain models.   It turns out that the partition functions of spin-chain lattice models coincide with expectation values of certain Wilson operators for the twisted, deformed $N=1$ gauge theory.  

We have seen that the twisted, deformed theory can be defined at the quantum level on any complex surface $X$ with a holomorphic volume form and a closed holomorphic $1$-form.  Note that a closed holomorphic $1$-form is the same as a holomorphic vector field preserving the holomorphic symplectic form. 

It turns out that the theory can be defined more generally.  Suppose that $X$ is a complex surface with a reduced effective divisor $D$ and a trivialization of $K_X(2 D)$.  Thus, we have a meromorphic volume form on $X$ with a quadratic pole along $D$, and no other poles or zeroes. Suppose further that we have a holomorphic vector field $V$ on $X$, which preserves $D$ and the meromorphic volume form.  Then, we can consider a variant of the twisted, deformed $N=1$ theory where the solutions to the equations of motion are holomorphic $G$-bundles on $X$, trivialized along $D$, and equipped with a holomorphic (and therefore flat) connection in the direction given by the vector field $V$.  This variant of our theory has a unique perturbative quantization which respects certain symmetries. 

The example we are interested in is $\mbb{P}^1 \times E$, where $E$ is an elliptic curve. Let $z,w$ denote coordinates on $\mbb{P}^1$ and $E$, respectively.  The divisor we consider is $\infty \times E$; the holomorphic volume form is $\d z \wedge \d w$, which has quadratic poles along the divisor; and the vector field we consider is $\tfrac{\partial}{\partial w}$.    One can check in this case that there are no non-trivial stable solutions to the equations of motion.  Thus, in this case, perturbative calculations are exact.  

Away from $\infty$, the theory on $\mbb{P}^1_z \times E_w$ coincides with the compactification of the theory on $\C_z \times \C_w$ along the elliptic curve $E_w$. 

Given any finite-dimensional representation $V$ of the Yangian $Y(\g)$, there is a corresponding $2$-dimensional integrable lattice model, discussed for example in \cite{JimMiw94}. We have seen that representations of the Yangian give rise to Wilson operators.  We will now explain how expectation values of certain Wilson operators coincide with partition functions of the corresponding integrable lattice model.

Consider the theory on $\mbb{P}^1_z \times E_w$, with divisor $\infty \times E$ as above.  For every representation $V$ of the Yangian, let 
$\chi_V^a(z)$ be the Wilson operator associated to $V$ placed on the circle $z \times S^1_a$ where $S^1_a \subset E_w$ is an $a$-cycle. Similarly, let $\chi_V^b(z)$ be the operator placed on a $b$-cycle.  (It doesn't matter which cycle we use in a given isotopy class). 

Then, we can consider the expectation value
$$
\ip{\chi_V^{a_1}(z) \dots \chi_V^{a_m}(z) \chi_V^{b_1}(0) \dots \chi_V^{b_n}(0)  }_{\mbb{P}^1 \times E}
$$
where $a_1,\dots, a_m$ are disjoint $a$-cycles, $b_1,\dots, b_m$ are disjoint $b$-cycles, and $z \in \mbb{P}^1 \setminus \{0,\infty\}$. 

This expectation value coincides with the partition function of a certain two-dimensional integrable lattice model, which I will now describe.  Let $V$ be a representation of the Yangian.  Choose a basis for $V$. A state of the model is a way of labeling every edge of a lattice by a basis vector in $V$. 

There is an interaction at every vertex, which is determined by the $R$-matrix of the Yangian. This is an element
$$R \in Y(\g) \otimes Y(\g) [z^{-1}].$$
More properties of the $R$-matrix will be discussed later.  For now, all we need to know is that if $V$ is a representation of the Yangian, $R$ gives a $z$-dependent endomorphism $R_{V\otimes V}$ of $V\otimes V$. The matrix entries of $R_{V\otimes V}$ provide the interaction at the vertex in the lattice model. The parameter $z$ is called the spectral parameter of the model.

The simplest case of this construction is when $\g = \sl _2$ and $V$ is the fundamental representation of $\sl _2$, turned into a representation of the Yangian using the evaluation homomorphism.  This leads to the $XXX$ model; see \cite{JimMiw94}.  In this case, the $R$-matrix is the matrix 
$$
1 + \frac{\hbar c}{z} \in \op{End}(\C^2 \otimes \C^2)
$$
where $c \in \sl_2 \otimes \sl_2$ is the quadratic Casimir. 

Let us denote the partition function of this integrable lattice model on a doubly periodic $m\times n$ lattice by $Z( m,n; V ,z) $.  As usual, the partition function is defined to be the sum over all states, where a state is weighted by the product over all vertices of the interaction at that vertex.

\begin{theorem*}
We have
$$
\ip{\chi_V^{a_1}(z) \dots \chi_V^{a_m}(z) \chi_V^{b_1}(0) \dots \chi_V^{b_n}(0)  }_{\mbb{P}^1 \times E} = Z( m,n; V ,z)  \in \C[[\hbar]]. 
$$
\end{theorem*}

\begin{remark}
\begin{enumerate}
\item The normalization $\frac{1}{2 \pi i} \d zCS(A)$ in our Lagrangian appears so that the coordinate $z$ is the usual spectral parameter for the integrable lattice model.  Different normalizations are related by a linear change of coordinate on the $z$-plane.  
\item Our expectation value is computed in perturbation theory, which is why we find an answer in $\C[[\hbar]]$.  The integrable system, however, will give us an answer which is a polynomial rather than a power series in $\hbar$.   Thus, this theorem implies that the expectation value is a polynomial in $\hbar$. 

I believe that there are no instanton corrections to these expectation values.  If this is so, then this is an exact non-perturbative calculation.  
\end{enumerate}
\end{remark}

\subsection{ The Hilbert space and the transfer matrix.}
Let me now explain how other aspects of of theory of integrable  systems can be seen from the point of view of $4$-dimensional gauge theory.  We will see that the Hilbert space of our theory on $\mbb{P}^1 \times E$, with Wilson operators inserted, gives us the Hilbert space of the integrable lattice model. Wilson operators will give rise to the transfer matrix; commutativity of the transfer matrix (i.e.\ integrability) is a formal consequence of the fact that our theory is a mixture of topological and holomorphic. 

Before I explain this, I should remark that discussion of the Hilbert space lies a bit outside the language of factorization algebras \cite{CosGwi11} in which the main results of this paper are proved.  Factorization algebras only know about observables, or operators, of the theory.  The statements I will explain about the Hilbert space will be justified later by a positing that a version of the state-operator correspondence holds. 

Thus, let us consider our theory on $\mbb{P}^1 \times S^1$, where, as above, we use a divisor $D = \infty$ on $\mbb{P}^1$.   Let $V$ be a representation of the Yangian, and let $\chi_V$ be the corresponding Wilson operator. 
\begin{lemma*}
The Hilbert space of our theory on $\mbb{P}^1 \times S^1$ with insertions of the Wilson operator $\chi_V$ at $n$ distinct points in the circle $0 \times S^1$ is isomorphic to $V^{\otimes n}$.
\end{lemma*} 
In the absence of Wilson operators on the boundary, all observables on $\mbb{P}^1 \times S^1 \times [0,1]$ will act on the Hilbert space. If we include such Wilson operators, however, the algebra of observables is modified along $0 \times S^1 \times [0,1]$.  Observables which are supported away from $0 \in \mbb{P}^1$ will continue to act.  In particular, the Wilson operator $\chi_V(-z)$ placed on the circle $z \times S^1 \times \{\tfrac{1}{2}\}$ for any $z \in \C^\times$ will act on the Hilbert space $V^{\otimes n}$.   
\begin{theorem*}
The linear map
$$
\chi_V(z) : V^{\otimes n} \to V^{\otimes n}
$$
is the transfer matrix $T(z)$ of the spin-chain integrable system.  
\end{theorem*}
Note that this is consistent with the theorem above stating that the expectation value of a net of Wilson operators is the partition function of the spin-chain system: the partition function is the trace of $T(z)^m$.

Next, we need to show that the transfer matrices for different values of $z$ commute.  This will show that the lattice model is integrable. 

We can put the operator $\chi_V(z)$ on any of the parallel circles $z \times S^1 \times a$ for $a \in (0,1)$. Because our theory is topological in the $S^1 \times (0,1)$ factor, these operators are all the same.  We will let $\chi_V(z,a)$ denote the operator viewed as being on the circle $z \times S^1 \times a$. 

For another $z' \in \C^\times$, the composition $T(z) T(z')$ of the two transfer matrices is represented by the Wilson operator associated to the representation $V$ but placed on the disjoint union of two circles 
$$
\{z \times S^1 \times a\} \amalg \{z' \times S^1 \times a'\}
$$
with $a' < a \in (0,1)$.  The composition $T(z') T(z)$ has a similar representation, except that we take $a < a'$.  Since the two circles are disjoint, we can slide one past the other to conclude that 
$$
T(z) T(z') = T(z') T(z). 
$$

This proof of commutativity of the transfer matrix does not rely at all on the details of the model we are considering. Rather, this commutativity will hold any time we have a $4$-dimensional field theory which is topological in two directions and holomorphic in the other two.

In fact, in this situation, we always find a local transfer matrix,  i.e. one built up from interactions at a single vertex in the lattice. We can see this immediately by cutting $\mbb{P}^1 \times S^1 \times I$, with its $n$ vertical and one horizontal Wilson operator, into several copies of $\mbb{P}^1 \times I \times I$, each of which contains one vertical and one horizontal Wilson operator.   The Segal-style axioms of field theory guarantee that the transfer matrix is a product of local contributions coming from such square regions.

\subsection{Integrability in two dimensions}
We have seen that our twisted, deformed $N=1$ gauge theory leads to an integrable lattice model.  In this subsection I will explain how, upon compactification to two dimensions, it also leads to an integrable two-dimensional holomorphic field theory: i.e. to a vertex algebra with a maximal set of commuting operators.  The two appearances of integrability are closely related.   The results explained in this subsection only hold for $\g = \sl_n$ (but probably an expert on the Yangian would be able to lift this restriction). 

To understand this integrability, we need to consider our field theory compactified on a torus.  Thus, we let $E$ be an elliptic curve, and we work on $\C_z \times E$, where $w$ is a local coordinate on the elliptic curve $E$.  We use the one-form $\d z$ when defining the Chern-Simons term in the deformation of the $N=1$ gauge theory; so that when we twist, the theory is holomorphic in the $z$-direction  topological in the direction of the elliptic curve $E$.  After twisting, the theory does not depend on the complex structure on $E$. 

We will, as before, let $\Obs(z_0 \times E)$ denote the cochain complex of observables of the twisted deformed theory on $z_0 \times E$.  Because our theory is holomorphic in the $z$-direction, this cochain complex has a structure like that of a vertex algebra\footnote{Strictly speaking, we find an analog of axioms of a vertex algebra based on holomorphic functions rather than Laurent series.   More details are given in the body of the paper.}.   Let us choose generators $a$ and $b$ for $\pi_1(E)$, represented by circles $S^1_a$ and $S^1_b$ in $E$.  We have a sub-cochain complex 
$$
\Obs(z_0 \times S^1_a) \subset \Obs(z_0 \times E)
$$
of observables on the $a$-circle.  We have seen how to describe $H^0 ( \Obs(z_0 \times S^1_a))$ in terms of the Yangian.

Recall that a quantum-mechanical system is completely integrable if there is a maximal commutative sub-algebra of the algebra of observables.  We will find a similar structure: inside the vertex algebra $H^0 ( \Obs( z_0 \times E))$ we will find that $H^0(\Obs(z_0 \times S^1_a))$ is a maximal commutative sub-vertex algebra. (A vertex algebra is commutative if all operator products are non-singular). 

\begin{theorem*}
The subspace 
$$
H^0 ( \Obs(z_0 \times S^1_a)) \subset H^0 ( \Obs (z_0 \times E))
$$
is closed under the operator product in the $z$-direction. 

Further, the operator product between two elements 
$\alpha \in H^0 ( \Obs(z_0 \times S^1_a))$ and $\beta \in H^0(\Obs(z_1 \times S^1_a))$ is non-singular as $z_1$ approaches $z_0$, so that $H^0(\Obs(z_0 \times S^1_a))$ is a commutative sub-vertex algebra of $H^0(\Obs(z_0 \times E))$.  

Finally, if $\g = \sl_n$, $H^0(\Obs(z_0 \times S^1_a))$ is a maximal commutative sub-vertex algebra. 
\end{theorem*}
\begin{proof}
Let me sketch the proof of commutativity now (the argument is very similar to the argument for commutativity of the transfer matrix sketched above).  Full details are presented in the body of the text, where this argument is formalized using the language of factorization algebras.  Let $\alpha \in H^0(\Obs(z_0 \times S^1_a))$, and let $\beta \in H^0(\Obs^q(z_1 \times S^1_a))$.  The operator product expansion is an asymptotic expansion for the operator $\alpha \cdot \beta$ as $z_1$ approaches $z_0$.  

Let $S^1_{a'}$ be a circle obtained by displacing $S^1_a$ a small amount; so $S^1_{a}$ and $S^1_{a'}$ don't intersect. Since our theory is topological in the $w$-direction, the space of operators on $z_0 \times S^1_{a}$ and on $z_0 \times S^1_{a'}$ are the same, so we may as well view $\beta$ as an operator on $z_1 \times S^1_{a'}$.  Then, $\lim_{z_1 \to z_0} \alpha \cdot \beta$ has no singularities, because the operators $\alpha$ and $\beta$ have disjoint support even if we set $z_0 = z_1$.  
\end{proof}
\begin{remark}
\begin{enumerate}
\item We have seen that $H^0 ( \Obs(z_0 \times S^1))$ is the algebra $C(\g)$ of linear maps $l : Y(\g) \to \C[[\hbar]]$ satisfying $l([a,b]) = 0$.  Probably the algebra $C(\g)$ should be called the ``Bethe sub-algebra of the dual Yangian''.  
\item If $\g = \sl_n$, the algebra $C(\sl_n)$ has a concrete description.  We can apply any invariant polynomial on $\sl_n$ to the holonomy around the circle of a field configuration; this gives elements $P_2,\dots,P_n \in H^0 ( \Obs(z_0 \times S^1))$, where $P_i$ is the invariant polynomial $\op{Tr} A^i$.  Although the description in terms of holonomy is classical, I show that these observables can be defined at the quantum level also.   We can also take the derivative of these observables in the $z$-direction: classically, this corresponds to taking the derivative of the holonomy around the circle $z \times S^1$ in $z$.  Then, $C(\sl_n)$ is freely generated by $\partial_z^k P_i$, for $k = 0,\dots,\infty$ and $i = 2,\dots, n$.  
\item The restriction to type $A$ arises because I don't know how to describe $C(\g)$ explicitly for arbitrary $\g$: in particular, I don't know that it's ``big enough''.  Chervov \cite{ChervovMathOverflow} suggests that for arbitrary semi-simple $\g$, $C(\g)$ can be described in terms of invariant polynomials on $\g$ in the same way as in type $A$.  This would prove integrability for arbitrary semi-simple $\g$.  
\end{enumerate}
\end{remark}

\subsection{Factorization algebras}
All of these results follow from a sequence of abstract results phrased in the language of factorization algebras.  Factorization algebras are a language for talking about the operator products in quantum field theories. Here, I will give a brief introduction to the idea: a more thorough discussion is given later,  and many more details are available in the book \cite{CosGwi11}.  

\begin{definition*}
A prefactorization algebra $\F$ on $\R^n$ is an assignment to every open subset $U \subset \R^n$ a cochain complex $\F(U)$; and to every inclusion $U_1 \amalg \dots \amalg U_n \into V$ of disjoint open set $U_i$ into $V$, a cochain map
$$
\F(U_1) \otimes \dots \otimes  \F(U_n) \to \F(V),
$$ 
which is independent of the ordering chosen on the set $U_i$, and which satisfies the following associativity condition.  Let $U_i$ and $V_j$ be finite collections of open subsets of $\R^n$, where $U_i \cap U_{i'} = \emptyset$ and $V_{j} \cap V_{j'} = \emptyset$. Let $W$ be another open subset of $X$. Suppose that
$$
\amalg U_i \subset \amalg V_j \subset W.
$$
Then the following diagram commutes:
$$
\xymatrix{
\otimes_{i} \F(U_i) \ar[r] \ar[dr] & \otimes_{j} \F(V_j) \ar[d] \\
& \F(W) ,
}
$$
where the arrows are the structure maps of the prefactorization algebra. 

A factorization algebra is a prefactorization algebra satisfying a certain ``descent'' condition, saying that $\F(V)$ for any open subset $V$ is determined from the value of $\F$ on a sufficiently fine open cover of $V$. (The descent condition will not be used in this paper).
\end{definition*}
The cochain complex $\F(V)$ should be thought of as being the observables supported on the subset $V \subset \R^n$: that is, the measurements we can make in a field theory which only depend on the behavior of the fields on $V$.   

Let me explain the relationship between factorization algebras and the operator product expansion.  Suppose we have a translation-invariant factorization algebra $\F$ on $\R^n$, corresponding to a translation-invariant quantum field theory.  We are mainly interested in the cochain complexes $\F(D(x,r))$ associated to the disc of radius $r$ around $x \in \R^n$ (by the descent axiom, this encodes all the information about the factorization algebra).  The theory we are interested in is translation-invariant, so that we are given an isomorphism between $\F(D(x,r))$ and $\F(D(x',r))$ for any other $x' \in \R^n$; so that we can use the notation $\F_r = \F(D(x,r))$.  Then, the structure of a factorization algebra gives us product maps
$$
m_{x_1,x_2} : \F_{r_1} \times \F_{r_2} \to \F_{r_3}
$$
for every $x_1,x_2 \in D(0,r_3)$ in the disc of radius $r_3$ about the origin, with the property that $D(x_1,r_1)$ and $D(x_2,r_2)$ are disjoint and contained in $D(0,r_3)$.  This product varies smoothly with the positions of $x_1$ and $x_2$, and satisfies an associativity axiom which is a special case of the associativity axiom described above.   

The products $m_{x_1,x_2}$ encode the same data as the familiar operator product expansion.  The operator product expansion (when it exists\footnote{I don't know in general that such an asymptotic expansion exists, except for theories of a holomorphic nature.}) is an asymptotic expansion of $m_{x_1,x_2}$ as $x_1,x_2$ approach each other and the radii $r_1,r_2$ tend to zero.  It seems to me that the actual operator product -- encoded by the factorization algebra -- is more flexible and easier to work with than the operator product expansion.

\subsection{}  
The only factorization algebra we will consider in this paper is the factorization algebra on $\C_z \times \C_w$ of observables of the twisted, deformed $N=1$ gauge theory.  We will denote this factorization algebra by $\Obs$. 

\renewcommand{\F}{\Obs}
This is a factorization algebra on $\C_z \times \C_w$, which behaves like a vertex algebra in the $z$-direction and like the algebra of observables of a topological field theory in the $w$-direction.  Let us denote by $\F_{z = 0}$ the restriction of $\F$ to a factorization algebra on the complex line $z = 0$.   This means we consider observables of our theory which are supported on $z = 0$, and consider the operator products in the $w$-directions.

Because our theory is topological in the $w$-direction, the factorization algebra $\F_{z = 0}$ behaves like the observables of a topological field theory.  A naive analysis suggests that the observables of a topological field theory are simply a commutative algebra.  Indeed, the operator product does not depend on the distance between points, or on the direction in which points collide.  A more careful analysis, however, shows that the observables of a TFT on $\R^n$ form an $E_n$ algebra: this is an algebra which is only ``partially commutative''.  (More details on $E_n$ algebras are given in section \ref{e_2_hopf}).  This extra structure appears because of the subtle way in which the operator product depends on the direction in which points come together.  The operator product is independent of this direction, up to $Q$-exact terms ($Q$ denoting the BRST differential on observables).    Taking account of the way in which things are made $Q$-exact leads to this richer structure.    A theorem of Lurie \cite{Lur12} makes this intuition precise: he shows that factorization algebras on $\R^n$ satisfying an additional ``locally-constant'' axiom are the same as $E_n$ algebras. 

Thus, we see that the factorization algebra for the twist of the Yangian deformation, when restricted to the line $z = 0$, is an $E_2$ algebra.   A well-known result of Tamarkin \cite{Tam03b} shows that there is a close relationship between $E_2$ algebras and Hopf algebras: the Koszul dual of an $E_2$ algebra is a Hopf algebra.   

Let me briefly try to explain this result. If $A$ is a commutative algebra, then the category $A-\op{mod}$ of left $A$-modules is a symmetric monoidal category, with monoidal structure given by tensoring over $A$.  An $E_2$ algebra is partially commutative: for instance, if $A$ is an $E_2$ algebra, there is an isomorphism $A \iso A^{op}$, so that we can identify left and right $A$-modules.  It turns out that an $E_2$ algebra $A$ has just enough commutativity so that the dg category $A-\op{mod}$ of left $A$-modules is a monoidal category, with monoidal structure defined by $M \otimes_A N$ (where we view $N$ as a right $A$-module).  This monoidal structure is not, in general, symmetric.  

This suggests already that $E_2$ algebras and Hopf algebras are closely related: if $B$ is a Hopf algebra, then the category $B-\op{mod}$ of left $B$-modules is monoidal.  The key feature of Hopf algebras, however, is that the forgetful functor $F: B-\op{mod} \to \op{Vect}$ (sending a $B$-module to its underlying vector space) is monoidal.  The Tannakian formalism tells us that we can recover the Hopf algebra $B$ from the category $B-\op{mod}$ together with the monoidal functor $F$.

If $A$ is an $E_2$ algebra, the monoidal category $A-\op{mod}$ is not automatically equipped with a monoidal functor to the category $\op{dgVect}$ of cochain complexes.  We can get such a functor, however, if we choose an augmentation on $A$: this is a homomorphism of $E_2$ algebras from $A$ to the ground ring $\C$.   If $\phi : A \to \C$ is an augmentation, we get a monoidal functor $F : A-\op{mod} \to \op{dgVect}$ which sends a module $M$ to
$$
F(M) = M \otimes_A \C. 
$$The Koszul dual Hopf algebra $B$ to $A$ is characterized by the fact that there is an equivalence of monoidal dg categories $B-\op{mod} \simeq A-\op{mod}$, compatible with monoidal functors to $\op{dgVect}$.   Concretely, as an algebra, $B = \R \op{Hom}_A(\C, \C)$. 

The first main result of this paper is the following.
\begin{theorem*}
The Koszul dual of the $E_2$ algebra $\F_{z = 0}$ is the Yangian Hopf algebra $Y(\g)$. 

In particular, if $\op{Fin}(Y(\g))$ denotes the monoidal dg category of finite-rank $Y(\g)$-modules, and $\op{Perf}(\F_{z = 0})$ denotes the monoidal dg category of perfect (e.g. free of finite rank) $\F_{z=0}$-modules, then there is a quasi-equivalence of dg categories
$$
\op{Fin}(Y(\g)) \simeq \op{Perf}(\F_{z = 0}). 
$$
\end{theorem*}
\begin{remark}
For technical reasons, in the body of the paper we work mostly with a ``dual'' formulation of Koszul duality. This dual formulation associates a Hopf algebra $A^!$ to an $E_2$ algebra, in such a way that there is an equivalence of monoidal categories between $A$-modules and $A^!$-comodules. The formulation explained above amounts to replacing $A^!$ by its linear dual, which is a topological Hopf algebra.   The topological nature of $Y(\g)$ means the correct definition of $\op{Fin}(Y(\g))$ is not completely obvious; we need to use a certain category of complexes of topological modules with finite-rank cohomology. Details are given in the body of the paper. 
\end{remark}

It will be important later that we understand the augmentation in field-theoretic terms.  Our theory can be defined on $\mbb{P}^1_z \times \C_w$, where the bundles that appear are trivialized at $\infty$ in $\mbb{P}^1$.   Thus, for any $D \subset \C_w$ we have a cochain complex $\Obs(\mbb{P}^1 \times D)$.  The factorization algebra structure makes this into an $E_2$ algebra which we call $\Obs_{\mbb{P}^1}$.  The following lemma will be easy to verify.
\begin{lemma*}
There is a quasi-isomorphism of $E_2$ algebras
$$
\Obs_{\mbb{P}^1} \simeq \C[[\hbar]].
$$
\end{lemma*}
The factorization algebra structure gives us a natural map of $E_2$ algebras
$$
\Obs_{z = 0} \to \Obs_{\mbb{P}^1} \simeq \C[[\hbar]].
$$ 
This map is the augmentation we use to construct the Koszul dual to $\Obs_{z = 0}$. 

Every map of $E_2$ algebras leads to a functor of the corresponding monoidal categories.  The statement that $\Obs_{\mbb{P}^1}$ gives us the augmentation of $\Obs_{z = 0}$ tells us that the following diagram of monoidal categories commutes:
$$
\xymatrix{
\op{Perf}( \Obs_{z = 0} ) \ar[r]^{\simeq} \ar[d] & \op{Fin}(Y(\g)) \ar[d] \\
\op{Perf}( \Obs_{\mbb{P}^1}) \ar[r]^{\simeq} & \op{Perf}(\C[[\hbar]]).
}
$$
The bottom right arrow is the forgetful functor from the category of $Y(\g)$-modules to the category of dg $\C[[\hbar]]$-modules of finite rank. 

\subsection{}
The Koszul duality theorem stated above allows us to calculate the spaces of Wilson operators and surface operators in terms of the representation theory of the Yangian.  Consider our theory on $\C_z \times \C^\times_w$, as before. We can thus consider the space $\Obs(0 \times S^1_w)$ of observables on the circle $S^1_w =\{ \abs{w} = 1\}$.  As we vary the radius of the circle, we can view this space as part of a factorization algebra on $\R_{> 0}$, obtained from pushing forward the factorization algebra on $\C^\times$ to $\R_{> 0}$ along the radial projection.  Locally constant factorization algebras on $\R_{> 0}$ are the same as associative algebras. 

Abstract results about factorization algebras (proved by Lurie and Francis) imply that this space is the Hochschild homology of the $E_2$ algebra $\F_{z = 0}$.  Further, the Hochschild homology of any $E_2$ algebra is an associative algebra, and the associative product is the one described in the previous paragraph.  

It turns out that Hochschild homology plays well with Koszul duality (this result is due, in various different contexts, to John Francis, Jonathan Campbell, and Takuo Matsuoko; a version of this statement relevant to this context is proved here).  Thus, we find an isomorphism of associative algebras
$$
HH_\ast (\F_{z = 0} ) = H^\ast ( \Obs (0 \times S^1) ) = HH_\ast (Y(\g))^\vee 
$$
where on the right hand side we take the linear dual.  If $Y(\g)-\op{mod}$ refers to the category of $Y(\g)$-modules which are of finite rank over $\C[[\hbar]]$, we also find an isomorphism
$$
H^\ast (\Obs (0 \times S^1 )) = HH_\ast ( Y(\g)-\op{mod} ) . 
$$
The monoidal structure on $Y(\g)-\op{mod}$ induces an associative structure on $HH(Y(\g)-\op{mod})$, and this is an isomorphism of associative algebras.  Every $Y(\g)$-module has a character, which is an element of $HH_\ast(Y(\g)-\op{mod})$; this gives an alternative and equivalent way to associate an element of $H^\ast(\Obs(0 \times S^1))$ to a representation of the Yangian.

\subsection{}
There is a similar description of the space of surface operators, which will be essential in the proof that the operator product of Wilson operators coincides with the partition function of the integrable lattice model. 

So far, we have seen that the vertex algebra of surface operators $H^0 ( \Obs ( z_0 \times E))$ is completely integrable, but we have not given a more detailed description of this algebra.  In this section I will explain how to calculate $H^\ast(\Obs(z_0 \times E))$ from the Yangian. In the next section we will see that the operator product on $H^\ast(\Obs(z_0 \times E))$ is encoded by the $R$-matrix of the Yangian. We have already seen one version of this statement in the relationship between Wilson loops and integrable lattice models. 

Let $HC_\ast(Y(\g))$ denote the Hochschild chain complex of the Yangian, and $HC_\ast(Y(\g))^{\vee}$ its $\C[[\hbar]]$-linear dual.  Because the Yangian is a coalgebra, and the functor of taking Hochschild chains is symmetric monoidal (in a homotopical sense), the chain complex $HC_\ast(Y(\g))^{\vee}$ has the structure of a differential graded algebra.  Thus, we can form the Hochschild homology of the dg algebra $HC_\ast(Y(\g))^{\vee}$. 
\begin{proposition*}
There is an isomorphism of graded $\C[[\hbar]]$-modules 
$$
H^\ast ( \Obs(z_0 \times E)) \iso HH_\ast ( HC_\ast(Y(\g))^{\vee} ).
$$
\end{proposition*}
Another formulation of this result is as follows.  Let $\op{Fin}(Y(\g))$ denote the dg category consisting of those dg modules over $Y(\g)$ whose cohomology is finitely generated as a $\C[[\hbar]]$-module.    Then, $\op{Fin}(Y(\g))$ is a monoidal dg category, so its Hochschild chain complex is a homotopy-associative algebra.
\begin{proposition*}
There is an isomorphism\footnote{The statement of this proposition is a little imprecise: the precise statement involves taking Hochschild homology in a certain category of filtered cochain complexes.  I will ignore all such irritating technicalities throughout the introduction. } of graded $\C[[\hbar]]$-modules 
$$
H^\ast ( \Obs(z_0 \times E)) \iso HH_\ast ( HC_\ast(\op{Fin}(Y(\g))) .
$$
\end{proposition*}

\subsection{The $R$-matrix and the operator product in the $z$-direction}
So far, we have understood that the operator product of our theory in the $w$-direction encodes the Yangian as a Hopf algebra.  In this section we will see that we can understand the operator product in the $z$-direction using another structure on the Yangian: Drinfeld's universal $R$-matrix. 

The universal $R$-matrix is an element\footnote{The $R$-matrix is usually considered as a Laurent series in $\lambda^{-1}$, but because we are using a completed version of the Yangian we can view it as a Laurent series in $\lambda$. }
$$
R (\lambda) \in Y(\g) \otimes Y(\g)((\lambda))
$$
satisfying various identities detailed in Section 12 of \cite{Dri87} and in  Theorem 12.5.1 of \cite{ChaPre95}. I will explain a categorical interpretation of the $R$-matrix.  

Recall that $Y(\g)$ quantizes $U(\g[[z]])$.  Sending $z \mapsto z + \lambda$ gives an embedding $U(\g[[z]]) \into U(\g[[z]])((\lambda))$.  This embedding quantizes to an embedding of Hopf algebras
$$
T_\lambda : Y(\g) \into Y(\g)((\lambda)).
$$
We will let $T_0 : Y(\g) \into Y(\g)\otimes \C((\lambda))$ denote the obvious embedding, sending $\alpha \mapsto \alpha \otimes 1$.  

If $V$ is a $Y(\g)$-module, we will let 
$$
T_\lambda(V) = Y(\g)((\lambda)) \otimes_{Y(\g)} V
$$
where the tensor product is taken using the homomorphism $T_\lambda : Y(\g) \to Y(\g)((\lambda))$.   We let $T_0 (V) = V((\lambda))$, i.e.\ the extension of $V$ to a $Y(\g)((\lambda))$-module using the homomorphism $T_0 : Y(\g) \to Y(\g)((\lambda))$.   

Define a functor 
\begin{align*}
F_R : \op{Fin}(Y(\g)) \times \op{Fin}(Y(\g)) &\to \op{Fin}(Y(\g)((\lambda)) )\\
V \times W & \mapsto T_0(V) \otimes T_\lambda(W).
\end{align*}
(Here $\op{Fin}(Y(\g)((\lambda)) )$ denotes the category of $Y(\g)((\lambda))$-modules which are of finite rank over $\C((\lambda))[[\hbar]]$: this is a monoidal category over $\C((\lambda))$). 

\begin{theorem*}
The universal $R$-matrix $R(\lambda)$ encodes the natural isomorphism and coherences making the functor $F_R$ monoidal. 
\end{theorem*}
The converse is also true: we can construct an $R$-matrix from any natural isomorphism making the functor $m_\lambda$ monoidal.  Equations satisfied by the $R$-matrix correspond to the coherence conditions that a monoidal functor must satisfy. 

This functor will encode the operator product in the $z$-direction.  
\begin{theorem*}
The operator product expansion in the $z$-direction gives rise to a map of $E_2$ algebras
$$
\Obs_{z = 0} \otimes \Obs_{z = 0} \to \Obs_{z = 0} ((\lambda))
$$
(where $\lambda$ is a coordinate on $\C_z$).  

This homomorphism of $E_2$ algebras induces a monoidal functor
$$
F_{OPE} : \op{Perf}(\Obs_{z = 0} ) \times \op{Perf}( \Obs_{z = 0} )\to \op{Perf} ( \Obs_{z = 0} ((\lambda)) ).
$$ 
The following diagram of monoidal categories and functors commutes:
$$
\xymatrix{
\op{Perf}(\Obs_{z = 0} ) \times \op{Perf}( \Obs_{z = 0} )\ar[r]^{F_{OPE}} \ar[d]_{\simeq} & \op{Perf} ( \Obs_{z = 0} ((\lambda)) ) \ar[d]_{\simeq} \\
\op{Fin}(Y(\g)) \times \op{Fin}(Y(\g))  \ar[r]^{F_R} &  \op{Fin}(Y(\g)((\lambda)) )
}
$$
The vertical functors are the equivalences discussed earlier. 
\end{theorem*}
Since the $R$-matrix encodes the bottom horizontal functor, we see that the OPE in the $z$-direction of our theory is completely encoded by the $R$-matrix. The proof of this result uses Drinfeld's theorem that the $R$-matrix is unique. 

This result allows us to calculate the operator product expansion on surface operators.  We have seen that there is an isomorphism
$$
H^\ast(\Obs(0 \times E)) = HH_\ast^{(2)} (\op{Alg}(\op{Fin}(Y(\g))) ).
$$
Higher Hochschild homology of a $2$-category is functorial, so that applying the functor $F_R$ gives us a map
$$
HH_\ast^{(2)} (\op{Alg}(\op{Fin}(Y(\g))) ) \otimes HH_\ast^{(2)} (\op{Alg}(\op{Fin}(Y(\g))) ) \to HH_\ast^{(2)} (\op{Alg}(\op{Fin}(Y(\g))) ) ((\lambda)).
$$
This map is the operator product in the $z$-direction of surface operators.

\subsection{Higher-categorical interpretation of surface operators}

The monoidal category $\op{Fin}(Y(\g))$ can be viewed as a $2$-category with one object, whose $1$-morphisms are objects of  $\op{Fin}(Y(\g))$.  Let us call this $2$-category $\op{Fin}^{(2)}(Y(\g))$. 

Ayala-Rozenblyum \cite{AyaRoz13} and Morrison-Walker \cite{MorWal10} have developed a theory of higher Hochschild homology for $2$-categories. I will use the theory developed by Ayala-Rozenblyum, as Morrison-Walker's theory only applies to pivotal two-categories.  For the $2$-category associated to a monoidal category,  the Ayala-Rozenblyum construction gives the iterated Hochschild homology of the monoidal category discussed above.  Let $HH^{(2)} (\op{Fin}^{(2)} (Y(\g)))$ be the Ayala-Rozenblyum higher Hochschild homology applied to the $2$-category $\op{Fin}^{(2)} (Y(\g))$. We thus have an isomorphism
$$
HH^{(2)} ( \op{Fin}^{(2)} (Y(\g)) ) \simeq H^\ast (\Obs (z_0 \times E)).
$$
I will use this interpretation in our proof of the relationship between Wilson operators and lattice models, which I will now sketch.

\subsection{Sketch of the proof of the relationship between lattice models and Wilson operators}
The relationship between the operator product of Wilson operators and tensor product of modules for the Yangian allows us to reduce the calculation of the expectation value of a net of Wilson operators on $a$- and $b$-cycles to the case when we have one Wilson operator $\chi_{V^{\otimes n}}(z, S^1_a)$ on the $a$-cycle of $z \times E_w$, and another Wilson operator $\chi_{V^{\otimes m}}(0, S^1_b)$ on the $b$-cycle of $0 \times E_w$.  We thus need to prove the following.
\begin{proposition*}
For any two representations $V,W$ of the Yangian, we have
$$
\ip{\chi_V(z, S^1_a), \chi_W(0, S^1_b}_{\mbb{P}^1 \times E} = \op{Tr}_{V \otimes W} (R_{V,W}(z)) 
$$
is the trace of the $R$-matrix
$$
R_{V,W} (z) : V \otimes W \to V \otimes W. 
$$
\end{proposition*}
The proof of this relies on Ayala-Rozenblyum's model for higher Hochschild homology.  In their model, applied to the two-category $\op{Fin}^{(2)} (Y(\g))$ associated to the monoidal category $\op{Fin}(Y(\g))$, a $0$-chain arises from the following data.
\begin{enumerate}
\item A $1$-cell complex $\Gamma$ embedded in the two-torus $T^2$, with certain framing data. In particular, every $1$-cell is oriented.
\item Every component of $T^2 \setminus \Gamma$ is labeled by an object of the $2$-category. In this case, there is only one object. 
\item Every $1$-cell in the graph is labeled by a $1$-morphism, which in this case is an object of $\op{Fin}(Y(\g))$.  The direction the one-morphism goes is determined by orientation on the $1$-cell.  
\item Every $0$-cell is labeled by a $2$-morphism, relating the tensor product of the $1$-morphisms on the incoming and outgoing $1$-cells attached to this $0$-cell. 
\end{enumerate}
In particular, suppose we have an $a$-cycle $S^1 \subset T^2$. We can label the complement of this circle by the trivial $Y(\g)$-algebra $\C[[\hbar]]$, and label the circle by a representation $V$ of $Y(\g)$.  This gives a zero-cycle in $HH^{(2)} (\op{Fin}^{(2)} (Y(\g))$.  This zero-cycle represents the Wilson operator we discussed earlier.  Similarly, we can label a $b$-cycle by a representation of the Yangian, leading to a Wilson operator. 

We have seen that the operator product in the $z$-direction leads to a monoidal functor
$$
F_{OPE} : \op{Fin}(Y(\g)) \times \op{Fin}(Y(\g)) \to \op{Fin}(Y(\g)((z)) ),
$$ 
and that this functor is encoded by the $R$-matrix. Applying this functor to the Ayala-Rozenblyum higher Hochschild homology gives a map
$$
HH^{(2)}(F_{OPE}) : HH^{(2)} ( \op{Fin}^{(2)}(Y(\g)) ) \otimes  HH^{(2)} ( \op{Fin}^{(2)}(Y(\g) )) \to  HH^{(2)} ( \op{Fin}^{(2)}(Y(\g) ) ) ((z)).
$$
This map is the operator product expansion for surface operators. 

In particular, we can apply $HH^{(2)}(F_{OPE})$ to Wilson operators on $a$ and $b$-cycles, to get an element
$$
HH^{(2)}(F_{OPE}) (\chi_V(z, S^1_a) \otimes \chi_W(0, S^1_b) ) \in HH^{(2)}(  \op{Fin}^{(2)} (Y(\g))  ((z)).
$$
A small calculation shows that this element is represented by the union of $S^1_a$ and $S^1_b$, with the following labels: on $a$-cycle we put the representation $T_0(V)$ of the Yangian $Y(\g)((z))$. On the $b$-cycle we put the representation $T_z(W)$.  On the intersection, we put the $R$-matrix, which gives a map of $Y(\g)((z))$-modules
$$
\sigma \circ R : T_0(V) \otimes T_z(W) \to T_z(W) \otimes T_0(V).
$$
(Here $\sigma$ is the isomorphism which flips the factors).  The isomorphism $\sigma \circ R$ can be interpreted as a $2$-morphism in the $2$-category associated to $\op{Fin}(Y(\g))$.

This nearly completes the proof that the expectation values of Wilson operators coincides with the partition function of lattice models.   So far, we have computed the operator product expansion of Wilson operators. To compute the partition function on $\mbb{P}^1 \times E$, we need to apply the map
$$
H^\ast (\Obs (0 \times E)) \to H^\ast (\Obs ( \mbb{P}^1 \times E) ) = \C[[\hbar]].
$$
This map corresponds to the forgetful monoidal functor
$$
\op{Fin}(Y(\g)) \to \op{Fin}(\C[[\hbar]])
$$
to the category of finite-rank $\C[[\hbar]]$-modules. 

Thus, we find that the expectation value $\ip{\chi_V(z, S^1_a), \chi_W(0, S^1_b}_{\mbb{P}^1 \times E}$ coincides with a class in $HH^{(2)} (\op{Fin}^{(2)}(\C[[\hbar]]))((z))$ described by putting $V$ on the $a$-cycle, $W$ on the $b$-cycle, and the map
$$
\sigma \circ R : V \otimes W \to W \otimes V ((z))
$$
on the intersection.  

Now,
$$
HH^{(2)} (\op{Fin}^{(2)} (\C[[\hbar]]))((z)) = \C[[\hbar]]((z))
$$
and it is easy to compute that this element is $\op{Tr} (R_{V \otimes W})$. 

\subsection{Extended TFTs and the Hilbert space}
In this subsection I will sketch a derivation of the statements about the Hilbert space and the transfer matrix presented earlier, from the abstract results I have explained so far.

The formalism of factorization algebras doesn't, in general,  know about the Hilbert space: it only knows about the space of operators or observables.  However, in some cases, one can reconstruct the Hilbert space from the space of operators.  A famous example is in $2$-dimensional conformal field theory, where the state-operator correspondence tells us that the Hilbert space for a circle is the space of operators on a disc bounding that circle.  Thus, the factorization algebra associated to a $2$-dimensional conformal field theory knows about the Hilbert space.

More generally, in Segal's axiomatic approach to quantum field theory \cite{Seg99a, Seg04} he defines the space of observables on a Riemannian ball as being the Hilbert space associated to the sphere bounding that ball.  (Observables on a point are obtained as a limit as the radius of the sphere tends to zero). 

As a warm up for our construction, let us consider a $2$-dimensional holomorphic field theory defined on Riemann surfaces equipped with a nowhere-vanishing holomorphic $1$-form. In this case, we would expect that a Hilbert space assigned to any germ of an annulus equipped with a holomorphic $1$-form, and these Hilbert spaces will all be different. 

The state-operator correspondence would only be expected to hold for the case of a holomorphic $1$-form which extends to a nowhere-vanishing holomorphic $1$-form on the disc. That is, we should be able to identify the Hilbert space associated to $(S^1, \d z)$ with the space of operators for the disc $(\abs{z} \le 1, \d z)$. 

Let us now consider how this applies in our example. We will view our twisted, deformed $N=1$ gauge theory as a holomorphic $2$-dimensional field theory valued in topological $2$-dimensional field theories.  Thus, we will think of our theory as a twice-categorified $2$-dimensional holomorphic field theory.  That is, we expect to find something like a $2$-dimensional holomorphic field theory where the Hilbert space is a $2$-category instead of a vector space.  In the holomorphic direction, the theory is defined on Riemann surfaces equipped with a non-vanishing holomorphic $1$-form (or more generally, a meromorphic $1$-form with quadratic poles and no zeroes). 

Thus, we would expect that the Hilbert $2$-category assigned to the circle $(S^1, \d z)$ is associated to observables on the disc $(\abs{z} \le 1, \d z)$.  This space of observables forms an $E_2$ algebra, as we discussed earlier: and this $E_2$ algebra is Koszul dual to the Yangian. 

$E_2$ algebras (and Hopf algebras) are closely related to $2$-categories.  In particular, we have a $2$-category $\op{Fin}^{(2)} (Y(\g))$ with one object, and whose $1$-morphisms are representations of the Yangian of finite rank over $\C[[\hbar]]$. 

We thus make the following definition:
\begin{definition*}
The Hilbert $2$-category associated to the circle $(\abs{z} = \eps, \d z)$, where $\eps$ is an infinitesimally small parameter, is the $2$-category $\op{Fin}^{(2)}(Y(\g))$. 
\end{definition*}
We need to use an infinitesimally small circle because the Yangian is the object associated to a formal disc around $0$. We could also view this Hilbert $2$-category as being associated to the formal punctured disc.  

The reader who prefers to work with circles of non-infinitesimal radius can modify this definition as follows.  The general theory of factorization algebras produces an $E_2$ algebra $\Obs_{D_r}$ associated to the disc $D_r \subset \C_z$ of radius $r$.  Taking $r \to 0$ produces the $E_2$ algebra Koszul dual to the Yangian.  The larger $E_2$ algebra associated to a disc of positive radius has a monoidal category of modules which contains $\op{Fin}(Y(\g))$ as a full dg monoidal subcategory.  This larger monoidal category is what should be associated to a circle of positive radius. Similar remarks hold for other constructions in this subsection. 

We would expect $2$-dimensional holomorphic cobordisms to represent various $2$-functors.  In our story, we are allowed to consider Riemann surfaces with a meromorphic $1$-form with quadratic poles and no zeroes.  

We have seen that the $E_2$ algebra of observables on $(\mbb{P}^1, \d z)$ is the trivial $E_2$ algebra $\C[[\hbar]]$, and that the map
$$
\Obs_{z = 0} \to \Obs_{\mbb{P}^1} \simeq \C[[\hbar]]
$$
corresponds to the forgetful monoidal functor
$$
\op{Fin}(Y(\g)) \to \op{Fin}(\C[[\hbar]]). 
$$
This suggests the following:
\begin{definition*}
The $2$-dimensional cobordism $(\mbb{P}^1 \setminus \{\abs{z} < \eps\}, \d z)$ is associated to the forgetful functor of $2$-categories
$$
\op{Fin}^{(2)}(Y(\g)) \to \op{Fin}^{(2)}(\C[[\hbar]]). 
$$
\end{definition*}
Again, we view $\eps$ as being an infinitesimally small parameter. 

We have seen that the operator product expansion in the $z$-direction on the algebra of observables is encoded by the $R$-matrix. More precisely, the $R$-matrix gives rise to a monoidal functor
$$
F_R : \op{Fin}(Y(\g)) \times \op{Fin}(Y(\g)) \to \op{Fin}(Y(\g)((\lambda))) 
$$
which encodes the OPE.  

Because $F_R$ is a monoidal functor, it can be viewed as a functor of $2$-categories, which I call $F_R^{(2)}$.  We thus arrive at the following picture:
\begin{definition*}
Let $\lambda$ be a point in the formal punctured disc.  Then, the $2$-dimensional cobordism
$$
\{ \abs{z- \lambda} \ge \eps, \abs{z} \ge \eps, \abs{z} \le \eps' \}
$$
(where, again, $\eps,\eps'$ should be interpreted as being very near $0$) is associated to the $2$-functor $F_R^{(2)}$. 
\end{definition*}

Formal arguments using the cobordism hypothesis of Lurie, or the Ayala-Rozenblyum \cite{AyaRoz13} theory of composition homology, allows us to calculate from these definitions the Hilbert space associated to various other configurations.  

The $2$-category $\op{Fin}^{(2)}(Y(\g))$ has one object.  There is a $1$-category associated to the configuration $(\abs{z} = \eps, \d z) \times [0,1]$, where the interval is in the topological direction, and the two ends of the interval are labeled by the unique object of $\op{Fin}^{(2)} (Y(\g))$.  This $1$-category is the category $\op{Fin}(Y(\g))$.  The monoidal structure on this category arises from composition of the $1$-dimensional cobordism $[0,1]$ with itself. 
We want to include Wilson operators into the field theory. By analogy with Chern-Simons theory, we make the following definition. \begin{definition*}
The cobordism $(\abs{z} < \eps, \d z) \times [0,1]$, with the boundary of a Wilson operator $V$ placed at a point in the interval $0 \times [0,1]$, is associated to the object $V \in \op{Fin}(Y(\g))$. 
\end{definition*}

These definitions are enough to derive the following.
\begin{lemma*}
The Hilbert space associated to $\mbb{P}^1 \times S^1$, with the boundaries of $n$ Wilson operators on points in $0 \times S^1$, each associated to the representation $V$, is $V^{\otimes n}$.
\end{lemma*}
\begin{proof}
This follows immediately by decomposing $\mbb{P}^1 \times S^1$ into the various elementary pieces described above.
\end{proof}
\begin{lemma*}
Consider the $\mbb{P}^1 \times S^1 \times [0,1]$, with $n$ Wilson lines placed on the intervals $0 \times \{\theta = 2 \pi k / n\} \times [0,1]$, and a single Wilson operator placed on the circle $z \times S^1 times \tfrac{1}{2}$.  All the Wilson operators are associated to the representation $V$.

Then, the linear map $V^{\otimes n} \to V^{\otimes n}$ assigned to this configuration is the transfer matrix of the integrable system associated to the representation $V$ of the Yangian.
\end{lemma*}
\begin{proof}
Again, this follows formally from decomposing this configuration into elementary pieces. The main point is that the transfer matrix is defined as a trace of the composition of $n$ copies of the $R$-matrix acting on $V \otimes V$, where the trace and composition are defined over the first factor.  The $R$-matrix arises, as we have seen above, from the holomorphic pair of pants. 
\end{proof}

\subsection{Generalizations}
Let us call the twisted, deformed $N=1$ gauge theory the Yangian theory.  Then, the Yangian theory makes sense for every dg Lie algebra with an invariant pairing, and more generally for every $L_\infty$ algebra with an invariant pairing.  The construction of the Yangian theory at the quantum level also works in this generality. 
It turns out that the Yangian theory for various different Lie algebras can be interpreted in terms of twisted $N=2$ and $4$ gauge theories.
\begin{claim}
There is a twist of the $N=2$ pure gauge theory with Lie algebra $\g$, which gives the Yangian theory for $\g \oplus \g^\ast$.  There is a twist of the $N=4$ pure gauge theory which gives the Yangian theory for $\g[\eps,\delta]$ where $\eps$ is of cohomological degree $-1$ and $\delta$ is of cohomological degree $1$.  

Note that this is \emph{just a twist}: no deformation is needed when there is at least $N=2$ supersymmetry. 
\end{claim}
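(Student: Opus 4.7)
The plan is to argue multiplet by multiplet, using the same Q-cohomology calculation that was used earlier to identify the holomorphic twist of the $N=1$ theory with the partial-connection theory on $\C_z \times \C_w$. The key observation is that the Yangian theory for an $L_\infty$ algebra $\mf{h}$ with invariant pairing has fields $\mscr{A} \in \Omega^{0,\ast}(\C^2, \mf{h})[1]$ playing the role of a partial connection in $\d \zbar$ and in both $\d w,\d\br w$, with action of holomorphic Chern--Simons type built from the pairing on $\mf{h}$; so it suffices to exhibit such a field in each case and match the action.

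For the $N=2$ pure gauge theory with algebra $\g$, decompose the $N=2$ vector multiplet as an $N=1$ vector multiplet plus an adjoint $N=1$ chiral multiplet $(\phi,\psi)$. The holomorphic twist of the $N=1$ vector multiplet was already shown above to produce the $\g$-valued partial connection field. I would pick the $N=2$ supercharge $Q$ whose restriction to $N=1$ is the supercharge used earlier, and verify that the adjoint chiral multiplet, after the same twist, contributes a $\g^\ast$-valued copy of the same partial connection field (the scalar $\phi$ and its conjugate pair up with the fermions so that only a $\g^\ast$-valued $(0,\ast)$-form survives in $Q$-cohomology). The invariant pairing on $\g \oplus \g^\ast$ is the canonical evaluation pairing, and I would check that the superpotential-free $N=2$ interaction produces exactly the holomorphic Chern--Simons action for the partial connection valued in $\g \oplus \g^\ast$, with the Chern--Simons-type deformation $\d z \wedge \op{CS}$ switched on if and only if one turns on the $N=2$ analog of the Yangian deformation described in the introduction.

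For the $N=4$ case I would use the decomposition of the $N=4$ vector multiplet into one $N=1$ vector multiplet and three adjoint $N=1$ chiral multiplets. Choose the $N=4$ supercharge that restricts to the same $Q$ under $N=4 \supset N=2 \supset N=1$. By the previous step, one vector plus one chiral contributes a partial connection valued in $\g \oplus \g^\ast$. The remaining two chiral multiplets are related by the $R$-symmetry of $N=4$ and, after the twist, contribute one copy of $\g[-1]$ (the odd variable $\eps$ in cohomological degree $-1$) and one copy of $\g^\ast[1] = \g[1]$ (using the invariant pairing, absorbed into the variable $\delta$ in cohomological degree $+1$). Putting everything together, the surviving field is a partial connection valued in $\g[\eps,\delta]$ with $\eps \delta$ of total degree $0$ pairing, and the natural invariant pairing is the one that integrates out $\eps \delta$ against the Killing form of $\g$. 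The $N=4$ interactions then assemble into the holomorphic Chern--Simons action for this $L_\infty$ algebra.

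The main obstacle is the bookkeeping of supercharges and spinor components: one has to make a consistent choice of $Q$ inside $N=2$ and $N=4$, verify that the Spin$(4)\times G_R$ weights of the surviving fields really do match a $\g[\eps,\delta]$-valued partial connection (and not, say, a further deformation of it), and then check that the quartic terms in the $N=4$ Lagrangian reproduce the higher $L_\infty$-brackets on $\g[\eps,\delta]$ coming from the holomorphic Chern--Simons action. I would do this by first proving the claim at the level of free field content (i.e.\ for abelian $\g$), where the pairing and the grading are visible directly from the $R$-charges, and then using the uniqueness statement of the main theorem of the introduction --- that the Yangian theory for an $L_\infty$ algebra with invariant pairing has a unique perturbative quantization compatible with the symmetries --- to conclude that any interacting twist with the correct free-field content and the correct symmetries agrees with the Yangian theory on the nose.
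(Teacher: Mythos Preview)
The paper does not actually prove this statement: it is explicitly labeled a \emph{claim}, and the text immediately following it says ``These results can be verified using the techniques of \cite{Cos11b}, which relies heavily on the twistor formulation of supersymmetric gauge theory developed in \cite{BoeMasSki07}. By `claim' I mean that I believe I know how to prove these results, but haven't worked out all details.'' So there is no proof in the paper to compare to; the suggested route is via the twistor-space description of the $N=2$ and $N=4$ theories, not via an $N=1$ multiplet decomposition.

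Your proposal has a genuine gap, and it shows up exactly where the content of the claim lies. You choose ``the $N=2$ supercharge $Q$ whose restriction to $N=1$ is the supercharge used earlier,'' i.e.\ you twist by the holomorphic $N=1$ supercharge viewed inside $N=2$. That twist will give you a holomorphic theory on $\C^2$ (holomorphic BF for $\g\oplus\g^\ast$, say), not the holomorphic--topological Yangian theory. You then write that ``the Chern--Simons-type deformation $\d z\wedge\op{CS}$ [is] switched on if and only if one turns on the $N=2$ analog of the Yangian deformation.'' But this is precisely what the claim denies: the point is that for $N\ge 2$ \emph{no deformation is needed}, the Yangian theory arises from a \emph{twist alone}. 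The mechanism is that $N=2$ has more supercharges than $N=1$, and the relevant twist is not the bare holomorphic one but the Kapustin holomorphic--topological twist (the paper says so explicitly). That supercharge is what makes the theory topological in the $w$-direction and produces the $\d z\,\op{CS}(A)$ term intrinsically; if you only use the $N=1$ holomorphic $Q$ you will never see it without adding the deformation by hand. The same issue propagates to your $N=4$ argument.

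A secondary issue: the uniqueness theorem you invoke at the end is about perturbative \emph{quantizations} of the Yangian theory, not about uniqueness of classical field theories with given free-field content and symmetries. It does not let you bypass the computation of the interaction terms in the twisted $N=2$ or $N=4$ Lagrangian.
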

The twist of the $N=2$ gauge theory is that studied by Kapustin \cite{Kap06}. 

These results can be verified using the techniques of \cite{Cos11b}, which relies heavily on the twistor formulation of supersymmetric gauge theory developed in on \cite{BoeMasSki07}.   By ``claim'' I mean that I believe I know how to prove these results, but haven't worked out all details.  

The $S$-duality for the $N=4$ gauge theory \cite{KapWit06} suggest that one can hope for an $S$-duality for the Yangian based on $\g[\eps,\delta]$.   I hope to discuss this in future work.

There is a further generalization of this story to the $N=2$ $\sigma$-model.  Let $X$ be a hyper-K\"ahler manifold. Then, standard physics ideas say that one can define an $4$-dimensional $\sigma$-model with target $X$, which has $N=2$ supersymmetry.

Using the techniques of \cite{Cos11a}, one can encode $X$ by a curved $L_\infty$ algebra $\g_X$ over the base ring of forms on $X$.The fact that $X$ is holomorphic symplectic means that $\g_X$ has an invariant pairing of degree $-2$.  Thus, we can define the Yangian field theory based on $X$ by using the $L_\infty$ algebra $\g_X$. The fact that there is an invariant pairing of degree $-2$ means that the parameter $\hbar$ must be given cohomological degree $-2$.      

The results of this paper can be applied to construct this field theory at the quantum level.  Observables of this field theory form an $E_2$ algebra deforming functions on the space of maps from the formal disc to $X$. This $E_2$ algebra comes (as above) from a factorization algebra on $\C_z \times \C_w$ which is holomorphic in the $z$-direction and locally constant in the $w$-direction. 

If we fix a point in $X$, then we can turn the $E_2$ algebra into a Hopf algebra.  In this way, we find a family of integrable lattice models over $X$. It would be interesting to compute this object in examples. 
\begin{conjecture}
The Yangian field theory  based on $X$ is a twist of the $4$-dimensional $N=2$ $\sigma$-model with target $X$. 
\end{conjecture}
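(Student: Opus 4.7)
The plan is to adapt the twistor-theoretic argument sketched for pure $N=2$ gauge theory in the preceding claim to the nonlinear setting. First I would reformulate the $4$-dimensional $N=2$ $\sigma$-model with hyper-K\"ahler target $X$ using the twistor description of \cite{BoeMasSki07}, in which maps to $X$ are repackaged in terms of the twistor space $Z_X \to \mbb{P}^1$, and then choose a distinguished supercharge $Q \in \mc S_+$ of the same type used to twist the pure gauge theory. Applying the associated twist to the BV description of the $\sigma$-model should pick out a single point on the twistor line and produce a theory whose classical solutions are holomorphic maps $\C_z \to X$ together with a flat section in the $w$-direction of a pulled-back bundle, exactly mirroring the geometric description of the classical Yangian theory that appears in the introduction.

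Next I would identify this twisted classical theory with the Yangian field theory built from the curved $L_\infty$ algebra $\g_X$ of \cite{Cos11a}. By construction, the fields of the Yangian theory for $\g_X$ are Maurer--Cartan elements of $\g_X \otimes \Omega^{0,\ast}(\C_z) \otimes \Omega^\ast(\C_w)$, which is precisely the geometric data described above after linearizing around a constant map; the invariant pairing of cohomological degree $-2$ on $\g_X$, coming from the hyper-K\"ahler form on $X$, supplies the correct BV antibracket and forces $\hbar$ to live in cohomological degree $-2$ as expected for a twisted $N=2$ theory. Matching the Chern--Simons-type local functional of the Yangian theory against the $Q$-cohomology of the $\sigma$-model action will occupy most of the work in this step. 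Once the classical BV theories, together with their symmetries (translations, the $S^1 \subset G_R$ used to twist, and the $\g_X$ gauge $L_\infty$ action), are identified, the uniqueness of quantization theorem proved earlier in the paper forces the two quantum theories, and therefore their factorization algebras of observables, to coincide.

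The main obstacle will be the rigorous construction of the twisted $\sigma$-model at the quantum level: one must combine the curved $L_\infty$ renormalization machinery of \cite{Cos11a} with the holomorphic-topological regularization developed in the body of this paper, and check that the anomaly for the $R$-symmetry used in the twist (analogous to the subtle $R$-symmetry anomaly that turns out to be $Q$-exact for the $N=1$ gauge theory) is also $Q$-exact in the hyper-K\"ahler case. A secondary obstacle, when $X$ is compact, is to show that instanton contributions from non-constant holomorphic maps $\mbb{P}^1 \to X$ do not spoil the identification; for non-compact holomorphic symplectic targets this issue should not arise, and one expects the perturbative argument to be exact there.
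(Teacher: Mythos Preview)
The statement you are attempting to prove is labeled a \emph{Conjecture} in the paper and is not proved there; it appears in the ``Generalizations'' subsection alongside other unproved claims about $N=2$ and $N=4$ twists, and the paper offers no argument beyond the heuristic that one can encode a hyper-K\"ahler $X$ by a curved $L_\infty$ algebra $\g_X$ with a degree $-2$ invariant pairing. So there is no proof in the paper to compare your proposal against.

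Your outline is a reasonable sketch of how one might attack the conjecture, and it parallels the author's remarks about the $N=2$ pure gauge case (where he says the techniques of \cite{Cos11b} and the twistor formulation of \cite{BoeMasSki07} should apply). But note that even for the linear $N=2$ and $N=4$ cases the paper only records these as ``Claims'' with the caveat that details have not been worked out; for the nonlinear $\sigma$-model the situation is strictly harder. In particular, the step where you invoke ``the uniqueness of quantization theorem proved earlier in the paper'' is not available to you in the form you need: that theorem is proved for the Yangian theory on a Calabi--Yau surface with a closed holomorphic $1$-form, not for a putative quantum twisted $\sigma$-model, and you would first have to construct the latter in the BV framework before any uniqueness argument could identify the two. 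Your own list of obstacles (quantum construction of the twisted $\sigma$-model, $Q$-exactness of the $R$-symmetry anomaly, instanton corrections) is accurate and, as far as the paper is concerned, open.
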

This construction further generalizes to the case when $X$ has a Hamiltonian action of a complex Lie group $G$. In this case, I believe we find a partial twist of the gauged $N=2$ $\sigma$-model.

\subsection{Mathematical results}
The results I have stated so far have mainly been applications of the representation theory of the Yangian to understanding supersymmetric gauge theory.  However, the flow of information goes both ways: thinking of the Yangian as arising from a $4$-dimensional field theory also gives new insight into the Yangian.   I will state some results and conjectures along these lines, even though many of the details will need to be worked out in subsequent publications. 

The first result is the following.  
\begin{claim}
The monoidal category $Y(\g)-\op{mod}$ of modules over the Yangian has the structure of a vertex algebra in monoidal categories. 
\end{claim}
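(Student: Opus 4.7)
The idea is to bootstrap the desired structure from the OPE map of $E_2$ algebras
$$
\Obs_{z = 0} \otimes \Obs_{z = 0} \to \Obs_{z = 0}((\lambda))
$$
constructed earlier, together with the Koszul-duality equivalence $\op{Fin}(Y(\g)) \simeq \op{Perf}(\Obs_{z = 0})$. Concretely, for $V,W \in Y(\g)\text{-mod}$ I would define the categorical vertex operation by
$$
V \boxtimes_\lambda W \;:=\; T_0(V) \otimes_{Y(\g)((\lambda))} T_\lambda(W),
$$
landing in $Y(\g)((\lambda))\text{-mod}$; this is exactly the bifunctor $F_R$ from the paper, but now viewed as a $\lambda$-family of ``vertex products'' rather than a single monoidal functor. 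The claim is that these operations, as $\lambda$ varies, equip $Y(\g)\text{-mod}$ with the axioms of a vertex algebra internal to monoidal categories.

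\textbf{Axiom verification.} The vacuum axiom $\mathbf{1} \boxtimes_\lambda V \simeq T_\lambda(V)$ is immediate from the definition, since $T_\lambda(\mathbf{1}) = \C((\lambda))$. For associativity I would compare
$$
(V_1 \boxtimes_{\lambda_1} V_2) \boxtimes_{\lambda_1+\lambda_2} V_3 \;\simeq\; V_1 \boxtimes_{\lambda_1} (V_2 \boxtimes_{\lambda_2} V_3)
$$
using (a) the group law $T_{\lambda_1} \circ T_{\lambda_2} = T_{\lambda_1+\lambda_2}$ on translation-extension functors, and (b) coassociativity of the coproduct on $Y(\g)$ applied module-wise; both sides unwind to $T_0(V_1) \otimes T_{\lambda_1}(V_2) \otimes T_{\lambda_1+\lambda_2}(V_3)$. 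For locality I would invoke the $R$-matrix, which provides a natural isomorphism
$$
R_{V,W}(\lambda) : V \boxtimes_\lambda W \;\xrightarrow{\ \sim\ }\; W \boxtimes_{-\lambda} V
$$
of $Y(\g)((\lambda))$-modules; the Yang-Baxter equation on $R$ is exactly the hexagon coherence needed to make the braiding consistent with the two associators above.

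\textbf{Route via factorization algebras.} Rather than checking all of the above coherences by hand, I would use the factorization-algebra input more directly. The theorem stated earlier in the paper says that $F_R$ is monoidal and encodes the OPE map of $E_2$ algebras $\Obs_{z=0} \otimes \Obs_{z=0} \to \Obs_{z=0}((\lambda))$. All associativity and locality coherences for $\boxtimes_\lambda$ are then shadows of the associativity of the OPE map inside the factorization algebra $\Obs$ on $\C_z \times \C_w$, which is topological in $w$ and holomorphic in $z$. Thus every coherence diagram one needs is produced by choosing disjoint discs in $\C_z$, passing to modules, and invoking Koszul duality. In particular Drinfeld's uniqueness of $R(\lambda)$ ensures that the coherences so obtained are the standard ones.

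\textbf{Main obstacle.} The hard part is pinning down the correct definition of ``vertex algebra in monoidal categories'' so that what is produced by $\boxtimes_\lambda$ matches it on the nose. One must decide between formal Laurent expansions and holomorphic families in $\lambda$, and must accommodate the fact that $Y(\g)$ is complete in $\hbar$ and in its natural filtration, so that $Y(\g)((\lambda))\text{-mod}$ has to be taken in a suitable filtered/topological sense, as in the remark following the Koszul duality theorem. A second subtlety is higher coherence: one needs the hexagon/Jacobi-type diagrams among $\boxtimes_{\lambda_1}, \boxtimes_{\lambda_2}, R(\lambda_1-\lambda_2)$ to commute up to specified homotopies, and to organize these into an $\infty$-categorical structure. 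Both difficulties are essentially bookkeeping, because the underlying coherences are already present in the $E_2$-OPE map on $\Obs_{z=0}$; the work lies in packaging that $E_2$-OPE datum into an axiomatized notion of ``internal vertex algebra'' on the module side under Koszul duality.
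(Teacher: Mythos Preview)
The paper does not actually prove this claim: it is explicitly flagged as ``work-in-progress with Josh Shadlen.'' What the paper does give is a sketch of the intended argument, and your proposal matches that sketch essentially point for point. The paper defines the categorical operator product exactly as you do, namely $V \boxtimes_\lambda W = T_0(V) \otimes T_\lambda(W)$, and states that ``the associativity axiom in the theory of vertex algebras follows from the Yang-Baxter equation satisfied by the $R$-matrix'' --- which is precisely your locality/hexagon step. Your alternative route via the factorization algebra on $\C_z$ valued in $E_2$ algebras is also anticipated: the paper remarks that ``a slightly different version of this statement is shown in section~\ref{operator_product_z},'' where a holomorphic factorization algebra on $\C$ valued in monoidal categories is constructed.

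Your identification of the main obstacle is accurate and honest, and it is the same obstacle the paper implicitly acknowledges by deferring the claim. The real content here is not any individual axiom check but rather fixing a workable definition of ``vertex algebra in monoidal categories'' (the paper points to Gaitsgory's notion of chiral category as a model) in a way that accommodates the completed/filtered tensor products and lax-coalgebra issues that pervade the paper, and then verifying the full tower of higher coherences. Your proposal correctly locates these coherences as shadows of the factorization-algebra associativity on $\C_z \times \C_w$, transported via Koszul duality; that is the right idea, but carrying it out is exactly the work the paper leaves for elsewhere.
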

This result is work-in-progress with Josh Shadlen. 

Concretely, this means that we have an operator product between objects of $Y(\g)-\op{mod}$ which behaves like a categorified form of the operator product from the theory of vertex algebras.  If $V,W \in Y(\g)-\op{mod}$, we can take the operator product between $V$ placed at $0$ and $W$ placed at a point $\lambda$ near $0$ in $\C$.  The operator product is the $Y(g)((\lambda))$-module 
$$
T_0(V) \otimes T_\lambda(W) 
$$
in the notation introduced earlier.  

The content of this claim is that this operator product satisfies a categorified analog of the axioms of a vertex algebra.   The associativity axiom in the theory of vertex algebras follows from the Yang-Baxter equation satisfied by the $R$-matrix.   This type of object has been considered before, by Gaitsgory \cite{Gai10} under the name ``chiral category''.  
 
 This result provides a categorical interpretation of solutions to the Yang-Baxter equation with a spectral parameter, in a way analogous to Drinfeld's categorical interpretation of solutions to the Yang-Baxter equation without spectral parameter.  Drinfeld showed that if $A$ is a Hopf algebra with an $R$-matrix satisfying the Yang-Baxter equation, then $A-\op{mod}$ has the structure of a braided monoidal category.  We show that if $A$ has an $R$-matrix with spectral parameter valued in punctured formal disc, then $A$ has the structure of vertex monoidal category. (Etingof-Kazhdan \cite{EtiKaz98} and Soibelman \cite{Soi97} also considered the categorical structure induced by $R$-matrices with spectral parameter, but the structure seems to be different).

A slightly different version of this statement is shown in section \ref{operator_product_z} of this paper: I show that there is a factorization algebra on $\C$, valued in monoidal categories, where the monoidal category assigned to a disc is a deformation of the category of modules of finite rank over $\g \otimes \op{Hol}(D)$. The operator product in this object is holomorphic, so that this is a holomorphic version of a vertex monoidal category.

\subsection{Acknowledgments}
Conversations with many people have contributed to this paper. I'm particularly grateful to Alexander Chervov, Davide Gaiotto, Owen Gwilliam, Si Li, Kolya Reshitikhin, Nick Rozenblyum, Josh Shadlen and Edward Witten.

\section{Physical $N=1$ super-symmetric gauge theory}
In this section, I will introduce the notation we will use for the $N=1$ supersymmetric gauge theory.   

\subsection{The $N=1$ supertranslation Lie algebra}
Recall that $\op{Spin}(4) = SU(2) \times SU(2)$.  Let $\mc S_+$ and $\mc S_-$ denote the defining complex representations of the two copies of $SU(2)$.  Thus, $\mc S_+$ and $\mc S_-$ are two complex dimensional, and equipped with $\C$-linear actions of $\Spin(4)$.   Let $V = \R^4$ denote the fundamental representation of $\Spin(4)$. Then, there is an isomorphism of complex $\Spin(4)$ representations
$$
\Gamma :  \mc S_+ \otimes \mc S_- \iso V_\C =  V \otimes_\R \C .
$$
\begin{definition}
The $N=1$ supertranslation Lie algebra $T$ is the complex super Lie algebra
$$
T = V_\C \oplus \Pi (\mc S_+ \oplus \mc S_-)
$$
with bracket defined as follows.  If $Q_{\pm} \in \mc S_{\pm}$, then
$$
[Q_+, Q_- ] = \Gamma ( Q_+ \otimes Q_- ).
$$
All other brackets are zero. 
\end{definition}
This super Lie algebra has an evident action of $\Spin(4)$. It is also acted on by $\C^\times$: the $\C^\times$ action on $V_\C$ is trivial, the action on $\mc S_+$ has weight $1$, and on $\mc S_-$ has weight $-1$. In physics terminology, $\C^\times$ is called the (complexified) $R$-symmetry group.

We are interested in field theories with $N=1$ supersymmetry.  Since I have discussed at great length elsewhere \cite{Cos11, CosGwi11} what I mean by a field theory, I won't repeat the definition here.  There is one small point worth mentioning, though: we will consider our field theories to be defined over $\C$, but we will always work complex-linearly. This means that our action functionals, etc. will always be holomorphic functions on the space of fields.  Functional integrals will be taken over a contour; but because we are working perturbatively, the choice of contour is irrelevant.  In the language of factorization algebras \cite{CosGwi11}, this means that our factorization algebras are defined over $\C$.   The foundational texts \cite{Cos11,CosGwi11} are set up to work equally well over $\C$ or $\R$.  

Since we are working over $\C$, it is legitimate to ask for an action of the complex Lie algebra $T$ on a theory. 

\subsection{}
By definition, a field theory on $\R^4$ has $N=1$ supersymmetry if
\begin{enumerate}
\item It is invariant under the Poincar\'e group $\op{Spin}(4) \ltimes \R^4$. 
\item The infinitesimal Poincar\'e symmetry, which is an action of the Lie algebra $\so(4) \ltimes \R^4$, is extended to a complex linear action of $\so(4,\C) \ltimes T$.  
\end{enumerate}
One can also ask that a theory with $N=1$ supersymmetry be equipped with an action of the $R$-symmetry group $\C^\times$, compatible with the action described above of this group on $T$. 

\subsection{}
We are interested in pure $N=1$ supersymmetric gauge theory (``pure'' means with no matter fields).  For the purposes of this paper, we do not need to know about the action of the full supersymmetry algebra, but only about the action of the supercharges in $\mc S_+$. 

We will start by writing down the theory in the most classical way  -- with a space of fields, an action functional and a gauge group -- and then rewrite the theory in the BV formalism.   We will use the notation $\mscr{S}_{\pm}$ to denote the spaces $\cinfty(\R^4) \otimes \mc{S}_{\pm}$ of sections of the spinor bundles on $\R^4$. 
\begin{definition}
The $N=1$ supersymmetric gauge theory, in the first order formalism, has space of fields
$$
\Omega^1 \otimes \g \oplus \Omega^2_+ \otimes \g \oplus \Pi ( \mscr{S}_+ \oplus \mscr S_-) \otimes \g. 
$$
We denote fields in these four summands by $A,B, \Psi_+, \Psi_-$.  The action functional is 
$$
S (A,B, \Psi_+, \Psi_-) = \int \ip{F(A)_+, B}_{\g}  + c  \int \ip{B,B}_{\g} + \int \ip{\Psi_+, \Dirac_A \Psi_- }_{\g}.
$$
Here $c$ is the coupling constant, and we are using the canonical symplectic pairing $\mscr{S}_+ \otimes \mscr{S}_+ \to \cinfty(\R^4)$ to define the action functional on the spinors $\Psi_+, \Psi_-$.  

The gauge Lie algebra is $\Omega^0 \otimes \g$.  The infinitesimal action of an element $X$ in the gauge Lie algebra on the space of fields is given by the formula
$$
(A, B, \Psi_+, \Psi_-)  \mapsto (A  , B , \Psi_+ , \Psi_- ) + \eps \left( \d X + [X,A], [X,B] , [X,\Psi_+] , [X,\Psi_-] \right)
$$
(here $\eps$ is an even parameter of square zero). 
\end{definition}

\subsection{}
Next, we will define an infinitesimal action of $\mc S_+$ on the space of fields.  We need some notation.   Let $\Gamma : \mc S_+ \otimes \mscr{S}_- \to \Omega^1 \otimes \C$ denote the natural map, arising from the identification of the vector representation of $\op{Spin}(4)$ with $\mc S_+ \otimes \mc S_-$.   (In what follows, all forms will be complexified). 

Note that there is a $\op{Spin}(4)$-equivariant identification of the fiber of the bundle $\wedge^2_+ T^\ast \R^4 \otimes \C$ with $\Sym^2 \mc S_+$.  Using the symplectic pairing on $\mc S_+$ we get a $\op{Spin}(4)$-equivariant map $\mc S_+ \otimes \Sym^2 \mc S_+ \to \mc S_+$.  We let
$$
\Upsilon: \mc S_+ \otimes \Omega^2_+ \to \mscr{S}_+
$$
be the map which arises by taking sections.   

We define the action of $\mc S_+$ on the space of fields of the $N=1$ pure gauge theory by saying that $Q \in \mc S_+$ sends $(A,B, \Psi_+, \Psi_-)$ to
$$
(A  , B , \Psi_+ , \Psi_- ) + \eps \left( \Gamma(Q \otimes \Psi_-), 0,  \Upsilon( Q \otimes B), 0   \right). 
$$
where $\eps$ is an odd parameter of square zero.
\begin{remark}
A physicist would write this action of supersymmetry by formulae like
\begin{align*}
\delta_{Q_\alpha} A_\mu^a = \sigma_{\mu \alpha \dot{\alpha}} \psi^{a \dot{\alpha}} \\
\delta_{Q_\alpha} \psi_{\beta}^a = \eps_{\alpha}^{\gamma} B^a_{\gamma \beta}. 
\end{align*}
where $\alpha,\dot{\alpha}$ are indices for $\mc{S}_+$ and $\mc{S}_-$, $a$ is the index for a basis of $\g$, $\mu$ is the index for $\C^4$,  $\sigma_{\mu \alpha \dot{\alpha}}$ are the Dirac matrices, and $\eps_{\alpha}^{\gamma}$ is an alternating tensor.  

The physicists notation can be understood as follows: $A_\mu^a$ is a function from the space of fields to the space $\cinfty(\C^2)$ of smooth functions on $\C^2$, which sends a connection $A$ to the $\mu,a$ component of $A$.  The same holds for quantities like $\psi^{a \dot{\alpha}}$, etc.  Then $\delta_{Q_\alpha} A_{\mu}^a$ indicates the action of the vector field $Q_\alpha$ on the function $A_{\mu}^a$.    \end{remark}
\begin{lemma}
The action of $\mc S_+$ commutes with the action of the gauge Lie algebra, and preserves the action functional on the space of fields.
\end{lemma}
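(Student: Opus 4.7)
The plan is to verify the two statements separately, treating $\delta_Q$ and $\delta_X$ as odd/even vector fields on the space of fields and checking component-by-component. Both statements amount to representation-theoretic identities for the $\Spin(4)$-equivariant maps $\Gamma$ and $\Upsilon$, combined with integration-by-parts and Bianchi-type identities.

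For commutativity with the gauge action, I would simply apply $[\delta_Q, \delta_X]$ to each of the four field components. The only nontrivial contributions come from the terms linear in $X$ acting on the matter fields: on $A$, one gets $[X, \Gamma(Q \otimes \Psi_-)] - \Gamma(Q \otimes [X, \Psi_-])$; on $\Psi_+$, one gets $[X, \Upsilon(Q \otimes B)] - \Upsilon(Q \otimes [X, B])$. Both vanish because $\Gamma$ and $\Upsilon$ are $\Spin(4)$-equivariant maps of the underlying $\C$-vector spaces and are extended $\g$-linearly after tensoring with $\g$; by construction they commute with the adjoint action of $\g$. The components for $B$ and $\Psi_-$ are immediate since $\delta_Q$ annihilates them.

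For invariance of the action functional, first note that the term $c\int \ip{B,B}$ is automatically $Q$-invariant since $\delta_Q B = 0$. In the remaining variation, since $\delta_Q B = 0 = \delta_Q \Psi_-$, one is left with three pieces:
\begin{align*}
I_1 &= \int \ip{(d_A \Gamma(Q \otimes \Psi_-))_+, B}_\g, \\
I_2 &= \int \ip{\Upsilon(Q \otimes B), \Dirac_A \Psi_-}_\g, \\
I_3 &= \int \ip{\Psi_+, [\Gamma(Q \otimes \Psi_-), \Psi_-]_{\text{Cliff}}}_\g,
\end{align*}
where $I_3$ arises from the variation of $A$ inside $\Dirac_A$. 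I would show $I_1 + I_2 = 0$ by integration by parts (using self-duality of $B$ to replace $F(A)_+$ by $F(A)$, rewriting $I_1$ as $-\int \ip{\Gamma(Q\otimes \Psi_-), d_A B}_\g$) and then matching this against $I_2$ via the basic adjunction identity between $\Gamma \colon \mc S_+ \otimes \mscr S_- \to \Omega^1$ and $\Upsilon \colon \mc S_+ \otimes \Omega^2_+ \to \mscr S_+$, which is essentially the statement that $\Dirac$ is the formal adjoint of the operator sending $\psi \mapsto \Gamma(Q \otimes \psi)$ after pairing with the self-dual component of a two-form. This is a purely representation-theoretic identity for the $\Spin(4)$-modules $\mc S_\pm, V_\C, \wedge^2_+$ and can be checked by picking a basis of $\mc S_\pm$.

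The main obstacle, and the step most worth recording carefully, is the vanishing of the cubic fermion term $I_3$. This does not rely on the equations of motion; it is a Fierz-type identity in $\Sym^2 \mc S_- \otimes \mc S_+$, asserting that the $\Spin(4)$-equivariant map $\mc S_+ \otimes \mscr S_- \otimes \mscr S_- \otimes \mscr S_+ \to \cinfty(\R^4)$ obtained by composing the Clifford action with the spinor pairings is totally symmetric in its two $\mscr S_-$ slots while being paired against the antisymmetric spinor symplectic form — so the pairing with the $\g$-invariant symmetric tensor $\Psi_- \otimes \Psi_-$ on the $\g$-side vanishes identically. I would isolate this as a short lemma on $\Spin(4)$-representations before assembling everything. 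Once $I_1 + I_2 = 0$ and $I_3 = 0$ are established, the lemma follows.
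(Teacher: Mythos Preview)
Your approach is essentially the paper's, only more careful. The paper's proof is two sentences: it declares gauge-commutativity obvious, writes $\delta_Q S$ as the sum of your $I_1$ and $I_2$, and says this vanishes ``with the correct normalization of $\Upsilon$''. The term $I_3$ is simply suppressed. So structurally you and the paper agree; you are just filling in what the paper leaves implicit.

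You are right that $I_3$ deserves a word, and the mechanism you invoke---a symmetry/antisymmetry clash forcing the contraction with the structure constants to vanish---is the correct one. But your statement of it is inverted. The $\Spin(4)$-equivariant map $\mc S_+ \otimes \mc S_- \otimes \mc S_- \otimes \mc S_+ \to \C$ you describe is, up to scalar, $\omega_+(\Psi_+,Q)\,\omega_-(\psi,\chi)$, and this is \emph{anti}symmetric in the two $\mc S_-$ slots, since $\omega_-$ is the symplectic form. The Fierz identity lives in $\wedge^2 \mc S_-$, not $\Sym^2 \mc S_-$. The actual argument is: because the $\Psi_-^a$ are anticommuting, the scalar $\omega_-(\Psi_-^a,\Psi_-^b)$ is \emph{symmetric} in the Lie-algebra indices $a,b$, and this contracts to zero against the totally antisymmetric $f_{abc}$ coming from the gauge coupling in $\Dirac_A$. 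Once you swap ``symmetric'' and ``antisymmetric'' in your description of the spinor map, and drop the phrase ``$\g$-invariant symmetric tensor'' (which is not what $\Psi_-\otimes\Psi_-$ is), the sketch is correct and in fact more complete than the paper's.
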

\begin{proof}
It's obvious that $Q$ commutes with the action of $\Omega^0 \otimes \g$.   The $Q$-derivative of the action functional $S$ is expressed by the formula
$$
 \delta_Q S = \int\ip{ \d_A \Gamma(Q \otimes \Psi_-) , B   }_{\g} + \int \ip{\Upsilon (Q \otimes B), \Dirac_A \Psi_- }.
$$
This is zero, with the correct normalization of $\Upsilon$. 
\end{proof}

\subsection{The  $N=1$ theory in the BV formalism}

In the BV formalism, as explained in e.g. \cite{Cos11b}, field theories can be described by dg Lie algebras (or, more generally, $L_\infty$, algebras) with an invariant pairing of degree $-3$.  If $\L$ is such a dg Lie algebra, then the space of fields of the corresponding field theory is $\L[1]$, that is, $\L$ shifted in cohomological degree by $1$. The action functional is the Chern-Simons type functional
$$
S (\alpha) = \tfrac{1}{2}\ip{\alpha, \d \alpha} + \tfrac{1}{6} \ip{\alpha,[\alpha,\alpha] }
$$
 where $\alpha \in \L[1]$, and $\ip{-,-}$ is the invariant pairing on $\L$.   The classical master equation for this action functional is equivalent to the axioms satisfied by a dg Lie algebra with an invariant pairing.

The most familiar example is ordinary Chern-Simons theory on a three manifold $M$. In this case, the dg Lie algebra is $\Omega^\ast(M) \otimes \g$, with differential the de Rham differential, and invariant pairing giving by wedging and integrating (using the invariant pairing on $\g$). 

If one allows $L_\infty$ algebras instead of just dg Lie algebras, all perturbative classical field theories in the BV formalism can be written uniquely (up to equivalence) in this way.  In all cases we will consider in this paper, the field theory is given by a differential graded Lie algebra, as opposed to an $L_\infty$ algebra.  

\subsection{}
When working with theories involving fermions, there are two gradings one has to consider: a $\Z$ grading, called cohomological degree (or ``ghost number'') and a $\Z/2$ grading which I call the super degree.   Both gradings contribute to signs: that is, an element of bidegree $(n,m)$ commutes if $n+m$ is zero modulo $2$, and anticommutes if $n+m$ is $1$ modulo $2$.  The differential on our dg Lie algebra must be of cohomological degree $1$ and super degree $0$; the bracket must preserve both degrees; and the pairing must be of cohomological degree $-3$ and super degree $0$.  

We will sometimes refer to fields of overall even degree as bosonic, and overall odd degree as fermionic. 

The standard BV/BRST procedure of adding ghosts, anti-fields and anti-ghosts (explained in, for instance, \cite{CosGwi11} and \cite{Cos11}), when applied to the pure $N=1$ gauge theory, produces the dg Lie algebra $\mscr{L}^{N=1}$ described in the diagram
$$
\xymatrix{
 & 0 & 1 & 2 & 3 \\ 
\text{Super degree $0$} & \Omega^0 \ar[r]^{\d} & \Omega^1 \ar[r]^{\d_+} & \Omega^2_+ & \\
\text{Super degree $0$} & & \Omega^2_+ \ar[r]^{\d} \ar[ur]^{c \op{Id}} & \Omega^3 \ar[r]^{\d} & \Omega^4 \\
\text{Super degree $1$} & & \mscr{S}_+ \ar[r]^{\Dirac} & \mscr{S}_-' & \\
\text{Super degree $1$}& & \mscr{S}_- \ar[r]^{\Dirac} & \mscr{S}_+' &  }
$$
Here, $\mscr{S}'_{\pm}$ are additional copies of the spaces $\mscr{S}_{\pm}$ of spinors; the spinors $\mscr{S}'_{\pm}$ are the antifields to the original spinors $\mscr{S}_{\pm}$. Of course everything is with coefficients in $\g$.   The cohomological degree $1$ part is the original fields of the theory; in degree $0$, we have the ghosts for the Lie algebra of the gauge group; the antifields are in degree $2$, and the antifields for the ghosts are in degree $3$. 

The invariant pairing of degree $3$ on $\mscr{L}^{N=1}$ is the natural pairing between $\Omega^0$ and $\Omega^4$, $\Omega^1$ and $\Omega^3$, $\mscr{S}_{\pm}$ and $\mscr{S}_{\pm}'$, etc. 

The Lie bracket on $\mscr{L}^{N=1}$ is defined as follows.  The first row is a quotient of $\Omega^\ast \otimes \g$, and so has a natural Lie bracket.  The first row also has a natural action on every other row, which extends the action of $\Omega^0 \otimes \g$ by infinitesimal gauge transformations.   The bracket between two elements neither of which is in the first row is zero. 

This dg Lie algebra completely encodes the pure $N=1$ gauge theory in the BV formalism.   The full BV space of fields is $\L^{N=1}[1]$, and the full BV action functional is the Chern-Simons type functional
$$
S (\alpha) = \tfrac{1}{2}\ip{\alpha, \d \alpha} + \tfrac{1}{6} \ip{\alpha,[\alpha,\alpha] }.
$$
Here $\alpha \in \L^{N=1}[1] $ and $\ip{-,-}$ refers to the invariant pairing in $\L^{N=1}$.

\section{Twisting}
In this section I will explain what I mean by a twist of a field theory with $N=1$ supersymmetry.   In general, to twist a supersymmetric field theory we do the following. 
\begin{enumerate}
\item Choose an $S^1$ in the $R$-symmetry group $G_R$, which is used to change the cohomological degrees of the theory. 
\item Choose a supercharge $Q$ with $[Q,Q] = 0$, which is of weight $1$ under the chosen $S^1 \subset G_R$.  By adding $Q$ to the BRST differential of our theory, we find a new theory. 
\end{enumerate} As I mentioned in the introduction, twisting for me does \emph{not} involve changing the action of $\op{Spin}(4)$ on the theory.  Indeed, the twisted theories we are interested in are not $\op{Spin}(4)$-invariant; this symmetry is broken to one of the $SU(2)$'s in the decomposition $\op{Spin}(4) = SU(2) \times SU(2)$. 

\subsection{}
Let me explain in more detail how twisting works in  the example of pure $N=1$ gauge theory.  Choose a spinor $Q \in \mc S_+$; this induces a complex structure on $\R^4$, whose $(0,1)$-part is $\Gamma( Q \otimes \mc S_-) \subset \R^4 \otimes \C$.

Define the $R$-symmetry action of $\C^\times$ on our space of fields by giving $\mscr{S}_+$ weight  $1$ and $\mscr{S}_-$ weight $-1$.  

We will use the spinor $Q$ and the $R$-symmetry group to define the twisted theory.  The twisted theory will also be described by a dgla, which is obtained by
\begin{enumerate}
\item Changing the cohomological grading of $\mscr{L}^{N=1}$ using the action of the $R$-symmetry group.  
\item Adding $Q$ to the differential of $\mscr{L}^{N=1}$, thus giving a new dgla.
\end{enumerate}

Let $t$ be a parameter of cohomological degree $1$, odd super degree, and weight $-1$ under the $R$-symmetry group.  Note that $t$ is an even parameter.  The twisted theory is defined by the dgla
$$
\L^{Q} = \left( \mscr{L}^{N=1}((t)) , \d_{\mscr{L}} + t Q \right)^{\C^\times_R}.
$$
Adding on the parameter $t$, and taking invariants under the $R$-symmetry group $\C^\times_R$, has the effect of shifting the grading on the dgla $\L$.   

Concretely, the twisted theory is given by the dgla $\L^Q$ described in the diagram
$$
\xymatrix{
  0 & 1 & 2 & 3 \\ 
  \mscr{S}_- \ar[r]^{\Dirac} \ar[dr]^{Q} & \mscr{S}_+' \ar[dr]^{Q}& &   \\
  \Omega^0 \ar[r]^{\d} & \Omega^1 \ar[r]^{\d_+} & \Omega^2_+ & \\
  & \Omega^2_+ \ar[r]^{\d} \ar[dr]^{Q} \ar[ur]^{c \op{Id}} & \Omega^3 \ar[r]^{\d} \ar[dr]^{Q}& \Omega^4 \\
 & & \mscr{S}_+ \ar[r]^{\Dirac} & \mscr{S}_-'  
}
$$
Note that, after twisting, all fields are of super degree $0$, but the cohomological degrees have shifted.  The Lie bracket is as before. 

We will sometimes consider the theory where we only change the grading on $\L^{N=1}$, and do not add $Q$ to the differential.   We will refer to this grading as the twisted cohomological grading.  Equivalently, this is the theory obtained by twisting as above but with $Q = 0$. 

We can rewrite the twisted theory in more classical terms as the BV theory associated to a space of fields, an action functional and a gauge group.  In these terms, the fields are the elements of cohomological degree $1$ of the Lie algebra above: $A \in \Omega^1 \otimes \g$, $B \in \Omega^2_+ \otimes \g$, and $\Psi'_+ \in \mscr{S}'_+ \otimes \g$.   Note that all these fields are of super degree $0$ and cohomological degree $0$. This is forced on us by the twisting procedure.  Before we twisted, the commuting spinor $\Psi'_+$ was an antifield for the usual anticommuting spinor $\Psi_+$. The twisting procedure changes cohomological degree (``ghost number''), so that $\Psi_+'$ becomes a field of ghost number $0$.

The action functional is
\begin{equation*}
S(A,B, \Psi'_+ ) = \int \ip{F(A),B} + c \int \ip{B,B} + \int \ip{\Upsilon'(Q \otimes \Psi'_+), B} \tag{$\dagger$}
\end{equation*}
where 
$$\Upsilon' : \mc{S}_+ \otimes \mscr{S}'_+ \to \Omega^2_+$$ is the unique up to scale $\op{Spin}(4)$-equivariant, $\cinfty(\R^4)$-linear map (adjoint to the map $\Upsilon$ discussed earlier). 

The Lie algebra of the gauge group consists of $X \in \Omega^0 \otimes \g$ and $\Psi_- \in \mscr{S}_- \otimes \g$.  The infinitesimal action of the element $(X,\Psi_-)$ on a field $(A,B,\Psi'_+)$ is given by the formula
$$
(A,B,\Psi'_+) \mapsto (A + \eps \Gamma(Q \otimes \Psi_-) + \eps \d X + \eps [X, A] , B + \eps [X, B] , \Psi'_+ + \eps \Dirac_A \Psi_- + \eps [X,\Psi_+]). 
$$

\section{A holomorphic reformulation of the $N=1$ gauge theory}
\label{section_holomorphic_formulation}
In this section I will show how the $N=1$ gauge theory can be written naturally in terms of holomorphic geometry.  This will be useful when we analyze the twisted theory. 

Let us fix a spinor $Q \in \mc S_+$.  This defines a complex structure on $\R^4$, characterized by declaring that the image of the map $Q : \mscr{S}_- \to \Omega^1$ is $\Omega^{1,0}$.      The choice of $Q$ reduces the $\op{Spin}(4)$ symmetry to $SU(2)$.  

We have $SU(2)$-invariant isomorphisms
\begin{align*}
\mscr S_-  & = \Omega^{1,0} \\ 
\Omega^2_+ &= \Omega^{2,0} \oplus  \Omega^0  \cdot \omega \oplus \Omega^{0,2} \\
\mscr{S}_+ &=  \Omega^0 \cdot \omega \oplus \Omega^{0,2}.
\end{align*}
Here, $\Omega^0 \cdot \omega$ is the space of $(1,1)$-forms which are multiples of the K\"ahler form $\omega$.  Of course, the $SU(2)$ action on $\mscr{S}_+$ is trivial;  however, it is convenient to represent it this way. 

Recall that the antifields to $\mscr{S}_\pm$ is another copy of $\mscr{S}_\pm$, which we denote $\mscr{S}'_\pm$.   We can identify
\begin{align*}
\mscr S'_-  & = \Omega^{1,2} \\ 
\mscr{S}'_+ &= \Omega^{2,0} \oplus \Omega^0 \cdot \omega
\end{align*}
Then, the natural pairing between $\mscr{S}'_{\pm}$ and $\mscr{S}_{\pm}$ is given by wedging and integrating.  

With this notation, the Dirac operator 
$$\Dirac_A :\mscr{S}_- \to \mscr{S}'_+$$ is the composition
$$
\Dirac_A : \Omega^{1,0} \xto{\d_A} \Omega^2 \xto{\pi} \Omega^{2,0} \oplus \omega \cdot\Omega^{0,0}
$$
where $\d_A = \d_{dR} + [A,-]$ is the covariant derivative, and $\pi$ is the projection onto $\Omega^{2,0} \oplus \omega \cdot \Omega^0$. 

Similarly, the Dirac operator $\Dirac_A : \mscr{S}_+ \to \mscr{S}'_-$ is the composition of $\d_A$ with projection onto $\Omega^{1,2}$.  

The fields of the $N=1$ gauge theory,  in this holomorphic notation, consist of 
\begin{align*}
A & \in \Omega^1 \otimes \g \\
B & \in \left(  \Omega^{2,0} \oplus  \Omega^0  \cdot \omega \oplus \Omega^{0,2} \right) \otimes \g \\
\Psi_+ & \in \left( \Omega^0 \cdot \omega \oplus \Omega^{0,2} \right)\otimes \g\\
\Psi_- & \in \Omega^{1,0} \otimes \g. 
\end{align*}
The Lie algebra of the gauge group is $\Omega^0 \otimes \g$, acting as before.  The action functional can be written as 
$$
\int \ip{B, F(A)}  + \int\ip{\Psi_+ , \d_A \Psi_- } + c \int \ip{B,B},
$$
where in each term $\ip{-,-}$ indicates a combination of wedging of forms and the invariant pairing on the Lie algebra $\g$.  

The supercharge $Q \in \mc{S}_+$ (associated to the chosen complex structure) acts on the space of fields in a very simple way: it maps $\mscr{S}_- = \Omega^{1,0}$ into $\Omega^1$ by the natural inclusion, and $\Omega^2_+ = \Omega^{2,0} \oplus \Omega^0 \cdot \omega \oplus \Omega^{0,2}$ to $\mscr{S}_+ = \Omega^0 \cdot \omega \oplus \Omega^{0,2}$ by the natural projection.

\section{Holomorphic BF theory and the twisted $N=1$ gauge theory}
The twisted theory $\L^{Q}$ can be written in a very simple way in terms of holomorphic geometry.  In what follows we will always work with the complex structure on $\R^4$ corresponding to the chosen spinor $Q \in \mc S_+$. 
\begin{definition}
The holomorphic BF theory is the field theory on $\C^2$ whose fields are a $(0,1)$-form $A \in \Omega^{0,1}(\C^2,\g)$ and a $(2,0)$-form $B \in \Omega^{2,0}(\C^2,\g)$, with action
$$
S(A,B) = \int \ip{ B , \left( \dbar A + \tfrac{1}{2} [A ,  A] \right)  }_{\g}.
$$
Here, $\ip{-,-}_{\g}$ refers to a chosen invariant non-degenerate pairing on $\g$.    The Lie algebra of the group of gauge transformations is $\Omega^{0,0}(\C^2,\g)$. An element $X$ in this Lie algebra acts on the space of fields $(A,B)$ in the natural way: 
\begin{align*}
A & \mapsto A + \eps ( \dbar X + [X, A ] )\\
B & \mapsto B + \eps [X, B].
\end{align*}
\end{definition}
In the BV formalism, the holomorphic BF theory is described by the dgla 
$$
\mscr{L}^{BF} = \Omega^{0,\ast}(\C^2,\g) \oplus \Omega^{2,\ast}(\C^2,\g)[-1].
$$
In the language of \cite{Cos11b}, this theory is the cotangent theory to the derived moduli space of $G$-bundles on $\C^2$.  As, the dgla $\Omega^{0,\ast}(\C^2,\g)$ describes deformations of holomorphic $G$-bundles on $\C^2$.  Adding on the fields in  $\Omega^{2,\ast}(\C^2,\g)$ gives the Lie algebra describing a certain shifted cotangent bundle to this moduli space. 
\begin{theorem}
The twisted $N=1$ gauge theory on $\C^2$ is equivalent to the holomorphic BF theory.
\label{theorem_bf}  
\end{theorem}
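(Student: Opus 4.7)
The plan is to construct an explicit inclusion of the BF dgla $\mscr{L}^{BF} = \Omega^{0,\ast}(\C^2,\g) \oplus \Omega^{2,\ast}(\C^2,\g)[-1]$ into $\mscr{L}^Q$ and to show that its complement is acyclic, so the inclusion is a dgla quasi-isomorphism preserving the invariant pairing of degree $-3$. Working in the holomorphic decompositions of Section \ref{section_holomorphic_formulation}, the supercharge $Q$ acts on the graded components of $\mscr{L}^Q$ by very explicit bundle maps: $Q: \mscr{S}_- = \Omega^{1,0} \hookrightarrow \Omega^1$ is the $(1,0)$-inclusion, $Q: \Omega^2_+ \twoheadrightarrow \mscr{S}_+ = \omega \cdot \Omega^0 \oplus \Omega^{0,2}$ is projection, the adjoint $Q: \mscr{S}'_+ = \Omega^{2,0} \oplus \omega \cdot \Omega^0 \hookrightarrow \Omega^2_+$ is inclusion, and $Q: \Omega^3 \twoheadrightarrow \mscr{S}'_- = \Omega^{1,2}$ is projection. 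Each pairs up two summands of $\mscr{L}^Q$ by an isomorphism, producing four acyclic two-term subcomplexes concentrated in consecutive cohomological degrees: writing $(\Omega^2_+)_X$ for the $A$-antifield copy and $(\Omega^2_+)_Y$ for the $B$-field copy of $\Omega^2_+$, these are $\mscr{S}_- \xrightarrow{\sim} \Omega^{1,0} \subset \Omega^1$; $\mscr{S}'_+ \xrightarrow{\sim} \Omega^{2,0} \oplus \omega\Omega^0 \subset (\Omega^2_+)_X$; $\omega\Omega^0 \oplus \Omega^{0,2} \subset (\Omega^2_+)_Y \xrightarrow{\sim} \mscr{S}_+$; and $\Omega^{1,2} \subset \Omega^3 \xrightarrow{\sim} \mscr{S}'_-$. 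The summands complementary to these acyclic pieces are $\Omega^{0,0}$ in degree $0$, $\Omega^{0,1} \oplus \Omega^{2,0}$ in degree $1$, $\Omega^{0,2} \oplus \Omega^{2,1}$ in degree $2$, and $\Omega^{2,2}$ in degree $3$ (all tensored with $\g$), assembling exactly to $\mscr{L}^{BF}$.

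Three verifications then complete the argument. First, the differential on the complement is $\dbar$: the de Rham pieces $\d$, $\d_+$, $\d$, $\d$ in $\mscr{L}^{N=1}$ restrict on the surviving $(p,q)$-subbundles to $\dbar: \Omega^{p,q} \to \Omega^{p,q+1}$. The only non-obvious input is the mass coupling $c \op{Id}: (\Omega^2_+)_Y \to (\Omega^2_+)_X$, which sends the surviving summand $\Omega^{2,0} \subset (\Omega^2_+)_Y$ into the \emph{killed} summand $\Omega^{2,0} \subset (\Omega^2_+)_X$; the contracting homotopy for the acyclic piece $\mscr{S}'_+ \hookrightarrow (\Omega^2_+)_X$ then feeds it into $\mscr{S}'_+$, from which the only available differential, $Q$, returns it to the killed piece, so the homological perturbation correction series truncates at first order and vanishes on $\mscr{L}^{BF}$. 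Second, the Lie bracket on $\mscr{L}^{N=1}$ is inherited from the natural action of $\Omega^\ast(\C^2,\g)$ on each summand, so its restriction to $\Omega^{0,\ast}(\C^2,\g)$ (a sub-dgla under $\dbar$) and its module $\Omega^{2,\ast}(\C^2,\g)$ reproduces the holomorphic BF bracket. Third, the degree $-3$ wedge-and-integrate pairing restricts on $\Omega^{0,q} \otimes \Omega^{2,2-q}$ and on $\Omega^{2,0} \otimes \Omega^{0,2}$ (as a summand of the pairing $\Omega^2_+ \otimes \Omega^2_+ \to \Omega^4$) to exactly the BF pairing.

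The main obstacle is the verification that the mass term $c \op{Id}$ does not deform the induced BF differential. This forces one to choose explicit contracting homotopies for the four acyclic summands listed above, to track the $\Omega^{2,0}$ summand through them, and to confirm that the formal homological perturbation series closes. All remaining steps are straightforward bundle-level bookkeeping in the holomorphic decomposition of $\mscr{L}^Q$.
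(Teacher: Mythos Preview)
Your approach is correct in outline and genuinely different from the paper's. The paper works at the level of the action functional: it performs an explicit gauge-invariant change of coordinates $\Psi'_+ \mapsto \Psi'_+ - cB^{1,1} - 2cB^{2,0} - \pi F(A)$, after which the fields $\Psi'_+, B^{0,2}, B^{1,1}$ decouple and can be integrated out; then the $\Psi_- \in \Omega^{1,0}$ gauge symmetry is used to set $A^{1,0}=0$. Each step is an honest dgla quasi-isomorphism, so no homological perturbation is needed. Your approach instead identifies the $Q$-acyclic pieces directly and transfers the remaining structure onto the $Q$-cohomology $\mscr{L}^{BF}$. This is more transparently cohomological and avoids having to guess the change of coordinates; the paper's route, on the other hand, produces a genuine sub-dgla at each stage and so sidesteps any discussion of $L_\infty$ corrections.

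There is one expositional gap you should close. You frame the argument as an ``inclusion of $\mscr{L}^{BF}$ with acyclic complement'', but neither the inclusion nor the complement is a subcomplex for the \emph{full} differential: for instance $\d\colon \Omega^0 \to \Omega^1$ has a $\partial$-component landing in the killed $\Omega^{1,0}$, and $\d$ applied to the killed piece $\omega\Omega^0 \subset (\Omega^2_+)_Y$ has a component $\omega\,\partial f \in \Omega^{2,1}$ in the surviving part. So the homological perturbation machinery is required not only for the mass term $c\,\op{Id}$ but for every $\partial$-component of the de Rham pieces. Fortunately these corrections all vanish by the same mechanism you use for $c\,\op{Id}$: each such component, after one application of the contracting homotopy $h$, lands in $\mscr{S}'_+$ or $\mscr{S}_-$, from which the only non-$Q$ arrow is $\Dirac\colon \mscr{S}_- \to \mscr{S}'_+$, and $\mscr{S}'_+$ is terminal for $d_0$; so the perturbation series truncates after one step with zero contribution. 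You should also remark that the bracket of two surviving elements already lands in the surviving part (e.g.\ $\Omega^{0,q}\wedge\Omega^{0,q'}\subset\Omega^{0,q+q'}$ and $\Omega^{0,q}\wedge\Omega^{2,q'}\subset\Omega^{2,q+q'}$), so the $L_\infty$ transfer produces no higher brackets and the transferred structure is the honest BF dgla.
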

\begin{remark}
A similar relationship between holomorphic BF theory and $N=1$ supersymmetric gauge theory was suggested in \cite{BauTan04}.
\end{remark}

\begin{proof}
A somewhat abstract proof of this (based on the twistor formulation of supersymmetric gauge theory developed by Mason et al. \cite{BoeMasSki07}) was presented in \cite{Cos11b}.  Here we'll give a more direct proof. 

We have seen how to formulate the $N=1$ gauge theory in terms of complex geometry.  We have also seen that the twisted theory can be described as the theory where fields are a connection $A \in \Omega^1\otimes \g$, a field $B \in \Omega^2_+ \otimes \g$, and a bosonic (commuting) spinor $\Psi'_+ \in \mscr{S}_+' \otimes \g$, where the prime in $\mscr{S}_+'$ indicates that these fields correspond, in  the untwisted theory, to the antifields to the spinors $\mscr{S}_+$.   The Lie algebra of the gauge group is $\Omega^0 \otimes \g \oplus \mscr{S}_- \otimes \g$. 

Combining these two descriptions, we see that the twisted theory can be written in terms of complex geometry.  The fields are 
\begin{align*}
A \in \Omega^1 \otimes \g &\\
B \in \Omega^2_+ \otimes \g &= \left(\Omega^{2,0} \oplus \omega \cdot \Omega^0 \oplus \Omega^{0,2}\right)\otimes \g \\
\Psi'_+ \in \mscr{S}'_+ &= \Omega^{2,0} \oplus \omega \cdot \Omega^{0,0}
\end{align*}
Let us decompose our fields $A,B,\Psi'_+$ into components $A^{1,0}$, $A^{0,1}$, $B^{2,0}$, $B^{1,1}$, $B^{0,2}$, $\Psi_+^{\prime 2,0}$, $\Psi_+^{\prime 1,1}$.  Each field is a form of the specified type, except that we only use $(1,1)$-forms which are a multiple of the K\"ahler form $\omega$.  

The Lie algebra of gauge symmetries is $\Omega^0 \otimes \g \oplus \Omega^{1,0} \otimes \g$ (where we have used the identification $\mscr{S}_- = \Omega^{1,0}$).   An element $(X,\Psi_-)$ in this Lie algebra acts on a field $(A,B,\Psi'_+)$ by 
$$
(A,B,\Psi'_+) \mapsto (A + \eps \d X + \eps [X,A] + \eps \Psi_- ,  B + \eps[X,B] , \Psi'_+ + \eps [X,\Psi'_+] + \eps \pi \circ \d_A \Psi_- ).
$$

The action functional of the twisted theory is
$$
\int \ip{F(A), B} + c \int \ip{B,B} + \int \ip{\Psi_+', B}
$$
In each term, the pairing $\ip{-,-}$ represents a combination of the wedge product of forms and the inner product on the Lie algebra $\g$. 

Let us perform the gauge-invariant change of coordinates
$$
\Psi_+'  \mapsto \Psi_+' - c B^{1,1} - 2 c B^{2,0}    -  \pi F(A) \\
$$
where $\pi$ is the projection from $\Omega^2$ to $\mscr{S}'_+ = \Omega^{2,0} \oplus \Omega^0 \cdot \omega$. 

Then, the action functional becomes simply
$$
\int \ip{ F(A)^{0,2} , B^{2,0}} + \int \ip{\Psi_+^{\prime 2,0}, B^{0,2}} +\int  \ip{\Psi_+^{\prime 1,1}, B^{1,1}}.
$$
Thus the fields $\Psi'_+, B^{0,2}, B^{1,1}$ do not propagate or interact with the other fields. We can thus integrate them out.  The remaining fields are $B^{2,0}$, $A^{1,0}$, $A^{0,1}$, with action functional $
\int \ip{ F^{0,2} (A) , B^{2,0}}. $ 

As well as ordinary gauge symmetry, we have a gauge symmetry by $\Psi_- \in \Omega^{1,0}$.  We can use $\Psi_-$ to set $A^{1,0} = 0$, so that the fields and the action functional are precisely those of holomorphic BF theory. 
\end{proof}
\begin{remark}
I have given the proof using physics terminology: it is not difficult to verify that this prove gives a quasi-isomorphism of dg Lie algebras 
$$\L^Q \simeq \Omega^{0,\ast}(\g) \oplus \Omega^{2,\ast}(\g)[-1]$$
between the dg Lie algebra describing the twisted $N=1$ theory and that describing holomorphic BF theory. This quasi-isomorphism is compatible with invariant pairings. 
\end{remark}

\subsection{}
We can define the twisted $N=1$ theory on any complex surface $X$, perturbing around any holomorphic $G$-bundle $P$ on $X$.  The Lie algebra describing this theory is $\Omega^{0,\ast}(X,\g_P) \oplus \Omega^{2,\ast}(X,\g_P)[-1]$. 
\begin{theorem}
The twisted $N=1$ theory admits a unique quantization, compatible with certain natural symmetries, on any Calabi-Yau surface $X$; where we perturb around any holomorphic $G$-bundle on $X$.   (For us, Calabi-Yau just means that $X$ is equipped with a holomorphic volume form).
\label{theorem_twisted_existence}
\end{theorem}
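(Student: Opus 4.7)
The plan is to apply the cohomological obstruction theory for BV quantizations developed in \cite{Cos11}. The twisted theory is encoded by the dg Lie algebra $\L = \Omega^{0,\ast}(X,\g_P) \oplus \Omega^{2,\ast}(X,\g_P)[-1]$, with the invariant pairing of degree $-3$ built using the holomorphic volume form on $X$. A perturbative quantization is an order-by-order solution in $\hbar$ of the quantum master equation. Standard results from \cite{Cos11} say that the obstruction to extending a quantization from order $\hbar^{n-1}$ to $\hbar^n$ lies in $H^1$ of the complex of local functionals modulo the BRST differential, while the ambiguity in possible extensions is a torsor over $H^0$ of the same complex. Requiring compatibility with ``certain natural symmetries'' -- which I would take to include the action of holomorphic vector fields preserving the volume form, the $\C^\times$ cotangent scaling that gives $B$ weight $1$ while fixing $A$, and holomorphic translation in the local model -- restricts attention to the invariant subcomplex of local functionals.

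First I would reduce the problem to a local model. Local functionals on a complex surface are sections of a $D$-module whose fiber at a point depends only on formal geometry; by the Calabi-Yau condition, every point has a formal neighborhood isomorphic to $(\C^2, \d z \wedge \d w)$. Descent and a standard spectral sequence argument (as in Chapter 5 of \cite{Cos11}) then reduce the vanishing of $H^0$ and $H^1$ on $X$ to a vanishing statement on $\C^2$, with the symmetry group enlarged to include holomorphic translations and the $\C^\times$ scaling acting on coordinates by $(z,w)\mapsto(\lambda z,\lambda w)$.

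Second, I would exploit the cotangent-theory structure of holomorphic BF theory. Since the theory is the $(-1)$-shifted cotangent to the derived moduli of holomorphic $G$-bundles, local functionals split by polynomial degree in the $B$-field, and the cotangent-scaling symmetry picks out the pieces of $B$-degree $\le 1$ for deformations of the classical action. These pieces correspond, via Gelfand--Fuks--type reduction, to invariants in the Chevalley--Eilenberg cochain complex of $\g[[z,w]]$ with coefficients in jets of $(0,\ast)$-forms and $(2,\ast)$-forms at the origin. Translation, rotation, and scaling invariance then collapse the calculation to a small finite-dimensional sub-complex indexed by the weights of these symmetries.

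The main obstacle is the explicit vanishing of the invariant obstruction group in degree $1$. The calculation is delicate because the cotangent structure gives two families of contributions paired by Serre duality on the formal disc, and one must check that no $\g$-invariant polynomial in the curvature and its derivatives produces a surviving cocycle of the right weight and cohomological degree. The combination of holomorphic translation-invariance (which forces cocycles to be built from derivatives of the fields) with the $\C^\times$ weight constraint (which bounds the polynomial complexity) ultimately forces both $H^0$ and $H^1$ of the invariant local cochain complex to vanish, giving uniqueness and existence. The analogous vanishing for the undeformed twisted theory should follow from the same calculation, since the deformation corresponds to turning on a closed holomorphic $1$-form which only modifies lower-weight pieces.
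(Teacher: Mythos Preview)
Your framework is right --- obstruction theory from \cite{Cos11}, reduction of the local-functional complex via $D_X$-modules to a formal local model --- but you are missing the one symmetry that actually makes the vanishing go through. The paper does no Gelfand--Fuks computation and never analyzes invariant polynomials in the curvature. Instead, writing $\L = \Omega^{0,\ast}(X,\g_P)[\eps]$ with $\eps$ the odd parameter carrying the $B$-field, the paper includes among its ``natural symmetries'' the odd derivation $\partial_\eps$ together with the $\C^\times$ that scales $\eps$ (and gives $\hbar$ weight $-1$). With these in hand, the obstruction-deformation complex at a point becomes the $\C^\times$-positive part of $C^\ast\bigl(\C\cdot\partial_\eps,\; C^\ast(\mf{a}[\eps],\, C^\ast_{red}(\mf{b}[\eps]))\bigr)$ for suitable $\mf{a},\mf{b}$, and a completely formal lemma says this vanishes: in each fixed positive weight one is computing reduced Chevalley cochains of the \emph{contractible} dg Lie algebra $(\mf{a}[\eps],\partial_\eps)$. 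Nothing about $\g$, about Serre duality on the formal disc, or about translation weights enters.

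Your substitute for this step --- splitting by $B$-degree, asserting that the cotangent scaling selects $B$-degree $\le 1$, and then saying the remaining weight constraints ``ultimately force'' $H^0$ and $H^1$ to vanish --- is where the argument breaks. The degree selection is not as you state (at order $\hbar^n$ the relevant $\eps$-weight is $n$, not $\le 1$), and even granting a reduction to low $B$-degree you give no mechanism that kills the candidate cocycles. Without the odd symmetry $\partial_\eps$ the invariant local cochain complex is genuinely nontrivial to analyze, and your sketch does not supply the missing contracting homotopy. Add $\partial_\eps$ to your list of symmetries and the whole computation collapses to the formal lemma above.
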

The proof is given in Appendix \ref{appendix_existence}, where I also explain precisely what I mean by ``natural symmetries''. 

\begin{remark}
It follows from this result that the well-known $R$-symmetry anomaly present in the $N=1$ gauge theory must be $Q$-exact.   If not, then the twisted theory could not be constructed at the quantum level as a $\Z$-graded theory, which would contradict this theorem.
\end{remark}

\section{A deformation}
The Yangian will arise from a twist of a deformation of the $N=1$ gauge theory.  In this section we will describe this deformation (which we call the Yangian deformation) and verify that it is invariant under our chosen supercharge $Q \in \mc S_+$.  

We will work with the holomorphic formulation of the $N=1$ gauge theory (described in section \ref{section_holomorphic_formulation}), using the complex structure on $\R^4$ induced by our chosen supercharge $Q \in \mc S_+$.   As before, we identify $\mscr{S}_- = \Omega^{1,0}$, $\mscr{S}_+ = \Omega^{0,2}\oplus \Omega^0 \cdot \omega$, and $\Omega^2_+ = \Omega^{2,0} \oplus \Omega^{0,2} \oplus \Omega^0 \cdot \omega$.   Let us expand $\Psi_- \in  \Omega^{1,0}$ as 
$$\Psi_- = \Psi_-^z \d z + \Psi_-^w \d w.$$  Similarly, we let $\Psi_+^{0,2}$ and $\Psi_+^{1,1}$ be the components of $\Psi_+$ in $\Omega^{0,2}$ and $\Omega^0 \cdot \omega$ respectively.  

Let 
$$
\op{CS}(A) = \tfrac{1}{2} \ip{A, \d A} + \tfrac{1}{6} \ip{A, F(A)}.
$$
be the Chern-Simons three-form associated to the connection $A$. 

Let us deform the $N=1$ gauge theory to a theory with the same fields, but where we add to the action functional a term 
$$
S' (A,B, \Psi_+, \Psi_- ) = \int \d z\op{CS}(A) +2  c \int \ip{\Psi_+^{0,2}, \Psi_-^w}_\g \d z \d w.
$$
\begin{lemma}
The action of the supercharge $Q \in \mc S_+$ on the fields of the $N=1$ gauge theory can be deformed so that it preserves the deformed action functional $ S + c' S'$.  \end{lemma}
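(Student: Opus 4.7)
The strategy is to compute $\delta_Q S'$ under the unperturbed supersymmetry transformation, identify the result as a field-space variation along a single direction, and absorb it into a first-order correction of the $Q$-action along that direction.

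First I would use the description of $Q$ from the previous section: $\delta_Q A = \Psi_-$ under the natural inclusion $\mc{S}_- = \Omega^{1,0}\hookrightarrow \Omega^1$, $\delta_Q \Psi_+ = \pi B$ under the natural projection $\pi\colon \Omega^2_+ \to \mc{S}_+ = \Omega^{0,2}\oplus \omega\cdot\Omega^0$, and $\delta_Q \Psi_- = \delta_Q B = 0$. Using the identity $\delta \op{CS}(A) = \ip{F(A), \delta A}_{\g} + \d(\cdots)$ and discarding the exact piece after wedging with $\d z$, the Chern-Simons summand varies as
\begin{equation*}
\delta_Q \int \d z\wedge \op{CS}(A) \;=\; \int \d z \wedge \ip{F(A), \Psi_-}_{\g} \;=\; \int \ip{F^{0,2}(A), \Psi_-^w}_{\g}\, \d z \wedge \d w,
\end{equation*}
since only the type-$(2,2)$ part of the integrand contributes. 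Because $\delta_Q \Psi_+^{0,2} = B^{0,2}$, the fermionic mass term varies to $2c\int \ip{B^{0,2}, \Psi_-^w}_{\g}\, \d z\wedge \d w$. Summing,
\begin{equation*}
\delta_Q S' \;=\; \int \ip{F^{0,2}(A) + 2c B^{0,2},\, \Psi_-^w}_{\g}\, \d z \wedge \d w.
\end{equation*}

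The key observation is that the $\g$-valued $(0,2)$-form $F^{0,2}(A) + 2c B^{0,2}$ is precisely $\delta S/\delta B^{2,0}$, the Euler-Lagrange equation for the $(2,0)$-component of $B$ in the first-order action. Since $\Psi_-^w\, \d z\wedge \d w$ is a $\g$-valued $(2,0)$-form, of the same type as $B^{2,0}$, the whole quantity $\delta_Q S'$ is the linear response of $S$ to an infinitesimal shift of $B^{2,0}$ in the direction $-\Psi_-^w\, \d z\wedge \d w$. This motivates defining the deformed supercharge $\delta_Q^{\op{new}} = \delta_Q + c'\, \delta'_Q$, where $\delta'_Q$ vanishes on every field except $B^{2,0}$, on which
\begin{equation*}
\delta'_Q B^{2,0} \;=\; -\Psi_-^w\, \d z \wedge \d w.
\end{equation*}
Since $S'$ has no $B^{2,0}$-dependence we have $\delta'_Q S' = 0$; combined with $\delta_Q S = 0$ (from the invariance of the undeformed action established earlier) and $\delta'_Q S = -\delta_Q S'$ by construction, this yields $\delta_Q^{\op{new}}(S + c' S') = 0$ exactly, with no higher-order terms in $c'$ required.

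The main routine obstacle is tracking signs and type data in the Chern-Simons variation, in particular verifying that the integration-by-parts boundary term vanishes after wedging with $\d z$. A secondary check is that the deformed supercharge still squares, modulo gauge transformations, to a translation; since $\delta'_Q B^{2,0}$ is linear in $\Psi_-$, which is itself $\delta_Q$-closed, the original closure relation is preserved by the correction.
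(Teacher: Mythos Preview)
Your proposal is correct and matches the paper's proof almost exactly: both define the correction $\delta'_Q B^{2,0} = -\Psi_-^w\, \d z \wedge \d w$ and verify that it cancels $\delta_Q S'$ against the $B^{2,0}$-variation of $S$, with $\delta'_Q S' = 0$ because $S'$ has no $B$-dependence. The only difference is presentational: you motivate the correction by recognizing $\delta_Q S'$ as the Euler--Lagrange variation $\delta S/\delta B^{2,0}$ paired with $\Psi_-^w\, \d z\wedge \d w$, whereas the paper simply writes down the deformed $Q$-action and checks the two cancellations (Chern--Simons against $\ip{B,F(A)}$, fermion mass against $\ip{B,B}$) directly.
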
 
\begin{proof}
The full action functional of our theory is
\begin{multline*}
S^{deformed} = \int \ip{B, F(A)}  + \int\ip{\Psi_+ , \d_A \Psi_- } + c \int \ip{B,B} \\ 
+ c' \int \d z \op{CS}(A)  +2  c c' \int \ip{\Psi_+^{0,2}, \Psi_-^w}_\g \d z \d w.
\end{multline*}
The original action of $Q$ is defined by
$$
(A,B,\Psi_+, \Psi_- ) \mapsto (A + \eps \iota(\Psi_-) ,B,\Psi_+ + \eps \pi B , \Psi_- )
$$
where $\iota: \Omega^{1,0} \into \Omega^1$ is the natural inclusion, and 
$$\pi : \Omega^{2,0} \oplus \Omega^{0,2} \oplus \Omega^0 \cdot \omega \to \Omega^{0,2} \oplus \Omega^0 \cdot \omega$$
is the natural projection.  

Let us deform this to the action defined by the formula
$$
(A,B,\Psi_+, \Psi_- ) \mapsto (A + \eps \iota(\Psi_-)  ,B - c' \eps \Psi_-^w \d z \d w ,\Psi_+ + \eps \pi B , \Psi_- ) 
$$

Note that the $Q$-variation of $c' \int \d z \op{CS}(A)$ is 
$$
\delta_{Q} c' \int \d z \op{CS}(A) = c' \int \d z \d w \Psi_-^w F^{0,2}(A).
$$
This cancels with the $c'$-dependent term appearing in the $Q$-variation of $c \int \ip{B,F(A)}$. 

Similarly, the $Q$-variation of $2 c c' \int \ip{\Psi_+^{0,2}, \Psi_-^w}_\g \d z \d w $ is $2 c c' \int \ip{B^{0,2}, \Psi_-^w} \d z \d w$.  This cancels with the $Q$-variation of $\int \ip{B,B}$.  
\end{proof}

The main result of this paper concerns Wilson operators of the twist of this deformed theory.  These Wilson operators in the twisted theory arise from a supersymmetric Wilson operator in the deformed but not twisted theory. This is in contrast to the undeformed $N=1$ gauge theory, where there are no supersymmetric Wilson operators.

Let $A_w, A_{\br{w}}$ denote the coefficients in $A$ of $\d w$ and $\d \br{w}$. Let $B_{\d z \d w}$ denote the coefficient in $B$ of $\d z \d w$. 
\begin{lemma}
The connection in the $w$-plane by the restriction of 
$$A_w \d w + A_{\br{w}} \d \br{w} + \tfrac{1}{c'} B_{\d z \d w} \d w$$ is invariant under the action of our chosen supercharge $Q \in \mc S_+$.    
\end{lemma}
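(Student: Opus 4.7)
The plan is a short, direct coordinate calculation, relying entirely on the deformed supercharge action established in the immediately preceding lemma. The only substantive content is that the coefficient $\tfrac{1}{c'}$ appearing in the definition of the connection is exactly what is needed for a cancellation.

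First I would unpack $\delta_Q$ on each summand. From the previous lemma, on the relevant fields
$$\delta_Q A = \eps\,\iota(\Psi_-), \qquad \delta_Q B = -c'\eps\,\Psi_-^w\,\d z\,\d w.$$
Since $\Psi_-\in\Omega^{1,0}$ has no antiholomorphic components, $\iota(\Psi_-)=\Psi_-^z\,\d z+\Psi_-^w\,\d w$, so at the level of components $\delta_Q A_w = \eps\Psi_-^w$, $\delta_Q A_{\br{w}}=0$, and $\delta_Q B_{\d z\,\d w} = -c'\eps\Psi_-^w$. Next I would restrict the one-form
$$\alpha \;=\; A_w\,\d w + A_{\br{w}}\,\d\br{w} + \tfrac{1}{c'}B_{\d z\,\d w}\,\d w$$
to a $w$-plane (fixed $z$): this just means working modulo $\d z$ and $\d \br{z}$, so $\alpha$ is a genuine connection on the plane. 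Reading off coefficients, its $Q$-variation is
$$\delta_Q \alpha \;=\; \eps\Psi_-^w\,\d w + 0 + \tfrac{1}{c'}\bigl(-c'\eps\Psi_-^w\bigr)\,\d w \;=\; 0.$$

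There is no real obstacle. The only thing worth checking is that no stray component of $\delta_Q A$ or $\delta_Q B$ lying in the $\d z$ or $\d \br{z}$ direction could contribute after restriction; this is immediate because $\iota(\Psi_-)$ has no $\d \br{z}$ or $\d \br{w}$ component, and $\delta_Q B$ is supported entirely in the $\d z\,\d w$ direction. The content of the lemma is really retrospective: it exhibits the normalization $\tfrac{1}{c'}$ (and, dually, the coefficient $-c'$ placed into the deformed $Q$-action on $B$ in the previous lemma) as being engineered precisely so that this particular combination defines a supersymmetric connection on the $w$-plane, in contrast to the undeformed theory where no such supersymmetric Wilson-line connection exists.
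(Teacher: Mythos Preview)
Your proof is correct and is exactly the direct verification the paper has in mind; the paper's own proof is simply the one-line remark that the statement is immediate from the deformed supersymmetry action, and you have just written out the component calculation explicitly.
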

\begin{proof}
This is immediate from the definition of the deformed action of supersymmetry. 
\end{proof}

\subsection{}
This lemma shows that our deformation of the $N=1$ gauge theory has enough supersymmetry to be able to twist it.  The twisted theory will be a deformation of holomorphic BF theory.

\begin{definition}
Define the deformed holomorphic BF theory to be the theory where the fields are $A \in \Omega^{0,1}(\C^2) \otimes \g$ and $B \in \Omega^{2,0}(\C^2) \otimes \g$; the action functional is 
$$
S(A,B) = \int \ip{B, F^{0,2}(A) } + \lambda \tfrac{1}{2} \int \d z A \partial A
$$
where $\lambda$ is a coupling constant, and $\partial A \in \Omega^{1,1}$ is the $(1,1)$ part of $\d A$.  The Lie algebra of the gauge group is $\Omega^0 \otimes \g$, and the action of this Lie algebra on the fields is by
\begin{align*}
A & \mapsto A + \eps \dbar X + \eps [X,A]\\
B & \mapsto B + \lambda \eps \d z \wedge \partial X + \eps [X,B].
\end{align*}
\end{definition}
\begin{remark}
This theory is equivalent to the one where the fields are $A' \in (\Omega^1(\C^2) / \d z) \otimes \g$, and the action functioanl is $\lambda \int \d z CS(A)$. If we write $B = B_0 \d z \d w$, the equivalence sends the fields $(A,B)$ in the above formulation to
$$
A' = \lambda^{-1} B_0 \d w + A.  
$$
\end{remark}
The dg Lie algebra describing this theory is
$$
\Omega^{0,\ast} (\C^2,\g) \xto{\d z \partial} \Omega^{1,\ast}(\C^2,\g)[-1].
$$
This dg Lie algebra describes the moduli space of holomorphic $G$-bundles on $\C^2$ with a holomorphic connection in the $w$-direction.

\begin{proposition}
The $Q$-twist of the Yangian deformation of the $N=1$ gauge theory is equivalent to this deformed holomorphic BF theory, where the coupling constants $c'$ and $\lambda$ correspond. 
\end{proposition}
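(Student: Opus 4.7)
The plan is to mirror the proof of Theorem \ref{theorem_bf}, tracking the effect of the Yangian deformation through the same sequence of field redefinitions and gauge-fixings. The BV action of the twisted Yangian-deformed theory is $S + c' S'$, where $S$ is the BV action of the twisted undeformed $N=1$ theory (appearing in the proof of Theorem \ref{theorem_bf}) and $S'$ is the Yangian deformation (the Chern-Simons term plus the fermionic mass term). By the lemma proved above, this combined action is invariant under the deformed supercharge $Q$, so that the twisting procedure produces a consistent BV theory with a modified BRST differential.

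The first step is to apply the gauge-invariant change of coordinates $\Psi'_+ \mapsto \Psi'_+ - cB^{1,1} - 2cB^{2,0} - \pi F(A)$ from the proof of Theorem \ref{theorem_bf}. This transforms the undeformed part of the action to
$$
\int \ip{F^{0,2}(A), B^{2,0}} + \int \ip{\Psi_+^{\prime 2,0}, B^{0,2}} + \int \ip{\Psi_+^{\prime 1,1}, B^{1,1}},
$$
exactly as in that proof. Neither piece of the deformation is affected: the Chern-Simons term $c'\int \d z\, \op{CS}(A)$ depends only on $A$, and the mass term $2cc'\int \ip{\Psi_+^{0,2}, \Psi_-^w}_\g \d z\d w$ involves only fields outside of $\Psi'_+$ and $B$.

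Second, integrate out the non-propagating fields $\Psi'_+, B^{0,2}, B^{1,1}$, which decouple from the reduced theory as in Theorem \ref{theorem_bf}. The mass term drops out because $\Psi_+^{0,2}$ lies in the antifield sector of the twisted theory and is set to zero by its equation of motion. Finally, use the residual gauge symmetry by $\Psi_- \in \Omega^{1,0}$ to set $A^{1,0} = 0$; under this gauge fixing, the Chern-Simons deformation becomes
$$
c'\int \d z\, \op{CS}(A^{0,1}) = \tfrac{1}{2}c'\int \d z\wedge A^{0,1}\wedge \partial A^{0,1},
$$
since the cubic piece $\tfrac{1}{6}c'\int \d z\wedge \ip{A^{0,1},[A^{0,1},A^{0,1}]}$ vanishes automatically (the integrand lies in $\Omega^{0,3}(\C^2)=0$). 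The remaining theory has fields $A \in \Omega^{0,1}\otimes\g$ and $B \in \Omega^{2,0}\otimes\g$ with action $\int \ip{B, F^{0,2}(A)} + \tfrac{1}{2}c'\int \d z\, A \partial A$, which is the deformed holomorphic BF theory with $\lambda = c'$.

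The main subtlety is verifying that the deformation of the $Q$-action on $B$ (which modifies the BV differential after twisting) and the associated mass term do not leave residual couplings in the reduced theory. These two ingredients were introduced specifically to cancel the $Q$-variation of $c'\int \d z\, \op{CS}(A)$ and the $Q$-variation of $c\int \ip{B,B}$, respectively, in the deformed $N=1$ action. Once $\Psi'_+$ and the non-propagating components of $B$ are integrated out, and $A^{1,0}$ is set to zero using the $\Psi_-$ gauge symmetry, a direct check shows that the only surviving effect of the deformation is the Chern-Simons term on $A^{0,1}$.
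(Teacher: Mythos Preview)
Your overall strategy is right, and the change of coordinates together with the gauge--fixing does produce the correct action functional.  But the argument has a genuine gap: you never track what happens to the gauge transformations, and this is where the deformation actually bites.

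First, the fermionic mass term does not simply ``drop out.''  In the twisted theory, $\Psi_+^{0,2}$ sits in cohomological degree $2$ (the antifield to $\Psi_+^{\prime\,0,2}$) and $\Psi_-^w$ in degree $0$, so the BV term $2cc'\int\langle\Psi_+^{0,2},\Psi_-^w\rangle\,\d z\,\d w$ is not an action term on physical fields at all --- it encodes a modification of the $\Psi_-$--gauge transformation of $\Psi'_+$.  Concretely, in the twisted deformed theory $\Psi_-$ acts by
\[
(A,B,\Psi'_+)\;\longmapsto\;(A+\eps\,\iota\Psi_-,\ B-\eps c'\,\Psi_-^w\,\d z\,\d w,\ \Psi'_+ + \eps\,2cc'\,\Psi_-^w\,\d z\,\d w + \eps\,\pi\circ\d_A\Psi_-).
\]
The new pieces here (the $c'$ shift of $B$ and the $2cc'$ shift of $\Psi'_+$) are the entire content of the deformation at the level of the BRST differential.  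The paper's proof introduces $\Phi^{2,0}=\Psi_+^{\prime\,2,0}+2cB^{2,0}$ precisely so that these two contributions cancel and $\Phi^{2,0}$ transforms only by $(\d_A\Psi_-)^{2,0}$.

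Second, and more importantly, you must follow the gauge action through the last step.  When you fix $A^{1,0}=0$ using $\Psi_-$, an infinitesimal $X$--transformation $A\mapsto A+\d X+[X,A]$ produces a $(1,0)$--piece $\partial X$ which must be compensated by a $\Psi_-$--transformation with $\Psi_-=-\partial X$.  In the \emph{undeformed} theory this has no further effect, but in the deformed theory $\Psi_-$ also shifts $B$.  The net result is that the residual $X$--gauge acts on the reduced fields by
\[
(A^{0,1},B^{2,0})\;\longmapsto\;(A^{0,1}+\eps\,\dbar X+\eps[X,A^{0,1}],\ B^{2,0}+\eps[X,B^{2,0}]+\eps c'\,\partial_w X\,\d z\,\d w).
\]
This modified gauge action on $B^{2,0}$ is part of the \emph{definition} of the deformed holomorphic BF theory (look at the definition just before the proposition).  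Your concluding sentence, that ``the only surviving effect of the deformation is the Chern--Simons term on $A^{0,1}$,'' is therefore not correct: the deformation also survives in the gauge sector, and without deriving this you have not matched the target theory.
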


\begin{proof}
The proof parallels closely the proof of theorem \ref{theorem_bf}.   We can write the twisted theory as the BV theory associated to a collection of fields, an action functional, and a Lie algebra of gauge symmetries.  The fields are, as in theorem \ref{theorem_bf}, 
\begin{align*}
A & \in \Omega^1  \\
B & \in \Omega^2_+ = \Omega^{2,0} \oplus \Omega^0 \cdot \omega \oplus \Omega^{0,2} \\
\Psi_+' & \in \mscr{S}_+' = \Omega^{2,0} \oplus \Omega^0 \cdot \omega.
\end{align*}
The action functional is 
$$
S(A,B,\Psi'_+) = \int \ip{F(A), B} + c \int \ip{B,B} + c' \int \d z \op{CS}(A) + \int \ip{\Psi', B}. $$
The Lie algebra of the gauge group is $\Omega^0 \otimes \g \oplus \mscr{S}_- \otimes \g$.   Infinitesimal gauge transformations $X \in \Omega^0 \otimes \g$ act in the usual way: $A$ is a connection and everything else is a tensor.  An infinitesimal gauge transformation $\Psi_- \in \mscr{S}_- \otimes \g$ acts by
$$
(A,B,\Psi'_+) \to (A + \eps \iota \Psi_-, B - \eps c'\d z \d w \Psi_-^w, \Psi'_+ + \eps 2 c c' \Psi_-^w \d z \d w + \eps \pi \circ \d_A \Psi_- )  
$$
where, as before, $\iota : \Omega^{1,0} \to \Omega^1$ is the natural inclusion, and $\pi : \Omega^2 \to \Omega^{2,0} \oplus \Omega^0 \cdot \omega$ is the natural projection.   In this expression, we are expanding $\Psi_- \in \Omega^{1,0}$ as $\Psi_- = \Psi_-^z \d z + \Psi_-^w \d w$. 

As before, we let $\Psi_+^{\prime a,b}$, $A^{a,b}$ and $B^{a,b}$ indicate the components of the fields $A,B, \Psi'_+$ which are in $\Omega^{a,b}$.    Let us perform the gauge invariant change of coordinates on the space of fields  
$$
\Psi_+'  \mapsto \Psi_+' - c B^{1,1}   -  \pi F(A) \\
$$
where $\pi$ is the projection from $\Omega^2$ onto $\mscr{S}_+' = \Omega^{2,0} \oplus \Omega^0 \cdot \omega$. 

After this change of coordinates, the action functional becomes
$$
\int \ip{ F(A)^{0,2} , B^{2,0}} + 2c \int \ip{B^{0,2}, B^{2,0} }  + c' \int \d z \op{CS}(A) + \int  \ip{\Psi_+^{\prime 2,0}, B^{0,2}}  + \int  \ip{\Psi_+^{\prime 1,1}, B^{1,1}}.
$$
We can now integrate out the fields $B^{1,1}$ and $\Psi_+^{\prime 1,1}$, as they do not propagate or interact with other fields. 

Next, let us change coordinates again, and let 
$$
\Phi^{2,0} = \Psi_+^{\prime 2,0} + 2 c  B^{2,0}.
$$
We will use the field $\Phi^{2,0}$  instead of $\Psi^{\prime 2,0}$.  Note that $\Phi^{2,0}$ transforms by $\Phi^{2,0} \mapsto \Phi^{2,0} + \eps \left( \d_A \Psi_-\right)^{2,0}$ under the action of $\Psi_- \in \Omega^{1,0} \otimes \g$.  

In this new coordinate system, the action functional is simply
$$
\int \ip{ F(A)^{0,2} , B^{2,0}} +  \int \ip{B^{0,2}, \Phi^{2,0} }   + c' \int \d z \op{CS}(A) .
$$
Thus, we can integrate out the fields $\Phi^{2,0}$ and $B^{0,2}$.  

The next step is to fix the gauge symmetry by $\Psi_- \in \Omega^{1,0} \otimes \g$.   Since this acts on $A$ by sending $A \mapsto A + \eps \Psi_-$, we can use this gauge symmetry to set $A^{1,0} = 0$.  In other words, the locus where $A^{1,0} = 0$ is a slice to the action of $\Omega^{1,0} \otimes \g$.   

Once we do this, the space of fields consists of $A^{0,1}$ and $B^{2,0}$.  The action functional is 
$$
\int \ip{ F(A)^{0,2} , B^{2,0}}  + c' \tfrac{1}{2} \int \d z \ip{  A  , \partial A}.
$$
The remaining gauge symmetry is $X \in \Omega^0 \otimes \g$.  However, the action of this gauge symmetry is not the most obvious one.   As, the action of $X$ on $A^{0,1}$ involves the term $\partial X \in \Omega^{1,0}$.   This needs to be removed by an application of the other gauge symmetry $\Psi_- \in \Omega^{1,0} \otimes \g$.  Thus, we find that $X$ acts on the space of fields by
$$
(A^{0,1}, B^{2,0} ) \mapsto  (A^{0,1} , B^{2,0})  + \eps ([X,A^{0,1}],[ X,B^{2,0}] ) 
+ \eps \left( \dbar X, c' \dpa{w} X \d z \d w \right).
$$
This shows that our theory is equivalent to the desired deformation of holomorphic BF theory. 
\end{proof}

\subsection{}
\label{log_theory}
The deformed holomorphic BF theory, as a classical field theory, can be defined on any complex surface $X$ equipped with a principal $G$-bundle $P$, and a closed holomorphic $1$-form $\alpha$, where $P$ is equipped with a flat connection along the holomorphic foliation defined by the kernel of $\alpha$. 
\begin{theorem}
Let $X$ be a Calabi-Yau surface equipped with a closed holomorphic one-form $\alpha$.  Let $V$ be the vector field which contracts with the holomorphic volume form to give $\alpha$. Let $P$ be a flat $G$-bundle on $X$, equipped with a flat holomorphic connection in the direction spanned by $V$.   Then there exists a unique quantization of the deformed, twisted $N=1$ gauge theory, compatible with certain natural symmetries as before. 
\end{theorem}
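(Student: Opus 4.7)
The plan is to prove this by the same cohomological obstruction-theory machinery that establishes Theorem \ref{theorem_twisted_existence}, now applied to the dg Lie algebra that describes the deformed theory. Classically the theory is the cotangent-type BV theory whose dgla is
$$
\L^{X,\alpha}_P \;=\; \Omega^{0,\ast}(X,\g_P) \;\xrightarrow{\;\alpha\wedge\partial\;}\; \Omega^{1,\ast}(X,\g_P)[-1],
$$
with invariant pairing of degree $-3$ given by wedging, using the holomorphic volume form to identify $\Omega^{1,\ast}$ with $\Omega^{2,\ast}$ twisted by the dual, and pairing the Lie-algebra factors. This dgla specializes to the undeformed holomorphic BF dgla when $\alpha=0$, and to the deformed holomorphic BF theory of the preceding proposition when $X=\C^2$ and $\alpha=\d z$.

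Next I would apply the formalism of \cite{Cos11}: a quantization is built inductively in $\hbar$, and at each order the obstruction to extending an $n$-loop quantization to $(n+1)$ loops is a cohomology class in $H^1$ of the complex of local functionals on $\L^{X,\alpha}_P[1]$ modulo the classical BV differential, restricted to the subcomplex of local functionals invariant under the prescribed ``natural symmetries.'' The natural symmetries that need to be imposed are the ones already used for Theorem \ref{theorem_twisted_existence}: holomorphic diffeomorphisms preserving the volume form, the gauge group, the ghost-number grading, and in addition (this is the new ingredient) translations along $V$, together with the scaling symmetry that gives the deformation term its natural weight. Deformations at the same order live in $H^0$ of the same complex, and one must show only one-parameter family of deformations (the coupling to $\alpha$ itself) survives.

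The local cohomology is computed, as in the appendix proof for the undeformed case, by a descent from local functionals to $D_X$-modules of jets, and then by a Gelfand--Fuks style calculation of the relevant Lie-algebra cohomology of $\g[[z,w,\dbar z,\dbar w]]\oplus\g[[z,w,\dbar z,\dbar w]]\,\d z\wedge\d w$ (with differential perturbed by $\alpha\wedge\partial$) with coefficients in local functionals, further cut down by the imposed symmetries. The symmetries force any obstruction or deformation to be built from $\op{Spin}(4)$-covariant, scale-covariant, $V$-invariant tensors of the right weight; a combinatorial count of such invariants, using invariant theory of $G$ and of the holonomy of the Hodge decomposition, shows that only the classes already identified in the $\alpha=0$ case can appear, and these are killed by a Feynman diagram calculation of two specific diagrams (the same ones mentioned in the introduction for fixing the normalization). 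Finally, since adding $\alpha\wedge\partial$ to the differential is a filtered deformation of the undeformed complex and the relevant cohomology is concentrated in a single filtration degree, a spectral sequence argument shows the obstruction and deformation groups for $\alpha\neq0$ are subquotients of those for $\alpha=0$, so vanishing in the undeformed case (already established in the appendix) implies vanishing here.

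The main obstacle will be ensuring that the extra term $\alpha\wedge\partial$ in the classical differential does not create new cohomology classes that were absent in the $\alpha=0$ analysis, and in particular that the $V$-invariance assumption is strong enough to rule out anomalies coming from the interaction between the Chern--Simons-like term and the holomorphic BF interaction. This is essentially a bookkeeping question about weights under the scaling action associated with $V$ and under the $\C^\times$ that scales $\hbar$, and I expect it to reduce, via the spectral sequence above, to a finite-dimensional Lie algebra cohomology computation for $\g$ that is already implicit in the proof of Theorem \ref{theorem_twisted_existence}.
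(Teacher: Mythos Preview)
Your approach diverges from the paper's in a fundamental way, and as written it has a genuine gap.

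The paper does \emph{not} re-run the obstruction analysis for the deformed complex $\L^V$. Instead, it observes that the deforming operator $\eps\nabla_V$ is itself an element of the symmetry Lie algebra $\mf{h}=H^0(X,\op{At}^0_{P,D})[\eps]$ that was used when quantizing the \emph{undeformed} theory. Since the undeformed quantization is built to be $\mf{h}$-equivariant, it lives over $C^\ast(\mf{h})[[\hbar]]$; base-changing along the map $C^\ast(\mf{h})\to C^\ast(\C\cdot\eps\nabla_V)=\C[[\lambda]]$ produces the deformed theory as a family over $\C[[\lambda]]$, with no new obstruction calculation. A $\C^\times$-invariants trick (balancing weights of $\lambda$ and $\hbar$) then sets $\lambda=1$. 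In short: the deformation is a background value of a symmetry already accounted for, so the deformed quantization comes for free.

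Your spectral-sequence plan, by contrast, would need the $E_1$ page --- the \emph{undeformed} obstruction complex, but computed with only the symmetries that survive the deformation --- to vanish. The paper's vanishing argument for the undeformed case (Lemma~\ref{lemma_vanishing}) hinges entirely on the odd symmetry $\partial_\eps$: it is the fact that $(\mf{a}[\eps],\partial_\eps)$ is contractible that kills the positive $\C^\times$-weight cohomology. But $\partial_\eps$ does \emph{not} commute with the deformed differential $\dbar+\lambda\eps\nabla_V$, so it is not among the symmetries you can impose on the deformed theory. Your list of symmetries (holomorphic diffeomorphisms, gauge, ghost grading, $V$-translation, scaling) omits $\partial_\eps$, and without it you have no mechanism to force the $E_1$ page to vanish. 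The spectral sequence thus doesn't close.

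Two smaller points: the theory is not $\op{Spin}(4)$-invariant (only one $SU(2)$ survives), so invoking $\op{Spin}(4)$-covariance is incorrect; and the two Feynman diagrams mentioned in the introduction normalize the $R$-matrix/spectral parameter and play no role in the existence proof, which is purely cohomological.
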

The proof is given in Appendix \ref{appendix_existence}.

We will use a variant of this deformed holomorphic BF theory, which can be defined on any complex surface $X$ with the following data:
\begin{enumerate}
\item A reduced divisor $D \subset X$.
\item A meromorphic volume form $\omega \in K_X(2 D)$ which trivializes the bundle $K_X(2 D)$.
\item A holomorphic vector field $V$ which preserves both $\omega$ and $D$. Thus, $V$ is parallel to $D$.
\end{enumerate}
Let $P$ be a holomorphic principal bundle on $X$, trivialized on $D$.  Let $\nabla_V$ be a holomorphic connection on $P$ in the direction given by $P$.  In the case $G = GL_n$, and $A$ is the rank $n$ holomorphic vector bundle on $X$, then $\nabla_V$ is encoded by a map
$$
\nabla_V : \cinfty(X,A) \to \cinfty(X,A)
$$ 
on smooth sections of $A$, satisfying
\begin{align*}
\nabla_V (f s) &= f \nabla_V s + (V f) s \\
[\nabla_V, \dbar] = 0.
\end{align*}
Along the divisor $D$ on which $P$ is trivialized, we assume that $\nabla_V$ is the connection associated to the trivialization. 

Given $P$, we can define a perturbative field theory as follows.  Fields are pairs
\begin{align*}
\alpha & \in \Omega^{0,1}(X, \g_P(-D)) \\
\beta & \in \Omega^{0,0}(X, \g_P(-D)) .
\end{align*}
Here $\g_P(-D)$ refers to the holomorphic bundle obtained by tensoring the adjoint bundle $\g_P$ with $\Oo(-D)$. 

The action functional is
$$
S(\alpha,\beta) = \int_X \omega \ip{\alpha, \dbar \beta}_{\g} + \omega \tfrac{1}{2}\ip{\alpha, \nabla_V \alpha }_{\g} + \tfrac{1}{6} \ip{[\alpha,\alpha], \beta}_{\g}.
$$
Although the holomorphic volume form $\omega$ has quadratic poles along $D$, the smooth volume forms appearing in the integrand have no singularities, because $\alpha$ and $\beta$ have first-order zeroes along $D$.

The Lie algebra of the gauge group is $\Omega^{0,0}(X, \g_P(-D))$.  The infinitesimal action of an element $c \in \Omega^{0,0}(X,\g_P(-D))$ on $(\alpha,\beta)$ is by
\begin{align*}
\alpha & \mapsto \alpha + \dbar c + [c, \alpha] \\
\beta & \mapsto \beta + \nabla_V c + [c,\beta]. 
\end{align*}

This theory is associated as above to the dg Lie algebra 
$$
\mscr{L} = \Omega^{0,\ast}(X,\g_P[\eps] (-D)),
$$
where $\eps$ is an odd parameter.  The differential on $\mscr{L}$ is $\dbar + \eps \nabla_V$, and the pairing is
$$
\ip{\alpha, \eps \beta} = \int_X \omega \ip{\alpha,\beta}_{\g}
$$
where $\alpha,\beta \in \Omega^{0,\ast}(X, \g_P(-D))$. 

Note that this dg Lie algebra controls the space of deformations of $P$ as a holomorphic bundle trivialized on $D$ and with a partial connection $\nabla_V$.  Thus, the moduli of solutions of the equations of motion is the space of such bundles.  This derived moduli space has a symplectic form of degree $-1$. 

\begin{theorem}
This theory admits a unique perturbative quantization compatible with certain symmetries. 
\end{theorem}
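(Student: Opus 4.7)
The plan is to apply the general deformation-theoretic machinery of \cite{Cos11}, in which the obstruction to constructing a quantization at order $\hbar^{n+1}$ from one at order $\hbar^n$ is a cohomology class in $H^1$ of the complex $\Oo_{\op{loc}}(\L)$ of (symmetry-equivariant) local functionals on $\L = \Omega^{0,\ast}(X,\g_P[\eps](-D))$, and the ambiguity in such a lift is parametrized by $H^0$ of the same complex. Thus uniqueness and existence both reduce to showing that the relevant $H^{\le 1}$ vanishes modulo the inner automorphisms that produce the classical action. First I would set up what ``compatible with certain symmetries'' means in this setting: the relevant symmetry group is the group $\op{Aut}(X,D,\omega,V,P)$ of automorphisms preserving all the structure, together with an $R$-symmetry scaling the parameter $\eps$ (of weight $-1$) and $\hbar$ appropriately, and a $\g$-gauge compatibility; this restricts the local functionals under consideration to ones homogeneous of a definite weight.

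Next I would localize the problem. By the descent properties of the sheaf of local functionals, the obstruction is detected on any small coordinate chart, so it suffices to work on a polydisc $U \subset X$ on which $\omega$ has its standard form and $V = \partial/\partial w$ (away from $D$) or the appropriate normal form near a smooth point of $D$. On such a chart, local functionals can be identified, via a Gelfand--Fuks-type computation, with $\g$-invariants in a complex built from jets of the fields; the $\dbar$-part of the differential computes Dolbeault cohomology of the chart (hence is concentrated in degree $0$ with coefficients in holomorphic jets), and the $\eps\nabla_V$-part gives a further de Rham-type differential along the flow of $V$. Together with the weight constraints imposed by $\omega$, the grading $\eps$, and the scaling symmetries (in particular the way $\omega \in K_X(2D)$ forces the pairing to have definite weight), this reduces the relevant piece of $\Oo_{\op{loc}}(\L)$ to a very small complex of polynomial expressions in jets of $\alpha$ and $\beta$ at a point.

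I would then carry out the direct cohomological computation on this reduced complex, in close analogy with the proof of Theorem \ref{theorem_twisted_existence} for the undeformed, holomorphic BF theory; the presence of the deformation $\eps \nabla_V$ is a \emph{linear} modification of the differential, and acts on the relevant reduced complex as the de Rham differential in one topological direction, which further kills most classes. The upshot is that $H^0$ and $H^1$ of symmetry-invariant local functionals vanish in the relevant weight, producing both uniqueness and vanishing of the obstruction at every order in $\hbar$. Finally one must check the boundary behavior along $D$: the fact that fields in $\g_P(-D)$ vanish along $D$, while $\omega$ has a double pole, ensures that all integrands appearing in the classical action and its renormalized partners are smooth across $D$, so the local-to-global argument extends without change from charts away from $D$ to charts meeting $D$.

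The main obstacle will be controlling the one-loop obstruction carefully: even when the abstract cohomology vanishes in the correct weight, one must exhibit a genuine renormalized interaction satisfying the quantum master equation, which in turn requires verifying that the parametrix-dependent counterterms produced by the Wilsonian effective-action machinery of \cite{Cos11} can be chosen invariant under $\op{Aut}(X,D,\omega,V,P)$. This is essentially a compatibility between the renormalization procedure and the symmetries; it should follow from a careful choice of gauge-fixing operator adapted to $\omega$ and $V$ (for instance, the adjoint of $\dbar + \eps \nabla_V$ with respect to a Hermitian metric whose volume form is $\omega\wedge\bar\omega$ near generic points), together with the vanishing of the anomaly cohomology class proved in the previous paragraph.
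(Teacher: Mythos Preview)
Your overall framework (obstruction theory for quantization in the sense of \cite{Cos11}, localization to charts, reduction to a small complex of local functionals) is correct in spirit, but there is a genuine gap in the list of symmetries you impose, and the paper's route to the deformed theory is conceptually different from yours.

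The key missing ingredient is the \emph{odd} derivation $\partial_\eps$ of cohomological degree $-1$, which acts on $\L = \Omega^{0,\ast}(X,\g_P(-D))[\eps]$ in the obvious way and must be included among the symmetries one demands the quantization respect. The paper's vanishing argument hinges entirely on this: the relevant piece of the obstruction-deformation complex becomes, stalkwise, $C^\ast\bigl(\C\cdot\partial_\eps,\,C^\ast(\mf{a}[\eps],\,C^\ast_{red}(\mf{b}[\eps]))\bigr)$ for suitable Lie algebras $\mf{a}$ acting on $\mf{b}$, and the positive $\C^\times$-weight part of this is shown to vanish by identifying it with the Chevalley complex of the \emph{contractible} dg Lie algebra $(\mf{a}[\eps],\partial_\eps)$. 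Your $\C^\times$ $R$-symmetry alone, together with $\op{Aut}(X,D,\omega,V,P)$, does not produce this contractibility; it is $\partial_\eps$ that does the work. Your suggestion that the $\eps\nabla_V$-part of the differential ``acts as a de Rham differential in one topological direction'' and thereby kills classes is not the actual mechanism, and I do not see how to make it precise enough to conclude vanishing.

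There is also a structural difference. The paper does \emph{not} run a separate obstruction computation for the deformed theory. Instead, it proves the undeformed theory admits a unique quantization equivariant for the larger symmetry algebra $\mf{h} = H^0(X,\op{At}^0_{P,D})[\eps]$ (together with $\partial_\eps$ and $\C^\times$). Since $\eps\nabla_V \in \mf{h}$, one gets a map $C^\ast(\mf{h}) \to \C[[\lambda]]$, and tensoring the quantized theory along this map produces the deformed theory as a $\C^\times$-equivariant family over $\C[[\lambda]]$; the grading then lets one set $\lambda = 1$. This symmetry-to-deformation trick avoids redoing any cohomology computation with $\nabla_V$ in the differential, and automatically handles the divisor $D$ uniformly. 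Your direct approach would in principle require a fresh analysis of the deformed complex, and as written does not supply one.
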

Again, the proof is in the appendix.  Note that away from the divisor $D$, the quantum theory we construct is the same one we considered earlier. 

Of particular interest is the case when $X = \mbb{P}^1_z \times E_w$, the volume form is $\omega = \d z \d w $, and the vector field is $V = \dpa{w}$.  In this case, we have the following.
\begin{lemma}
Let $P$ be the trivial bundle on $\mbb{P}^1 \times E$ with the trivial partial connection $\nabla_V$, and let $\mscr{L}_P$ be the dg Lie algebra described above which controls deformations of $P$.  Then,
$$
H^\ast(\mscr{L}_P ) = 0.
$$
\label{lemma_isolated_solution}
\end{lemma}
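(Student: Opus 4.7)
The plan is to reduce the vanishing to a straightforward Dolbeault/Künneth calculation by exploiting the obvious filtration on $\mscr{L}_P$ coming from the odd parameter $\eps$. Because $P$ is the trivial bundle with the trivial partial connection, I may replace $\g_P$ by $\g \otimes \Oo_{\mbb{P}^1 \times E}$ and $\nabla_V$ by $\partial_w$ acting coefficient-wise on $\g$-valued forms. Thus
$$
\mscr{L}_P \;=\; L \oplus \eps\, L, \qquad L := \Omega^{0,\ast}\!\bigl(\mbb{P}^1 \times E,\; \g \otimes \Oo(-D)\bigr),
$$
with differential sending $\alpha + \eps\beta$ to $\dbar \alpha + \eps(\dbar\beta + \partial_w \alpha)$.

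First I would compute $H^\ast(L)$. The bundle $\Oo(-D)$ is the external tensor product $p_1^\ast \Oo_{\mbb{P}^1}(-1) \otimes p_2^\ast \Oo_E$, since $D = \infty \times E$ meets each $\mbb{P}^1$-fiber in a single reduced point. By the Dolbeault theorem and the Künneth formula for coherent sheaf cohomology on the projective variety $\mbb{P}^1 \times E$,
$$
H^\ast(L) \;\cong\; \g \otimes H^\ast\!\bigl(\mbb{P}^1, \Oo(-1)\bigr) \otimes H^\ast(E, \Oo_E).
$$
Since $H^0(\mbb{P}^1,\Oo(-1)) = H^1(\mbb{P}^1,\Oo(-1)) = 0$, the middle factor vanishes, and hence $H^\ast(L) = 0$.

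Next I would filter $\mscr{L}_P$ by $F^0 = \mscr{L}_P \supset F^1 = \eps L \supset 0$. The associated graded is $\operatorname{gr} \mscr{L}_P = L \oplus \eps L$ with differential $\dbar$ acting on each summand, so the $E_1$ page of the spectral sequence is $H^\ast(L) \oplus \eps\, H^\ast(L) = 0$. The spectral sequence is bounded and converges to $H^\ast(\mscr{L}_P)$, yielding $H^\ast(\mscr{L}_P) = 0$.

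I do not anticipate any obstacle: the only input beyond standard coherent cohomology is the identification of $\Oo(-D)$ as a Künneth external product, which is immediate from $D = \infty \times E$. The role of the vector field $V = \partial_w$ in the differential is entirely invisible to the argument because the $\dbar$-cohomology has already vanished before the correction $\eps \partial_w$ can contribute.
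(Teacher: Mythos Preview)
Your argument is correct and rests on the same key fact the paper uses: $H^\ast(\mbb{P}^1,\Oo(-1))=0$, so the K\"unneth decomposition kills everything. The only organizational difference is that you filter by $\eps$-degree and let a (two-step) spectral sequence absorb the $\eps\,\partial_w$ term, whereas the paper observes directly that $\Omega^{0,\ast}(E)[\eps]$ with differential $\dbar_w+\eps\,\partial_w$ \emph{is} the de Rham complex of $E$ (identify $\eps$ with $\d w$), and writes
\[
H^\ast(\mscr{L}_P)\;=\;H^\ast_{\dbar}(\mbb{P}^1,\Oo(-\infty))\otimes H^\ast_{dR}(E)\otimes\g\;=\;0.
\]
Both routes are equivalent here since the $\mbb{P}^1$ factor already vanishes; your own final remark that ``the $\dbar$-cohomology has already vanished before the correction $\eps\,\partial_w$ can contribute'' is exactly why the filtration/spectral sequence is not really doing any work beyond what the paper's one-line tensor factorization does.
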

\begin{proof}
We have
$$
H^\ast(\mscr{L}_P) = H^\ast_{\dbar}(\mbb{P}^1, \Oo(-\infty)) \otimes H^\ast_{dR} ( E ) \otimes \g = 0. 
$$
\end{proof}
In physics terminology, this says that there are no massless modes.  This means that we can compute expectation values in perturbation theory. Normally, the perturbative expectation value of an observable will be a volume form on the moduli of solutions to the equations of motion, which we need to integrate to get. In this case, when we have an isolated solution, the expectation value is simply a number. 

\section{Factorization algebras}
The main result of this paper is that the Yangian Hopf algebra is encoded in the factorization algebra of observables of the Yangian deformation of the twisted $N=1$ gauge theory.  To state this theorem precisely, I need to say briefly what a factorization algebra is.
\begin{definition}
Let $X$ be a manifold. A prefactorization algebra $\mc{F}$ on $X$ is  an assignment to every open subset $U \subset X$ a cochain complex $\mc{F}(U)$; and to every inclusion $U_1 \amalg \dots \amalg U_n \into V$ of disjoint open set $U_i$ into $V$, a cochain map
$$
\mc{F}(U_1) \otimes \dots \otimes  \mc{F}(U_n) \to \mc{F}(V),
$$ 
which is independent of the ordering chosen on the set $U_i$, and which satisfies the following associativity condition.  Let $U_i$ and $V_j$ be finite collections of open subsets of $X$, where $U_i \cap U_{i'} = \emptyset$ and $V_{j} \cap V_{j'} = \emptyset$. Let $W$ be another open subset of $X$. Suppose that
$$
\amalg U_i \subset \amalg V_j \subset W.
$$
Then the following diagram commutes:
$$
\xymatrix{
\otimes_{i} \mc{F}(U_i) \ar[r] \ar[dr] & \otimes_{j} \mc{F}(V_j) \ar[d] \\
& \mc{F}(W) ,
}
$$
where the arrows are the structure maps of the prefactorization algebra. 

A factorization algebra is a prefactorization algebra satisfying a certain ``descent'' condition, saying that $\mc{F}(V)$ for any open subset $V$ is determined from the value of $\mc{F}$ on a sufficiently fine open cover of $V$. (The descent condition will not be used in this paper).
\end{definition}
In \cite{CosGwi11}, it is shown that a quantum field theory on a manifold $X$ (in the sense of \cite{Cos11}) gives rise to a factorization algebra $\Obs^q$ of ``quantum observables'' on $X$.  This is a factorization algebra over $\C[[\hbar]]$, and quantizes a much simpler factorization algebra of ``classical observables''.  

If the field theory we're considering is expressed, in the BV formalism, by a sheaf of dg Lie algebras $\L$, then the factorization algebra of classical observables $\Obs^{cl}$ assigns to the open subset $U \subset X$ the Lie algebra cochain complex $C^\ast(\L(U))$.  In the case of the Yangian deformation of the twisted $N=1$ gauge theory on $\C^2$, the dg Lie algebra associated to an open subset $U \subset \C^2$ is
$$
\L(U) = \Omega^{0,\ast}(U, \g) \xto{\d z \partial }\Omega^{2,\ast}(U,\g)[-1]. 
$$
\section{Hopf algebras from factorization algebras}
\label{e_2_hopf}
A combination of some general results of Tamarkin \cite{Tam98, Tam03b, Tam07} and Lurie \cite{Lur12} asserts that there is a very close relationship between Hopf algebras and factorization algebras on $\R^2$.  In this section I will try to explain these results, which rely on some quite sophisticated abstract homotopical algebra.  By applying this construction to the factorization algebra associated to the Yangian deformation of the twisted $N=1$ gauge theory, we will find the Yangian Hopf algebra. 
  
There is a vast literature on this kind of homotopical algebra, so I will not give all details. 
\begin{definition}
A factorization algebra $\mc{F}$ on $\R^2$  is called \emph{locally constant} if, for every inclusion of discs $D \into D'$, the map $\mc{F}(D) \to \mc{F}(D')$ is a quasi-isomorphism. 
\end{definition}
We will use Tamarkin and Lurie's results to show that locally constant factorization algebras on $\R^2$ can be turned into Hopf algebras.   This construction relies on the concept of an $E_2$-algebra, which I will briefly recall. 

\begin{definition}
Let $E_2(n)$ denote the topological space parametrizing $n$ disjoint round discs in the unit disc in $\R^2$. 
\end{definition}
Thus, a point in $E_2(n)$ consists of $n$ points $p_1,\dots,p_n$ in the unit disc in $\R^2$, and $n$ radii $r_1$, \dots,$r_n$, such that the discs $D_{r_i}(p_i)$ of radius $r_i$ around $p_i$ are all disjoint and contained in the unit disc $D_{1}(0)$.  

The spaces $E_2(n)$ form what is called an ``operad'': there are gluing maps
$$
\circ_i : E_2(m) \times  E_2(n) \to E_2( n+m -1).
$$
We can think of a point in $E_2(m)$ as a cobordism from $m$ circles to $1$ circle; the composition $A \circ_i B$ is defined by gluing the output circle of the cobordism $A$ to the $i$'th input circle of the cobordism $B$.  In order to do this, we need to scale the cobordism $A$ by the radius of the $i$'th input circle in $B$. 
\begin{definition}
An $E_2$-algebra is a cochain complex $V$ equipped with cochain maps
$$
C_\ast(E_2(n)) \otimes V^{\otimes n} \to V
$$
in a way compatible with gluing. Thus, the maps 
$$\circ_i : C_\ast(E_2(m)) \otimes C_\ast(E_2(n)) \to C_\ast(E_2(n+m-1))$$
correspond to the composition of morphisms between tensor powers of $V$. 
\end{definition}
\begin{remark}
The cochain complex assigned to a circle in a two-dimensional topological field theory has the structure of an $E_2$ algebra; see \cite{Cos07a} for example.  
\end{remark}
By considering the space of $0$-chains $C_0(E_2(n))$, we see that an $E_2$ algebra $A$ has a product associated to every configuration of discs in the plane.  The maps associated to one-chains tell us that, as we vary this configuration, the product changes by chain homotopy.  Higher chains in $C_\ast(E_2(n))$ give higher homotopies. 
 
One can obviously consider $E_k$ algebras for any $k \ge 1$.  It is well-known that $E_1$ algebras provide a model for homotopy associative algebras (i.e.\ there are equivalences between the categories of $E_1$ algebras, $A_\infty$ algebras, and differential graded algebras). 

By considering configurations of discs with center on the horizontal line, we see that every $E_2$ algebra is, in particular, an $E_1$ algebra.  
\begin{theorem}[Lurie]
If $\mc{F}$ is a locally-constant factorization algebra on $\R^n$, then the cochain complex $\mc{F}(D)$ associated to a disc in $\R^n$ has a natural structure of $E_n$ algebra (up to coherent homotopy). 
\end{theorem}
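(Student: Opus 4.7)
The plan is to exhibit on $A := \mc{F}(D)$, for a fixed open disc $D \subset \R^n$, an action of a chain model for the little $n$-discs operad, built directly from the prefactorization structure. The raw material is elementary: a point $\vec p = (p_1,r_1;\dots;p_k,r_k) \in E_n(k)$ picks out disjoint subdiscs $D_{r_i}(p_i) \subset D$ and so provides a structure map
$$
m_{\vec p} : \mc{F}(D_{r_1}(p_1)) \otimes \cdots \otimes \mc{F}(D_{r_k}(p_k)) \longrightarrow A.
$$
By local constancy, each inclusion $D_{r_i}(p_i) \subset D$ induces a quasi-isomorphism $\mc{F}(D_{r_i}(p_i)) \to A$, so $m_{\vec p}$ represents a multilinear operation on $A$, but only up to quasi-isomorphism, and not manifestly in a family as $\vec p$ varies. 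The problem is to upgrade this pointwise recipe to a chain-level action of the operad $E_n$.

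To finesse the non-canonical identifications, I would replace $E_n(k)$ by a resolution $\widetilde E_n(k)$ whose points consist of configurations $\vec p$ equipped with a chosen further concentric subdisc $D'_i \subset D_{r_i}(p_i)$ for each $i$. The forgetful map $\widetilde E_n(k) \to E_n(k)$ has contractible fibers and is a homotopy equivalence. On this thickened data the prefactorization structure produces an unambiguous cochain map
$$
C_\ast(\widetilde E_n(k)) \otimes A^{\otimes k} \longrightarrow A,
$$
where the $i$-th input $A$ is identified with $\mc{F}(D'_i)$ via the chosen contraction. Operadic composition $\circ_i$ on $\widetilde E_n$ is realized by nesting these discs inside one another, so its compatibility with the induced maps is a direct consequence of the prefactorization associativity axiom $\amalg U_i \subset \amalg V_j \subset W$ recalled earlier; symmetric group equivariance is automatic, since prefactorization structure maps are independent of the ordering of inputs.

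The main obstacle, and the reason the theorem is nontrivial, is coherence: one must promote this action of nested configurations to a genuine $E_n$-algebra structure up to fully coherent higher homotopy. In particular, the quasi-isomorphism of operads $C_\ast(\widetilde E_n) \to C_\ast(E_n)$ has to be inverted homotopically, and one must verify that the resulting action on $A$ is functorial in the correct $\infty$-categorical sense. The cleanest route is Lurie's framework of $\infty$-operads \cite{Lur12}, in which locally constant factorization $\infty$-algebras on $\R^n$ and $\mc{E}_n$-algebras are identified as $\infty$-categories, with the equivalence on objects realized by $\mc{F} \mapsto \mc{F}(D)$. The substantive content of the theorem is thus not the construction of individual operations, which fall out of the prefactorization structure, but the homotopy-theoretic bookkeeping required to assemble them into a coherent $E_n$-action.
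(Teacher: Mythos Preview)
The paper does not prove this theorem; it simply cites Theorem 5.2.4.9 of \cite{Lur12} and remarks that the $0$-chains act for free from the prefactorization structure, while ``the hard part of Lurie's theorem is showing that the product associated to a configuration of discs varies in a homotopically trivial fashion over the moduli space of discs.'' Your proposal is therefore already more detailed than what the paper offers, and your diagnosis of where the content lies---coherence, not the pointwise operations---matches the paper's one-sentence gloss exactly.

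That said, your intermediate construction with $\widetilde E_n(k)$ does not quite do what you claim. Adding a concentric subdisc $D'_i \subset D_{r_i}(p_i)$ does not produce a canonical identification of $\mc{F}(D'_i)$ with $A = \mc{F}(D)$; you still only have the inclusion-induced quasi-isomorphism $\mc{F}(D'_i) \to A$, which is exactly the map you had before and which still needs to be inverted. The resolution $\widetilde E_n \to E_n$ is a homotopy equivalence, but passing to it has not bought you a strict action of anything on $A$. You are right that the genuine work is the $\infty$-categorical bookkeeping in \cite{Lur12}, and since both you and the paper ultimately defer to that reference, this gap in the intermediate sketch is harmless---but you should not present the $\widetilde E_n$ step as if it resolves the identification problem.
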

For the proof, see Theorem 5.2.4.9 of \cite{Lur12}.   It is easy to see how the $0$-chains of $C_\ast(E(n))$ act: this is just part of the structure of a factorization algebra.  The hard part of Lurie's theorem is showing that the product associated to a configuration of discs varies in a homotopically trivial fashion over the moduli space of discs.  

\subsection{Hopf algebras}
Now we can turn to Tamarkin's theorem \cite{Tam98} relating $E_2$ algebras an Hopf algebras.  Suppose we have an $E_2$ algebra $A$. As we have seen, by considering configurations of discs centered on the horizontal axis, we can give $A$ the structure of $E_1$ algebra; in other words, of homotopy associative algebra.  

We can ask that $A$ be augmented as an $E_1$ algebra.  This means that $A$ is equipped with a homomorphism $A \to \C$ of $E_1$ algebras to the base ring.  

A well-known construction in homotopical algebra called \emph{Koszul duality} allows one to turn augmented $E_1$ algebras into coalgebras.  If $A$ is an augmented $E_1$ algebra, then one defines the bar dual $A^!$ by
$$
A^! = \C \otimes^{\mbb{L}}_A \C.
$$
If we assume that $A$ is an actual dg associative algebra, then the bar dual $A^!$ can be modeled as 
$$
A^! = \oplus_{n \ge 0} A^{\otimes n}[n]
$$
with differential defined by 
\begin{multline*}
\d (a_1 \otimes  \dots \otimes a_n) = \sum_{i = 1}^n  \pm a_1 \otimes \dots (\d a_i) \otimes \dots \otimes a_n \\
+ \sum_{i = 1}^{n-1} a_1 \otimes \dots \otimes (a_i a_{i+1}) \otimes \dots \otimes a_n.   
\end{multline*}

\begin{theorem}[Tamarkin]
If $A$ is an $E_2$ algebra which is augmented as an $E_1$ algebra, then $A^!$ has a natural structure of Hopf algebra.  
\end{theorem}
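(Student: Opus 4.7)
The plan is to use the Dunn--Lurie additivity theorem (Theorem 5.1.2.2 in \cite{Lur12}), which asserts that the $\infty$-category of $E_2$-algebras is equivalent to the $\infty$-category of $E_1$-algebras in $E_1$-algebras. Under this identification, an $E_2$-algebra $A$ becomes an associative algebra object in the symmetric monoidal $\infty$-category $\op{Alg}_{E_1}$ of $E_1$-algebras, and an $E_1$-augmentation $A \to \C$ is an augmentation in this latter category as well (the ground field $\C$ being the unit object).

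First, I would recall that the bar construction defines a functor
$$
\op{Bar} : \op{Alg}_{E_1}^{\op{aug}} \to \op{coAlg}_{E_1}
$$
from augmented $E_1$-algebras to (conilpotent) $E_1$-coalgebras, modeled by the explicit complex $A^! = \bigoplus_n A^{\otimes n}[n]$ described in the excerpt, with deconcatenation coproduct. The key property I would invoke is that $\op{Bar}$ is a \emph{symmetric monoidal} functor: $\op{Bar}(A \otimes B) \simeq \op{Bar}(A) \otimes \op{Bar}(B)$ naturally. This is established either by direct chain-level manipulation (the shuffle product pairs with deconcatenation) or by the general machinery of Koszul duality for operads, where it reflects the self-duality of $E_1$ as a Hopf operad. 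Applying a symmetric monoidal functor to an algebra object produces an algebra object in the target, so $\op{Bar}(A)$, now viewed as sitting in $\op{coAlg}_{E_1}$, inherits an $E_1$-algebra structure from the "outer" $E_1$-structure on $A$. The result is an object of $\op{Alg}_{E_1}(\op{coAlg}_{E_1})$, which unwinds exactly to a dg bialgebra structure on $A^!$: the product on $A^!$ comes from the outer $E_1$-structure, the coproduct from the bar construction of the inner one, and their compatibility is the Hopf/bialgebra axiom.

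To upgrade this bialgebra to a Hopf algebra, I would use the standard fact that any conilpotent bialgebra admits an antipode defined recursively. The bar construction $A^!$ comes equipped with the length filtration $F^p A^! = \bigoplus_{n \ge p} A^{\otimes n}[n]$, with respect to which the coproduct is conilpotent and $A^!/F^1 \simeq \C$. The antipode $S$ is then defined inductively from the identity $m \circ (S \otimes \op{id}) \circ \Delta = \eta \circ \epsilon$, where the induction is on filtration degree; conilpotency ensures termination at each stage.

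The main obstacle, and the substance of Tamarkin's original theorem, lies in pinning down the symmetric monoidality of $\op{Bar}$ in a form strong enough to carry the two compatible $E_1$-structures simultaneously through the construction. Tamarkin's original argument \cite{Tam98} goes by way of a formality theorem for the chains operad of $E_2$, reducing the question to the Koszul self-duality of a Hopf operad governing Gerstenhaber algebras; the Lurie-style argument bypasses formality and works directly with $\infty$-operads but requires substantial $\infty$-categorical preliminaries. Either route establishes the result, and the concrete Hopf algebra structure on $A^! = \bigoplus A^{\otimes n}[n]$ can then be written down explicitly at the chain level using the two compatible products on $A$ provided by the $E_2$-structure.
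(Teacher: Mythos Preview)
Your approach is correct but differs substantially from the paper's. You argue operadically: Dunn--Lurie additivity identifies an $E_2$-algebra with an $E_1$-object in augmented $E_1$-algebras, and the (lax) symmetric monoidality of the bar construction then carries the outer $E_1$-structure to an algebra structure on $A^!$ in coalgebras, yielding a bialgebra; conilpotency supplies the antipode. The paper instead takes a categorical, Tannakian route in section~\ref{Koszul_duality_categorical}: it first proves a Koszul-duality equivalence $A\text{-mod} \simeq A^!\text{-comod}$ of dg categories, observes that the $E_2$-structure makes $A\text{-mod}$ monoidal, transports this monoidal structure across the equivalence, and then---under the extra hypotheses that $H^i(A^!)=0$ for $i\neq 0$ and $H^0(A^!)$ is flat---invokes a Tannaka--Krein reconstruction argument to produce a bialgebra structure on $H^0(A^!)$ from the monoidal structure on its comodule category.

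Your argument is cleaner and more general for the bare statement; the paper's argument is more restrictive (it needs the cohomological concentration hypothesis) but is tailored to what the paper actually uses downstream, namely the monoidal equivalence $\op{Perf}(A) \simeq \op{Fin}(A^!)$ rather than the Hopf structure per se. One small point worth noting in both approaches: for the outer $E_1$-structure to descend through bar (in your argument) or for the forgetful functor $M\mapsto \C\otimes^{\mbb L}_A M$ to be monoidal (in the paper's), one effectively needs the augmentation to be a map of $E_2$-algebras, not merely $E_1$; the paper's own proof invokes this explicitly, and in the application the augmentation does arise from an $E_2$-map.
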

\begin{remark}
\begin{enumerate}
\item
In a given cochain model for $A^!$, the axioms of a Hopf algebra will only hold up to coherent homotopy. However, at the level of cohomology, $H^\ast(A^!)$ will have an honest Hopf algebra structure.
\item Under certain hypotheses, which will be satisfied in our main example, we can recover the $E_2$ algebra $A$ from the Hopf algebra $A^!$. 
\end{enumerate}
\end{remark}

I will give a proof of a version of this result suited to our applications in section \ref{Koszul_duality_categorical}. 

As an example, any commutative algebra has the structure of $E_n$ algebra for any $n$.  Thus, if $A$ is an augmented commutative dg algebra, we can construct a Koszul dual Hopf algebra $A^!$.  This will automatically be co-commutative, but will not in general be commutative.

\subsection{Koszul duals of cochains of a Lie algebra} Before I turn to the general discussion of Koszul duality, I will discuss a basic class of examples that will be important for the applications we will consider. 

Let $\g$ be a nilpotent Lie algebra, concentrated in cohomological degree $0$. (Nilpotent meaning all sufficiently large iterated brackets vanish).    Let $\exp \g$ be the algebraic group associated to the nilpotent Lie algebra $\g$. The underlying variety of $\exp \g$ is just $\g$, and the  product is given by the Campbell-Baker-Hausdorff formula.   Let $\Oo(G)$ denote the algebra of polynomial functions on $G$, which is a Hopf algebra in a natural way.  Note that $\Oo(G)$ is a dense subspace of the linear dual of the universal enveloping algebra $U(\g)$, and that the Hopf algebra structure on $\Oo(G)$ is dual to that on $U(\g)$. 

If $A = C^\ast(\g)$ is the Chevalley-Eilenberg cochain complex of $\g$, then a standard lemma asserts that 
$$
A^! \simeq \Oo(G)
$$ 
as a Hopf algebra. 

In our main example, we need a similar result but without the nilpotence hypothesis.   This needs a little more care, because it is not in general true that the quasi-isomorphism class of the commutative dga $C^\ast(\g)$ knows about the Lie algebra $\g$.   

There are two ways to solve this issue. One is to introduce a formal parameter $s$, and work with the Lie algebra $(\g, s[-,-])$ over the base ring $\C[[s]]$.  This has the effect of making $\g$ nilpotent. 

We will take a slightly different approach: we will work with complete filtered commutative dgas.   It is not at all necessary to read the discussion of complete filtered algebras in order to follow the essential arguments of this paper. 
\begin{definition}
\label{definition_filtered}
A complete filtered cochain complex is a cochain complex $V$ with a decreasing filtration $V = F^0 V \supset F^1 V \dots$ by sub-cochain complexes such that $V = \liminv V / F^i V$.    A map of completed filtered cochain complexes must preserve filtrations.  A map is said to be a \emph{quasi-isomorphism} if it induces an isomorphism on the cohomology groups of the associated graded.   

We let $\op{FVect}$ be the category of complete filtered vector spaces, and $\op{dgFVect}$ the category of complete filtered cochain complexes.

If $V, W$ are complete filtered cochain complexes, then $V \otimes W$ is defined to be $V \otimes W = \liminv (V / F^i V) \otimes (W / F^j W)$.  This has a natural filtration.  

If $V_i$ are a collection of complete filtered cochain complexes, the completed direct sum is defined by $\oplus V_i = \liminv_{k} \left( \oplus V_i / F^k \right)$.  This is easily seen to satisfy the usual universal property of a coproduct.
\end{definition}
Since the category of complete filtered cochain complexes is a dg symmetric monoidal category, it makes sense to talk about dg algebras in this dg category. Concretely, a complete filtered commutative dga is a completed filtered cochain complex $A$ with an associative product map $A \otimes A \to A$, compatible with filtrations.  This means that $F^i A \cdot F^j A \subset F^{i+j} A$. 

\begin{remark}
Note that every complete filtered cochain complex has a natural topology (that of the inverse limit), and the tensor product is the completed projective tensor product of topological cochain complexes. 
\end{remark}
If $\g$ is a dg Lie algebra, then $C^\ast(\g)$ as a completed filtered commutative dga, where $F^i C^\ast(\g)$ is $\Sym^{\ge i} (\g^\vee[-1])$.  

The usual constructions from homological algebra work with complete filtered objects. For instance, if $A$ is a complete filtered dga, with an augmentation $A \to \C$ (which is a map of filtered algebras), then we can define the completed Koszul dual by
$$A^! = \liminv \left( \C \otimes^{\mbb{L}}_{A / F^i A} \C\right).$$
This is modeled by the bar complex
$$
\liminv_{k} \left( \oplus_n (A/ F^k A)^{\otimes n}[n] \right).
$$
Thus, working with complete filtered objects has the effect of replacing the Koszul dual by a certain completion.  

Note that if $\g$ is any Lie algebra, then $U(\g)$ has an increasing filtration by saying that $F_k U(\g)$ is the subspace spanned by words of length $\le k$ in the generators $\g$.   Dually, this gives $U(\g)^\vee$ a decreasing filtration, by saying that $F^k \left( U(\g)^\vee \right) = \left( U(\g) / F_k U(\g) \right)^\vee$. This filtration makes $U(\g)^\vee$ into a complete filtered vector space.  The Hopf algebra structure on $U(\g)$ dualizes to one on $U(\g)^\vee$, but using the completed tensor product of filtered vector spaces.   Alternatively, we can think of $U(\g)^\vee$, with its natural topology, as a topological Hopf algebra, using the completed projective tensor product of topological vector spaces. 

In the case that $\g$ is finite dimensional, then $U(\g)^\vee$ is the Hopf algebra of functions on the formal group associated to $\g$.  
\begin{lemma}
Let $\g$ be any Lie algebra.  Then,
$$
C^\ast(\g)^! \simeq  U(\g)^\vee
$$
as Hopf algebras.
\end{lemma}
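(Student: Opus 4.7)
The plan is to compute $A^! := \C \otimes^{\mathbb{L}}_A \C$ for $A = C^\ast(\g)$ by means of an explicit free resolution of $\C$ as an $A$-module, then match the resulting Hopf-algebra structure with that of $U(\g)^\vee$. First I would write down the Chevalley--Eilenberg--Koszul resolution: the free $A$-module $A \otimes U(\g)$ equipped with a differential whose leading term is the Koszul contraction $\g^\vee[-1] \otimes U(\g) \to U(\g)$ pairing $\xi \in \g^\vee$ with right multiplication by $\g \subset U(\g)$, perturbed by the Chevalley differential on $A$. A standard argument---cleanest for $\g$ abelian, where it reduces to the classical Koszul complex $\operatorname{Sym}(V^\vee[-1]) \otimes \operatorname{Sym}(V)$, and then propagated to arbitrary $\g$ by the spectral sequence of the filtration $F^i A = \operatorname{Sym}^{\geq i}(\g^\vee[-1])$---shows that this is a free resolution of $\C$. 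Applying $\C \otimes_A (-)$ then yields a quasi-isomorphism $A^! \simeq U(\g)^\vee$ of complete filtered cochain complexes, the completion arising because we work in the inverse limit over the filtration $F^\bullet A$.

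Next I would match the coalgebra and algebra structures separately. The bar coproduct on $A^!$ (deconcatenation of tensors) corresponds, under the identification above, to the dual of the associative multiplication on $U(\g)$; this can be read off directly on the Koszul resolution, since the map $A \otimes U(\g) \to (A \otimes U(\g)) \otimes_A (A \otimes U(\g))$ induced by deconcatenation on the $U(\g)$-factor is tautologically dual to multiplication in $U(\g)$. The algebra structure on $A^!$ is the Eilenberg--Zilber shuffle product induced by the commutativity of $A$. My claim is that this shuffle product matches the product on $U(\g)^\vee$ dual to the standard primitively-cogenerated Hopf coproduct on $U(\g)$.

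The matching of algebra structures is the main technical point. The cleanest route is to reduce to the abelian case: both the shuffle product on $A^!$ and the algebra structure on $U(\g)^\vee$ are compatible with the filtration $F^i A$ on the left and with the PBW filtration by word length on the right, so it suffices to verify the claim on the associated graded. There $\g$ becomes abelian, and both Hopf algebras become explicitly the divided-power algebra on $\g$ with its primitively-cogenerated coproduct; concretely, the shuffle product sends $\xi^{\otimes a} \otimes \xi^{\otimes b}$ to $\binom{a+b}{a}\, \xi^{\otimes (a+b)}$, which agrees with the product on the complete filtered linear dual of $\operatorname{Sym}(\g) = U(\g_{\mathrm{ab}})$. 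The spectral sequence degenerates by PBW, which identifies the associated graded of $U(\g)^\vee$ with $U(\g_{\mathrm{ab}})^\vee$, and the Hopf-algebra identification in the general case follows. Alternatively, one can bypass the spectral sequence by invoking the Koszul duality between the $\operatorname{Com}$ and $\operatorname{Lie}$ operads (after Ginzburg--Kapranov, Fresse), which at the level of augmented complete filtered algebras implements the desired Hopf-algebra isomorphism directly.
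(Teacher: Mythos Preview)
There is a genuine error in the resolution. You propose $A \otimes U(\g)$ as a free $A$-resolution of $\C$, but $\C \otimes_A (A \otimes U(\g)) = U(\g)$, not $U(\g)^\vee$, and no completion repairs this since $U(\g)$ carries an increasing (PBW) filtration, not a complete decreasing one. Concretely, in the abelian case $\g = V$ your complex $\Lambda(V^\vee) \otimes \Sym(V)$ is not a resolution of $\C$ over $A = \Lambda(V^\vee)$: the map you describe, ``pairing $\xi \in \g^\vee$ with right multiplication by $\g$,'' is not well-defined without an identification $\g \cong \g^\vee$. The correct small resolution is $A \otimes \widehat{\Sym}(\g^\vee)$ with Koszul differential $\omega \otimes f \mapsto \sum_i (e^i \wedge \omega) \otimes \partial_{e_i} f$; applying $\C \otimes_A(-)$ then gives $\widehat{\Sym}(\g^\vee) \cong U(\g)^\vee$. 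You may be conflating this with the Chevalley--Eilenberg resolution $U(\g) \otimes \Lambda(\g) \to \C$, which resolves $\C$ over $U(\g)$ rather than over $C^\ast(\g)$.

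Once this is corrected, your overall strategy---reduce to the associated graded via the filtration spectral sequence, identify that case as abelian Koszul duality, note degeneration for degree reasons, and then match the remaining Hopf structure there---is exactly the paper's approach.
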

\begin{proof}
Let $A = C^\ast(\g)$. 
By definition, $A^!$ has a complete decreasing filtration. We can compute $A^!$ by a spectral sequence whose first page is the cohomology of the associated graded. It is clear that $\op{Gr} (A^!)$ is given by the bar complex for $\op{Gr} A$, so that $\op{Gr}(A^!) = \left(\op{Gr}(A)\right)^!$. 

Now, $\op{Gr}(A)$ is the exterior algebra on $\g$, so that (by a standard result) $\op{Gr}(A)^!$ is the completed symmetric algebra on $\g^\vee$, which is isomorphic to $U(\g)^\vee$ as an algebra. There are no further differentials in the spectral sequence, for degree reasons. Thus, we have an isomorphism of commutative algebras
$$
U(\g)^\vee \iso H^\ast( A^!)
$$

It remains to calculate the coproduct; this is a simple exercise. 
\end{proof} 

\section{Koszul duality and equivalences of categories}
\label{Koszul_duality_categorical}
After these preliminary examples, I will turn to developing the general theory we need. This section contains a number of technical results in homotopical algebra for which I couldn't find a convenient reference (although \cite{Pos09} comes closest).  It is not necessary to understand the proofs in order to read the rest of the paper.

We will show that, under certain hypothesis, if $A$ is a differential graded algebra in the category of complete filtered cochain complexes, then there is an equivalence of dg categories between the category of $A$-modules and $A^!$-comodules.  When $A$ is an $E_2$ algebra, we will show that $A^!$ is a Hopf algebra, and this equivalence lifts to an equivalence of monoidal dg categories.

Let $\op{FVect}$ denote the category of complete filtered vector space. Let $\op{FVect}_\hbar$ denote the category of complete filtered modules over $\C[[\hbar]]$ which are free, i.e.\ which are of the form $V \otimes \C[[\hbar]]$ for some $V \in \op{FVect}$.  Morphisms in $\op{FVect}_{\hbar}$ are $\hbar$-linear and filtration preserving; as explained earlier, the tensor product is completed.   Let $\mc{C}$ denote the dg category of cochain complexes of objects of $\op{FVect}_{\hbar}$.   We will let $\mc{C}^b \subset \mc{C}$ be the full sub dg category consisting of those objects with the property that each graded piece is bounded above and below as a cochain complex. 

The dg category $\mc{C}$ is enriched in itself: if $V,W$ are objects of $\mc{C}$, we let $F^k \Hom (V,W) \subset \Hom(V,W)$ consist of those maps which send $F^i V$ to $F^{i+k} W$.  Then, $\hbar F^k \Hom(V,W) \subset F^{k+1} \Hom(V,W)$ and 
$$
\Hom(V,W) = \liminv \Hom(V,W) / F^k \Hom (V,W). 
$$
Similarly, the dg category $\mc{C}^b$ is enriched in $\mc{C}$.

We are interested in taking homotopy limits and colimits in $\mc{C}$.  To define homotopy limits and colimits, one needs to know how to define direct sums and products, and realizations and totalizations of simplicial and cosimplicial objects.  We have already seen how to define direct sums and products in $\mc{C}$; the standard constructions of realizations and totalizations of simplicial and cosimplicial cochain complexes work equally well in $\mc{C}$. 

We will sometimes take strict limits and colimits in situations when they coincide with homotopy limits and colimits.  This happens, for example, with sequential limits (colimits) of objects of $\mc{C}^b$ whose constituent maps are injective (respectively, surjective). 

We will develop some homological algebra for algebras and coalgebras in $\mc{C}^b$.  We require that our algebras $A$ and coalgebras $C$ have the property that $A / F^1 A = \C$ and $C / F^1 C = \C$, spanned by the unit and counit.  Under this restriction, we can construct projective resolutions of $A$-modules (or dually, injective resolutions of $C$-comodules) while remaining in the dg category $\mc{C}^b$.  

The main results of this section are as follows.  
\begin{enumerate}
\item If $A$ is an algebra in $\mc{C}^b$ (or $C$ is a coalgebra), we construct dg categories $A-\op{mod}$ and $C-\op{comod}$ if quasi-free (respectively, quasi-cofree) modules. 
\item  If $A$ is an augmented algebra in $\mc{C}^b$, we construct a Koszul dual coalgebra 
$$A^! = \C[[\hbar]] \otimes^{\mbb{L}}_A \C[[\hbar]]$$
and a quasi-equivalence of dg categories
$$
A-\op{mod} \simeq A^!-\op{comod}.
$$
\item Let $A$ be as above, $\op{Perf}(A)$ be the full subcategory of $A-\op{mod}$ consisting of perfect $A$-modules (as defined below).  Let $\op{Fin}(A^!-\op{comod})$ be the full subcategory of $A^!$-comodules $M$ with the property that $H^\ast (\op{Gr}(M) \op{mod} \hbar) $ is of finite total dimension (equivalently, $H^\ast (\op{Gr} M)$ is of finite total rank as a $\C[[\hbar]]$-module).  The previous quasi-equivalence restricts to a quasi-equivalence
$$
\op{Perf}(A) \simeq \op{Fin}(A^!-\op{comod}).
$$
\item Suppose that $A$ is an augmented $E_2$ algebra in $\mc{C}^b$ such that  $H^i (A^! ) = 0$ if $i \neq 0$, and $H^0 (A^!)$ is flat as a $\C[[\hbar]]$-module.  Then, we construct a Hopf algebra structure on $H^0(A^!)$ and a quasi-equivalence of monoidal dg categories
$$
A-\op{mod} \simeq H^0 (A^!)-\op{comod}.
$$
As above, this restricts to a quasi-equivalence
$$
\op{Perf}(A) \simeq \op{Fin}(H^0(A^!)-\op{comod}).
$$
\item If $C$ is a coalgebra in $\mc{C}^b$, then the linear dual $C^\vee$ is a topological algebra. We show that there's an equivalence of dg categories between a certain dg category of $C^\vee$-modules with finite rank cohomology and $\op{Fin}(C-\op{comod})$.  
\end{enumerate}

Before we state the theorems, we need to introduce some notation.  
\begin{definition}
Let $A$ be an algebra in $\mc{C}^b$.  As always, we assume that $A/ F^1 A = \C$. We say an $A$-module $M$ is free if it is of the form $M = A \otimes V$ for an object $V$ of $\mc{C}^b$.  We say that $M$ is quasi-free if $M$ admits an increasing filtration $G^0 M = 0 \subset G^1 M \subset \dots$ by sub $A$-modules such that $M = \colim G^i M$ and such that each $G^i M /  G^{i-1} M$ is free.   We let $A-\op{mod}$ denote the dg category of quasi-free $A$-modules.
  
Let $C$ be a coalgebra in $\mc{C}^b$, which, as always, is such that $C / F^1 C = \C$, spanned by the counit. We say a $C$-comodule $M$ is cofree if it is of the form $M = C \otimes V$ for an object $V$ of $\mc{C}^b$.  We say that $M$ is quasi-cofree if $M$ admits a decreasing filtration $G^0 M = M \supset G^1 M \dots $ by sub $C$-comodules such that $M = \lim M / G^i M$ and such that each $G^i M /  G^{i+1} M$ is cofree.    We let $C-\op{comod}$ denote the dg category of quasi-cofree $C$-comodules.

If $A$ is an algebra in $\mc{C}^b$, the notion of $A_\infty$ module over $A$ makes sense: an $A_\infty$ $A$-module is an object $V$ of $\mc{C}^b$, with a collection of maps $\mu_n : A^{\otimes n} \otimes V \to V[1-n]$ for $n \ge 2$ satisfying the standard $A_\infty$ identity. (Note that the $\mu_n$ are not cochain maps).  We let $A_\infty A-\op{mod}$ denote the dg category of $A_\infty$ $A$-modules, where the morphisms are $A_\infty$ module maps. 
\end{definition}
The reason for using quasi-(co)-free modules is explained in the following proposition.
\begin{proposition}
Let $A$ be an algebra ($C$ a coalgebra) in $\mc{C}^b$.  Then, every quasi-isomorphism $f : M \to N$ of quasi-free $A$-modules (respectively, quasi-cofree $C$-comodules) is a homotopy equivalence.
\end{proposition}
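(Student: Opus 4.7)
The plan is to reduce the proposition to the classical \emph{fundamental lemma} of semi-free resolutions, carried out in the filtered setting of $\mc{C}^b$. I will treat the module case in detail; the comodule case is formally dual.

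First I would establish the fundamental lemma: if $P$ is a quasi-free $A$-module and $N$ is an acyclic $A$-module, then every $A$-module map $\phi : P \to N$ is null-homotopic. The argument is inductive along the filtration $G^i P$, with $G^0 P = 0$ serving as the base case. Given a null-homotopy $h$ of $\phi\vert_{G^{i-1}P}$, one extends it over $G^i P$ as follows. Because $G^i P / G^{i-1} P = A \otimes V_i$ is free, after choosing an $A$-linear splitting of the underlying graded $A$-modules one only needs to define $h$ on the generating object $V_i$. For $v \in V_i$, write $dv = d_0 v + \delta(v)$ with $d_0 v \in A \otimes V_i$ and $\delta(v) \in G^{i-1} P$; the element $\phi(v) - h(d_0 v + \delta(v)) \in N$ is closed by the Leibniz rule and the inductive hypothesis, and acyclicity of $N$ produces an element whose differential is this cocycle, giving the required value $h(v)$. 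One then extends $A$-linearly. The only subtlety beyond the classical argument is that all choices must preserve the complete filtration: this is handled by making choices stage-by-stage in $N/F^k N$ and then passing to the inverse limit, using that each $G^i P / G^{i-1} P$ and each $N/F^k N$ is bounded in $\mc{C}^b$ so the relevant lifting problem is solvable at each filtration level.

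Next I would reduce the proposition itself to the fundamental lemma via the mapping cone. For a map $f : M \to N$ of quasi-free $A$-modules, the cone $\op{Cone}(f) = M[1] \oplus N$ inherits a quasi-free structure: filter it by $G^i \op{Cone}(f) = G^i M [1] \oplus G^i N$, and observe that the subquotients are cones of maps between free $A$-modules, hence themselves free. Because $f$ is a quasi-isomorphism, $\op{Cone}(f)$ is acyclic. Applying the fundamental lemma to $\op{id} : \op{Cone}(f) \to \op{Cone}(f)$, the identity is null-homotopic, so $\op{Cone}(f)$ is contractible. Unwinding a contracting homotopy of the cone into its two summands yields the standard formulas for a homotopy inverse $g : N \to M$ and homotopies $gf \sim \op{id}_M$, $fg \sim \op{id}_N$, proving that $f$ is a homotopy equivalence.

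For the coalgebra case, I would dualize each step. If $C$ is a coalgebra in $\mc{C}^b$ and $N$ is an acyclic quasi-cofree $C$-comodule, the dual fundamental lemma says every map $\psi : M \to N$ with $M$ quasi-cofree is null-homotopic; this is proved inductively along the decreasing filtration $G^i N$, using that each $G^i N / G^{i+1} N$ is cofree and therefore maps out of it to a given comodule are determined by maps to a cofactor, and solving the obstruction equation in the acyclic target $M$. Then the homotopy fiber (co-cone) of a quasi-isomorphism of quasi-cofree comodules is itself quasi-cofree and acyclic, hence contractible, giving a homotopy inverse.

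The main obstacle is a bookkeeping one rather than a conceptual one: ensuring that all the inductive lifts can be made simultaneously $A$-linearly, $\hbar$-linearly, and compatibly with the complete filtration defining $\mc{C}^b$. This is why the standing hypothesis that each graded piece of our objects lies in bounded cochain complexes of $\op{FVect}_\hbar$ is important — it guarantees that the obstruction cocycles live in cohomology groups which vanish (by acyclicity of $N$) at each filtration stage, and that the inverse limit of the stage-wise chosen lifts converges to an honest filtered chain-level null-homotopy.
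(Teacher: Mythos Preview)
Your fundamental lemma --- that maps from a quasi-free $A$-module into an acyclic one are null-homotopic --- is essentially correct, and is indeed the heart of the argument. The gap is in the second step. The filtration $G^i\op{Cone}(f) = G^i M[1] \oplus G^i N$ you propose is a filtration by dg $A$-submodules only if $f(G^i M) \subset G^i N$, since the cone differential sends $(m,n)$ to $(-d_M m,\, f(m) + d_N n)$. But $f$ is an arbitrary $A$-module map and has no reason to respect the chosen quasi-free filtrations on $M$ and $N$; these filtrations are auxiliary choices, not canonical. Without this, $\op{Cone}(f)$ is not obviously quasi-free with an $\mathbb{N}$-indexed filtration, and you cannot apply your lemma to $\op{id}_{\op{Cone}(f)}$. (Trying to repair this by exhausting $N$ first and then layering in the filtration of $M[1]$ forces an $\omega+\omega$ indexing, which the paper's definition does not allow.)

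The paper avoids this by never asking the cone to be quasi-free. Instead it proves directly that for quasi-free $N$ the map $\Hom_A(N,M_1) \to \Hom_A(N,M_2)$ is a quasi-isomorphism: write $\Hom_A(N,M_j) = \lim_i \Hom_A(G^i N, M_j)$, observe that the transition maps are surjective, and reduce by a spectral sequence to the free subquotients $G^i N/G^{i-1}N$, where the statement is immediate. Taking $N$ to be the target of $f$ then produces a right homotopy inverse $g$, and since $g$ is again a quasi-isomorphism the same argument gives $g$ a right inverse homotopic to $f$. Note that your fundamental lemma already implies the paper's $\Hom$-lemma: for quasi-free $P$, acyclicity of $\op{Cone}(f)$ gives acyclicity of $\Hom_A(P,\op{Cone}(f))$, which is the cone of $\Hom_A(P,M)\to\Hom_A(P,N)$. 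So the cleanest fix is simply to replace your cone-contractibility step by this $\Hom$ argument and then finish as the paper does.

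A smaller issue: your comodule paragraph is garbled. The dual of your fundamental lemma is that maps \emph{from} an acyclic comodule \emph{into} a quasi-cofree one are null-homotopic, with the induction running along the decreasing filtration of the quasi-cofree target. In your sketch the roles of source and target get swapped midway --- you call $M$ the quasi-cofree source and then two lines later ``the acyclic target $M$''.
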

\begin{proof}
First we need a lemma. 
\begin{lemma}
Let $A$ be an algebra in $\mc{C}^b$, and let $M_1,M_2,N$ be quasi-free $A$-modules.  Let $f : M_1 \to M_2$ be a quasi-isomorphism. Then, the map
$$
\Hom_A( N, M_1) \to \Hom_A(N,M_2)
$$
is a quasi-isomorphism.

Dually, let $C$ be a coalgebra in $\mc{C}^b$, and let $M_1,M_2,N$ be quasi-cofree $C$-modules.  Let $f : M_1 \to M_2$ be a quasi-isomorphism. Then, the map
$$
\Hom_C( M_2,N) \to \Hom_C(M_1,N)
$$
is a quasi-isomorphism.
\end{lemma}
\begin{proof}
Let us first verify the algebra case. By assumption, $N$ has an increasing filtration where the associated graded modules are free.  We let $G^i N$ be this filtration.  Then,
$$
\Hom_A(N,M ) = \lim \Hom_A(G^i N, M). 
$$
The map $\Hom_A(N,M_1) \to \Hom_A(N,M_2)$ arises from a map of inverse systems
$$
\Hom_A(G^i N,M_1) \to \Hom_A(G^i,M_2). 
$$
The map $\Hom_A(G^i N,M) \to \Hom_A(G^{i-1} N,M)$ is a surjective map of cochain complexes.     Therefore, by a spectral sequence argument, it suffices to show that the map
$$
\Hom_A( G^i N / G^{i-1}N, M_1) \to \Hom_A(G^i N / G^{i-1} N, M_2)
$$
is an equivalence.  But this is immediate because $G^i N / G^{i-1} N$ is free.

The argument for coalgebras is similar.  Let $N,M_1,M_2$ be quasi-cofree $C$-comodules, and let $f : M_1 \to M_2$ be a quasi-isomorphism.  Then, by assumption, $N$ has a decreasing filtration $G^i N$ such that  $N = \liminv N / G^i N$.  Then, 
$$
\Hom_C(M,N) = \liminv \Hom_C(M, N / G^i N).
$$
As above, it suffices to show that the map
$$
\Hom_C(M_2, G^i N/ G^{i+1} N) \to \Hom_C(M_1,G^i N/ G^{i+1} N)
$$
is a quasi-isomorphism. This is immediate because $G^i N/ G^{i+1} N$ is cofree. 
\end{proof}
Now let us return to the proof of the proposition.  Suppose that $f : M \to N$ is a quasi-isomorphism of $A$-modules. Then, the map
$$
\Hom_A(N,M) \xto{f \circ - } \Hom_A(N,N)
$$ 
is a quasi-isomorphism. It follows that there is a $g : N \to M$, unique up to homotopy, such that $f \circ g$ is homotopic to the identity on $N$.   This constructs a right homotopy inverse to $f$.  Now, $g : N \to M$ is also a quasi-isomorphism, so by the same argument $g$ admits a right homotopy inverse, which must be homotopic to $f$.  

The same argument applies in the comodule case. 
\end{proof}
\begin{definition}
If $\mc{D}$ is a dg category, we let $H^0 ( \mc{D})$ be the linear category whose objects are those of $\mc{D}$ but where the morphisms are $H^0$ of the morphism complexes of $\mc{D}$. 

Let $\mc{D}_1, \mc{D}_2$ are dg categories, we say that a dg functor $\Phi : \mc{D}_1 \to \mc{D}_2$ is a quasi-equivalence if
\begin{enumerate}
\item $H^0 \Phi : H^0 (\mc{D}_1) \to H^0 (\mc{D}_2)$ is an equivalence of categories. 
\item For all pairs of objects $d,d'$ of $\mc{D}_1$, the map
$$
\Phi ( d,d') : \mc{D}_1(d,d') \to \mc{D}_2 (\Phi(d), \Phi(d'))
$$
is a quasi-isomorphism.  	
\end{enumerate}
\end{definition}
The main theorem of this section is the following.
\begin{theorem}
\label{theorem_equivalence_categories}
Let $A$ be an augmented algebra in $\mc{C}^b$, which as always has the property that $A / F^1 A = \C$.  (Recall that $A$ is a complete filtered algebra over $\C[[\hbar]]$; augmented means that $A$ is equipped with a map to $\C[[\hbar]]$, and $F^1 A$ is part of the given filtration on $A$). 

Then there are quasi-equivalences of dg categories
$$
A-\op{mod} \simeq A_\infty \ A-\op{mod} \simeq A^! - \op{comod}.
$$
\end{theorem}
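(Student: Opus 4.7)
The strategy is to construct explicit functors implementing both equivalences by adapting the classical bar--cobar duality to the complete filtered setting, then verify they are quasi-inverses using the filtration-sensitive form of the preceding lemma on quasi-free and quasi-cofree modules.

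For the first equivalence $A\text{-}\op{mod} \simeq A_\infty\,A\text{-}\op{mod}$, I would use the forgetful functor $\Phi$ that views a strict quasi-free module as an $A_\infty$-module with $\mu_n = 0$ for $n \ge 3$. Its quasi-inverse is the rectification $\Psi(M) = A \otimes B(A) \otimes M$, where $B(A) = A^!$ is the bar complex and the differential is the standard one twisted by both the $A$-action on the leftmost factor and the $A_\infty$-structure maps $\mu_n$ on $M$. The quasi-freeness of $\Psi(M)$ is witnessed by the bar-length filtration: $G^i \Psi(M) = A \otimes B(A)^{\le i} \otimes M$ has associated graded pieces that are free $A$-modules. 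The composition $\Phi \circ \Psi$ is homotopic to the identity via the bar resolution's extra degeneracy; the composition $\Psi \circ \Phi$ is quasi-isomorphic to the identity by the same resolution applied on the other side.

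For the second equivalence $A_\infty\,A\text{-}\op{mod} \simeq A^!\text{-}\op{comod}$, I would use the pair of functors
\[
B \colon M \longmapsto A^! \otimes M, \qquad \Omega \colon N \longmapsto (\text{underlying object of } N)
\]
where on the left the differential is built from the coproduct on $A^!$ and the $A_\infty$-multiplications $\mu_n$ encoded via the canonical twisting morphism $\tau \colon A^! \to A$, and on the right the $A_\infty$-structure on $\Omega(N)$ is read off from the comodule structure of the quasi-cofree $N = A^! \otimes V$ paired with $\tau$. By construction $B(M)$ is quasi-cofree with filtration by co-bar length, and $\Omega(N)$ lies in $\mc{C}^b$. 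The compositions $B \circ \Omega$ and $\Omega \circ B$ are naturally quasi-isomorphic to the identity because the bicomplex $A^! \otimes A \otimes M$ (or its dual) has acyclic columns, and the comparison quasi-isomorphism lives inside quasi-(co)free modules so promotes to a genuine homotopy equivalence by the lemma already proved.

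The main obstacle will be controlling convergence in the complete filtered setting. The bar complex $A^! = \bigoplus_n \bar{A}[1]^{\otimes n}$ carries two gradings: the bar length $n$ and the filtration inherited from $A$ (using $\bar{A} = F^1 A$, so that the $n$-fold tensor power lives in filtration degree $\ge n$). The hypothesis $A/F^1 A = \C$ guarantees that these interact so that $A^!$ is itself a well-defined object of $\mc{C}^b$: the associated graded of the filtration is the classical Koszul bar complex of $\op{Gr} A$, to which standard spectral-sequence arguments apply. The same filtration, extended to $B(M)$ and to the resolutions $A \otimes B(A) \otimes M$, is what makes all the quasi-isomorphisms above detectable on associated graded and hence genuine; boundedness in $\mc{C}^b$ ensures the spectral sequences converge strongly. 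Once convergence is secured, quasi-freeness and quasi-cofreeness of all intermediate objects guarantee, via the preceding lemma, that quasi-isomorphisms upgrade to homotopy equivalences, giving the required quasi-equivalences of dg categories.
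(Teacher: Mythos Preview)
Your overall strategy is the same as the paper's: bar--cobar duality adapted to the complete filtered setting, with the hypothesis $A/F^1 A = \C$ ensuring that $\bar{A}^{\otimes n}$ sits in filtration $\ge n$ so that all the infinite sums make sense in $\mc{C}^b$. Two points where your execution diverges from the paper are worth noting.

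First, the filtration you propose to exhibit $B(M)=A^!\otimes M$ as quasi-cofree is not the right one. The ``co-bar length'' filtration $G^i=\bigoplus_{n\ge i}\bar{A}^{\otimes n}[n]\otimes M$ has associated graded $\bar{A}^{\otimes i}[i]\otimes M$, on which the induced $A^!$-coaction factors through the counit; this is \emph{not} cofree in the sense defined (which requires the form $A^!\otimes V$). The paper instead uses the filtration coming from the ambient filtration on $M$, namely $G^i(A^!\otimes M)=A^!\otimes F^i M$. Because $F^1\bar{A}=\bar{A}$, every nontrivial component of the twisted differential raises $F$-degree, so on $G^i/G^{i+1}$ the differential reduces to the cofree one and the associated graded is genuinely $A^!\otimes \mathrm{gr}^i M$. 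This is the step where the hypothesis $A/F^1A=\C$ really does its work.

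Second, for $A_\infty\,A\text{-}\mathrm{mod}\simeq A^!\text{-}\mathrm{comod}$ the paper takes a shorter route than building a quasi-inverse $\Omega$ and checking round-trips. It observes that the functor $M\mapsto A^!\otimes M$ is already full and faithful at the cochain level (an $A_\infty$-map is literally the same data as an $A^!$-comodule map between the bar constructions), and is essentially surjective on the nose: any quasi-cofree comodule can be written as $A^!\otimes V$ with some $A^!$-colinear differential, and decomposing that differential into its components $\bar{A}^{\otimes n}\otimes V\to V$ recovers an $A_\infty$-structure on $V$. So this equivalence is actually an isomorphism of dg categories, and no acyclicity argument for $B\circ\Omega$ or $\Omega\circ B$ is needed. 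Similarly, for $A\text{-}\mathrm{mod}\simeq A_\infty\,A\text{-}\mathrm{mod}$ the paper checks full faithfulness directly (identifying $A_\infty$-maps out of $M$ with strict maps out of the bar resolution $BM$, then using that $BM\to M$ is a homotopy equivalence between quasi-free modules) and essential surjectivity via $M\mapsto BM$, rather than verifying two composites.
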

\begin{proof}
We will let $1_{\mc{C}^b} = \C[[\hbar]]$ denote the unit object of $\mc{C}^b$.
Let us write $A = 1_{\mc{C}^b} \oplus \br{A}$, where $\br{A} \subset A$ is the augmentation ideal. Thus, $\br{A}$ is a non-unital algebra.  By using the reduced bar complex, we can use the model
$$
A^!  = \oplus \br{A}^{\otimes n}[n] 
$$
with the differential, as before, coming from the product on $\br{A}$.   Note that, although this complex is unbounded below, because $F^1 \br{A} = \br{A}$ each graded piece of $A^!$ is bounded, so that $A^!$ is an object of $\mc{C}^b$. 

Again using the fact that $F^1 \br{A} = \br{A}$, and the fact that our direct sums are completed direct sum of filtered cochain complexes, we see that
$$
A^! = \prod \br{A} ^{\otimes n} [n].
$$

We will first construct the quasi-equivalence
$$
A_\infty \ A-\op{mod} \simeq A^! -\op{comod}.
$$
If $N$ is an $A_\infty$ $A$-module, we let 
$$
N^! = A^! \otimes N = \oplus_n \br{A}^{\otimes n}[n] \otimes N. 
$$
So far, $N^!$ is a cofree $BA$ comodule.  It is standard that the $A_\infty$ module structure on $N^!$ gives rise to a differential $\d_{N^!}$ on $N^!$ making it into a dg comodule for $A^!$.  This differential is characterized by the property that the composition
$$
\br{A}^{\otimes n}  \otimes N [n] \into N^! \xto{\d_{N^!}} \to N 
$$
(where the second map arises from the counit) is the $A_\infty$ multiplication $\mu_n$ on $N$. 

We need to verify that this is co-free. Let $F^i N$ denote the defining filtration on $N$ (recall that all objects of $\mc{C}^b$ are equipped with a filtration).  Let us define a filtration on $N^!$ by
$$
G^i (N^!) = A^! \otimes F^i N.
$$
This filtration is by sub-$A^!$-comodules, and our definitions guarantee that
$$
N^! = \liminv N^! / G^i N^!.
$$ 
The fact that $A^! = \oplus \br{A}^{\otimes n}$ and $F^1 \br{A} = \br{A}$ guarantees that $G^i N^! / G^{i+1} N^!$ is cofree as an $A^!$-comodule.   Thus, we have shown that $N^!$ is quasi-cofree. 

It is standard that if $M, N$ are $A_\infty$ $A$-modules, then $A_\infty$-maps $M \to N$ are the same as maps of $A^!$-comodules $M^! \to N^!$.   

Finally, we need to verify that every object of $A^!-\op{comod}$ is homotopy equivalent to one coming from an $A_\infty$ $A$-module.  We will show something stronger: every object of $A^!-\op{comod}$ is actually \emph{isomorphic} to one coming from an $A_\infty$ $A$-module. 

 Let $M$ be a quasi-co-free $A^!$-comodule.  Then, there exists an object $V$ of $\mc{C}^b$ and a differential $\d_M$ on $A^! \otimes V$ with the following properties:
\begin{enumerate}
\item $\d_M$ is $A^!$-colinear.
\item There is an isomorphism of $A^!$-comodules
$$
(A^! \otimes V, \d_M) \iso M.
$$
\item The coaugmentation map $V \to A^! \otimes V$ is a cochain map. 
\end{enumerate}
The differential $\d_M$ on $A^! \otimes V$ can be decomposed into terms
$$
\d_N^{(k)} : \br{A}^{\otimes n} \otimes V \to V,
$$
where $\d_N^{(0)}$ is the given differential on $V$.    It is standard that the maps $\d_N^{(k)}$ endow $V$ with the structure of $A_\infty$-module over $A$.

Next, we need to construct the quasi-equivalence between the dg category $A-\op{mod}$ of quasi-free $A$-modules, and the dg category of $A_\infty$ $A$-modules.  The first thing to verify is that, if $M,N$ are quasi-free $A$-modules, that the natural map
$$
\Hom_{A-\op{mod}} (M,N) \to \Hom_{A_\infty\ A-\op{mod}} (M,N)
$$
is a quasi-isomorphism.  If $M$ is a quasi-free $A$-module, let $BM$ be the standard bar resolution of $M$, which looks like
$$
BM = \oplus_{n \ge 0} A \otimes \br{A}^{\otimes n} \otimes M [n]
$$
(in other words, we take the standard bar complex model for the derived tensor product $A \otimes_{A}^{\mbb{L}} M$).  

Then, $A$-module maps $B M \to N$ are the same as $A_\infty$ $A$-module maps from $M$ to $N$.  Further, since $M$ is quasi-free, the natural quasi-isomorphism $B M \to M$ lifts to give a homotopy equivalence between $M$ and $BM$.  It follows that there is a homotopy equivalence
$$
\Hom_{A-\op{mod}} ( B M, N) \simeq \Hom_{A-\op{mod}} (M, N). 
$$

The final thing we need to check is that every $A_\infty$ $A$-module is homotopy equivalent to a strict $A$-module.  To see this, if $M$ is an $A_\infty$ $A$-module, then we can define (as above) the bar resolution
$$
BM = \oplus_{n \ge 0} A \otimes \br{A}^{\otimes n} \otimes M[n] 
$$
where the differential is defined using the higher product maps $\mu_n : \br{A}^{\otimes n} \otimes M \to M$.    Then, $BM$ is a quasi-free $A$-module, and there is a homotopy equivalence of $A_\infty$ $A$-modules between $BM$ and $M$.  

\end{proof}

\begin{lemma}
Let $A$ be as in the previous theorem.  Then, there is a natural quasi-isomorphism between the Hochschild chain complex of $A$ and the co-Hochschild chain complex of $A^!$. 
\label{lemma_coHochschild}
\end{lemma}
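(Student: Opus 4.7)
The plan is to deduce the lemma by exhibiting a small common model for the two chain complexes, and (more conceptually) by invoking the equivalence of dg categories established in Theorem~\ref{theorem_equivalence_categories}.

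First I would construct a small model for $HC_\ast(A)$ using the two-sided bar resolution $A \otimes A^! \otimes A \xrightarrow{\sim} A$ of $A$ as an $A$-bimodule. This resolution has underlying graded object $A \otimes A^! \otimes A$ and carries the standard Koszul differential built from the bar differential of $A^!$ together with the outer $A$-bimodule structure. Tensoring cyclically over $A^e = A \otimes A^{\op{op}}$ with the diagonal bimodule $A$ yields a small model
$$HC_\ast(A) \simeq A \otimes A^!$$
as an object of $\mc{C}^b$. Dually, applying the Koszul duality functor to this resolution produces a two-sided cobar coresolution $A^! \xrightarrow{\sim} A^! \otimes A \otimes A^!$ of $A^!$ as an $A^!$-bicomodule. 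Cotensoring cyclically over $(A^!)^e$ with the diagonal bicomodule $A^!$ yields a small model for the co-Hochschild complex
$$coHC_\ast(A^!) \simeq A \otimes A^!.$$

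The next step is to check that these two small models for $HC_\ast(A)$ and $coHC_\ast(A^!)$ coincide as filtered graded objects of $\mc{C}^b$ and that their differentials agree. Under the identification $A^! = \bigoplus_n \br{A}^{\otimes n}[n]$ from the proof of Theorem~\ref{theorem_equivalence_categories}, the deconcatenation coproduct on $A^!$ is dual to the multiplication of $A$ on adjacent factors, so the internal parts of the two differentials on $A \otimes A^!$ match termwise; the boundary "cyclic" terms, which come from the outer $A$-action on one side and the outer $A^!$-coaction on the other, also correspond by construction.

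More conceptually, the lemma is a formal consequence of Theorem~\ref{theorem_equivalence_categories}: under the equivalence $A\text{-}\op{mod} \simeq A^!\text{-}\op{comod}$, the diagonal $A$-bimodule $A$ corresponds to the cofree $A^!$-bicomodule $A^!$, and Hochschild homology is a Morita-type invariant of the category of modules (respectively, comodules), so quasi-equivalent categories yield quasi-isomorphic Hochschild complexes. The main obstacle will be the bookkeeping of signs, filtrations, and completions in $\mc{C}^b$: one must verify that the two-sided bar and cobar constructions converge with respect to the complete filtered tensor product, that the identifications of graded objects with $A \otimes A^!$ respect filtrations, and that the resulting comparison is a quasi-isomorphism on associated gradeds. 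As in the proof of Theorem~\ref{theorem_equivalence_categories}, an associated-graded spectral sequence argument reduces this to a standard computation, completing the proof.
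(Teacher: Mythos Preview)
Your approach is valid in outline but differs from the paper's, and the step where you ``check that the two differentials agree'' is where the real work hides, and your justification for it is muddled.

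You are invoking the twisting-cochain/Koszul small model: the canonical twisting cochain $\tau : A^! \to A$ makes $A \otimes_\tau A^! \otimes_\tau A$ into a bimodule resolution of $A$, and dually $A^! \otimes_\tau A \otimes_\tau A^!$ into a bicomodule coresolution of $A^!$; cyclically (co)tensoring collapses both to $A \otimes A^!$ with a single twisted differential. That is correct, and it does prove the lemma. But your explanation that ``the deconcatenation coproduct on $A^!$ is dual to the multiplication of $A$ on adjacent factors'' is not the reason the differentials match: deconcatenation on $A^! = \bigoplus \bar A^{\otimes n}[n]$ is simply splitting tensor words and has nothing to do with the product on $A$; the product on $A$ enters through the \emph{bar differential} on $A^!$, not its coproduct. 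What actually makes the two differentials on $A \otimes A^!$ coincide is that both are the twisted differential $d_A \otimes 1 + 1 \otimes d_{A^!} + (\mu_A \otimes 1)(1 \otimes \tau \otimes 1)(1 \otimes \Delta_{A^!})$ plus its cyclic partner, and this single formula is simultaneously the Hochschild differential (read one way) and the co-Hochschild differential (read the other). You also need to say why $A^! \otimes_\tau A \otimes_\tau A^!$ is a coresolution: this is the acyclicity of $A \otimes_\tau A^!$, which is exactly the completeness/filtration hypothesis and should be cited.

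The paper takes a different route that sidesteps this bookkeeping entirely. It first replaces $A$ by the quasi-isomorphic algebra $C^! = (A^!)^!$, noting that under the equivalence $A\text{-}\mathrm{mod}\simeq C\text{-}\mathrm{comod}$ the module $A$ goes to the trivial comodule, so $A \simeq \mathrm{Hom}_C(1,1) = C^!$. The key observation is that $C^!$ is the \emph{free} (tensor) algebra on $\bar C[-1]$; a tensor algebra has a two-term quasi-free bimodule resolution $\{I[1] \to C^! \otimes C^!\}$, where $I = \ker(m)$ is freely generated by $\alpha\otimes 1 - 1\otimes \alpha$ for $\alpha \in \bar C[-1]$. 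Tensoring this short resolution over $(C^!)^e$ with $C^!$ gives a complex that is visibly the standard cobar model for $coHC_\ast(C)$. The advantage of the paper's argument is that it never needs to compare two large differentials term by term: the tensor-algebra resolution is so small that the identification with co-Hochschild is immediate. Your approach, by contrast, is more symmetric and conceptual, but you must actually write down the twisted differential and verify it has both interpretations, rather than gesture at a duality that is not the one at play.
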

\begin{remark}
\begin{enumerate}
\item A much more general version of this statement -- involving factorization homology of $E_n$ algebras  -- has been proved by John Francis and (independently) by Takuo Matsuoka.   For completeness, however, I will give a concrete proof of this special case. 
\item The co-Hochschild chain complex of a coalgebra $C$ is defined in the same way as the Hochschild complex: it is the derived cotensor product
$$
C \odot^{\mbb{R}}_{C \otimes C} C. 
$$
\end{enumerate}
\end{remark}
\begin{proof}
We let $C = A^!$, and we let $1$ denote the coaugmentation comodule for $C$ (or rather, a quasi-cofree resolution for this).  Note that, under the quasi-equivalence of dg categories between $A-\op{mod}$ and $A^!-\op{comod}$, the $A$-module $A$ maps to $1$.    It follows that $A$ is quasi-isomorphic to $\Hom_C(1,1)$. We can compute $\Hom_C(1,1)$ using an injective cobar resolution for the trivial comodule: the result is 
$$
\Hom_C(1,1) = \prod_n \br{C} [-n]
$$
with a differential arising from the coproduct on $C$.  Note that because $F^1 \br{C} = \br{C}$, we can rewrite this as 
$$
\Hom_C(1,1) = \oplus_n \br{C} [-n].
$$
As an algebra, $\Hom_C(1,1)$ is the free algebra on $\br{C}[-1]$.   We will denote this algebra by $C^!$. 

We can write down an explicit quasi-free resolution for $C^!$ as a $C^!$-bimodule, as follows.  Multiplication $m : C^! \otimes C^! \to C^!$ is a bimodule map.  The kernel $I \subset C^! \otimes C^!$ is the two-sided ideal generated by the subspace $\alpha \otimes 1 - 1 \otimes \alpha$ for $\alpha \in \br{C}[-1]$; one can check easily that $I$ is freely generated by this subspace, so that 
$$
M = \{ I[1] \to C^! \otimes C^!\}
$$
is a quasi-free resolution for the diagonal bimodule.  

We tensor this over $C^! \otimes C^!$ with $C^!$, to get the Hochschild complex.  We find that 
$$
M \otimes_{C^! \otimes C^!} C^! = \{ I[1] \otimes_{C^! \otimes C^!} C^! \to C^!\}.
$$
Note that the underlying graded vector space of $I[1] \otimes_{C^! \otimes C^!} C^!$ is $\br{C} \otimes C^!$, which we can expand as 
$$
I[1] \otimes_{C^! \otimes C^!} C^! = \prod_n \br{C} \otimes \br{C}^{\otimes n}[-n].
$$
If we analyze the differential, we find that this complex is the standard cobar complex model for the co-Hochschild complex of $C$ with coefficients in $\br{C}$, the coaugmentation coideal.  Including $C^!$, which is the other part of $M$, gives us the standard cobar model for the co-Hochschild complex of $C$ with coefficients in $C$.  
\end{proof}

\subsection{}
Let $V \in \mc{C}^b$.  We say that $V$ is \emph{finite} if $H^\ast ( \op{Gr} V \op{mod} \hbar)$ is of finite total dimension.  
 
Let $M = V \otimes_{\C[[\hbar]]} A$ be a free module, where $V \in \mc{C}^b$.  We say that $M$ is of finite rank if $V$ is finite. 

We say an $A$-module $M$ is \emph{perfect} if the $\op{Gr} A$-module $\op{Gr} M$ is quasi-isomorphic to one obtained from free $\op{Gr} A$ modules of finite rank by a finite number of iterated cones.   We let $\op{Perf}(A) \subset A-\op{mod}$ be the subcategory of perfect modules. 

If $N$ is an $A^!$-comodule, we say that $N$ is finite if it is finite when considered as an object of $\mc{C}^b$.   We let $\op{Fin}(A^!) \subset A^!-\op{comod}$ be the full sub-dg category of finite $A^!$-comodules.
\begin{proposition}
Under the hypotheses of the theorem, the quasi-equivalence 
$$
A-\op{mod} \to A^!-\op{comod}
$$
restricts to a quasi-equivalence
$$
\op{Perf}(A) \to \op{Fin}(A^!).
$$
\label{proposition_perfect_finite}
\end{proposition}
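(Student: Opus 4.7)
The plan is to prove the two inclusions separately, using as the fundamental computation the observation that the Koszul duality functor $\Phi : M \mapsto M^! = A^! \otimes M$ from Theorem \ref{theorem_equivalence_categories} sends a finite-rank free module to a comodule with finite cohomology. Specifically, for $M = A \otimes V$ with $V$ finite, the bar differential on $A^! \otimes A \otimes V$ makes it a standard bar resolution computing $\C \otimes_A^{\mbb L} (A \otimes V) \simeq V$, and the inclusion $V = 1 \otimes 1 \otimes V \hookrightarrow A^! \otimes A \otimes V$ is a quasi-isomorphism of $A^!$-comodules (where the target comodule structure on $V$ is via the coaugmentation $\C \to A^!$). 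Since both ``perfect'' and ``finite'' are defined purely in terms of the associated graded modulo $\hbar$, and the functor $\Phi$ is built from the bar construction and so manifestly commutes with $\op{Gr}$ and with reduction mod $\hbar$, we will reduce to the setting where $A$ is a graded augmented connected algebra over $\C$ (satisfying $A/\br A = \C$) and check the two containments there.

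For the forward direction, if $\op{Gr} M$ is a finite iterated cone of free $\op{Gr} A$-modules of finite rank, then $\op{Gr}\Phi(M)$ is the analogous iterated cone of comodules of the form $\Phi(\op{Gr} A \otimes V_i)$ with $V_i$ finite; by the key computation each of these has finite cohomology mod $\hbar$, and finiteness of cohomology is preserved by cones, so $\Phi(M)$ is finite.

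For the reverse direction, working after the reduction to the graded, $\hbar = 0$ setting, we show that a finite $A^!$-comodule $N$ has preimage $M = \Psi(N)$ which is a finite iterated cone of finite-rank free $A$-modules. We induct on $d(N) := \dim_\C H^\ast(N) = \dim \op{Tor}^A_\ast(\C, M)$. If $\op{Tor}_0^A(\C, M) = M / \br A M$ is zero, then the graded Nakayama lemma (applicable because $A$ is connected and $M$, being in $\mc{C}^b$, has bounded-below graded pieces) forces $M \simeq 0$, so $M$ is trivially perfect. Otherwise, choose a nonzero class $\alpha \in \op{Tor}_0$ of minimal cohomological degree $k$, lift it to a cocycle in $M$ and thereby define a map of $A$-modules $\phi : A[-k] \to M$ which is nonzero on $\op{Tor}_0$; forming the exact triangle $A[-k] \xto{\phi} M \to M'$, the long exact sequence in $\op{Tor}$ gives $d(\Phi(M')) = d(N) - 1$. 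By induction $M'$ is perfect, and hence so is $M$.

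The main obstacle is the technical reduction to the associated graded, $\hbar = 0$ setting: one must verify that the choices of minimal generator and the construction of the cofiber triangle can be made filtration-preservingly, and that the spectral sequence associated to the filtration on $\Phi(M)$ converges strongly enough for the finite-ness condition (a statement about the $E_\infty$-page) to descend to a perfectness condition (a statement about the whole object). This is handled by the $\mc{C}^b$ hypothesis, which ensures boundedness of each graded piece and hence convergence of the relevant spectral sequences; the construction of the map $\phi$ itself is straightforward in the graded setting and lifts to the filtered setting by choosing compatible representatives on each graded piece.
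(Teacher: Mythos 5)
Your forward direction is fine, and it is essentially the paper's own argument (pass to $\op{Gr}$, note that $\C\otimes^{\mbb{L}}_{\op{Gr}A}(-)$ preserves cones and sends finite-rank free modules to finite objects). The gap is in the reverse direction, at the one step the whole induction hangs on: producing a map of $A$-modules $\phi$ from a shifted free module to $M$ whose image under $\C\otimes^{\mbb{L}}_A(-)$ hits a prescribed nonzero class $\alpha$. You propose to do this by choosing $\alpha$ of minimal \emph{cohomological} degree and ``lifting it to a cocycle in $M$''. That lift does not exist in general: a class of $H^\ast(\C\otimes^{\mbb{L}}_A M)$ (or of $H^\ast(M/\br{A}M)$) need not lie in the image of $H^\ast(M)$; a representative lifts only to $m\in M$ with $\d m\in \br{A}M$, and the obstruction class in $H^\ast(\br{A}M)$ is not controlled by minimality of cohomological degree. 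Concretely, take $\op{Gr}A \op{mod}\hbar=\C[x]$ with $x$ of weight $1$ and cohomological degree $0$, and let $M$ be the quasi-free model $\{A\xto{x}A\}$ of the trivial module (generator $e$ in cohomological degree $-1$, $\d e=x$), direct sum with a free module placed in cohomological degree $3$. The minimal-degree Tor class sits in degree $-1$, but $H^{-1}(M)=0$, so it is represented by no cocycle of $M$ and is hit by no map from a shifted free module; your induction cannot start on this (perfectly finite) example.

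What actually makes this style of argument work is minimality in the \emph{filtration (weight)} grading, not the cohomological one: since $A/F^1A=\C$, the augmentation ideal sits in strictly positive weights and the weights of $M$ are bounded below, so in the lowest weight $w$ with nonvanishing cohomology the bar complex collapses and $H^\ast(M_w)\iso H^\ast\bigl((\C\otimes^{\mbb{L}}_A M)_w\bigr)$; a class chosen there is represented by an honest cocycle of $M$, the cofiber triangle makes sense, and your count $d(N)$ drops by one, so the induction terminates by finiteness. (This weight grading is also what your graded Nakayama base case is secretly using.) With that repair your proof becomes a correct, more step-by-step alternative to the paper's, which avoids the one-class-at-a-time induction altogether: it takes a single semifree resolution $V\otimes\op{Gr}A$ with $V=\op{Gr}(\C\otimes^{\mbb{L}}_A M)\op{mod}\hbar$ and truncates the weight filtration $F^i=\oplus_{j\le i}V^j\otimes\op{Gr}A$, whose subquotients are free of finite rank, in one stroke. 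Your worry about descending from $\op{Gr}$ back to the filtered object is largely moot here, since both ``perfect'' and ``finite'' are by definition conditions on $\op{Gr}(-)$.
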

\begin{proof}
Let $M \in \op{Perf}(A)$. We need to show that 
$1 \otimes^{\mbb{L}}_A M$ is finite.  By assumption, $\op{Gr} M$ is quasi-isomorphic to  a finite extension of modules of the form $V \otimes \op{Gr} A$, where $V$ is finite.  Note that
$$
\op{Gr} \left(1 \otimes_A^{\mbb{L}} M  \right) = 1 \otimes_{\op{Gr}A}^{\mbb{L}} \op{Gr} M.
$$
The functor $1 \otimes_{\op{Gr} A}^{\mbb{L}} - $ sends cones to cones. Thus, it suffices to check that for every free $\op{Gr} A$-module $N$ of finite rank,  $1 \otimes_{\op{Gr}A}^{\mbb{L}} N$ is finite. This is clear.  

Conversely, suppose that $M$ has the property that $1 \otimes^{\mbb{L}}_A M$ is finite.  We need to verify that $M$ is perfect.   Let us choose a free resolution of $\op{Gr} M$, so that we have a quasi-isomorphism
$$
\op{Gr} M \iso V \otimes_{\C} \op{Gr} A
$$
where $V$ is a graded cochain complex over $\C$. Note that
$$
\op{Gr} \left(1 \otimes_A^{\mbb{L}} M  \right) \op{mod} \hbar = V.
$$
Thus, by assumption, each graded piece of $V$ has finite cohomology. 

Let $V^i$ denote the $i$'th graded piece of $V$. Note that $V^i$ is zero if $i < 0$.  Then, the differential on $V \otimes \op{Gr} A$ maps
$$
V^i \otimes \mapsto V^i \oplus_{0 \le j < i} V^j \otimes \op{Gr}^{i - j} A.
$$
Let 
$$F^i (V \otimes \op{Gr} A) = \oplus_{j \le i} V^j\otimes \op{Gr} A.$$
This is a filtration by dg $\op{Gr}(A)$-modules.  Since $H^\ast(V^i) = 0$ for $i$ sufficiently large, $V \otimes \op{Gr} A$ is quasi-isomorphic to $F^i(V \otimes \op{Gr} A)$ for some $i$ sufficiently large. 

Thus, we have shown that $\op{Gr} M$ has a finite filtration whose associated graded modules are free of finite rank.  This tells us that $\op{Gr} (M)$ can be expressed in terms of free modules of finite rank by a finite number of iterated cones, as desired. 
\end{proof}

We will leverage proposition \ref{proposition_perfect_finite} to prove a result about Hochschild homology.  Let $A$ be an algebra in the dg category $\mc{C}^b$, and let us assume (as in the statement of theorem \ref{theorem_equivalence_categories}) that $A / F^1 A = \C$.  
\begin{proposition}
There is a natural isomorphism
$$
HC( \op{Perf}(A) ) \iso HC( A) 
$$
between the Hochschild homology groups of the dg category $\op{Perf}(A)$ and those of $A$. 
\label{prop:HH_iso_perfect}
\end{proposition}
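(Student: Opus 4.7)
The plan is to prove this by Morita/additivity-style invariance of Hochschild chains, viewing the one-object dg category $\{A\}$ (with endomorphism algebra $A$) as a full sub-dg category of $\op{Perf}(A)$ via the free module of rank one. This inclusion induces a map of Hochschild chain complexes
$$
HC(A) \;=\; HC(\{A\}) \;\longrightarrow\; HC(\op{Perf}(A)),
$$
and the goal is to show that this map is a quasi-isomorphism.

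First I would reduce $\op{Perf}(A)$ to its additive span. By definition (see the statement preceding Proposition \ref{proposition_perfect_finite}), every object $M \in \op{Perf}(A)$ has the property that $\op{Gr} M$ is obtained from free $\op{Gr} A$-modules of finite rank by finitely many iterated cones. Lifting these cones back through the $\hbar$-adic filtration, one sees that the pretriangulated hull of $\{A\}$ inside $\op{Perf}(A)$ is all of $\op{Perf}(A)$. Thus it suffices to prove the standard Keller-style statement: for any small dg category $\mathcal{D}$ and any full dg subcategory $\mathcal{C} \subset \mathcal{D}$ such that every object of $\mathcal{D}$ is a finite iterated cone (up to homotopy equivalence of quasi-free modules) of objects of $\mathcal{C}$, the inclusion induces a quasi-isomorphism $HC(\mathcal{C}) \to HC(\mathcal{D})$.

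To prove this additivity, I would filter the Hochschild chain complex of $\op{Perf}(A)$ by ``complexity'' of the chain of objects $X_0, \dots, X_n$ appearing in each summand
$$
\Hom(X_n, X_0) \otimes \Hom(X_{n-1}, X_n) \otimes \cdots \otimes \Hom(X_0, X_1),
$$
where complexity measures how many iterated cone-extensions are needed to express each $X_i$ from objects of $\{A\}$. The key computation is the one-step additivity: given a distinguished triangle $N_1 \to N_2 \to N_3$ in $\op{Perf}(A)$, the contribution of $N_3$ to $HC(\op{Perf}(A))$ is (up to quasi-isomorphism) the cone of the contribution of $N_1$ and $N_2$. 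This can be seen by resolving the diagonal $\op{Perf}(A)$-bimodule $\Hom(-,-)$ along such triangles using the standard two-term Waldhausen/Keller resolution; it reduces the one-step claim to a straightforward calculation with cones of morphism complexes. Iterating over the complexity filtration and passing to cohomology gives the desired quasi-isomorphism.

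The main obstacle will be carefully running all of this in the enriched setting of $\mc{C}^b$: morphism complexes are complete filtered $\C[[\hbar]]$-modules, Hochschild chains involve infinite direct sums (which are completed as in Definition \ref{definition_filtered}), and the Keller resolution of the diagonal bimodule needs to be produced as a quasi-free $\op{Perf}(A)$-bimodule in the appropriate filtered sense. The strategy for handling this is to work throughout with the associated graded and appeal to convergence of the filtration spectral sequence: on the associated graded (modulo $\hbar$) the statement reduces to the classical Keller additivity theorem for dg categories over $\C$, which is known, and the filtrations involved are complete and bounded below on each graded piece, so convergence is automatic. This reduction to the classical case is what I would expect to absorb the bulk of the technical work.
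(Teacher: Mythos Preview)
Your proposal is essentially correct and lands on the same argument as the paper: pass to the associated graded using the complete filtration, and then invoke the classical Keller/Morita invariance of Hochschild chains for the inclusion $\{A\}\hookrightarrow\op{Perf}(A)$. The paper's proof is just a terser version of your final paragraph: it observes that $\op{Gr}$ commutes with both $HC$ and with $\op{Perf}$ (in the sense that $\op{Gr}\op{Perf}(A)$ is the category of graded $\op{Gr} A$-modules built by finitely many cones from finite-rank free modules), and then says ``standard results apply.''

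The one place where your route diverges is the attempt, in your second paragraph, to lift the cones from $\op{Gr} M$ back to $M$ and thereby exhibit $\op{Perf}(A)$ itself as the pretriangulated hull of $\{A\}$ before passing to the associated graded. This step is not obviously valid: the definition of perfect here is purely a condition on $\op{Gr} M$, and there is no claim that $M$ itself sits in a finite sequence of cones of free $A$-modules. (Also, the relevant filtration is the intrinsic complete filtration $F^\bullet$ on objects of $\mc{C}^b$, not specifically the $\hbar$-adic one.) You yourself recognise this obstacle and propose to circumvent it by working on the associated graded, which is exactly right; the point is simply that the intermediate ``complexity filtration'' argument in the filtered category is unnecessary. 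Going straight to $\op{Gr}$, as the paper does, both shortens the proof and sidesteps the lifting issue entirely.
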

\begin{remark}
We use here the fact that $A-\op{mod}$ is enriched in $\mc{C}$, and the Hochschild chain complexes of $\op{Perf}(A)$ and of $A$ is defined using colimits in $\mc{C}$.  This means we use completed direct sums. 
\end{remark}
\begin{proof}
Although this is standard, our definition of perfect may appear slightly unorthodox, so I will sketch a proof.  Since $A$ is a perfect module over itself, there's a natural map $HC(A) \to HC(\op{Perf}(A))$. We want to show that this map is a quasi-isomorphism.  Everything has a complete decreasing filtration, so it suffices to show that this map is a quasi-isomorphism on the associated graded.  Thus, we need to verify that the map
$$
HC(\op{Gr} A) \to HC( \op{Perf} \op{Gr} A))
$$
is a quasi-isomorphism. We defined a perfect $A$-module to be one whose associated graded is a finite iterated extension of free $\op{Gr} A$-modules of finite rank.   

The dg category $\op{Perf}(A)$ is enriched in $\mc{C}^b$; thus, the $\Hom$-complexes are filtered.  We can therefore consider $\op{Gr} \op{Perf}(A)$, the associated graded dg category, obtained by replacing the $\Hom$-complexes by their associated graded.  

We can identify $\op{Gr} \op{Perf}(A)$ with the category of graded $\op{Gr} A$-modules which are finite iterated extensions of free modules. Now standard results apply. 
\end{proof}
\begin{corollary}
There is a natural quasi-isomorphism
$$
HC( A) \iso HC( \op{Fin}(A^!)) 
$$
between the Hochschild chain groups of $A$ and that of the category of finite-rank $A^!$ comodules. 
\end{corollary}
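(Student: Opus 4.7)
The plan is to obtain the corollary by concatenating three quasi-isomorphisms, each of which is either directly proved above or is a standard functoriality property of Hochschild chains. First I would apply Proposition \ref{prop:HH_iso_perfect} to get a natural quasi-isomorphism
$$
HC(A) \iso HC(\op{Perf}(A)),
$$
so the task reduces to identifying $HC(\op{Perf}(A))$ with $HC(\op{Fin}(A^!))$.

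Next, I would appeal to Proposition \ref{proposition_perfect_finite}, which establishes a quasi-equivalence of dg categories
$$
\op{Perf}(A) \simeq \op{Fin}(A^!),
$$
compatible with the enrichment in $\mc{C}$ used to define the Hochschild chain complexes. The final step is then the general fact that Hochschild chains are a quasi-equivalence invariant of dg categories: a quasi-equivalence $\Phi:\mc{D}_1\to\mc{D}_2$ induces a quasi-isomorphism $HC(\mc{D}_1)\iso HC(\mc{D}_2)$. Stringing these together gives
$$
HC(A) \iso HC(\op{Perf}(A)) \iso HC(\op{Fin}(A^!)),
$$
which is the desired natural quasi-isomorphism.

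The only part that is not immediate from the propositions already stated is the invariance of Hochschild chains under quasi-equivalences of dg categories enriched in $\mc{C}$. I would handle this by the usual argument: the Hochschild chain complex has a complete decreasing filtration inherited from the filtration on the morphism complexes, so it suffices to prove quasi-isomorphism on the associated graded. On the associated graded one has a quasi-equivalence of dg categories enriched in $\C[[\hbar]]$-modules, where the standard cyclic bar complex computation applies: a fully faithful, essentially surjective dg functor induces a quasi-isomorphism on the cyclic bar complex by the same spectral sequence argument used in the proof of Proposition \ref{prop:HH_iso_perfect}. The main (minor) obstacle is the book-keeping with filtrations and completed direct sums in $\mc{C}$, but no new ideas beyond those already used in the proof of Proposition \ref{prop:HH_iso_perfect} are required; everything reduces to ordinary dg-category invariance on the associated graded.
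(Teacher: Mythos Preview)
Your proof is correct and matches the paper's intended argument: the corollary is stated without proof precisely because it follows immediately from Proposition~\ref{prop:HH_iso_perfect} together with the quasi-equivalence $\op{Perf}(A)\simeq\op{Fin}(A^!)$ of Proposition~\ref{proposition_perfect_finite}, plus the invariance of Hochschild chains under quasi-equivalence. Your filtration/associated-graded argument for the last step is exactly the right way to handle the enrichment in $\mc{C}$, and it parallels the technique already used in the proof of Proposition~\ref{prop:HH_iso_perfect}.
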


\begin{proposition}
Suppose that $C$ is a coalgebra in $\mc{C}^b$, that $H^i(C) = 0$ for $i \neq 0$, and that $H^0 (C)$ is a flat $\C[[\hbar]]$-module. Then there is an equivalence of dg categories
$$
C-\op{comod} \simeq H^0(C)-\op{comod}.
$$
\end{proposition}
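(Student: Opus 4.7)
The plan is to produce an $A_\infty$-quasi-isomorphism of coalgebras between $C$ and $H^0(C)$ (viewed as a dg coalgebra concentrated in degree zero), and then transport comodules across it in parallel with the proof of Theorem \ref{theorem_equivalence_categories}.

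First I would build a deformation retract of $C$ onto $H^0(C)$ inside $\mc{C}^b$: a cochain surjection $p : C \to H^0(C)$, a cochain section $i : H^0(C) \to C$ with $p \circ i = \mathrm{id}$, and a degree $-1$ homotopy $h : C \to C$ satisfying $dh + hd = i \circ p - \mathrm{id}$, together with the side conditions $h \circ i = 0$, $p \circ h = 0$, $h^2 = 0$. The flatness of $H^0(C)$ over $\C[[\hbar]]$, combined with the assumption that $H^i(C) = 0$ for $i \neq 0$, allows one to construct such a retract on the associated graded $\op{Gr} C$ by choosing compatible splittings of the acyclic pieces in each positive degree. Completeness of the filtration then lets one lift inductively through the tower $\{C / F^k C\}$, arranging the side conditions at each stage.

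Next I would apply the standard coalgebra homotopy transfer theorem to this retract. This endows $H^0(C)$ with an $A_\infty$-coalgebra structure $\{\Delta_n\}_{n \ge 2}$ together with an $A_\infty$-coalgebra quasi-isomorphism from $H^0(C)$ to $C$. The crucial observation is a degree count: $\Delta_n$ must have cohomological degree $2-n$, so for $n \ge 3$ it is a negative-degree map from $H^0(C)$ (concentrated in degree $0$) into $H^0(C)^{\otimes n}$ (also in degree $0$), and hence vanishes. The transferred $A_\infty$-structure thus collapses to the classical coalgebra structure that $H^0(C)$ already carries. Note that the $A_\infty$-quasi-isomorphism itself is not in general strict, since its higher components $f_n : H^0(C) \to C^{\otimes n}$ can land in negative cohomological degrees of $C^{\otimes n}$, which need not vanish.

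Finally, given this $A_\infty$-coalgebra quasi-isomorphism $H^0(C) \to C$, I would argue exactly as in the proof of Theorem \ref{theorem_equivalence_categories} that it induces a quasi-equivalence between the categories of quasi-cofree comodules. As in that proof, one shows both $C\mbox{-}\op{comod}$ and $H^0(C)\mbox{-}\op{comod}$ are quasi-equivalent to the corresponding category of $A_\infty$-comodules, and the $A_\infty$-quasi-isomorphism of coalgebras induces an invertible restriction-of-coscalars functor between these $A_\infty$-comodule categories. The main obstacle will be carrying out the homotopy transfer carefully inside $\mc{C}^b$: the tree-summation formulas defining the transferred operations $\Delta_n$ and the higher components $f_n$ of the quasi-isomorphism must converge in the filtration topology. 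This follows from $F^1 C = \ker(C \to \C)$ together with completeness, which guarantee that modulo any $F^k$ only finitely many trees contribute to each component.
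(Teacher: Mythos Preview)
Your approach is correct in outline but takes a genuinely different route from the paper. The paper avoids $A_\infty$ machinery entirely: instead of transferring structure through a homotopy retract, it introduces an intermediate dg coalgebra $D$ defined by $D^i = C^i$ for $i > 0$, $D^i = 0$ for $i < 0$, and $D^0 = C^0 / \operatorname{Im} d$, and checks that the coproduct on $C$ descends to $D$ so that the surjection $C \to D$ is a strict coalgebra map; the flatness hypothesis ensures $D$ lies in $\mc{C}^b$. There is then an evident strict coalgebra quasi-isomorphism $H^0(C) \to D$, yielding a zigzag $C \to D \leftarrow H^0(C)$ of honest coalgebra quasi-isomorphisms. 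The remaining work is a general lemma that any quasi-isomorphism $\Phi : C \to D$ of coalgebras in $\mc{C}^b$ induces a quasi-equivalence of comodule categories, which the paper proves via the derived cotensor functors $\Phi_\ast M = M \odot_D^{\mbb{R}} D$ and $\Phi^\ast N = N \odot_D^{\mbb{R}} C$ and a levelwise comparison of cobar cosimplicial objects.

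Your transfer argument buys you a direct $A_\infty$ map $H^0(C) \to C$ rather than a zigzag, and fits into a standard paradigm; its main cost is that you must carefully verify convergence of the tree sums in the filtered setting and set up $A_\infty$-comodule categories and their functoriality under $A_\infty$-coalgebra maps, neither of which the paper develops. The paper's truncation trick is more elementary and keeps everything in the world of strict coalgebras, at the price of passing through an auxiliary object $D$. Both reduce the heart of the matter to the same invariance statement---quasi-isomorphic coalgebras have quasi-equivalent comodule categories---but the paper's formulation of that invariance (for strict maps, via cotensor) is lighter than what you would need for $A_\infty$ maps.
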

\begin{proof}
We will first produce chain of quasi-isomorphisms connecting the coalgebras $C$ and $H^0(C)$.  Define a coalgebra $D$ by saying that in cohomological degree $i > 0$, $D^i = C^i$, in degree $i  < 0$, $D^i = 0$, and $D^0 = C^0 / \op{Im} \d$.  Then, $D$ has a unique coalgebra structure such that the map $C \to D$ is a coalgebra map.  The fact that $H^0(C)$ is flat over $\C[[\hbar]]$ implies that $D$ is also, so that is a coalgebra in the dg category $\mc{C}^b$.  There are quasi-isomorphisms $H^0(C) \to D \from C$.  

Next, we need to verify that if $D,C$ are coalgebras in $\mc{C}^b$, that a quasi-isomorphism $\Phi : C \to D$ of coalgebras induces a quasi-equivalence $C-\op{comod} \simeq D-\op{comod}$.  We assume, as always, that $C / F^1 C = \C$ and $D / F^1 D = \C$.  We right $\br{D}$ and $\br{C}$ for the kernel of the counit map $C \to \C[[\hbar]]$, $D \to \C[[\hbar]]$. 

We define functors 
\begin{align*}
\Phi_\ast : C-\op{comod} &\to  D-\op{comod}\\ 
\Phi^\ast : D-\op{comod} &\to C-\op{comod}
\end{align*}
as follows. If $M$ is a $C$-comodule, then we can consider it as a $D$-comodule; however, it may not be quasi-free as a $D$-comodule.  We let $\Phi_\ast(M)$ be the canonical cobar resolution of $M$ as a $D$-comodule, thus, $\Phi_\ast(M)$ looks like
$$
\Phi_\ast(M) = M \odot_D^{\mbb{R}} D = \prod (M \otimes \br{D}^{\otimes n + 1} ) [-n].
$$
Dually, if $N$ is a $D$-comodule, we let 
$$
\Phi^\ast (N) = N \odot_D^{\mbb{R}} C.
$$
Again, we can model the right-derived cotensor product as a bar complex.  

We will construct natural quasi-isomorphisms of functors $\Phi^\ast \Phi_\ast \simeq \op{Id}_{C-\op{comod}}$ and $\Phi_\ast \Phi^\ast \simeq \op{Id}_{D-\op{comod}}$. 

If $M$ is a $C$-comodule, there is a natural quasi-isomorphism
$$
M \odot^{\mbb{R}}_D C \to \Phi^\ast \Phi_\ast (M).
$$
Note that $M \odot^{\mbb{R}}_D C$ is the totalization of a cosimplicial $C$-comodule whose $n$-simplices are $M \otimes \br{D}^{\otimes n} \otimes C$. Call this cosimplicial object $M \odot_D^{\tr} C$.  Similarly, the canonical cobar resolution $M \odot_{C}^{\mbb{R}} C$ is the totalization of a cosimplicial object $M \odot_C^{\tr} C$ whose $n$-simplices are $M \otimes \br{C}^{\otimes n} \otimes C$. There is a map of cosimplicial $C$-comodules
$$
M \odot_C^{\tr} C \to M \odot_D^{\tr} C
$$
arising from the coalgebra map $\Phi : C \to D$. Since $\Phi$ is a quasi-isomorphism, this map of cosimplicial objects is a levelwise quasi-isomorphism, and so becomes a quasi-isomorphism upon totalization. 

Similarly, if $N$ is a $D$-comodule, then 
$$
\Phi_\ast \Phi^\ast N \simeq N \odot^{\mbb{R}}_D C.
$$
The map $C \to D$ induces a map
$$
N \odot^{\tr}_{D} C \to N \odot^{\tr}_D D
$$
of cosimplicial $D$-comodules, which is again a levelwise equivalence, so that we have a natural quasi-isomorphism $\Phi_\ast \Phi^\ast N \simeq N$. 
\end{proof}

\begin{corollary}
Suppose that $A$ is an algebra in $\mc{C}^b$ satisfying the hypothesis of theorem \ref{theorem_equivalence_categories}.  Suppose that $H^i (A^!) = 0$ for $i \neq 0$ and that $H^0 (A^!)$ is flat over $\C[[\hbar]]$.  Then, there is a quasi-equivalence of dg categories
$$
\op{Perf}(A) \simeq \op{Fin} (H^0(A^!)),
$$
and a quasi-isomorphism of Hochschild chain complexes
$$
HC( A) \iso HC( \op{Fin} (H^0(A^!)) .
$$
\end{corollary}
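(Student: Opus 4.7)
The plan is to obtain the corollary by splicing together the three immediately preceding results: Theorem \ref{theorem_equivalence_categories}, Proposition \ref{proposition_perfect_finite}, and the proposition that a quasi-isomorphism of coalgebras in $\mc{C}^b$ induces a quasi-equivalence on comodule categories. First, by Theorem \ref{theorem_equivalence_categories} together with Proposition \ref{proposition_perfect_finite} we already have a quasi-equivalence $\op{Perf}(A) \simeq \op{Fin}(A^!)$. So it suffices to produce a quasi-equivalence $\op{Fin}(A^!) \simeq \op{Fin}(H^0(A^!))$.

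Next I would apply the preceding proposition to the coalgebra $C = A^!$, using the hypotheses that $H^i(A^!) = 0$ for $i \neq 0$ and that $H^0(A^!)$ is $\C[[\hbar]]$-flat. This yields the chain of quasi-isomorphisms of coalgebras $A^! \to D \from H^0(A^!)$ constructed in the proof of that proposition, and hence functors $\Phi_\ast, \Phi^\ast$ (defined by derived cotensor product) giving a quasi-equivalence $A^!\text{-}\op{comod} \simeq H^0(A^!)\text{-}\op{comod}$. The key point to check is that these functors preserve the finiteness condition. This is where I would spend some care: if $M$ is an $A^!$-comodule whose underlying filtered cochain complex has $H^\ast(\op{Gr} M \bmod \hbar)$ of finite total dimension, then $\Phi_\ast M = M \odot^{\mbb{R}}_D A^!$ is quasi-isomorphic to $M$ as an underlying filtered cochain complex (via the usual comparison of cobar resolutions induced by $A^! \simeq D$), so finiteness is preserved; similarly for $\Phi^\ast$. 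Thus the quasi-equivalence restricts to a quasi-equivalence $\op{Fin}(A^!) \simeq \op{Fin}(H^0(A^!))$, which combined with Proposition \ref{proposition_perfect_finite} gives the first claim.

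For the Hochschild statement, I would invoke Proposition \ref{prop:HH_iso_perfect}, which gives $HC(A) \simeq HC(\op{Perf}(A))$. Since Hochschild homology of dg categories is invariant under quasi-equivalence, the quasi-equivalence $\op{Perf}(A) \simeq \op{Fin}(H^0(A^!))$ established in the first part transports this to a quasi-isomorphism $HC(A) \iso HC(\op{Fin}(H^0(A^!)))$, which is the second claim.

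The main obstacle, and really the only non-formal step, is verifying that the comparison functors $\Phi_\ast, \Phi^\ast$ from the coalgebra-quasi-isomorphism proposition preserve the finiteness condition on $\op{Gr}(-) \bmod \hbar$. The cosimplicial/cobar models used in the proof of that proposition are a priori products of infinitely many tensor powers of $\br{A^!}$ or $\br{H^0(A^!)}$, so I need to use completeness of the filtration (the fact that $F^1\br{A^!} = \br{A^!}$, as exploited repeatedly earlier) to see that after passing to the associated graded modulo $\hbar$ the relevant totalizations are quasi-isomorphic to the underlying complex of the original comodule. Once that bookkeeping is done, all other steps are direct applications of previously established results.
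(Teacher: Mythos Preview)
Your proposal is correct and matches the paper's implicit approach: the corollary is stated without proof, as it follows immediately by combining Proposition~\ref{proposition_perfect_finite}, the preceding proposition on $C\text{-}\op{comod} \simeq H^0(C)\text{-}\op{comod}$, and Proposition~\ref{prop:HH_iso_perfect}, exactly as you outline. Your attention to why the comparison functors preserve the finiteness condition (via the underlying-complex quasi-isomorphism and the filtration property $F^1\br{A^!} = \br{A^!}$) is a detail the paper leaves to the reader, and your justification is sound.
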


\subsection{The dual of a coalgebra is an algebra}
\label{sub:dual_coalgebra}
Before we move on to a discussion of the Koszul dual of $E_2$ algebras, we need one more lemma, which relates modules for a coalgebra $C$ to modules for a linear dual associative algebra $C^\vee$.  First, we need to construct the linear dual of an object of $\mc{C}^b$.
\begin{definition}
If $V$ is a vector space over $\C$, let $V^\vee$ denote the topological vector space equipped with the topology defined by the inverse limit over all maps $V^\vee \to W^\vee$, where $W \subset V$ is a finite-dimensional subspace. 
\end{definition}
\begin{definition}
Let $V$ be a complete filtered $\C$-vector space.  Let $V^\vee$ be the topological vector space with an increasing filtration defined by 
\begin{align*}
F^i V^\vee &= (V / F^i V)\\
V^\vee &= \colim_i F^i V^\vee.
\end{align*}
\end{definition}
\begin{definition}
Let $V$ be a free complete filtered module over $\C[[\hbar]]$.  Let $V^\vee$ denote the topological module over $\C[[\hbar]]$, equipped with an increasing filtration, defined as follows.  

Let $V_n$ denote $V$ modulo $\hbar^n$.  For $i \in \Z$, let $F^i V_n^\vee$ be the space of $\C[\hbar]/\hbar^n$-linear maps
$$
\{ f : V \to \C[\hbar]/\hbar^n : f (F^j(V_n)) \subset \hbar^{i+j} \C[\hbar] / \hbar^n \}. 
$$
Note that $F^i V^\vee_n$ has a topology, as it is a closed subspace of the tensor product of the $\C$-linear dual of $V$ with $\C[\hbar]/\hbar^n$. 

Let 
$$
V^\vee_n = \colim_i F^i V_n^\vee.
$$
Let
$$
V^\vee = \liminv V_n^\vee
$$
where the inverse limit is taken in the category of topological vector spaces.  

Note that $V^\vee$ has a filtration by closed subspaces, indexed by $\Z$, defined by
$$
F^i V^\vee = \liminv F^i V^\vee.
$$
Further, $V^\vee$ has the following completeness property:
$$
V^\vee =  \liminv_i \limdir_{j\le i} F^j (V^\vee) /F^i (V^\vee) 
$$
\end{definition}
\begin{remark}
\begin{enumerate}
\item 
If we choose an isomorphism $V = W[[\hbar]]$, where $W$ is a complete filtered vector space, then we get an isomorphism of topological vector spaces
$$
V^\vee \iso W^\vee \what{\otimes}_{\pi} \C[[\hbar]] = \liminv_n W^\vee \otimes \C[\hbar] / \hbar^n,
$$
where $\what{\otimes}_{\pi}$ is the completed projective tensor product.
\item
If every graded piece of $V$ is of countable dimension, then $V^\vee$ is a nuclear space. The category of nuclear spaces is particularly well behaved with respect to the completed projective tensor product. 
\end{enumerate}
\end{remark}
\begin{lemma}
Let $C$ be a coalgebra in $\mc{C}^b$. Let $C^\vee$ be defined as in the previous definition.  Then, $C^\vee$ has the structure of a filtered topological dg algebra: there is a separately continuous $\C[[\hbar]]$-linear multiplication map
$$
C^\vee \times C^\vee \to C^\vee
$$
dual to the comultiplication on $C$, satisfying the axioms of an associative algebra. 
\end{lemma}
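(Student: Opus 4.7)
The plan is to construct the multiplication on $C^\vee$ by directly dualizing the comultiplication $\Delta : C \to C \otimes C$ (where $\otimes$ is the completed filtered tensor product over $\C[[\hbar]]$), and then to trace through the definitions to check the algebra axioms and the separate continuity. Informally, the product of $\phi, \psi \in C^\vee$ is the map $C \to \C[[\hbar]]$ given in Sweedler notation by $(\phi\cdot\psi)(c) = \sum \phi(c_{(1)})\psi(c_{(2)})$, but some care is required because $\Delta(c)$ lives in the completed tensor product, so this formula only makes sense after working modulo powers of $\hbar$.

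First I would check that for fixed $\phi \in F^a C^\vee$ and $\psi \in F^b C^\vee$, and for each $n\ge 1$, $\phi$ (resp. $\psi$) modulo $\hbar^n$ factors through $C/F^{n-a}C$ (resp. $C/F^{n-b}C$). Their tensor product therefore gives a continuous $\C[\hbar]/\hbar^n$-linear map $(C/F^{n-a}C)\otimes (C/F^{n-b}C) \to \C[\hbar]/\hbar^n$, and composing with the natural projection from $C\otimes C$ and then with $\Delta$ yields a map $C \to \C[\hbar]/\hbar^n$. These maps are obviously compatible as $n$ varies, so taking the inverse limit produces an element $\phi\cdot\psi \in C^\vee$. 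A direct filtration estimate shows that if $c \in F^k C$ then $\Delta(c)$ lies in the completion of $\sum_{i+j\ge k} F^i C\otimes F^j C$, so the composite lands in $\hbar^{a+b+k}\C[[\hbar]]$; hence $F^a C^\vee \cdot F^b C^\vee \subset F^{a+b}C^\vee$, and the product is a filtered map.

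The remaining axioms follow formally: associativity is dual to coassociativity of $\Delta$, compatibility with the differential is dual to $\Delta$ being a cochain map, and the unit is the counit $\epsilon : C\to \C \hookrightarrow \C[[\hbar]]$, which dualizes the counit axiom of $C$. For separate continuity, fixing $\phi \in F^a C^\vee$, the map $\psi \mapsto \phi\cdot\psi$ sends $F^b C^\vee$ into $F^{a+b}C^\vee$ for every $b$; inside each $F^b C^\vee$ the topology is inherited from the $\C$-linear dual of $C/F^{b}C$ pro-tensored with $\C[[\hbar]]$, and the operation of pairing with $\phi$ and composing with $\Delta$ is visibly continuous for this topology. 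The main obstacle, and really the only technical point, is bookkeeping the interaction between the $\hbar$-adic completion and the filtration completion built into $C^\vee$ when verifying this separate continuity; once one commits to the level-wise description modulo $\hbar^n$ used above, everything reduces to the straightforward dualization of $\Delta$ on the finite quotients $C/F^i C$.
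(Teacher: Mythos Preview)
Your proposal is correct and is precisely the direct dualization argument the paper has in mind; the paper's own proof consists of the single sentence ``This is immediate.'' Your write-up simply makes explicit the filtration and $\hbar$-adic bookkeeping that the author leaves to the reader.
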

\begin{remark}
Separately continuous means that it is continuous if one of the variables is held fixed. 
\end{remark}
\begin{proof}
This is immediate. 
\end{proof}

We will show that there is an equivalence of categories between a certain category of ind-$C^\vee$-modules and the category of $C$-comodules.  To start with, we will show an equivalence of ordinary (non-dg) categories. By taking complexes, we will find an equivalence of dg categories. 

We say a (non-dg) comodule $N$ for a coalgebra $C$ is \emph{filtered-finite} if the cohomology of each $N / F^i N$ is finite dimensional as a $\C$-vector space.  We let $F \op{Fin}(C)$ denote the ordinary (not dg) category of filtered finite comodules. 

We need a similar definition for $C^\vee$. We will let $F \op{Fin}(C^\vee)$ be the category of filtered finite $\C[[\hbar]]$-modules $M$ together with a $\C[[\hbar]]$-linear continuous action
$$
C^\vee \times M \to M
$$
of $\C^\vee$ on $M$, preserving filtrations on both sides.  

\begin{lemma}
There is an equivalence of categories
$$
F \op{Fin}(C) \simeq F \op{Fin}(C^\vee). 
$$
\end{lemma}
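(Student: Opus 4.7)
The plan is to realize this as a filtered/topological version of the classical fact that comodules over a coalgebra $C$ correspond to ``rational'' modules over the dual algebra $C^\vee$, and to check that the finiteness hypotheses on both sides match exactly.

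First I would construct the functor $\Phi : F\op{Fin}(C) \to F\op{Fin}(C^\vee)$ in the obvious way. Given a comodule $(N,\rho)$ with $\rho : N \to N \otimes C$ (completed filtered tensor product), define
$$
c^\vee \cdot n \;=\; (\mathrm{id}_N \otimes c^\vee)\bigl(\rho(n)\bigr) \in N,
$$
which makes sense because elements of $C^\vee$ are continuous $\C[[\hbar]]$-linear functionals on $C$ and $N\otimes C$ is the completed tensor product. Associativity of this action follows from coassociativity of $\rho$ together with the dual definition of the multiplication on $C^\vee$. The key check here is that the action is filtration-preserving and separately continuous in $c^\vee$: this is a direct bookkeeping using that $\rho$ preserves filtrations (so $\rho(F^jN)\subset \sum_{k+l\ge j}F^kN\otimes F^lC$) together with the definition of $F^iC^\vee$ as the space of $\C[[\hbar]]$-linear maps $C\to \C[[\hbar]]$ shifting filtrations by $i$. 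The filtered-finite property of $N$ transfers to $\Phi(N)$ tautologically, since the underlying filtered $\C[[\hbar]]$-module is unchanged.

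The reverse functor $\Psi : F\op{Fin}(C^\vee) \to F\op{Fin}(C)$ is the delicate step. Given $(M,\cdot)$ filtered-finite, I need to produce a coaction $\rho : M \to M \otimes C$. The strategy is to construct $\rho$ compatibly on each quotient $M/F^iM$ and pass to the inverse limit using the completeness property $M = \liminv M/F^iM$. For fixed $i$ and $n$, the quotient $M_{i,n} := (M/F^iM)\bmod \hbar^n$ has finite-dimensional cohomology, and the action factors through a separately continuous $\C[[\hbar]]$-linear filtration-preserving map $C^\vee \times M_{i,n}\to M_{i,n}$. Because the action is continuous and $M_{i,n}$ is ``essentially finite-dimensional,'' the map $C^\vee \to \op{End}(M_{i,n})$ factors through a finite-dimensional quotient of $C^\vee$, whose predual (inside $C$) is a finite-dimensional subspace $C_{i,n}\subset C$. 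Linear-algebraic duality on this finite-dimensional piece produces a coaction $M_{i,n}\to M_{i,n}\otimes C_{i,n}\subset M_{i,n}\otimes C$. One then checks compatibility as $i$ and $n$ vary, and assembles $\rho$ as the inverse limit; coassociativity of $\rho$ is dual to associativity of the $C^\vee$-action on each $M_{i,n}$.

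Finally I would check $\Psi\circ\Phi \cong \mathrm{id}$ and $\Phi\circ\Psi\cong \mathrm{id}$ by reducing to each $M_{i,n}$ or $N_{i,n}$ separately, where the statement is the classical equivalence between finite-dimensional comodules and finite-dimensional rational modules over the dual algebra. The main obstacle is bookkeeping: one must simultaneously juggle the filtration on $C$ (which governs the topology on $C^\vee$), the $\hbar$-adic filtration on $\C[[\hbar]]$, and the filtration-preservation hypotheses on morphisms, and verify that every construction actually lands in $\mc{C}^b$ with the required completeness property. Once these technical compatibilities are in hand, everything reduces on each finite layer $M_{i,n}$ to classical linear-algebraic duality and nothing deeper is needed.
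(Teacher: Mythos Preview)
Your approach is correct and essentially classical: build $\Phi$ by contraction, build $\Psi$ by reducing to finite-dimensional quotients where comodule/module duality is elementary, then assemble. One small slip: in the non-dg setting ``$M_{i,n}$ has finite-dimensional cohomology'' should just read ``$M_{i,n}$ is finite-dimensional over $\C$'' (and the reduction modulo $\hbar^n$ is redundant once you quotient by $F^i$, since $\hbar^i M \subset F^i M$). Otherwise the argument goes through.

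The paper takes a more direct route. Rather than constructing $\Psi$ layer by layer and checking compatibilities, it proves in one stroke that for any filtered-finite $M$ the set of filtration-preserving maps $M \to M \otimes C$ is in natural bijection with the set of continuous filtration-preserving maps $C^\vee \otimes M \to M$. This is reduced, by working modulo $\hbar$ and splitting the filtrations as $U = \prod U_i$, $V = \prod V_j$, $W = \prod W_k$ (with $U,V$ playing the role of $M$ and $W$ the role of $C$), to the identity $\operatorname{Hom}(U_i \otimes W_k^\vee, V_j) = \operatorname{Hom}(U_i, V_j \otimes W_k)$ on each graded piece. Since $U_i, V_j$ are finite-dimensional this boils down to the single functional-analytic fact that the continuous double dual of a countable-dimensional vector space $W_k$ is $W_k$ itself. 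Your reduction to finite-dimensional preduals $C_{i,n} \subset C$ is morally the same double-dual statement, but invoked piecemeal rather than once; the paper's formulation avoids the inverse-limit assembly and the compatibility checks you flag as ``the main obstacle.''
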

\begin{proof}

We need to verify that if $M$ is a filtered-finite $\C[[\hbar]]$-module, then linear maps 
$$
M \to M \otimes C
$$
are the same thing as continuous linear maps
$$
C^\vee \otimes M \to M.
$$
This is a question of functional analysis.  $U,V,W$ be complete filtered vector spaces, where each filtered piece of $U$ and $V$ is of finite dimension, and each filtered piece of $W$ is of countable dimension.  We need to construct a natural isomorphism
$$
\Hom(U[[\hbar]], (V\otimes W)[[\hbar]] ) = \Hom ((W^\ast \otimes U)[[\hbar]], V[[\hbar]]) 
$$
where maps are $\C[[\hbar]]$-linear and filtration preserving; on the right hand side we give $W^\vee$ a topology as above and consider continuous linear maps.  

We can address this question by working modulo $\hbar$.  Let us further choose splittings on the filtrations of $U,V,W$, so that we write $U = \prod_{i \ge 0} U_i$ where $F^n U = \prod_{i \ge n} U_i$, and similarly for $U$ and $V$. 

Then, the space of filtration-preserving maps from $U$ to $V \otimes W$ is 
$$
\prod_{i \le j + k} \Hom (U_i, V_j \otimes W_k). 
$$
Similarly, the space of filtration-preserving continuous maps from $U \otimes W^\vee$ to $V$ is
$$
\prod_{i \le j + k} \Hom (U_i \otimes W_k^\vee, V_j),
$$
where $\Hom$ denotes continuous linear maps. We thus need to identify
$$
\Hom (U_i \otimes W_k^\vee, V_j) = \Hom (U_i, V_j \otimes W_k).
$$
Since $U_i$ and $V_j$ are finite dimensional, it suffices to show that
$$
\Hom(W_k^\vee, \C) = W_k.
$$
It is a standard fact that the continuous linear double dual of a countable dimensional vector space $W_k$ is $W_k$.
\end{proof}
\begin{lemma}
Every (non-dg) $C$-comodule is a colimit of filtered-finite comodules.
\end{lemma}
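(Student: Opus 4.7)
The plan is to adapt the classical fact that every comodule over a coalgebra is a union of finite-dimensional subcomodules to the present complete filtered $\C[[\hbar]]$-setting. Since the sum of two filtered-finite subcomodules of $M$ is again filtered-finite (because $(N_1+N_2)/F^i(N_1+N_2)$ is a quotient of $N_1/F^i N_1 \oplus N_2/F^i N_2$), the directed system of filtered-finite subcomodules is closed under finite sums, so the lemma reduces to the claim that every element $m \in M$ is contained in some filtered-finite subcomodule $N(m) \subset M$.

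To construct $N(m)$, I would truncate in both the filtration and $\hbar$-adic directions simultaneously. For each $n \geq 1$, set $\bar C_n := C/(F^n C + \hbar^n C)$ and $\bar M_n := M/(F^n M + \hbar^n M)$. One checks that $\rho$ descends to a coaction $\bar\rho_n : \bar M_n \to \bar M_n \otimes \bar C_n$ (using $\rho(F^n M) \subset F^n M \otimes C + M \otimes F^n C$ together with $\rho(\hbar^n M) \subset \hbar^n(M\otimes C) \subset \hbar^n M \otimes C + M \otimes \hbar^n C$) making $\bar M_n$ into a comodule over the coalgebra $\bar C_n$, which is flat over the artinian ring $\C[\hbar]/\hbar^n$. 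Choosing a $\C[\hbar]/\hbar^n$-basis $\{\bar c_{n,i}\}$ of $\bar C_n$, one expands $\bar\rho_n(\bar m) = \sum_i \bar m_{n,i} \otimes \bar c_{n,i}$ as a \emph{finite} sum (since the tensor product involved has finite filtration); coassociativity then shows, by the standard argument, that the $\C[\hbar]/\hbar^n$-span $\bar N_n$ of the $\bar m_{n,i}$ is a subcomodule of $\bar M_n$. Being finitely generated over $\C[\hbar]/\hbar^n$, the module $\bar N_n$ is automatically finite-dimensional over $\C$.

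Next I would lift these quotient subcomodules back to $M$. By iterating the construction one can arrange the $\bar N_n$ to be compatible, in the sense that the image of $\bar N_{n+1}$ in $\bar M_n$ contains $\bar N_n$. Lifting representatives $m_{n,i} \in M$ of the $\bar m_{n,i}$, I would take $N(m) \subset M$ to be the closed $\C[[\hbar]]$-submodule generated by $\{m_{n,i}\}_{n,i}$; by completeness of $M$ this is well-defined, and the quotient $N(m)/F^k N(m)$ injects into $\bar N_k$ and is therefore finite-dimensional over $\C$. Completeness of $C$ and the compatibility of the $\bar N_n$ imply that $\rho(N(m))$ lies in $N(m) \otimes C$ (the completed tensor product), so $N(m)$ is a genuine subcomodule.

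The main obstacle is the bookkeeping at the last step: one has to ensure that the inverse limit of compatible subcomodules $\bar N_n \subset \bar M_n$ assembles into a subcomodule $N(m) \subset M$ whose own filtration quotients recover precisely these $\bar N_n$ (so that the finite-dimensionality is preserved after taking the limit). This requires careful use of the $(F^\bullet, \hbar$-adic$)$-completeness of both $M$ and $C$, and of the fact that each $\bar\rho_n$ involves only a finite sum to prevent spurious elements from appearing in the limit. Once these compatibilities are checked, the lemma follows.
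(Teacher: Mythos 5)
Your proposal is essentially the paper's argument: reduce modulo $F^n$ so that the coaction of an element becomes a finite algebraic sum in $M_n \otimes C_n$, and then run the classical coefficient-space argument (coassociativity makes the span of the coefficients a subcomodule, the counit puts the element inside it) to produce finite-dimensional subcomodules at each filtration level. The differences are cosmetic: the extra $\hbar^n$-truncation is redundant, since $\hbar$ raises filtration degree and filtered-finiteness only asks that each $N/F^iN$ be finite-dimensional over $\C$ (so one can work with $\C$-bases rather than $\C[\hbar]/\hbar^n$-bases), and the inverse-limit bookkeeping you flag at the end is not carried out in the paper either, whose proof stops at the levelwise statement.
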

\begin{proof}
Let $M$ be a $C$-comodule. Let $M_n = M / F^n M$, and $C_n = C / F^n C$.  Let $m \in M_n$.  Then, the coaction $\tr(m)$ of $m$ is an element of the algebraic tensor product $C_n \otimes M_n$, and so can be written as a finite sum
$$
\tr(m) = \sum_{finite} c_i \otimes m_i
$$
for some $c_i \in C_n$, $m_i \in M_n$.   The $m_i$ appearing in this sum span a finite-dimensional subspace $V$ of $M_n$.  The coassociativity of the coaction of $C_n$ on $M_n$ implies that this subspace $V$ is a sub-comodule.  The counit axiom implies that $m \in V$.
\end{proof}
Let $\op{Ind} F \op{Fin}(C^\vee)$ denote the category of ind-objects of $F \op{Fin}(C^\vee)$.
\begin{corollary}
There is an equivalence of categories between $\op{Ind} F \op{Fin}(C^\vee)$ and the category of non-dg $C$-comodules. 
\end{corollary}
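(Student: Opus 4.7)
My plan is to combine the two preceding lemmas. Applying the ind-completion functor to the equivalence $F\op{Fin}(C)\simeq F\op{Fin}(C^\vee)$ of the previous lemma immediately yields
$$\op{Ind} F\op{Fin}(C)\simeq \op{Ind} F\op{Fin}(C^\vee),$$
so the corollary reduces to establishing an equivalence between $\op{Ind} F\op{Fin}(C)$ and the category of (non-dg) $C$-comodules.

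The natural candidate functor $L:\op{Ind} F\op{Fin}(C)\to C\text{-comod}$ sends a formal filtered system $\{M_i\}_{i\in I}$ to its actual filtered colimit in $C\text{-comod}$; this is well defined since colimits of comodules exist and are computed at the level of underlying vector spaces. Essential surjectivity of $L$ is exactly the content of the second lemma: every $C$-comodule is expressed there as a union of its filtered-finite sub-comodules, which is in particular a filtered colimit in $\op{Ind} F\op{Fin}(C)$.

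For fully faithfulness, I would show that every filtered-finite $C$-comodule $M$ is compact in $C\text{-comod}$ with respect to the filtered colimits that arise in this way, i.e.\ with respect to unions of sub-comodules. The second lemma exhibits any $C$-comodule $N$ as such a union $N=\bigcup_{j} N_j$ of filtered-finite sub-comodules, so it suffices to prove that any map $f:M\to N$ factors through some $N_j$. Here one uses that each quotient $M/F^nM$ is finite dimensional: the induced map $M/F^nM\to N/F^nN$ has finite-dimensional image, which, since $N/F^nN$ is itself the union of the finite-dimensional sub-comodules $N_j/F^nN_j$, must lie in some $N_{j(n)}/F^nN_{j(n)}$. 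Passing to the inverse limit in $n$ and using the completeness property $M=\liminv M/F^nM$ from the definition of $F\op{Fin}(C)$, one assembles these into a factorization through a single $N_j$.

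The main obstacle is guaranteeing the compatibility of the choices $j(n)$ as $n$ varies; this is handled by using the filteredness of the index set together with the finite dimensionality of each $M/F^nM$, which allows one to replace an arbitrary choice of $j(n)$ by a cofinal chain $j(n)\le j(n+1)$ dominating all previous stages. Once this compactness is established, fully faithfulness of $L$ is the standard formal consequence, and chaining with the equivalence $\op{Ind} F\op{Fin}(C)\simeq \op{Ind} F\op{Fin}(C^\vee)$ yields the corollary.
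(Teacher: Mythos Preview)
Your overall plan—ind-complete the equivalence of the first lemma, then identify $\op{Ind}F\op{Fin}(C)$ with non-dg $C$-comodules via the second—is exactly what the paper intends; the corollary carries no proof in the paper because it is meant to follow from the two lemmas in just this way.

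The gap is in your compactness step. A cofinal chain $j(0)\le j(1)\le\cdots$ does not produce a single index through which $f$ factors, and ``passing to the inverse limit'' does not close the gap: from $f\bmod F^n$ landing in $N_{j(n)}\bmod F^n$ you can only conclude $f(M)\subset\bigcap_n\bigl(N_{j(n)}+F^nN\bigr)$, and in a complete filtered category this intersection is the \emph{closure} of $\bigcup_n N_{j(n)}$, not the union itself. In fact compactness with respect to completed filtered colimits genuinely fails. With $C$ the unit, $M=\C[[x]]$ filtered by powers of $x$ is filtered-finite; taking $N_j$ to be the span of $1,\dots,x^{j-1}$ with the induced filtration, the completed colimit of the $N_j$ is $\C[[x]]$, yet the identity $M\to\colim N_j$ factors through no $N_j$. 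So the colimit functor $L$ you wrote down is not fully faithful, and no compactness argument can make it so.

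The cleaner line, closer to what the second lemma actually provides, is to build the functor in the other direction: send $N$ to the system of all its filtered-finite sub-comodules. The observation you need—and are nearly making—is that for filtered-finite $M$, the image of any $f:M\to N$ (or rather its closure in $N$) is again a filtered-finite sub-comodule, since each of its graded pieces is a quotient of the finite-dimensional $M/F^nM$. This identifies $\Hom(M,N)$ with maps in $\op{Ind}$ from $M$ into that specific system, giving full faithfulness of the comodules-to-ind-objects functor without invoking any general compactness claim.
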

Recall that the dg category $C-\op{comod}$ has, as objects, complexes of cofree $C$-comodules which at each filtered level are bounded above and below. 

\begin{definition}
Let $IFF (C^\vee)$ be the category of complexes of injective objects of $\op{Ind} F \op{Fin}(C^\vee)$, which at each filtered level are bounded above and below. 
\end{definition}
\begin{proposition}
Then, there is a quasi-equivalence of dg categories 
$$
IFF (C^\vee) \simeq C-\op{comod}.
$$
\end{proposition}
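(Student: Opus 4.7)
The plan is to bootstrap from the equivalence of ordinary (non-dg) categories established in the previous corollary to a quasi-equivalence at the dg level. Let $\Phi : \op{Ind} F \op{Fin}(C^\vee) \to C\text{-comod}_{\text{abelian}}$ be the equivalence of abelian categories already constructed (where $C\text{-comod}_{\text{abelian}}$ denotes the ordinary abelian category of $C$-comodules, filtered appropriately). The first step is to verify that $\Phi$ is $\C[[\hbar]]$-linear and filtration-preserving, so that it induces a functor on cochain complexes in the relevant filtered setting; this is immediate from its definition via duality.

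Next, the key compatibility to establish is that $\Phi$ matches the two classes of ``nice'' objects used to define the dg categories on either side. Concretely, a quasi-cofree $C$-comodule $M$ admits a decreasing filtration $G^\bullet M$ with each $G^i M / G^{i+1} M$ cofree, i.e.\ of the form $C \otimes V$ for some $V \in \mc C^b$; such cofree comodules are injective in the abelian category (split off from arbitrary extensions by the counit). Conversely, an injective object of $\op{Ind} F\op{Fin}(C^\vee)$, viewed through $\Phi$, is a filtered colimit of injective filtered-finite comodules, which by a standard argument is a direct summand of a cofree comodule at each filtered level. Thus, modulo the filtration, $\Phi$ identifies injective ind-objects with cofree comodules (up to splittings which are irrelevant for the purposes of forming a quasi-equivalent dg category).

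With this identification, I would extend $\Phi$ to a dg functor between $IFF(C^\vee)$ and $C\text{-comod}$ by applying it termwise to complexes. The boundedness-at-each-filtered-level condition appears on both sides by definition, so it is preserved. To see that the dg Hom complexes correspond, observe that for $M \in C\text{-comod}$ quasi-cofree, $\Hom_C(N, M)$ can be computed on the filtered associated graded, where each piece is of the form $\Hom_C(N', C \otimes V)$, which by cofreeness equals $\Hom_{\C[[\hbar]]}(N', V)$ with appropriate continuity. The analogous computation on the $C^\vee$-side, using that an injective in $\op{Ind} F \op{Fin}(C^\vee)$ is a filtered colimit of the duals $(C \otimes V)^\vee$, gives the same answer after applying the functional-analytic duality established in subsection \ref{sub:dual_coalgebra}. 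Hence $\Phi$ is quasi-fully-faithful.

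Finally, essential surjectivity on $H^0$ follows by constructing, for any quasi-cofree $C$-comodule $M$, an injective ind-$C^\vee$-module complex mapping quasi-isomorphically to $\Phi^{-1}(M)$; this uses the existence of enough injectives in the ind-category together with the bounded-filtered-level conditions to terminate the resolution in the correct dg category. The main obstacle here, and where the argument requires care, is handling the interplay between the inverse limit topology on $C^\vee$ and the completed projective tensor product when showing that the $\Hom$-complexes agree on the nose (not just up to quasi-isomorphism) for cofree and injective objects; this is exactly the place where the countable-dimension and filtered-finite hypotheses are essential, so that the continuous double dual recovers the original space, as in the proof of the lemma identifying $F\op{Fin}(C)$ with $F\op{Fin}(C^\vee)$.
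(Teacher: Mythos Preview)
Your overall strategy---bootstrap from the equivalence of abelian categories to a quasi-equivalence of dg categories---is the same as the paper's, but you over-engineer the fully-faithful part and under-specify the essential-surjectivity part.

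On fully-faithfulness: once the corollary gives an \emph{equivalence of abelian categories} $\op{Ind} F\op{Fin}(C^\vee)\simeq C\text{-comod}_{\text{abelian}}$, the dg $\Hom$-complexes between any two complexes automatically agree, since they are built termwise from abelian $\Hom$'s. Your associated-graded computation and the functional-analytic duality check are therefore unnecessary; that work was already absorbed into the lemma identifying $F\op{Fin}(C)$ with $F\op{Fin}(C^\vee)$.

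On essential surjectivity: your termwise extension of $\Phi$ does not land in $C\text{-}\op{comod}$ as defined. An injective object of $\op{Ind} F\op{Fin}(C^\vee)$ corresponds, under $\Phi$, to an injective $C$-comodule, and injective $\neq$ quasi-cofree in general (it is only a retract of a cofree comodule). So ``apply $\Phi$ termwise to complexes'' produces a complex of injective $C$-comodules, not a quasi-cofree one. Your final paragraph gestures at a resolution argument but runs it in the wrong direction (you start from a quasi-cofree $M$, whereas what is needed is to replace an \emph{injective} complex by a quasi-cofree one). The paper closes this gap in one line: any injective comodule is homotopy equivalent to its bar coresolution, which is quasi-cofree; hence every complex of injectives is homotopy equivalent to a complex in $C\text{-}\op{comod}$. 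Combined with the observation that cofree comodules are automatically injective (so $C\text{-}\op{comod}$ sits inside the image), this finishes the argument.
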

\begin{proof}
Note that a cofree $C^\vee$-comodule is automatically injective. 

The only thing that we need to check is that a complex of injective $C$-comodules is homotopy equivalent to a complex of cofree ones.  This follows from the fact that any injective comodule is homotopy equivalent to its bar coresolution. 
\end{proof}

As a corollary, we find the following.  Let $\op{Fin}(C^\vee)$ denote the dg category of objects of $IFF(C^\vee)$ with finite-rank cohomology.  
\begin{corollary}
\label{corollary_dual_algebra_finite_equivalence}
There is an equivalence of dg categories
$$
\op{Fin}(C^\vee) \simeq \op{Fin}(C). 
$$
\end{corollary}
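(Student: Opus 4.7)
The plan is to obtain the corollary as a restriction of the quasi-equivalence $IFF(C^\vee) \simeq C-\op{comod}$ established in the preceding proposition. The only thing to verify is that, on each side, the full sub-dg category of finite-cohomology objects is carried into its counterpart.

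First I would recall the shape of the equivalence. It is built from the ordinary-categorical equivalence $F \op{Fin}(C) \simeq F \op{Fin}(C^\vee)$ of section \ref{sub:dual_coalgebra}, which by construction does not alter the underlying $\C[[\hbar]]$-module at all: it merely reinterprets a coaction of $C$ as the dual continuous action of $C^\vee$, using the reflexivity between countable-dimensional filtered vector spaces and their topological duals. Passing from this to the dg setting, one extends to ind-objects and then to complexes of injectives on one side and of quasi-cofree comodules on the other. The functors realizing the proposition's quasi-equivalence therefore act on underlying filtered $\C[[\hbar]]$-cochain complexes either by the identity or, at worst, by replacement with a bar coresolution, which is a homotopy equivalence on quasi-cofree/injective objects.

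Since the finiteness condition on both sides is defined purely in terms of $H^\ast(\op{Gr}(-) \bmod \hbar)$, and this invariant depends only on the underlying filtered $\C[[\hbar]]$-cochain complex up to quasi-isomorphism, it is preserved by the functors in both directions. Consequently, the quasi-equivalence of the proposition restricts to a quasi-equivalence between the full sub-dg categories cut out by this condition, which is exactly the asserted equivalence $\op{Fin}(C^\vee) \simeq \op{Fin}(C)$.

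I expect no serious obstacle; the argument is essentially a bookkeeping check. The one point I would watch is that, for an object of $IFF(C^\vee)$ presented as an ind-system of filtered-finite $C^\vee$-modules, the cohomology of the large object agrees with the filtered colimit of the cohomologies of the pieces, so that the finiteness condition can be tested at the ind-level and matches the finiteness condition on the corresponding $C$-comodule. This follows from exactness of filtered colimits of cochain complexes and requires no further input beyond what is already used in the proof of the preceding proposition.
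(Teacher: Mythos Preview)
Your proposal is correct and matches the paper's approach: the paper states this corollary with no proof at all, treating it as immediate from the preceding proposition that $IFF(C^\vee)\simeq C-\op{comod}$. Your argument is exactly the straightforward verification one would supply, namely that the equivalence preserves the underlying filtered $\C[[\hbar]]$-complex up to homotopy equivalence and hence preserves the finiteness condition, so it restricts to the finite full subcategories on both sides.
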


\subsection{Koszul duality between $E_2$ and Hopf algebras}
\label{subsection_Koszul_E2}
Let $A$ be an $E_2$ algebra in $\mc{C}^b$, which as always satisfies $A / F^1 A = \C$.  Then, we can view as an $E_1$ algebra, and so an associative dg algebra; as above, we can construct a category $A-\op{mod}$ of left $A$-modules.  The fact that $A$ is $E_2$ means that $A-\op{mod}$ is a monoidal category in a homotopical sense.  This monoidal structure is implemented by multi-modules: for example, the product on $A-\op{mod}$ arises from an $A\otimes A-A$-bimodule structure on $A$. Higher coherences are given by quasi-isomorphisms, homotopies between quasi-isomorphisms, etc. between $A^{\otimes n} - A$ bimodules.  

One can also describe this structure in the language of $(\infty,1)$-categories.  As explained in \cite{Lur12}, every dg category gives rise to an $(\infty,1)$-category, by a version of the nerve construction.  This construction takes quasi-equivalences of dg categories to equivalences of $(\infty,1)$-categories.   Then, $A-\op{mod}$ is an $E_1$ $(\infty,1)$-category.    

Suppose that $A$ is augmented as an $E_1$ algebra. Then, because of the quasi-equivalence of dg categories $A-\op{mod} \simeq A^!-\op{comod}$, we see that $A^!-\op{comod}$ acquires the structure of a homotopy monoidal category in the sense described above.

If we suppose that $H^i(A^!) = 0$ for $i \neq 0$ and $H^0(A^!)$ is flat as a $\C[[\hbar]]$-module, then we have a further equivalence
$$
A-\op{mod} \simeq H^0(A^!)-\op{comod}
$$
so that $H^0(A^!)-\op{comod}$ acquires a monoidal structure. 

We will prove a version of Tamarkin's theorem, showing that $H^0(A^!)$ has the structure of a bialgebra which induces this monoidal structure. 

The result we need is the following.
\begin{proposition}
Let $C$ be a coalgebra in $\mc{C}^b$, concentrated in cohomological degree $0$, and suppose that $C-\op{comod}$ is equipped with a monoidal structure satisfying the following properties.
\begin{enumerate}
\item The forgetful functor $C-\op{comod} \to \mc{C}^b$ is monoidal.
\item The monoidal structure on $C-\op{comod}$ commutes with those homotopy limits and colimits which exist in $C-\op{comod}$. 
\end{enumerate}
Then, $C$ has a unique bialgebra structure inducing this monoidal structure on $C-\op{comod}$.
\end{proposition}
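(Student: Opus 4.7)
The plan is to carry out a version of Tannakian reconstruction, simplified by the fact that $C$ is concentrated in cohomological degree $0$, so that the reconstruction takes place in the ordinary (non-derived) category of $C$-comodules. Since the forgetful functor $F : C\text{-comod} \to \mc{C}^b$ is monoidal, the monoidal unit of $C\text{-comod}$ has underlying object $\C[[\hbar]]$, and its comodule structure is a filtered cochain map $u : \C[[\hbar]] \to C$ that will serve as the unit of the algebra structure on $C$.

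To produce the multiplication, I would consider the regular left comodule $C$ (with coaction $\Delta$) and form its monoidal square $C \otimes C$ inside $C\text{-comod}$. The forgetful functor being monoidal forces the underlying filtered module to be the ordinary tensor square, but the $C$-comodule structure is a nontrivial coaction $\rho_{C,C} : C \otimes C \to C \otimes C \otimes C$. Define
$$m := (\mathrm{id}_C \otimes \epsilon \otimes \epsilon) \circ \rho_{C,C} : C \otimes C \to C,$$
where $\epsilon$ is the counit. The same recipe applied to iterated tensor products of $C$ gives iterated products, and the coherence isomorphisms of the monoidal structure enforce associativity and unitality of $m$.

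The key step is a naturality/universality argument showing that, for arbitrary comodules $M,N$, the coaction on $M \otimes N$ supplied by the monoidal structure is the convolution determined by $m$, $\rho_M$ and $\rho_N$, that is,
$$\rho_{M,N} = (m \otimes \mathrm{id}_{M \otimes N}) \circ \sigma \circ (\rho_M \otimes \rho_N)$$
for an appropriate swap $\sigma$. This uses that every $C$-comodule is a filtered colimit of filtered-finite sub-comodules (as in the lemma preceding Corollary~\ref{corollary_dual_algebra_finite_equivalence}), combined with the hypothesis that the monoidal product commutes with those colimits, so that $\rho_{M,N}$ is determined by the universal case $M = N = C$ together with naturality in $M, N$. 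Once this formula is in hand, the bialgebra compatibility between $m$ and $\Delta$ is forced by coassociativity of $\rho_{C,C}$ as a $C$-coaction, and the unit axiom by the unitor of the monoidal structure.

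Uniqueness is then automatic: any bialgebra structure $m'$ on $C$ inducing the given monoidal structure must yield the same coaction on $C \otimes C$, and therefore must equal $(\mathrm{id} \otimes \epsilon \otimes \epsilon) \circ \rho_{C,C} = m$. The main obstacle is the coaction formula of the previous paragraph: everything else is a formal consequence. The delicate point in verifying this formula is the filtered/$\hbar$-adic setting, where one must check that the hypothesis on preservation of limits and colimits is strong enough to handle the completed filtered colimits expressing a general comodule in terms of filtered-finite ones; this is exactly where the assumption that the monoidal product commutes with those homotopy (co)limits is used.
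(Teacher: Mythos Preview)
Your approach is correct in spirit and arrives at the same Tannakian reconstruction, but it is organized differently from the paper's proof and is considerably more explicit. The paper's argument is essentially two sentences: first, since $C$ is concentrated in degree $0$, the dg category $C\text{-}\op{comod}$ is just the category of bounded-per-filtration-level complexes in the ordinary category $C\text{-}\op{comod}^0$ of non-dg (cofree) comodules; since the monoidal structure preserves the relevant limits and colimits, it commutes with formation of complexes and is therefore entirely determined by its restriction to $C\text{-}\op{comod}^0$. Second, on $C\text{-}\op{comod}^0$ there is no room for higher coherences beyond classical MacLane coherence, so the monoidal forgetful functor $C\text{-}\op{comod}^0 \to \mc{C}^0$ is a classical fiber functor, and standard Tannaka--Krein reconstruction (cited from \cite{EtiGelNikOst09}) produces the bialgebra structure. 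By contrast, you carry out the reconstruction by hand, building $m$ from the coaction on the regular comodule and using the colimit-of-filtered-finite-subcomodules lemma to propagate to general $M,N$. This is a legitimate and more self-contained route, but you should be more explicit about the step you gloss over in your first sentence: the a priori homotopy-coherent monoidal structure on the dg category must be shown to strictify upon restriction to degree $0$ before your formula for $m$ and the associativity argument make literal sense. The paper's ``complexes of $C\text{-}\op{comod}^0$'' reduction handles this cleanly; your colimit argument does not obviously address it.
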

\begin{remark}
Note that the monoidal structure on $\mc{C}^b$ commutes with products and direct sums, and so with those homotopy limits and colimits that exist.  This is a consequence of the fact that we use a completed tensor product in $\mc{C}^b$, and all filtered cochain complexes in $\mc{C}^b$ are bounded on each graded piece.
\end{remark}
\begin{proof}
Let $C-\op{comod}^0$ refer to the ordinary category of free non-dg $C$-comodules.  The first thing we will show is that our monoidal structure is determined by its restriction to $C-\op{comod}^0$. The point is that the whole dg category $C-\op{comod}$ is the category of complexes of objects of $C-\op{comod}^{0}$, which are at each graded level bounded above and below. 

Any functor preserving limits and colimits commutes with the formation of complexes, so that the monoidal structure is determined uniquely by its restriction to $C-\op{comod}^0$.  

$C-\op{comod}^0$ is a monoidal category in the classical sense: there is no room for higher coherences beyond the classical MacLane coherence.  The functor $C-\op{comod}^0 \to \mc{C}^0$ is monoidal.  It follows easily from standard Tannaka-Krein type results (see \cite{EtiGelNikOst09}) that $C$ has a bialgebra structure inducing this monoidal structure.   

\end{proof}

Now, suppose that $A$ is an augmented $E_2$ algebra in $\mc{C}$, that $H^0(A^!)$ is flat over $\C[[\hbar]]$, and that $H^i (A^!) = 0$ if $i \neq 0$.  Suppose that $\op{Gr} A$ is commutative. Then we have the following
\begin{corollary}
$H^0(A^!)$ has a Hopf algebra structure such that the equivalence of categories
$$
A^!-\op{mod} \simeq H^0(A^!)-\op{comod}
$$
is monoidal.
\end{corollary}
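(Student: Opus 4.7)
The plan is to transport the monoidal structure from $A$-mod (arising from the $E_2$-structure on $A$) to $H^0(A^!)$-comod via the Koszul duality equivalences established in this section, then apply the preceding proposition to recognise this monoidal structure as coming from a bialgebra structure on $H^0(A^!)$, and finally produce an antipode using that $H^0(A^!)$ is a connected filtered bialgebra.

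First I would use that an $E_2$-algebra is an associative algebra object in the $(\infty,1)$-category of $E_1$-algebras to equip $A$-mod with a monoidal structure: concretely, $M \otimes N := (M \otimes_{\C} N) \otimes^{\mbb{L}}_{A \otimes A} A$, where the $(A \otimes A)$-$A$-bimodule structure on $A$ is furnished by the $E_2$ multiplication. Composing the quasi-equivalence of Theorem \ref{theorem_equivalence_categories} with the coalgebra-quasi-isomorphism equivalence proved just above,
$$
A\text{-mod} \;\simeq\; A^!\text{-comod} \;\simeq\; H^0(A^!)\text{-comod},
$$
transports this structure to a monoidal structure on $H^0(A^!)$-comod. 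The augmentation $A \to \C$ is a map of $E_2$-algebras, so the induced functor $\C \otimes^{\mbb{L}}_A - \colon A\text{-mod} \to \mc{C}^b$ is monoidal; under Koszul duality it matches the forgetful functor $H^0(A^!)\text{-comod} \to \mc{C}^b$, which is therefore monoidal as well. Commutation of the tensor product with those (co)limits which exist is immediate from the derived-tensor-product model on the $A$-mod side. The hypotheses of the preceding proposition are thus satisfied, yielding a unique bialgebra structure on $H^0(A^!)$ which induces this monoidal structure and makes the equivalence $A\text{-mod} \simeq H^0(A^!)\text{-comod}$ monoidal by construction.

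To upgrade from bialgebra to Hopf algebra I would exploit that $H^0(A^!)$ is connected filtered: by hypothesis $A / F^1 A = \C$, which forces $H^0(A^!) / F^1 H^0(A^!) = \C$ spanned by the counit, and both product and coproduct respect the filtration in the usual way. Consequently $\op{Id} - \eta \epsilon$ is topologically nilpotent in the convolution algebra $\op{End}(H^0(A^!))$, so the geometric series defining the convolution inverse of $\op{Id}$ converges and produces the antipode. The hypothesis that $\op{Gr} A$ is commutative enters here to guarantee that $\op{Gr} H^0(A^!)$ is a cocommutative Hopf algebra (essentially of the form $U(\g)^\vee$ for a graded Lie algebra $\g$), so that we are genuinely deforming a classical Hopf algebra and the antipode constructed above reduces to the classical one on the associated graded. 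The main obstacle is verifying that the monoidal structure on $H^0(A^!)$-comod transported from $A$-mod is strict enough to fall within the scope of the preceding proposition; this is where the cohomological concentration hypothesis $H^i(A^!) = 0$ for $i \neq 0$ does its essential work, for there is then no room for nontrivial higher coherences and everything rectifies strictly at the level of $H^0$.
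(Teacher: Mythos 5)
Your route is essentially the paper's: transport the $E_2$-induced monoidal structure on $A$-modules through the Koszul duality equivalences to $H^0(A^!)$-comodules, check that the forgetful functor is monoidal because the augmentation is a map of $E_2$ algebras and that the tensor respects the (co)limits that exist, and then invoke the preceding proposition to produce the (bi)algebra structure. The only differences are cosmetic: the paper verifies the (co)limit compatibility by passing to $\op{Gr} A$, which is where the commutativity of $\op{Gr} A$ actually enters (rather than in the antipode step, as you suggest), and it leaves the antipode implicit, so your connected-filtered convolution-inverse argument is a harmless supplement rather than a divergence.
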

\begin{proof}
We need to verify that monoidal structure on $A-\op{mod}$ distributes over limits and colimits, and that the functor to $\mc{C}^b$ is monoidal. To check the first two points, it suffices to check it on $\op{Gr} A$.   On $\op{Gr} A$, the monoidal structure is simply tensoring over $A$, which (under our assumptions) commutes with limits and colimits. 

The fact that the functor $A-\op{mod} \to \mc{C}^b$ is monoidal follows from the fact that the augmentation map $A \to 1_{\mc{C}}$ is a map of $E_2$ algebras. 
\end{proof} 

\subsection{}
\label{subsection_laxHopf}
In this subsection, we will consider a technical generalization of our Koszul duality results which we will need when considering the operator product expansion in the field theory.

If $V$ is an object of $\mc{C}^b$, we let 
$$
V((\lambda))  = V \what{\otimes} \C((\lambda)) := \liminv_i \colim_k \liminv_j \lambda^{-k} (V/ F^i V)[\lambda] / \lambda^j.
$$
We let $V \br{\otimes} \C((\lambda))$ be the algebraic tensor product, $$
V \br{\otimes} \C((\lambda)) = \liminv_i V/F^i V \otimes \C((\lambda))
$$
where the tensor product on the right hand side is simply the algebraic one, which allows only finite sums.

If $A$ is an algebra in $\mc{C}^b$, then $A((\lambda))$ is a $\C((\lambda))$-linear algebra in $\mc{C}^b$. Similarly, if $A$ is an $E_2$ algebra in $\mc{C}^b$, then $A((\lambda))$ is a $\C((\lambda))$-linear $E_2$ algebra in $\mc{C}^b$. 

If $C$ is a coalgebra in $\mc{C}^b$, then $C((\lambda))$ is a lax $\C((\lambda))$-linear coalgebra, whose $n$-ary space is $\left(C^{\otimes n}\right) ((\lambda))$.  For instance, the coproduct is a $\C((\lambda))$-linear map
$$
C((\lambda)) \to (C^{\otimes 2})((\lambda)).
$$

Our definition of free and quasi-free $A$-modules from before extends without any changes to a definition of free and quasi-free $A((\lambda))$-modules: we don't take the $\lambda$-adic topology on $A((\lambda))$ into account when considering such modules.  For instance, a free module over $A((\lambda))$ is one of the form 
$$
V \br{\otimes} \left( A((\lambda)) \right) 
$$
for an object $V \in \mc{C}^b$. A quasi-free module is a module $M$  which admits a countable increasing filtration $G^0 M \subset G^1 M \subset \dots$ where each $G^i M / G^{i-1} M$ is free and $M = \colim G^i M$. 

We let $A((\lambda))-\op{mod}$ be the dg category of quasi-free $A((\lambda))$-modules.

If $C$ is a coalgebra in $\mc{C}^b$, defining the category of $\C((\lambda))$-linear $C((\lambda))$-comodules is a little more tricky.  The problem is that cofree comodules don't exist, because the coaction of $C((\lambda))$ on itself lands in $(C^{\otimes 2})((\lambda))$, which does not coincide with $\C((\lambda))$-linear algebraic tensor square of $C((\lambda))$.  

To get around this fact, we will use lax $A_\infty$ $C((\lambda))$-linear comodules.
\begin{definition}
Let $V$ be a $\C((\lambda))$-module in $\mc{C}^b$ (all such are of the form $V = W \br{\otimes} \C((\lambda))$ for some $W \in \mc{C}^b$). 

A lax $A_\infty$ coaction of $C((\lambda))$ on $V$ is a sequence of linear maps
$$
\tr_n : V \mapsto V \br{\otimes}_{\C((\lambda))} \left\{ C^{\otimes n} ((\lambda))  \right\}  = W \br{\otimes} \left\{ C^{\otimes n} ((\lambda))  \right\}										
$$
satisfying the standard $A_\infty$ identities. 

If $V,V'$ are lax $A_\infty$ $C((\lambda))$ coalgebras, a map $V \to V'$ is a sequence of maps
$$
V \to V' \br{\otimes} \left\{ C^{\otimes n}((\lambda)) \right\}
$$
for each $n \ge 1$, satisfying the standard $A_\infty$ identities. 

We let $C((\lambda))-\op{comod}$ denote the dg category of lax $A_\infty$ $C((\lambda))$-comodules. 
\end{definition}

\begin{proposition}
Let $A$ be an $E_2$ algebra, whose Koszul dual $A^!$ is a Hopf algebra.  (For simplicity, let us assume that we choose a model of $A^!$ which is concentrated in cohomological degree $0$). 

Then, there is an equivalence of monoidal categories
$$
A((\lambda))-\op{mod} \simeq A^!((\lambda))-\op{comod}.
$$
\label{proposition_equivalence_lambda}
\end{proposition}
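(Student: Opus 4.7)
The plan is to follow the same strategy as in the proof of Theorem \ref{theorem_equivalence_categories}, introducing an intermediate category of $A_\infty$ modules and then applying bar/cobar constructions, but being careful throughout to distinguish between the algebraic tensor product $\br{\otimes}$ and the completed tensor product $\what{\otimes}$ in the $\lambda$-direction. The reason the lax structure intervenes on the comodule side is that if $N = W \br{\otimes} A((\lambda))$ is a quasi-free $A((\lambda))$-module, then the dual of an $A_\infty$ multiplication $\mu_n\colon \bar{A}^{\otimes n} \otimes N \to N$ naturally lands in $W \br{\otimes} \bar{A}^{\otimes n}((\lambda))$, which is strictly larger than $W \br{\otimes} \bar{A}^{\otimes n} \br{\otimes} \C((\lambda))$; the lax $A_\infty$ coaction in the definition above is designed precisely to absorb this enlargement.

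I would first introduce an auxiliary dg category of $\C((\lambda))$-linear $A_\infty$ $A$-modules, whose objects are $V = W \br{\otimes} \C((\lambda))$ for $W \in \mc{C}^b$ equipped with maps $\mu_n\colon \bar{A}^{\otimes n} \otimes V \to V$ satisfying the standard $A_\infty$ identities. The exact argument in Theorem \ref{theorem_equivalence_categories} then gives a quasi-equivalence between this category and $A((\lambda))$-$\op{mod}$: in one direction by the bar resolution $\bigoplus_n A((\lambda)) \br{\otimes} \bar{A}^{\otimes n} \br{\otimes} V[n]$, and in the other by restriction. No new ingredient is needed since the proof uses only algebraic tensor products.

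Next, given such an $A_\infty$ module $V = W \br{\otimes} \C((\lambda))$, I would define the Koszul dual
$$
V^! := W \br{\otimes} \bigoplus_n \bar{A}^{\otimes n}((\lambda))[n],
$$
with the lax coaction maps $\tr_n\colon V^! \to V^! \br{\otimes}_{\C((\lambda))} \bar A^{\otimes n}((\lambda))[n]$ and internal differential constructed from the $\mu_n$ exactly as in the proof of Theorem \ref{theorem_equivalence_categories}; the $A_\infty$ identities for the $\mu_n$ translate formally into the lax $A_\infty$ coalgebra identities on the $\tr_n$. The reverse functor is just reading off the structure maps. That this is a quasi-equivalence of dg categories follows from the same bar/cobar arguments as before, since the completeness properties of $\mc{C}^b$ behave well with the totalizations involved, and the $\lambda$-direction plays no role in the homotopical analysis beyond being carried along algebraically.

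For the monoidal compatibility, the $E_2$ structure on $A$ extends $\C((\lambda))$-linearly to $A((\lambda))$, so the monoidal structure on $A((\lambda))$-$\op{mod}$ is obtained by extension of scalars from that on $A$-$\op{mod}$; dually, the bialgebra structure on $A^!$ induces a lax $\C((\lambda))$-linear bialgebra structure on $A^!((\lambda))$, and the equivalence is monoidal essentially by construction, building on the monoidal equivalence established in subsection \ref{subsection_Koszul_E2}. The main obstacle will be keeping the bookkeeping between $\br{\otimes}$, $\what{\otimes}$, and $\C((\lambda))$-linearity consistent throughout the bar/cobar manipulations, and in particular verifying that the coherences for the lax bialgebra structure on $A^!((\lambda))$ match those for the $E_2$ structure on $A((\lambda))$ under the equivalence --- this is in principle forced by $\C((\lambda))$-linearity, but a careful check is required because the coherences live in spaces involving Laurent series that mix contributions from several tensor factors.
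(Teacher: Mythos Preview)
Your overall plan is in the right spirit, but it takes an unnecessary detour and contains a muddled step; the paper's proof is shorter and more direct.

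The paper does not pass through an intermediate category of $A_\infty$ $A((\lambda))$-modules. Instead, it first replaces $A$ by the explicit cobar model $C^! = (A^!)^!$, where $C = A^!$, using the equivalence $A \simeq \RHom_{A^!-\op{comod}}(1,1) = \prod_n \bar{C}^{\otimes n}[-n]$. Once $A$ is modeled this way, the equivalence becomes essentially tautological: a quasi-free $C^!((\lambda))$-module is of the form $V \br\otimes C^!((\lambda))$ with a $C^!((\lambda))$-linear differential, and unwinding this differential into its components $V \to V \br\otimes \{C^{\otimes n}((\lambda))\}$ is \emph{exactly} the data of a lax $A_\infty$ $C((\lambda))$-coaction on $V$. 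This gives a functor that is full and faithful at the cochain level, not merely up to homotopy; essential surjectivity is then immediate because every quasi-free $C^!((\lambda))$-module has this form. No bar resolution, no quasi-isomorphism arguments, and no intermediate $A_\infty$-module category are needed.

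Your Step 2 also has a specific error: you write the coaction target as $V^! \br\otimes_{\C((\lambda))} \bar{A}^{\otimes n}((\lambda))[n]$, but the coaction of an $A^!((\lambda))$-comodule must land in tensor powers of $A^!$, not of $\bar{A}$. More seriously, your functor $V \mapsto V^!$ lands in quasi-cofree objects, whereas the target category $A^!((\lambda))-\op{comod}$ is defined as \emph{all} lax $A_\infty$ comodules on objects $W \br\otimes \C((\lambda))$; you would then owe an argument that every such comodule is homotopy equivalent to a quasi-cofree one, which in the lax setting is not obviously available and in any case is precisely what the paper's direct identification renders unnecessary.
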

\begin{proof}
As a consequence of the equivalence of monoidal categories 
$$A-\op{mod} \simeq A^!-\op{comod}$$
we see that there is an isomorphism of $E_2$ algebras
$$
A \simeq \RHom_{A^!-\op{comod}} (1,1)
$$
where $1 = \C[[\hbar]]$ denotes the trivial comodule.  

We can model $\RHom_{A^!}(1,1)$ by using the cobar resolution of $1$, which looks like
$$
1 \simeq \prod_{n \ge 0} \br{A^!}^{\otimes n}[-n] \otimes A^!,
$$
with a differential coming from the comultiplication on $A^!$, and a certain $E_2$ structure which we don't make explicit but which arises from the product on $A^!$. 

Thus, 
$$
\RHom_{A^!}(1,1) = \prod_{n \ge 0} \br{A^!}^{\otimes n}[-n] = \oplus_n \br{A^!}^{\otimes n}[-n] = (A^!)^!.
$$
The $E_1$ product on $(A^!)^!$ in this model makes it a free algebra generated by $\br{A}^![-1]$, with a differential coming from the coproduct on $A^!$. 

We will show that, for all coalgebras $C$, there is an equivalence
$$
C((\lambda))-\op{comod} \simeq C^!((\lambda))-\op{mod}.
$$
We're interested in co-$A_\infty$ $C((\lambda))$-comodules whose underlying $\C((\lambda))$-module is of the form $V \br{\otimes} \C((\lambda))$, for some $V \in \mc{C}^b$.  The $A_\infty$-coproducts are a sequence of maps sequence of $\C((\lambda))$-linear maps
$$
\tr_n : V\br{\otimes} \C((\lambda)) \to V \br{\otimes} \left\{C^{\otimes n}((\lambda)) \right\}.   
$$
Thus, giving an $A_\infty$ comodule structure on $V$ is the same thing as giving a $\C((\lambda))$-linear differential
$$
\d : V \br{\otimes} \left\{ C^!((\lambda)) \right\} \to V  \br{\otimes} \left\{ C^!((\lambda)) \right\}$$
which is also linear over $C^!((\lambda))$. Thus, $V \br{\otimes} \left\{C^!((\lambda))\right\}$ is a quasi-free $C^!((\lambda))$-module.  This constructs the functor 
$$
C((\lambda))-\op{comod} \to C^!((\lambda))-\op{mod}.
$$
This functor is full and faithful (at the cochain level). We need to verify that it is essentially surjective.  This follows from the fact that every quasi-free $C^!((\lambda))$-module is $M$ is of the form
$$
M \iso V \br{\otimes} \left\{C^!((\lambda))\right\}
$$ 
with a $C^!((\lambda))$-linear differential.  The differential gives $V$ the structure of $A_\infty$ $C((\lambda))$-comodule. 

\end{proof}

\section{The Yangian}
Let us return to field theory, and consider the twist of the Yangian deformation of the $N=1$ pure gauge theory on $\C^2$.  We have seen that we can quantize this theory, thus giving us a factorization algebra which we denote $\F$ on $\C^2$.  As before, we use coordinates $z$ and $w$ on $\C^2$, and our field theory is holomorphic in the $z$-direction and topological in the $w$-direction. 

Let $\pi : \C^2 \to \C_w$ be the projection onto the $w$-plane.   Factorization algebras, like sheaves, can be pushed forward: the factorization algebra $\pi_\ast \F$ on $\C_w$ assigns to an open subset $V \subset \C_w$ the complex $\F(\C_z \times V)$.  
 
We can restrict $\F$ to a factorization algebra any open subset of $\C^2$.  Thus, if $U_z \subset \C_z$ is an open subset in the $z$-plane, we have a factorization algebra $\pi_\ast (\F \mid_{U_z \times \C_w})$ on $\C_w$.  This assigns to an open subset $V \subset \C_w$ the cochain complex $\F(U_z \times V)$.   
\begin{lemma}
For all open subsets $U_z \subset \C_z$, $\pi_\ast (\F \mid_{U_z \times \C_w})$ is a locally constant factorization algebra on $\C_w$. 
\end{lemma}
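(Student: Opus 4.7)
My plan is to reduce the statement to a property of the underlying dg Lie algebra $\L$ and then identify its cohomology in the $w$-direction with a de Rham cohomology. Recall that the quantum observables $\F(W)$ are a BV deformation over $\C[[\hbar]]$ of the classical observables $C^\ast(\L(W))$. A spectral sequence in the $\hbar$-adic filtration reduces the local-constancy of $\pi_\ast(\F|_{U_z \times \C_w})$ to the same property for classical observables, and a further spectral sequence in the symmetric-power filtration on $C^\ast(\L)$ reduces it to the assertion that for every inclusion of open discs $D \subset D' \subset \C_w$, the structure map of dg Lie algebras
$$\L(U_z \times D) \longrightarrow \L(U_z \times D')$$
is a quasi-isomorphism.

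To prove this quasi-isomorphism, I would compute $H^\ast(\L(U_z \times V))$ for an arbitrary open disc $V \subset \C_w$ via a K\"unneth decomposition. Since $\Omega^{0,\ast}(U_z \times V) \cong \Omega^{0,\ast}(U_z) \,\hat\otimes\, \Omega^{0,\ast}(V)$ and $\Omega^{2,\ast}(U_z \times V) \cong dz \wedge dw \cdot \Omega^{0,\ast}(U_z) \,\hat\otimes\, \Omega^{0,\ast}(V)$, one can write
$$\L(U_z \times V) \;=\; \Omega^{0,\ast}(U_z,\g) \,\hat\otimes\, M(V),$$
where $M(V) := \Omega^{0,\ast}(V) \oplus dz \wedge dw \cdot \Omega^{0,\ast}(V)[-1]$, with differential given by $\dbar_w$ on each summand and by $\alpha \mapsto dz \wedge dw \cdot \partial_w \alpha$ connecting them (the only surviving component of $dz \wedge \partial$, since $dz \wedge \partial_z = 0$). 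A direct inspection identifies $M(V)$ with the full de Rham complex $\Omega^\ast_{dR}(V;\C)$ via the degree-preserving isomorphism that replaces the basis form $dw$ by $dz \wedge dw$; under this identification the combined differential $\dbar_w + dz \wedge \partial_w$ becomes the de Rham differential $d$. This is the precise sense in which the theory is topological in the $w$-direction.

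It follows that $H^\ast(M(V)) = H^\ast_{dR}(V;\C) = \C$ concentrated in degree zero for any open disc $V$, and the inclusion $M(D) \hookrightarrow M(D')$ induces the identity on cohomology. Tensoring with $\Omega^{0,\ast}(U_z,\g)$ preserves this quasi-isomorphism, since the K\"unneth map is well-behaved for the completed projective tensor product of nuclear Fr\'echet cochain complexes; this yields $\L(U_z \times D) \simeq \L(U_z \times D')$ as required. The main technical obstacle is in ensuring that the two spectral sequences interact properly with the completed projective tensor product and that the BV quantization respects the relevant filtrations; both points follow from the general results of \cite{Cos11, CosGwi11}, so the genuine content of the proof is the local computation identifying $M(V)$ with $\Omega^\ast_{dR}(V;\C)$.
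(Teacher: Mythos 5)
Your proposal is correct and follows essentially the same route as the paper: reduce to the classical observables $C^\ast(\L)$ by an $\hbar$-adic (and then symmetric-power) spectral sequence, and observe that for an inclusion of discs $D \subset D'$ the map $\L(U_z \times D) \to \L(U_z \times D')$ is a quasi-isomorphism because the $w$-direction factor is the de Rham complex of the disc. The paper phrases this computation as the quasi-isomorphism $\L(U_z \times D) \simeq \op{Hol}(U_z) \otimes \g$ (also taking $\dbar_z$-cohomology), while you stop at the K\"unneth factorization with $M(V) \cong \Omega^\ast_{dR}(V)$, which is the same observation made slightly more explicit.
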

\begin{proof}
Recall that the sheaf on $\C^2$ of dg Lie algebras describing our field theory is 
$$
\L(U) = \Omega^{0,\ast}(U, \g) \xto{\dpa{w}} \Omega^{0,\ast}(U,\g)[-1]. 
$$ 
Modulo $\hbar$,  we have
$$
\pi_\ast (\F \mid_{U_z \times \C_w}) (D(w_0,s)) \text{ mod } \hbar = C^\ast (\L(U_z \times D(w_0,s)))
$$
where $D(w_0,s)$ is the disc of radius $s$ centered at $w_0$ in the plane $\C_w$. 

Note also that, if $\op{Hol}(U_z)$ refers to the algebra of holomorphic functions on $U_z$, we have a quasi-isomorphism
$$
\L(U_z \times D(w_0,s)) \simeq \op{Hol}(U_z) \otimes \g. 
$$
It follows (with a small amount of functional analysis) that we have a quasi-isomorphism of commutative algebras
$$
\pi_\ast (\F\mid_{U_z\times \C_w})(D(w_0,s)) \text{ mod } \hbar \simeq C^\ast(\op{Hol}(U_z) \otimes \g)
$$
(where we of course use the appropriate completed cochains on the right hand side).  

It is clear from this expression that $\pi_\ast (\F \mid_{U_z \times \C_w})$ is locally constant modulo $\hbar$.  The fact that it is locally constant without reducing modulo $\hbar$ follows by a spectral sequence. 
\end{proof}

By the general constructions explained in subsection \ref{subsection_Koszul_E2}, we find that for any open subset $U_z \subset \C_z$, the factorization algebra $\pi_\ast (\F \mid_{U_z \times \C_w})$ on $\C_w$ gives us an $E_2$ algebra.  We will denote this $E_2$ algebra by $\F_{U_z}$.   The underlying cochain complex of this $E_2$ algebra is $\F(U_z \times D_w)$, where $D_w$ is any disc in the $w$-plane.   As the open subsets $U_z$ varies, we find that $\F_{U_z}$ defines a factorization algebra on $\C_z$ valued in $E_2$ algebras. 
 
\subsection{}
The main result of this paper is, roughly, that the Koszul dual Hopf algebra to the $E_2$ algebra $\F_{D_z}$ is the linear dual to the Yangian.  Recall \cite{ChaPre95} that the Yangian is a Hopf algebra deforming the universal enveloping algebra $U (\g[z])$.  We will use a completed version of the Yangian, which quantizes $U(\g[[z]])$.   

We see an immediate problem, however:  the Yangian we use is built from power series in $z$, whereas our $E_2$ algebra $\F_{D_z}$ is built from holomorphic functions in $z$, as we see from the fact that
$$\F_{D_z} \text{ mod } \hbar \simeq C^\ast( \op{Hol}(D_z) \otimes \g ).$$

To remedy this problem, we need to select the sub-$E_2$ algebra of $\F_{D_z}$ associated to the formal disc inside the disc $D_z$.  If $z_0 \in \C_z$, let $D(z_0,r)$ be the disc of radius $r$ around $z_0$.   We would like to define the $E_2$ algebra $\F_{z_0}$ associated to a formal disc to be something like $\lim_{r \to 0} \F_{D(z_0,r)}$.  This approach has some small technical problems, so we will select our subalgebra $\F_{z_0} \subset \F_{D(z_0,r)}$ in a different way.

Recall that, modulo $\hbar$, $\F_{D(z_0,r)}$ is quasi-isomorphic to $C^\ast(\op{Hol}(D(z_0,r) \otimes \g)$.  The $E_2$ algebra we associate to the formal disc must have the property that, modulo $\hbar$, it is $C^\ast(\g[[z-z_0]])$, where $\C[[z-z_0]]$ is the algebra of formal power series around $z_0 \in \C_z$.   Note that there's an $S^1$ action on $C^\ast(\op{Hol}(D(z_0,r) \otimes \g)$, coming from rotation on the disc $D(z_0,r)$; and  $C^\ast(\g[[z-z_0]])$ is the subcomplex of $C^\ast(\op{Hol}(D(z_0,r) \otimes \g)$ of those elements which are a finite sum of eigenvectors for this $S^1$ action.  

\begin{lemma}
There is an $S^1$ action on the factorization algebra $\F$ on $\C_z \times \C_w$, covering the $S^1$ action on $\C_z \times \C_w$ which rotates $z$.  Under this $S^1$ action, the parameter $\hbar$ has weight $1$. 
\end{lemma}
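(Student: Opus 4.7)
The plan is to construct the $S^1$-action on $\F$ by exploiting the uniqueness clause of the existence theorem for the quantization from the appendix. At the classical level, the rotation $R_\theta : (z,w) \mapsto (e^{i\theta}z, w)$ pulls back to an action on the dg Lie algebra $\mscr{L} = \Omega^{0,\ast}(\C^2, \g[\eps])$ (with differential $\dbar + \eps\partial_w$) controlling the classical theory. This action preserves both the differential and the Lie bracket, so it immediately induces an $S^1$-action on the classical factorization algebra of observables $C^\ast(\mscr{L})$, covering the given rotation of $\C^2$.

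The real content is how the action interacts with the BV symplectic pairing. The pairing is $\ip{\alpha, \eps\beta} = \int_{\C^2} \omega\, \ip{\alpha,\beta}_\g$ with $\omega = \d z \wedge \d w$, and since $R_\theta^\ast \omega = e^{i\theta}\omega$, change of variables gives $\ip{R_\theta^\ast\alpha, \eps R_\theta^\ast\beta} = e^{-i\theta}\ip{\alpha,\eps\beta}$, so the pairing transforms by a nontrivial character of $S^1$. Consequently the BV bracket $\{-,-\}$ and the BV Laplacian $\Delta$ constructed from the inverse pairing carry the opposite weight. In the quantum master equation $dI + \tfrac{1}{2}\{I,I\} + \hbar\Delta I = 0$, the only way to reconcile the $S^1$-weight of $\hbar\Delta I$ with that of $\{I,I\}$ is to give $\hbar$ itself $S^1$-weight $+1$; under this assignment (and with the natural $S^1$-action on fields identified as pushforward by $R_\theta$) each side of the QME decomposes uniformly by weight, with the deformation term $\int \d z \, \op{CS}(A)$ sitting in the weight which is compensated by a power of $\hbar$.

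With $\hbar$ assigned weight $+1$, the cohomological construction of the quantization in the appendix may be carried out $S^1$-equivariantly: one simply works in the category of $S^1$-equivariant complete filtered cochain complexes over $\C[[\hbar]]$ with $\hbar$ placed in weight $+1$. The obstruction and deformation groups used to build the effective interactions $I[L]$ inductively are Dolbeault-type cohomology groups on $\C^2$ with coefficients in finite-dimensional $\g$-modules; these carry a natural $S^1$-action by pullback and decompose into weight eigenspaces compatible with the assignment on $\hbar$, so the non-equivariant vanishing statements used in the original proof automatically imply their equivariant analogues in each weight separately. The resulting $S^1$-equivariant quantization produces the desired action on $\F$. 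Equivalently, the uniqueness of the quantization implies that for each $\theta$, the pullback $R_\theta^\ast\F$ combined with the rescaling $\hbar \mapsto e^{i\theta}\hbar$ is again a quantization satisfying the same symmetries, hence canonically isomorphic to $\F$; these canonical isomorphisms assemble into the $S^1$-action. The only genuine verification needed is the equivariant bookkeeping in the obstruction argument—in particular that the gauge-fixing propagator and local counterterms can be chosen $R_\theta$-covariantly—but this is automatic since all ingredients are constructed from the manifestly $R_\theta$-equivariant complex structure on $\C^2$.
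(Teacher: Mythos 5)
Your proposal is correct and follows essentially the same route as the paper: the natural pullback $S^1$-action on the Dolbeault-type dg Lie algebra, the observation that the BV pairing (built from $\d z\wedge \d w$) carries a nontrivial weight so that $\hbar$ must be given weight $1$ to compensate, and the compatibility/equivariance of the appendix's quantization construction. The extra detail you supply (QME weight bookkeeping, equivariant obstruction theory, or alternatively the uniqueness argument) is a fleshed-out version of what the paper asserts in two sentences, up to an immaterial sign convention in the weight of the pairing.
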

\begin{proof}
There's a natural action of $S^1$ on the dg Lie algebra 
$$
\L = \Omega^{0,\ast}(\C^2,\g) \xto{\dpa{w}} \Omega^{0,\ast}(\C^2,\g)[-1]
$$
describing our field theory.   On each copy of $\Omega^{0,\ast}(\C^2)$, this is just the natural $S^1$ action by pull-back on the Dolbeault complex. 

Under this action, the pairing on $\L$ has weight $1$.  The construction of the quantization presented in the appendix shows is compatible with this $S^1$ action; the fact that the pairing has weight $1$ means we must also give $\hbar$ weight $1$. 
\end{proof}

It follows we have an $S^1$ action on the $E_2$ algebra $\F_{D(z_0,r)}$,  arising from the $S^1$ action on $\C_z \times \C_w$ which rotates around $z_0$. 

We let 
$$\F^{k}_{D(z_0,r)} \subset \F_{D(z_0,r)}$$
be the $k$-eigenspace for the $S^1$ action, in other words, the set of elements where $\lambda \in S^1$ acts by multiplication by $\lambda^k$.   

Similarly, we let 
$$\F^{classical,k}_{D(z_0,r)} \subset \F_{D(z_0,r)} \op{mod} \hbar$$
be the $k$-eigenspace in the classical observables.

The following technical lemma is a straightforward consequence of the definition of observable presented in \cite{CosGwi11}. 
\begin{lemma}
The map $\F^{k}_{D(z_0,r)} \to \F^{classical, k}_{D(z_0,r)}$ is surjective. 
\end{lemma}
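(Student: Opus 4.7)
The plan is to lift by averaging. The definition of a quantization in the formalism of \cite{Cos11, CosGwi11} guarantees that the reduction map $\F_{D(z_0,r)} \to \F_{D(z_0,r)}/\hbar = \F^{classical}_{D(z_0,r)}$ is a surjection of cochain complexes (in fact, the quantum observables are flat over $\C[[\hbar]]$ with classical observables as the mod $\hbar$ reduction). So, given $\beta \in \F^{classical,k}_{D(z_0,r)}$, I can first choose some lift $\tilde{\beta}_0 \in \F_{D(z_0,r)}$ (not necessarily an eigenvector) with $\tilde{\beta}_0 \mod \hbar = \beta$.

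Next, I project $\tilde{\beta}_0$ onto the weight-$k$ eigenspace using the $S^1$-action. Concretely, define
$$
P_k(\alpha) \;=\; \int_{S^1} \lambda^{-k} \cdot (\lambda \cdot \alpha) \, d\lambda,
$$
so that $P_k$ is the continuous projector onto $\F^k_{D(z_0,r)}$. The $S^1$-action on $\F$ is smooth because it arises from an $S^1$-action on the underlying dg Lie algebra $\L$ of the theory (which acts by pullback on Dolbeault complexes), and this descends to observables by the explicit BV-construction of \cite{Cos11, CosGwi11}; in particular the integral defining $P_k$ converges (and, if one prefers to avoid infinite-dimensional integration, one can reduce to the finite-dimensional $S^1$-representations appearing at each fixed level of the filtration that defines the observables). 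Set $\tilde{\beta} := P_k(\tilde{\beta}_0) \in \F^k_{D(z_0,r)}$.

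It remains to check that $\tilde{\beta} \bmod \hbar = \beta$. Write $\tilde{\beta}_0 = \beta + \hbar \gamma$ for some $\gamma \in \F_{D(z_0,r)}$. Since $\hbar$ has weight $1$ under the $S^1$-action, the projector $P_k$ acts on $\hbar \gamma$ as $\hbar \cdot P_{k-1}(\gamma)$. Therefore
$$
\tilde{\beta} \;=\; P_k(\beta) + \hbar \, P_{k-1}(\gamma).
$$
By hypothesis $\beta$ is already in the classical weight-$k$ eigenspace, so $P_k(\beta) = \beta$, and reducing modulo $\hbar$ gives $\tilde{\beta} \bmod \hbar = \beta$, as desired.

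The only nontrivial point is the well-definedness of $P_k$ as an operator on $\F_{D(z_0,r)}$; this is the main technical obstacle, but it is taken care of by the fact that the $S^1$-equivariance is built into the construction of the quantization at the level of the $L_\infty$ algebra $\L$ and its Chevalley--Eilenberg complex, where each graded piece is a smooth $S^1$-representation and hence admits such a projector.
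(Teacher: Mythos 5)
The paper gives no proof of this lemma at all --- it is simply asserted to be ``a straightforward consequence of the definition of observable presented in [CosGwi11]'' --- so there is no argument of the paper's to compare yours against; what you have written is a correct way to fill in the omitted argument, and the mechanism (lift mod $\hbar$, then average over $S^1$, using that $\hbar$ has weight $1$) is surely the intended one. Two small points to tighten. First, writing $\tilde{\beta}_0 = \beta + \hbar\gamma$ tacitly uses a splitting of the reduction map $q : \F_{D(z_0,r)} \to \F^{classical}_{D(z_0,r)}$ (such a splitting exists at a fixed scale $L$, where an observable is an element of $\Oo(\E)[[\hbar]]$, but it is not canonical since the RG flow mixes powers of $\hbar$); it is cleaner to avoid it: the $S^1$-action preserves the ideal $\hbar\F_{D(z_0,r)}$ (because $\hbar$ has weight $1$), hence descends to the classical observables, the reduction $q$ is equivariant and continuous, so $q \circ P_k = P^{cl}_k \circ q$, and applying $q$ to $P_k(\tilde{\beta}_0)$ gives $P^{cl}_k(\beta) = \beta$ directly. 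Second, your parenthetical reduction to ``finite-dimensional $S^1$-representations at each level of the filtration'' is not quite right --- the graded pieces are infinite-dimensional spaces of functionals --- but it is also not needed: what one uses is that the quantization is constructed $S^1$-equivariantly (this is the content of the lemma immediately preceding the statement, which also fixes the weight of $\hbar$) and that the action is smooth in the differentiable-vector-space sense of [CosGwi11], so the averaging integral defining $P_k$ exists level-by-level in the complete filtration and hence in the limit. With those adjustments your argument is complete, granting (as you do, and as [CosGwi11] proves) that quantum observables on an open set surject onto classical ones modulo $\hbar$.
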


It follows from this lemma that, as a graded $\C[[\hbar]]$-module without differential, $\F^{k}_{D(z_0,r)}$ is isomorphic to $\F^{classical,k}_{D(z_0,r)}[[\hbar]]$.   If we didn't have this lemma we wouldn't know that $\F^{k}_{D(z_0,r)}$ was even non-empty. 

\begin{definition}
Let
$$
\F_{z_0} = \oplus_{k \in \Z} \F_{D(z_0,r)}^{k}.
$$
Note that $\F_{z_0}$ is a graded $E_2$ algebra over the graded ring $\C[[\hbar]]$, where $\hbar$ has weight $1$.
\end{definition}
\begin{remark} 
Recall that $\F_{D(z_0,r)}$ is a completed filtered $E_2$ algebra. The direct sum in this expression is taken in the sense of completed filtered cochain complexes, as explained in definition \ref{definition_filtered}.  We could equivalently write
$$
\F_{z_0} = \liminv_{n} \left( \oplus_{k \in \Z} \left( \F_{D(z_0,r)}^{k} / \hbar^{n} \F_{D(z_0,r)}^{k-n} \right)  \right).
$$
\end{remark}

Note that there is a quasi-isomorphisms of filtered commutative algebras
$$
\F_{z_0} \text{ mod } \hbar \simeq C^\ast (\g[[z-z_0]]) .
$$
Thus, $\F_{z_0}$ is an $E_2$ algebra quantizing $C^\ast(\g[[z-z_0]])$ as desired. 

\subsection{Augmenting $\F_{z_0}$}.
In order to apply Koszul duality, we need an augmented $E_2$ algebra.  Here we will construct the augmentation. 

Recall from section \ref{log_theory} that our theory can be defined on complex surfaces with a divisor, where the volume form has a quadratic pole along the divisor.  Consider our theory on $\mbb{P}^1_z \times \C_w$, with volume form $\d z \d w$ which has a quadratic pole along the divisor $\infty \times \C_w$.   

We have a factorization algebra $\F_{\mbb{P}^1 \times \C_w}$ on $\mbb{P}^1 \times \C_w$.  Pushing forward to $\C_w$ yields a locally constant factorization algebra on $\C_w$, which can be viewed as an $E_2$ algebra.  Let us call this $E_2$ algebra $\F_{\mbb{P}^1}$.
\begin{lemma}
The cohomology of $\F_{\mbb{P}^1}$ is $\C[[\hbar]]$.
\end{lemma}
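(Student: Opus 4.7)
The plan is to compute the cohomology of $\F_{\mbb{P}^1}$ by reducing to a classical calculation and then arguing that no quantum corrections appear. Recall that as an $E_2$ algebra, $\F_{\mbb{P}^1}$ is obtained from the pushforward of $\F_{\mbb{P}^1 \times \C_w}$ to $\C_w$, so its underlying cochain complex is $\F(\mbb{P}^1 \times D)$ for a disc $D \subset \C_w$.

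First, I would compute the cohomology of the classical observables. The dg Lie algebra controlling the theory (described in section \ref{log_theory}) assigns to an open set $U \subset \mbb{P}^1 \times \C_w$ the complex
$$
\mscr{L}(U) = \Omega^{0,\ast}(U, \g[\eps](-\infty))
$$
with differential $\dbar + \eps \nabla_V$, where $\eps$ is odd and $V = \partial/\partial w$. Since the classical observables are $C^\ast(\mscr{L}(\mbb{P}^1 \times D))$, it suffices to show that $\mscr{L}(\mbb{P}^1 \times D)$ is acyclic; then $C^\ast$ of a contractible dg Lie algebra is $\C$.

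The key input is the K\"unneth-type decomposition
$$
H^\ast\left( \Omega^{0,\ast}(\mbb{P}^1 \times D, \Oo(-\infty)), \dbar \right) \cong H^\ast_{\dbar}(\mbb{P}^1, \Oo(-\infty)) \otimes H^\ast_{\dbar}(D, \Oo),
$$
and the classical fact that $H^\ast_{\dbar}(\mbb{P}^1, \Oo(-1)) = 0$. Hence the $\dbar$-cohomology of $\mscr{L}(\mbb{P}^1 \times D)$ already vanishes, and the spectral sequence with respect to the $\eps$-filtration (first page $H^\ast_{\dbar}$) then forces $H^\ast(\mscr{L}(\mbb{P}^1 \times D)) = 0$. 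This is essentially the same mechanism as in lemma \ref{lemma_isolated_solution}. So $H^\ast(\F^{cl}(\mbb{P}^1 \times D)) = \C$.

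Finally I would lift to the quantum level. The quantum observables $\F(\mbb{P}^1 \times D)$ are a deformation of $\F^{cl}(\mbb{P}^1 \times D)$ in the formal parameter $\hbar$, and the associated $\hbar$-adic spectral sequence has $E_1$-page $H^\ast(\F^{cl}(\mbb{P}^1 \times D))[[\hbar]] = \C[[\hbar]]$ concentrated in cohomological degree $0$. All higher differentials in this spectral sequence raise cohomological degree, so they must vanish, giving $H^\ast(\F_{\mbb{P}^1}) = \C[[\hbar]]$. The only subtlety is bookkeeping the completeness (ensuring the spectral sequence converges), which follows from the standard setup of \cite{CosGwi11}; I do not expect any genuine obstacle, since an $E_2$ algebra with $H^\ast = \C[[\hbar]]$ concentrated in degree zero has an essentially unique $E_2$ structure compatible with its unit, and this matches what we want.
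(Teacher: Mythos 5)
Your proof is correct and follows essentially the same route as the paper: reduce modulo $\hbar$ to Lie algebra cochains of the dg Lie algebra controlling the theory, observe that it is acyclic because $H^\ast_{\dbar}(\mbb{P}^1, \Oo(-\infty)) = 0$, and then conclude at the quantum level by the $\hbar$-adic spectral sequence (a step the paper leaves implicit). The only difference is presentational: you spell out the K\"unneth decomposition and the $\eps$-filtration, while the paper states the acyclicity of $\Omega^{0,\ast}(\mbb{P}^1, \Oo(-\infty)) \otimes \Omega^{\ast}(\C_w) \otimes \g$ directly.
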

\begin{proof}
Modulo $\hbar$, $\F_{\mbb{P}^1}$ is cochains of the Lie algebra $\mscr{L}$ which describes the theory. This Lie algebra is 
$$
\mscr{L} = \Omega^{0,\ast}(\mbb{P}^1, \Oo(-\infty)) \otimes \Omega^{\ast}(\C_w)) \otimes \g
$$
which has no cohomology. 
\end{proof}
Thus, there is a quasi-isomorphism of $E_2$ algebras $\F_{\mbb{P}^1} \simeq \C[[\hbar]]$.  The factorization structure gives a map of $E_2$ algebras
$$
\F_{z_0} \to \F_{\mbb{P}^1} \simeq \C[[\hbar]].
$$
This is the desired augmentation. 

\subsection{The Yangian}
Before I state the main theorem, I should first recall the definition of the Yangian. We use a completed version of the Yangian.
\begin{definition}
Let $\g$ be a simple Lie algebra, and let $\ip{-,-}_{\g}$ denote the invariant pairing on $\g$.   Make $\g[[z]]$ into a graded Lie algebra by giving $X z^k$ degree $k$, for $X \in \g$.  

The Yangian $Y(\g)$ is the unique topological Hopf algebra flat over the ring $\C[[\hbar]]$ with the following properties. 

\begin{enumerate}
\item $Y(\g)$ has a grading where the central parameter $\hbar$ has weight $1$.
\item There's an isomorphism of graded Hopf algebras
$$
Y(\g) \otimes_{\C[[\hbar]]} \C_{\hbar = 0} \iso U(g[[z]]). 
$$  
\item As is standard in the theory of quantum groups \cite{ChaPre95, Dri87}, any deformation of the Hopf algebra $U(\g[[z]])$ gives a Lie bialgebra structure on $\g[[z]]$.   The Lie bialgebra structure on $\g[[z]]$ which quantizes to the Yangian  has for coproduct the map
\begin{align*}
\g[[z]] & \to \g[[z_1]] \otimes \g[[z_2]] \\
X   & \mapsto \frac{ [X, c] }{z_1 - z_2 },
\end{align*}
where 
$$c \in \g \otimes \g \subset \g[[z_1]] \otimes \g[[z_2]]  $$ is the quadratic Casimir, dual to the chosen invariant pairing, and $[X,-]$ refers to the action in the tensor square of the adjoint representation. 
\end{enumerate}
\end{definition}
\begin{remark}
\begin{enumerate}
\item People normally consider the Yangian as a quantization of $U(\g[z])$.  This is a dense subalgebra of the Yangian we consider.   The grading makes it easy to pass back and forth between the completed Yangian we use and the uncompleted version; one is the sum of its graded pieces, and the other is the product. 
\item The fact that the Yangian is the unique Hopf algebra satisfying these conditions is proved in \cite{Dri87, ChaPre95}.
\item Lie bialgebras correspond to Manin triples, which are Lie algebras with an invariant pairing and a decomposition as a direct sum of isotropic sub-Lie algebras. The Lie bialgebra structure given above on $\g[[z]]$ corresponds to the Manin triple $(\g((z)), \g[z], z^{-1} \g[z^{-1}])$. with the pairing 
$$\ip{a f(z), b g(z) } = \ip{a,b}_{\g} \op{Res} f(z) g(z) \d z.$$
\end{enumerate}
\end{remark}

\subsection{}
The main theorem says that the Koszul dual Hopf algebra $\F^!_{D_z}$ is the $\C[[\hbar]]$-linear dual of the Yangian.  I need to say in what precise sense I mean linear dual.
\begin{definition}
The dual Yangian $Y^\ast(\g)$ is the topological Hopf algebra defined by
$$
Y^\ast(\g) = \Hom_{\C[[\hbar]]} ( Y(\g), \C[[\hbar]] ) .
$$
As a vector space, $Y^\ast(\g)$ is isomorphic to $\what{\op{Sym}}^\ast(\g^\vee[\partial_z] ) [[\hbar]]$.  

Let $\otimes_{\C[[\hbar]]}$ denote the completed projective tensor product of topological $\C[[\hbar]]$-modules.  Then, the product and coproduct for the dual Yangian
\begin{align*}
m : Y^\ast(\g) \otimes_{\C[[\hbar]]} Y^\ast(\g) & \mapsto Y^\ast(\g) \\
c : Y^\ast(\g) \to Y^\ast(\g) \otimes_{\C[[\hbar]]} Y^\ast(\g)
\end{align*}
are continuous $\C[[\hbar]]$-linear maps, and are dual to the product and coproduct of the Yangian. 
\end{definition}
\begin{lemma}
The dual Yangian is the unique graded topological Hopf algebra quantizing the Lie bialgebra $\g[[z]]^\vee = \g^\vee[\partial_z]$, with Lie bialgebra structure dual to that on $\g[[z]]$ discussed earlier.
\end{lemma}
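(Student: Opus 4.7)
The strategy is to deduce this uniqueness statement from Drinfeld's theorem (already invoked in the paper) that the Yangian $Y(\g)$ is the unique graded topological Hopf algebra quantizing $U(\g[[z]])$ with the specified Lie bialgebra structure. The key is that taking the $\C[[\hbar]]$-linear dual is, at the level of graded topological Hopf algebras of the type we consider, an involution that exchanges the roles of product and coproduct, and hence the roles of Lie bracket and Lie cobracket on the classical limit.

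First I would verify that the definition given really does produce a graded topological Hopf algebra of the correct classical type. The grading on $Y(\g)$ (with $\hbar$ of weight $1$ and $Xz^k$ of weight $k$) has finite-dimensional graded pieces over $\C$, so passing to graded duals is unambiguous and produces a graded topological $\C[[\hbar]]$-module $Y^\ast(\g) = \prod_k \Hom_\C(Y(\g)_k, \C)$ which is again flat over $\C[[\hbar]]$. Dualizing the product and coproduct of $Y(\g)$ yields a continuous coproduct and product on $Y^\ast(\g)$ making it a topological Hopf algebra; modulo $\hbar$ we recover $U(\g[[z]])^\vee = \widehat{\Sym}^\ast(\g^\vee[\partial_z])$, by the standard graded-duality between a cocommutative Hopf algebra concentrated in non-negative degrees with finite-dimensional graded pieces and its linearly dual commutative Hopf algebra.

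Next I would check that the Lie bialgebra structure induced on $\g[[z]]^\vee$ is indeed the one dual to the Yangian Lie bialgebra structure on $\g[[z]]$. The cobracket on $\g[[z]]^\vee$ arises, to first order in $\hbar$, from the noncommutativity of the product on $Y^\ast(\g)$, and dually the bracket arises from the non-cocommutativity of the coproduct on $Y^\ast(\g)$. Since the product and coproduct of $Y^\ast(\g)$ are the transposes of the coproduct and product of $Y(\g)$, these two structures are exactly the transposes of the cobracket and bracket of $\g[[z]]$, as required. Concretely, one writes out the formula $\frac{[X,c]}{z_1-z_2}$ in a basis and observes that the transposed map on $\g^\vee[\partial_z]$ produces the corresponding bracket on the dual Manin triple $(\g((z))^\vee, (z^{-1}\g[z^{-1}])^\vee, (\g[z])^\vee)$.

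Finally, uniqueness follows by reversing the duality. Suppose $H$ is any graded topological Hopf algebra quantizing $\g[[z]]^\vee$ with the prescribed Lie bialgebra. The grading has finite-dimensional graded pieces over $\C$ (because the classical limit does), so the graded dual $H^\vee$ is again a graded topological Hopf algebra flat over $\C[[\hbar]]$, quantizing $U(\g[[z]])$ with the dual Lie bialgebra structure. By Drinfeld's uniqueness theorem $H^\vee \simeq Y(\g)$, and dualizing again gives $H \simeq Y^\ast(\g)$. The main subtle point in this argument is ensuring that graded duality really is an equivalence on the class of topological Hopf algebras in play: one must check the finite-dimensionality of graded pieces propagates along any quantization (which follows because it holds at $\hbar=0$ and the deformation is flat and grading-compatible, so each graded piece is a finite-rank free $\C[[\hbar]]$-module in the weight-adapted sense where $\hbar$ carries weight $1$), and that continuity of the product and coproduct matches up under the dualization. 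Once this is in place, the argument is purely formal.
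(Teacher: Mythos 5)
Your overall strategy -- recover $Y(\g)$ as the continuous $\C[[\hbar]]$-linear dual of $Y^\ast(\g)$, so that any graded topological Hopf algebra $H$ quantizing the dual Lie bialgebra dualizes to a quantization of $U(\g[[z]])$ with the Yangian's Lie bialgebra structure, which is then pinned down by Drinfeld's uniqueness theorem and dualized back -- is exactly the argument the paper gives. The problem is the justification you offer for duality being a well-behaved involution. You claim that the weight grading (with $Xz^k$ of weight $k$ and $\hbar$ of weight $1$) has finite-dimensional graded pieces over $\C$, ``because the classical limit does.'' This is false already at $\hbar=0$: the weight-zero piece of $U(\g[[z]])$ is all of $U(\g)$, which is infinite-dimensional, and likewise every graded piece of $Y(\g)$ is infinite-dimensional (one can always multiply by weight-zero elements). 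Consequently the ``graded dual'' is not unambiguous in your sense, it is not automatically a Hopf algebra of the same kind, and the claim that finite-dimensionality ``propagates along any quantization'' is vacuous since it fails before quantizing. Since this finite-dimensionality is the only thing you offer to make the dualization step (both $Y(\g)\leftrightarrow Y^\ast(\g)$ and $H\leftrightarrow H^\vee$) legitimate, the step as written would fail.

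The repair is topological rather than graded, and it is what the paper's definitions are set up for: $Y^\ast(\g)=\Hom_{\C[[\hbar]]}(Y(\g),\C[[\hbar]])$ is a topological Hopf algebra, isomorphic as a topological module to $\what{\op{Sym}}^\ast(\g^\vee[\partial_z])[[\hbar]]$ with its inverse-limit (filtered) topology, and the product and coproduct are continuous maps for the completed projective tensor product. In this setting the continuous $\C[[\hbar]]$-linear dual of $Y^\ast(\g)$ recovers $Y(\g)$, and for an arbitrary candidate $H$ (which, as a topological module, must be of the same completed form since its classical limit is $\what{\op{Sym}}^\ast(\g^\vee[\partial_z])$) the continuous dual is an honest graded Hopf algebra quantizing $U(\g[[z]])$; the needed reflexivity is of the type treated in the paper's discussion of duals of coalgebras, not a consequence of finite-dimensional weight spaces. (What is finite-dimensional is each piece of the associated graded for the word-length/PBW filtration bigraded by $z$-weight, but that is only a filtration on the quantization, which is precisely why the continuous/filtered formulation is used instead of a graded-dual argument.) With that substitution your proof becomes the paper's.
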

\begin{proof}
To show this, we need to show how to recover the Yangian $Y(\g)$ from the dual Yangian $Y^\ast(\g)$.  $Y(\g)$ is simply the continuous $\C[[\hbar]]$- linear dual of $Y^\ast(\g)$.  Similarly, $Y(\g) \otimes_{\C[[\hbar]]} Y(\g)$ is the continuous $\C[[\hbar]]$-linear dual to $Y^\ast(\g) \otimes_{\C[[\hbar]]} Y^\ast(\g)$.  Thus, the Hopf algebra structure on $Y^\ast(\g)$ dualizes to one on $Y(\g)$, which as we have seen is unique. 
\end{proof}

\subsection{Proof of the main theorem}
Now I can state the main theorem.
\begin{theorem}
Let $\g$ be a simple Lie algebra.  Then, the cohomology of the Koszul dual Hopf algebra to $\F_{z_0}$ is the dual Yangian Hopf algebra $Y^\ast(\g)$. That is, there's an isomorphism of Hopf algebras
$$
Y^\ast(\g) \iso H^\ast( \F_{z_0}^{!} ) = H^\ast\left(  \C[[\hbar]] \otimes^{\mbb{L}}_{\F_{z_0} }\C[[\hbar]] \right).
$$ 
\label{thm:main}
\end{theorem}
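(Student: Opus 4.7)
The plan is to compute $H^\ast(\F_{z_0}^!)$ classically, upgrade this to a Hopf algebra structure via the Koszul duality formalism developed above, identify the induced Lie bialgebra as the one underlying the Yangian, and then invoke Drinfeld's uniqueness theorem.

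First, I would compute the classical limit. Because $\F_{z_0} \bmod \hbar \simeq C^\ast(\g[[z]])$ as a complete filtered commutative algebra (which follows from the explicit description of classical observables of the twisted theory in terms of the dg Lie algebra $\mscr{L} = \Omega^{0,\ast}(\C^2,\g) \xrightarrow{\partial_w} \Omega^{0,\ast}(\C^2,\g)[-1]$ and then selecting $S^1$-eigenvectors), the lemma relating $C^\ast(\g)^!$ to $U(\g)^\vee$ gives
$$
H^\ast\bigl( (\F_{z_0} \bmod \hbar)^! \bigr) \iso U(\g[[z]])^\vee
$$
as graded Hopf algebras, with the grading matching the $S^1$-weight grading on $\F_{z_0}$. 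A standard spectral sequence, comparing the bar construction for $\F_{z_0}$ with its associated graded, then shows that $H^i(\F_{z_0}^!) = 0$ for $i \ne 0$, that $H^0(\F_{z_0}^!)$ is flat over $\C[[\hbar]]$, and that modulo $\hbar$ it specializes to $U(\g[[z]])^\vee$.

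Next, by the augmented $E_2$ structure on $\F_{z_0}$ (using the augmentation $\F_{z_0} \to \F_{\mbb{P}^1} \simeq \C[[\hbar]]$), the general results of Section \ref{subsection_Koszul_E2} show that $H^0(\F_{z_0}^!)$ carries a canonical Hopf algebra structure, deforming $U(\g[[z]])^\vee$ as a graded Hopf algebra over $\C[[\hbar]]$ (with $\hbar$ of weight $1$). Dualizing gives a graded topological Hopf algebra $H^0(\F_{z_0}^!)^\vee$ deforming $U(\g[[z]])$, satisfying the first two axioms of the definition of $Y(\g)$. It therefore suffices to verify the third axiom: that the induced Lie bialgebra structure on $\g[[z]]$ is the one with cobracket $X \mapsto [X,c]/(z_1 - z_2)$.

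The cobracket is read off from the first-order (in $\hbar$) deformation of the Hopf algebra structure on $U(\g[[z]])^\vee$, which in turn comes from the $O(\hbar)$ correction to the OPE of observables on nested discs in the $w$-direction. This is a local calculation near a point of $\C_z$ and can be extracted from a single Feynman diagram: the tree-level exchange between two observables placed on disjoint $w$-discs at the same $z$-value, where the propagator for the deformed holomorphic BF theory carries precisely the kernel $1/(z_1 - z_2)$ coming from the $\d z \,\mathrm{CS}(A)$ term in the Lagrangian. Identifying this kernel with the cobracket of the standard Manin triple $(\g((z)), \g[z], z^{-1}\g[z^{-1}])$ gives the required Lie bialgebra structure. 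This is the step I expect to be the main obstacle: the correction must be extracted rigorously from the BV quantization of Theorem \ref{theorem_twisted_existence}, with the correct normalization (which is why the introduction mentions that two Feynman diagrams must be computed to fix normalization).

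Finally, having matched the Lie bialgebra structure with the one for the Yangian, Drinfeld's uniqueness theorem for the quantization of $U(\g[[z]])$ with this Lie bialgebra structure (in the graded, topologically free setting) produces a unique isomorphism $H^0(\F_{z_0}^!)^\vee \iso Y(\g)$ of graded topological Hopf algebras. Dualizing gives the desired isomorphism $Y^\ast(\g) \iso H^\ast(\F_{z_0}^!)$.
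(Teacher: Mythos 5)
Your overall architecture is the same as the paper's: compute the classical limit $\F_{z_0}\bmod\hbar\simeq C^\ast(\g[[z]])$ so that the graded Koszul dual reduces mod $\hbar$ to $U(\g[[z]])^\vee$, use the $\hbar$-grading to get the first two defining properties of the (dual) Yangian, then identify the induced Lie bialgebra structure on $\g[[z]]$ and conclude by the Drinfeld/Chari--Pressley uniqueness statement. The paper isolates exactly this last step as Proposition \ref{proposition_bialgebra}, so up to that point your proposal and the paper agree.

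The gap is in how you propose to carry out that step. You describe the first-order correction as ``a tree-level exchange between two observables placed on disjoint $w$-discs at the same $z$-value, where the propagator for the deformed holomorphic BF theory carries precisely the kernel $1/(z_1-z_2)$.'' Neither half of this is right. The cobracket is detected, on the observable side, by a bracket of the form $\{X[k],Y[l]\}\propto [X,Y][k+l+1]$ on \emph{linear} observables, so the answer must involve the structure constants of $\g$; contracting a single propagator between two linear observables produces a scalar, not a linear observable, and can never generate $[X,Y]$. The diagram that actually computes the bracket in the paper has two propagators and one trivalent vertex coming from the interaction $\int \d z\,\ip{A,[A,A]}$, with one external leg, and the $4$-dimensional propagator (fixed by the BF kinetic term and the gauge fixing $\dbar^\ast_z+\d^\ast_w$) is nothing like $1/(z_1-z_2)$: after reducing to the $3$-dimensional theory on $\C_z\times\R_w$ the relevant kernel is a step function in $w$, and the commutator is its jump across $w=0$. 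The $1/(z_1-z_2)$ form of the cobracket is not read off any kernel directly; the paper first uses $\C^\times$-equivariance, the vanishing of the primary bracket (forced by the augmentation), and an algebraic uniqueness lemma for Lie bialgebra structures on $\g[[z]]$ (for $\g$ simple) to reduce the whole question to the single constant $\{X[0],Y[0]\}$, which the diagram then fixes. (The paper's alternative route, Proposition \ref{proposition_bialgebra_alternative}, instead pins the constant down by a $4$-loop Wilson-loop cumulant on $\mbb{P}^1\times E$: there a $1/z$ does appear, but as the period $\Phi(z)$ of the propagator over two linked circles, and only after homogeneity, $G$-invariance and translation invariance force the $R$-matrix to be $c/(z_1-z_2+\lambda)$ up to one coefficient.) As written, your step (iii) would not produce the cobracket, so it needs to be replaced by one of these arguments.
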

\begin{remark}
\begin{enumerate}
\item Of course, the dual Yangian is concentrated in cohomological degree $0$; part of the statement is that the right hand side is also concentrated in degree $0$.  
\item The Yangian is the continuous linear dual to $Y^\ast(\g)$, and so the continuous linear dual to $H^\ast(\F^!_{z_0})$.   This continuous linear dual can be described as $\RHom_{\F}  \left( \C[[\hbar]], \C[[\hbar]] \right)$, as long as one takes care with things like filtrations when defining $\RHom$.  
\end{enumerate}
\end{remark}
\begin{proof}
To prove the theorem, we need to verify that the Hopf algebra $\F_{z_0}^{!}$ satisfies the conditions characterizing the dual Yangian. Since $\F_{z_0}$ is a graded $E_2$ algebra over the graded ring $\C[\hbar]$, it is clear that the Koszul dual $\F^{!}_{z_0}$ is a graded Hopf algebra over the same graded ring.  This gives us the first condition.  

Next, we need to check that there's a quasi-isomorphism 
$$
\F_{z_0}^{!} \otimes_{\C[\hbar]} \C \iso U(\g[[z]])^\vee. 
$$

We have a quasi-isomorphism of graded $E_2$ algebras
$$
\F_{z_0} \otimes_{\C[\hbar]} \C \iso C^\ast (\g[[z]]).  
$$
Koszul duality as explained in section \ref{e_2_hopf} now tells us that
$$
\F^!_{z_0} \text{ mod } \hbar = U(\g[[z]])^\vee
$$
as desired. 

The hard step in the proof is to verify that, to first order, our quantization corresponds to the desired Lie bialgebra structure.   This is proved in the following proposition.

\end{proof}  
\begin{proposition}
\label{proposition_bialgebra}
Let $Y'(\g)$ be the Hopf algebra which is the linear dual to the Koszul dual of Koszul dual to the augmented $E_2$ algebra $\Obs_{z_0}$. Thus, modulo $\hbar$, $Y'(\g)$ is $U(\g[[z]])$ with its commutative coalgebra structure.  Modulo $\hbar^2$, the cocommutator on $Y'(\g)$ gives a co-Lie bracket
$$
\delta : \g[[z]] \to \g[[z_1]] \otimes \g[[z_2]].
$$
Up to a non-zero constant, this is the co-Lie bracket defining the semi-classical structure of the Yangian.  
\end{proposition}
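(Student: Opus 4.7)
The plan is to reduce the identification of $\delta$ to a single tree-level propagator computation and then invoke Drinfeld's uniqueness result. First I would unpack what must be shown. Under the Koszul-duality formalism of subsection \ref{subsection_Koszul_E2}, $Y'(\g) = (\Obs^!_{z_0})^\vee$ has coproduct dual to the product on $\Obs^!_{z_0}$. Modulo $\hbar$, this product is commutative because $U(\g[[z]])$ is cocommutative. The first-order correction in $\hbar$ to this product thus determines the cobracket $\delta$ on the primitives $\g[[z]] \subset U(\g[[z]])$. Equivalently, under $E_2$-Koszul duality this correction corresponds to the leading Gerstenhaber-type bracket on $\Obs_{z_0} \simeq C^\ast(\g[[z]])$ induced by the factorization structure in the $w$-direction, so it suffices to compute that bracket on linear elements.

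Second, I would compute this bracket via the two-point function of linear observables. Linear observables in $\Obs_{z_0}$ are parameterized by $\g[[z]]^\vee \cong \g^\vee[\partial_z]$: for $X \in \g$ and $n \ge 0$, take the observable $\mc{O}^{(n)}_X$ that evaluates a field against the $n$-th derivative of a delta distribution at $z_0$ in the $z$-direction and a point observable in the $w$-direction. The first-order correction to the $w$-product of two such observables inserted at distinct $w$-points equals $\hbar$ times the tree-level Feynman graph with one internal propagator. The relevant propagator is the Green's function for the kinetic operator of the deformed holomorphic BF theory, which factorizes (after a standard gauge fixing) as a Cauchy kernel $\tfrac{1}{2 \pi i}(z_1-z_2)^{-1}$ in the $z$-direction times a de Rham Green's function in the $w$-direction, tensored with the quadratic Casimir $c \in \g \otimes \g$ dual to the invariant pairing. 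Contracting against $\mc{O}^{(n_1)}_{X_1}$ and $\mc{O}^{(n_2)}_{X_2}$ and extracting the appropriate residues produces, up to an overall nonzero constant, the expression $[X \otimes 1 + 1 \otimes X, c]/(z_1 - z_2)$ acting on the chosen primitives: exactly the classical $r$-matrix and cobracket of the Yangian.

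Third, having identified $\delta$ up to a nonzero scalar, I would conclude: the uniqueness result of Drinfeld \cite{Dri87} for Hopf-algebra deformations of $U(\g[[z]])$ with a prescribed Lie bialgebra structure then identifies $Y'(\g)$ with $Y(\g)$ after a rescaling of $\hbar$. The main obstacle is the bookkeeping that aligns three perspectives—the Hopf-algebraic product on $\Obs^!_{z_0}$, the $E_2$-Gerstenhaber bracket on $\Obs_{z_0}$, and the factorization two-point function—and the verification that higher-order Feynman diagrams cannot contribute to the $\hbar^1$ piece of the cobracket. The latter is essentially cohomological: any such contribution would carry an additional factor of $\hbar$ from the quantization, so only the tree-level diagram is relevant at leading order. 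Once the identifications are made explicit, the remaining calculation is an elementary residue computation of the Cauchy kernel paired against polynomial-in-$z$ densities.
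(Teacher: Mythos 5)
The overall reduction you propose---read the cobracket off from the failure of cocommutativity at order $\hbar$, tested against the linear observables dual to $\g[[z]]$---is in the spirit of the paper's argument, but your central computation misidentifies the Feynman diagram that carries the answer, and with it the whole mechanism. Contracting the two linear observables $\mc{O}^{(n_1)}_{X_1}$ and $\mc{O}^{(n_2)}_{X_2}$ with a single propagator closes up all their inputs: the result is a field-independent constant proportional to $\ip{X_1,X_2}_{\g}$, with no external leg and no room for the structure constants of $\g$, so it cannot produce $[X\otimes 1+1\otimes X,\,c]/(z_1-z_2)$. In fact, in the paper this constant term vanishes, and its vanishing is exactly what kills the primary bracket (the one at the naive filtration shift), forcing the cobracket to sit in a \emph{secondary} bracket --- a subtlety your proposal skips. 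The commutator $[X,Y]$ can only enter through the cubic Chern--Simons vertex $\tfrac{1}{6}\int \d z\, \ip{A,[A,A]}$: the diagram the paper computes has the two observable vertices, one trivalent interaction vertex, two propagators and one external leg, and it is still of first order in $\hbar$ (two propagators against one interaction vertex; it is the ``one-loop'' term in the counting of the quantization). So your claim that only the one-propagator tree can contribute at order $\hbar$ is exactly backwards: the diagram you discard is the one that carries the answer, and the one you keep contributes nothing. Moreover the answer is not extracted as a residue of a factorized Cauchy kernel; after reducing along the circle in the $w$-direction to a three-dimensional theory, the relevant composite kernel is a step function in the topological direction, and the bracket is read off from the jump of the operator product $X(\eps)Y(0)$ as the two circles cross.

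Two further remarks. First, computing the full cobracket directly is unnecessary: by $\C^\times$-equivariance and simplicity of $\g$ there is, up to scale, a unique Lie bialgebra structure on $\g[[z]]$ with cobracket of the correct weight, so the paper only needs to show that the single bracket $\{X[0],Y[0]\}$ is a nonzero multiple of $[X,Y][1]$; any correct version of your argument should exploit this. Second, if you wish to avoid the vertex diagram altogether, the paper's alternative argument (proposition \ref{proposition_bialgebra_alternative}) does so, but it needs more than Drinfeld uniqueness: it uses the abstract existence of an $R$-matrix with spectral parameter for the Koszul-dual Hopf algebra, the symmetry constraints that force its first-order term to be $c/(z_1-z_2+\lambda)$ up to scale, and a separate propagator-only normalization via a four-Wilson-line expectation value on $\mbb{P}^1\times E$. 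An appeal to uniqueness of the Yangian by itself, without first pinning down the first-order cobracket (via the vertex diagram) or the first-order $R$-matrix (via such an expectation value), does not establish the proposition.
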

The proof of this will take the rest of this subsection.
\begin{remark}
Another proof of this proposition is presented in proposition \ref{proposition_bialgebra_alternative} at the end of the paper. 
\end{remark}

Let 
$$\Obs^{cl}_{z_0} = \F_{z_0} \op{mod} \hbar.$$ 
Note that $\Obs^{cl}_{z_0} = C^\ast(\g[[z]])$, so the Koszul dual of $\Obs^{cl}_{z_0}$ is $U(\g[[z]])^\vee$.   The general Koszul duality framework we have developed tells us that we have an isomorphism of filtered associative algebras
$$
HC_\ast(\Obs^{cl}_{z_0}) \simeq HC_\ast (U(\g[[z]]))^\vee,
$$
where $HC_\ast$ indicates Hochschild chains.  The product on the right hand side comes from the coproduct on $U(\g[[z]])$. 

Passing to the cohomology of the associated graded, we find an isomorphism
$$
HH_\ast ( \op{Gr} \Obs^{cl}_{z_0}) \simeq HH_\ast ( \Sym (\g[[z]]) )^\vee. 
$$
Both sides of this equation are endowed with a Poisson bracket describing the behavior modulo $\hbar^2$.  On the right hand side, the Poisson bracket comes from the coproduct on the Yangian modulo $\hbar^2$.  On the left hand side, the Poisson bracket comes from the quantization of our field theory. 

Note that non-triviality of this Poisson bracket does not contradict the fact that the product on 
$$HH_\ast (\Obs_{z_0}) = H^\ast (\Obs(z_0 \times \C^\times_w))$$
is commutative. We have a non-trivial Poisson bracket on the homology of the associated graded, but it disappears when we pass to the next term of the spectral sequence computing $HH_\ast(\Obs^{cl}_{z_0})$. Even so, the non-triviality of this Poisson bracket is enough to verify that the Lie co-bracket on $\g[[z]]$ coming from our quantization is non-trivial. 

We will show that these two Poisson brackets coincide, up to a non-zero constant which will be normalized later.  This will be enough to prove that the Koszul dual of $\Obs_{z_0}$ gives the correct Lie bialgebra structure, because, as I will now explain, the Lie cobracket on $\g[[z]]$ is determined by the Poisson bracket on $HH_\ast(\Sym \g[[z]])^\vee$. 

$Y(\g)$ is a filtered Hopf algebra over $\C[[\hbar]]$, with a filtration indexed by $\Z$. The cocommutator on $Y(\g)$, modulo $\hbar^2$, gives a co-Poisson bracket on the associated graded,
$$
\delta:  \Sym (\g[[z]]) \to \Sym(\g[[z]]) \otimes \Sym(\g[[z]]).
$$
Thus, if $c$ denotes the trivial coproduct on $\Sym \g[[z]]$, which makes $\g[[z]]$ primitive, $c + \hbar \delta$ is a coalgebra homomorphism modulo $\hbar^2$. 

The co-Poisson bracket $\delta$ is determined by its behavior on $\g[[z]]$, and there it is given by the Lie coalgebra structure for the Yangian. 

Since $\Sym (\g[[z]])$ is a commutative algebra, $HH_0(\Sym(\g[[z]])) = \Sym(\g[[z]])$. Thus, applying the functor $HH_0$ to the algebra homomorphism $c + \hbar \delta$ gives us back the same homomorphism $c + \hbar \delta$.   This shows that the map
$$
HH_0( \Sym (\g[[z]])) \to HH_0(\Sym (\g[[z]])) \otimes HH_0(\Sym(\g[[z]]))
$$
encodes the Lie coalgebra structure which is the semi-classical limit of the Yangian. 

Now, Koszul duality tells us that the $E_2$ algebra $\Obs_{z_0}$ is Koszul dual to some quantization of $U (\g[[z]])$.  The Lie cobracket describing the semi-classical behavior of this quantization is encoded in the Poisson bracket
$$
HH_0(\op{Gr} \Obs^{cl}_{z_0}) \to HH_0(\op{Gr} \Obs^{cl}_{z_0}) \otimes HH_0(\op{Gr} \Obs^{cl}_{z_0} ).
$$
We thus need to calculate this Poisson bracket.  

The Hochschild-Kostant-Rosenberg theorem tells us that 
$$
HH_\ast ( \op{Gr} C^\ast (\g[[z]])  ) = \what{\Sym} ( \g[[z,\delta]]^\vee[-1] )
$$
where $\delta$ is a parameter of cohomological degree $1$.  Thus, 
$$
HH_0 ( \op{Gr} C^\ast (\g[[z]] ) )  = \Sym^\ast  ( \g[[z]]^\vee ) ,
$$
as an algebra.   We are interested in the Poisson bracket on $HH_0$. 

Now, we can also identify
$$
HH_\ast ( \op{Gr} C^\ast (\g[[z]])  ) = H^\ast( \op{Gr} \Obs^{cl} (z_0 \times \C^\times_w  ) ) .
$$
At the cochain level, we have
$$
\Obs^{cl}(z_0 \times \C^\times_w) = C^\ast ( \Omega^\ast(\C^\times_w) [[z]]  \otimes \g ) .
$$
Thus,
$$
H^0 ( \op{Gr} \Obs^{cl}(z_0 \times \C^\times_w) ) = \Sym^\ast ( \g[[z]]^\vee \otimes H^1 (S^1) )  
$$
where $H^1(S^1)$ is situated in cohomological degree $0$.  

The space
$$\Obs(z_0 \times \C^\times_w )$$
of observables supported on $\C^\times_w$ (or, equivalently, on an annulus) has a homotopy associative product, coming from the operator product.  We will show how this induces a Poisson bracket on the cohomology of $\op{Gr} \Obs^{cl } (z_0 \times \C^\times_w)$. 

Let $A$ be any filtered associative dg algebra $A$ over $\C[\hbar]/\hbar^2$, which is commutative modulo $\hbar$, and where (as usual) $\hbar F^i A \subset F^{i+2} A$.  Let $A^{cl} = A \op{mod} \hbar$. Then, the cohomology of $\op{Gr} A^{cl}$ has a sequence of Poisson brackets, where the $n$'th bracket is defined if the  $n-1$'th bracket is zero.  The definition is as follows. 

If $\alpha \in H^\ast \op{Gr}^i A^{cl}$, $\beta \in H^\ast \op{Gr}^j A^{cl}$, we let $\til{\alpha}, \til{\beta}$ be lifts of $\alpha,\beta$ to elements of $F^i A$ and $F^j A$ respectively.  Note that $\til{\alpha}, \til{\beta}$ are not necessarily closed, they are only closed in $F^i A^{cl} / F^{i+1} A^{cl}$ and $F^j A^{cl} / F^{j+1} A^{cl}$.  

The commutator $[\til{\alpha}, \til{\beta} ]$ is an element of $F^{i+j} A$, which is closed modulo $F^{i+j+1} A$.  Since the cohomology of $A$ is commutative modulo $\hbar$, we can lift this commutator to an element of $\hbar F^{i+j-2} A$, which is closed modulo $\hbar F^{i+j-1}$.  The cohomology class of this commutator is the Poisson bracket
$$
\{\alpha,\beta\}_{-2} \in H^\ast \op{Gr}^{i+j-2} A^{cl}. 
$$
If this Poisson bracket vanishes, then (by adding on the appropriate homotopy) we can lift the commutator to a closed element of $\hbar F^{i+j-1}$, and so get a secondary bracket
$$
\{\alpha,\beta\}_{-1} \in H^\ast \op{Gr}^{i+j-1} A^{cl}. 
$$
And so on. 

We will apply this construction to the example of $\Obs(z_0 \times \C^\times_w)$. We have a commutator map of filtered complexes
$$
\Obs (z_0 \times \C^\times_w) \times \Obs (z_0 \times \C^\times_w) \times \to \Obs (z_0 \times \C^\times_w).
$$
Modulo $\hbar$, this is homotopically trivial, this giving a Poisson bracket map
$$
\Obs^{cl} (z_0 \times \C^\times_w) \times \Obs^{cl} (z_0 \times \C^\times_w) \times \to \Obs^{cl} (z_0 \times \C^\times_w).
$$
which is a map of filtered complexes with a shift: it maps $F^i$ on the left hand side to $F^{i-2}$ on the right hand side.  At the level of the associated graded, if this map is homotopically trivial, we get a secondary bracket as above, which maps $F^i$ to $F^{i-1}$. We need to compute this secondary bracket. 

To explain the answer, we need some notation for elements of $\Obs^{cl}(z_0 \times \C^\times_w)$. If $X \in \g$, define an observable 
$$X[k](r) \in \Obs^{cl} ( z_0 \times C^\times_w ) $$
to be the linear map
\begin{align*}
X[k](r) : \Omega^1 (\C^\times_w) \otimes \C[[z]] \otimes \g &\to \C\\
X[k](r) ( \alpha \otimes f(z) \otimes Y ) = \ip{X,Y}( \delta_z^k f )\mid_{z = 0}\int_{\abs{w} = r} \alpha.
\end{align*}
Note that $X[k](r)$ is a closed element of cohomological degree $0$ of $\op{Gr} \Obs^{cl } (z_0 \times \C^\times_w)$.  The cohomology class $[X[k](r)]$ of $X[k](r)$ is independent of $r$: we can thus call the cohomology class $[X[k]]$. 

\begin{proposition}
The Poisson bracket map
$$
\op{Gr}^i \Obs^{cl} (z_0 \times \C^\times_w) \times \op{Gr}^j \Obs^{cl} (z_0 \times \C^\times_w) \times \to \op{Gr}^{i+j-2}\Obs^{cl} (z_0 \times \C^\times_w).
$$
is homotopically trivial.  This allows us to define a secondary Poisson bracket
$$
\op{Gr}^i \Obs^{cl} (z_0 \times \C^\times_w) \times \op{Gr}^j \Obs^{cl} (z_0 \times \C^\times_w) \times \to \op{Gr}^{i+j-1}\Obs^{cl} (z_0 \times \C^\times_w).
$$
At the level of cohomology, this secondary Poisson bracket is given by the formula
$$
[ X[k], Y[l] ] = \alpha \frac{k! l!}{(k+l+1)!} [X,Y][k+l+1]
$$
where $\alpha$ is a certain non-zero constant. 
\end{proposition}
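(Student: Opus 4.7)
The plan is to compute both the primary and secondary brackets directly via Feynman diagrams in the perturbative BV quantization of the deformed holomorphic BF theory. The classical observables on $z_0 \times \C^\times_w$ form $C^\ast(\Omega^\ast(\C^\times_w) \otimes \g[[z-z_0]])$ with the completed symmetric (polynomial-degree) filtration $F^i$, and each $X[k](r)$ lies in the linear ($\op{Gr}^1$) part. Under the standard Feynman rules, the $E_1$-product on $\Obs^q(z_0 \times \C^\times_w)$ coming from nested annuli is the Wick product with propagator contractions, supplemented by tree and loop corrections coming from the cubic Chern--Simons--type vertex.

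For the vanishing of the primary bracket at shift $-2$: this would be a scalar computed by a single propagator contraction between the two linear observables. I would show this vanishes by analyzing the propagator $P$ of the deformed holomorphic BF theory restricted to $z_1 = z_2 = z_0$. Since the theory is topological in $w$, $P|_{z_1 = z_2}$ is the Bochner--Martinelli-type kernel producing the linking $1$-form in the $w$-directions; its integral over a product of two concentric circles depends only on their linking number, which is the same for $r_1 < r_2$ and $r_1 > r_2$. The commutator, being the difference of these two orderings, therefore vanishes modulo $\hbar^2$.

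For the secondary bracket at shift $-1$: the relevant Feynman diagram is a tree with a single cubic vertex from the deformation $\tfrac{1}{2\pi i}\int \d z\,\op{CS}(A)$, two propagators running from the vertex to the external circles carrying $X[k]$ and $Y[l]$, and one free outgoing leg that produces the linear output observable. The $\g$-structure constants at the vertex assemble into $[X,Y]$. To evaluate the integral I would exploit the holomorphic/topological factorization of the propagator: the $z_v$-integral over the vertex position, combined with the $\d z$ factor coming from the deformation vertex, produces a Cauchy-kernel residue selecting the $z^{k+l+1}$-coefficient from $z^k \cdot z^l$ via $\partial_z^{k+l+1}$ evaluation, while the radial $w_v$-integral between $r_1$ and $r_2$, after parametrizing $r_v = (1-t)r_1 + t r_2$ and using the topological nature of the propagator in $w$, reduces to $\int_0^1 t^k(1-t)^l\,\d t = \tfrac{k!\,l!}{(k+l+1)!}$. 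The output field at position $(z_v, w_v)$ is thus the linear observable $[X,Y][k+l+1]$ with the claimed coefficient.

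The main obstacle will be the Feynman computation: identifying the precise form of the propagator for the mixed holomorphic--topological theory (including the gauge-fixing choice) and verifying the vanishing claim in the second paragraph with appropriate regularization; checking that no additional diagrams (loops, higher trees) contribute at the relevant filtration level $F^1$ for the secondary bracket; and extracting the Beta function cleanly from the $w$-integration. The overall constant $\alpha$ can be left unfixed since only its non-vanishing is needed to identify the induced Lie cobracket on $\g[[z]]$ with the Yangian cobracket up to a global rescaling of $\hbar$.
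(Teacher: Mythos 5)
Your identification of the relevant diagrams is correct (a single contraction for the degree $-2$ bracket, a single cubic vertex with two propagators and one external leg for the degree $-1$ bracket), but there is a genuine gap at the one step that carries all the quantitative content: the claim that the coefficient $\tfrac{k!\,l!}{(k+l+1)!}$ comes out of the radial $w$-integral via the parametrization $r_v=(1-t)r_1+tr_2$ and the Beta integral $\int_0^1 t^k(1-t)^l\,\d t$. The theory is topological in the $w$-direction, so the answer cannot depend on $r_1,r_2$ at all, and the $k,l$-dependence enters only through the $\partial_z^k$, $\partial_z^l$ acting on the holomorphic part of the propagator (derivatives of Cauchy-type kernels) together with the normalization of the output observable $[X,Y][k+l+1]$ (which eats $\partial_z^{k+l+1}$, hence a $(k+l+1)!$). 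Nothing in your sketch justifies that the vertex integral localizes between the two circles with a polynomial weight $t^k(1-t)^l$; as written this misattributes the source of the factor, and you yourself flag the computation as the main obstacle, so the formula is not actually established. Your vanishing argument for the primary bracket is also heuristic: two circles at the same $z_0$ in a four-manifold do not have a linking number in the Chern--Simons sense, so ``same linking number for both orderings'' is not a proof (though the conclusion is right: the constant one-loop term vanishes).

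The paper avoids the general $(k,l)$ computation entirely, and you could too. The secondary bracket is automatically a $\C^\times$-equivariant Lie cobracket on $\g[[z]]$ compatible with the Lie bracket, taking $z^k\g$ into $\bigoplus_{r+s=k-1} z^r\g\otimes z^s\g$; a short algebraic lemma shows that for $\g$ simple there is exactly one such Lie bialgebra structure up to scale, and under the identification $X[k]\leftrightarrow k!\,X z^{-k-1}$ (residue pairing) it is exactly $[X[k],Y[l]]=\alpha\tfrac{k!\,l!}{(k+l+1)!}[X,Y][k+l+1]$. So the only Feynman input needed is non-vanishing, i.e.\ the single bracket $\{X[0],Y[0]\}$, which the paper computes by projecting onto harmonic forms on $S^1$, reducing to a three-dimensional kernel computation ($\GF\tr^{-1}z\,\GF\tr^{-1}$, with $\GF=2\dbar^\ast_z+\d^\ast_w$) whose kernel at $z=0$ is a step function in $w$; the jump across $w=0$ is the bracket. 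The vanishing of the degree $-2$ bracket likewise has a structural proof via the augmentation map $\Obs_{z_0}\to\Obs_{\mbb{P}^1}\simeq\C[[\hbar]]$ (equivalently, the constant one-loop term vanishes). If you want to keep your fully explicit route, you must actually carry out the $z$-derivative bookkeeping for general $k,l$ and show where the factorial ratio really arises; as it stands the proposal proves neither the vanishing statement nor the displayed formula.
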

\begin{remark}
We won't calculate the precise value of $\alpha$ here.  We will show in lemma \ref{lemma_Rmatrix_expectation_value} that if we normalize the action functional of our theory to $\frac{1}{2 \pi i} \d z CS(A)$ we find the usual normalization of the Yangian.  This lemma will also allow us to give another proof, in proposition \ref{proposition_bialgebra_alternative}, that the Lie bialgebra structure we find is that corresponding to the Yangian.   
\end{remark}
\begin{remark}
We identify the dual of $\g[[z]]$ with $z^{-1} \g[z^{-1}]$ using the residue pairing and the invariant pairing on $\g$.  If we do this, then $X[k] \in \g[[z]]^\vee$ corresponds to $k^! X z^{-k-1}$. Thus, the Lie coalgebra structure on $\g[[z]]$ we are computing is, up to a constant, the natural bracket on $z^{-1} \g[z^{-1}]$.
\end{remark}
\begin{proof}
The existence of the augmentation map for the $E_2$ algebra of observables $\Obs_{z_0}$ implies that the bracket $\{-,-\}_{-w}$ vanishes.   The secondary bracket gives us a Lie bialgebra structure on $\g[[z]]$, compatible with the given Lie algebra structure.  $\C^\times$-invariance of our construction implies that the cobracket
maps 
$$
z^k \g \to \sum_{r + s = k - 1} z^r \g \otimes z^s \g. 
$$
A simple algebraic lemma shows that if $\g$ is simple, then there is exactly one Lie bialgebra structure on $\g[[z]]$ with these properties, where the cobracket
$$
z \g \to \g \otimes \g
$$
is dual to the bracket on $\g$, under the invariant pairing. 

To fix our Lie coalgebra structure, we need to calculate the Poisson bracket $\{X[0], Y[0]\}$.   (Note that the result will hold for an arbitrary $\g$, not just simple  $\g$; the reason is that the Poisson bracket of $\{X[k], Y[l]\}$ is expressed in terms of $[X,Y][k+l+1]$ by some universal function of $k$ and $l$, which we can determine by considering the simple case). Since we are only treating the case when $k = l = 0$, we will use the notation $X,Y$ instead of $X[0]$, $Y[0]$ for these observables. 

Let us now turn to the explicit calculation of $\{X, Y\}$. 
This proof is the only part of this paper where we rely on the details of the constructions in \cite{CosGwi11}: the calculation is in terms of Feynman diagrams. 

Let
$$
\E = \Omega^{0,\ast}(\C_z) \what{\otimes} \Omega^\ast(\C^\times_w) \otimes \g[1]
$$
be the space of fields of our theory.  In order to do Feynman diagram calculations, we need to choose a gauge fixing condition: we will use $\dbar^\ast_z + \d^\ast_w.$ The choose of gauge fixing condition allows us to define a propagator
$$
P \in \br{\E} \what{\otimes} \br{\E}
$$
(where $\br{\E}$ indicates the distributional completion: the propagator has singularities along the diagonal). 
The propagator can be expressed in terms of the heat kernel $K_t$.  

We let 
$$
K_t^{scalar} \in \cinfty(\C_z \times \C^\times_w) \what{\otimes} \cinfty(\C \times \C^\times_w)
$$
be the usual scalar heat kernel for the Laplacian.  Then, the heat kernel of interest is
$$
K_t = K_t^{scalar} \d (\zbar_1 - \zbar_2) \d (\bar{w}_1 - \bar{w}_2) \d (w_1 - w_2) \in \E \otimes \E. 
$$
The propagator can be defined in terms of the heat kernel by
$$
P = \int_{t = 0}^{\infty}  ( (\dbar_z ^\ast + \d^\ast_w) \otimes 1) K_t \d t. 
$$
We can write
$$
P = c \otimes P_0
$$
where $c \in \g \otimes \g$ is the quadratic Casimir (i.e.\ the inverse to the invariant pairing on $\g$), and
$$
P_0 \in \Omega^{0,\ast}(\C_z) \what{\otimes} \Omega^\ast(\C^\times_w)[1] .
$$

We are also interested in the propagator with an infrared cutoff,
$$
P(0,L) = \int_{t = 0}^{L}  ( (\dbar_z ^\ast + \d^\ast_w) \otimes 1) K_t \d t.
$$
 
We need to recall some details of the construction of factorization algebras in \cite{CosGwi11}.  There, we define an observable $O$ to be a family of functionals
$$
O[L] \in \Oo ( \E) [[\hbar]]
$$
on the space $\E$ of fields, depending on a parameter $L \in \R_{> 0}$.  There is a differential 
$$
\d_L = Q + \{I[L], - \}_L + \hbar \tr_L
$$
on the space $\Oo(\E)[[\hbar]]$, depending on $L$.  Here, $I[L]$ is the scale-$L$ effective interaction, $\tr_L$ is the scale-$L$ BV operator, $\{-,-\}_L$ is the scale-$L$ BV bracket, and $Q$ is simply the linear differential on the space $\E$.   

The functionals $O[L]$ and $O[L']$ must be related by the renormalization group flow, which is a linear isomorphism of $\Oo(\E)[[\hbar]]$ intertwining the differentials $\d_L$ and $\d_{L'}$. The differentials $\d_L$ make the space of observables $\Obs(\C^2)$ into a cochain complex.  If $U \subset \C^2$, we say that $O$ is in $\Obs(U)$ if $O[L]$ is supported on $U$ in a limit as $L \to 0$. (Details are presented in \cite{CosGwi11}; I'm being vague about the definition of $\Obs(U)$, as we will not need the detailed definition in our calculation).

Now let's start the computation of the Poisson bracket. Let us change notation a little bit, and coordinatize $\C^\times_w$ as $S^1 \times \R$. We let $S^1_\eps$ be the circle in $\C_z \times S^1 \times \R$ given by $0 \times S^1 \times \eps$.  For $\eps \in \R$, and $X \in \g$, let $X(\eps)$ be the classical observable which is the linear function on the space of fields sending a field $\phi$ to
$$
X(\eps)(\phi) = \oint_{S^1_\eps} \ip{X, \phi} .
$$
Evidently, $X$ only extracts the part of $\phi$ living in $\Omega^0(\C)\what{\otimes} \Omega^1(S^1) \what{\otimes} \Omega^0(\R) \otimes \g$. 

Let $\til{X}(\eps)$ be a lift of $X(\eps)$ to an observable defined modulo $\hbar^2$, in $\Obs(S^1_\eps)$.  Consider the commutator
\begin{equation*}
[\til{X}, \til{Y}]_{\eps,\delta} =  (\til{X}(\eps) \til{Y}(\delta) - \til{X}(\delta) \til{Y}(\eps) )  
\end{equation*}
for $\eps < \delta \in \R$.  Here juxtaposition indicates the product in the factorization algebra.
  
We need to compute this commutator observable explicitly. An observable is a family of functionals, one for each $L > 0$; we will fix some $L$, and write down a formula for the relevant terms on $[\til{X}, \til{Y}]_{r,s}[L]$. 

The only term that will be of interest to use will be the term which occurs at one loop, and which is a linear functional on the space of fields.  There are also classical terms, with $2$ or more inputs, but these will not be important: they will disappear as $r \to s$, and we will take this limit shortly. The constant one-loop term is zero, which guarantees that the bracket $\{-,-\}_{-2}$ described above vanishes.

We thus let $[\til{X}, \til{Y}]^{1,1}_{\eps,\delta}[L]$ be the term which occurs at one loop (i.e.\ the coefficient of $\hbar$) and which is linear as a function on the space of fields. We want to compute this linear function on the field $z Z$, where $Z \in \g$.  The following formula is easy calculation in the theory developed in \cite{CosGwi11}:
\begin{multline*}
[\til{X}, \til{Y}]^{1,1}_{r,s}[L] (z Z) \\
= \ip{[X,Y],Z}_{\g} \left\{ \oint_{\abs{w_0} = r, z_0 = 0} \oint_{\abs{w_2} = s, z_2 = 0} -  \oint_{\abs{w_0} = s, z_0 = 0} \oint_{\abs{w_2} = r, z_2 = 0}   \right\}   \int_{z_1,w_1 \in \C_z \times \C^\times_w} \\  P_0(0,L)((z_0,w_0), (z_1,w_1) ) \d z_1 z_1  P_0(0,L)((z_1,w_1), (z_2,w_2) ) .
\end{multline*}
This integral corresponds to the Feynman diagram with three vertices, trivalent and two univalent; two internal lines, and one external line.  The two univalent vertices are labelled by the operators $X(\eps)$ and $Y(0)$, supported at the circles at $\eps$ and $0$. These are both linear functions on the space of fields. The trivalent vertex is labelled by the Chern-Simons interaction term $\int \d z\tfrac{1}{6} \ip{A,[A,A]}$. The external line is labelled by the field $z Z$.  Hopefully this expression is familiar to physicists: we're basically computing an OPE in the standard way.

The cohomology class of the Poisson bracket we are computing is independent of $\eps$ and $\delta$.  Let us set $\delta = 0$.  We can thus compute that the cohomology class of this commutator can be described as follows.  Consider the operator product $X(\eps) Y(0)$.  This defines a family of observables depending on $\eps \in \R \setminus \{0\}$.  As we vary $\eps$ without crossing $0$, the cohomology class of this observable does not change.  Thus, the commutator can be computed as the ``jump'' of this observable $X(\eps) Y(0)$ as $\eps$ crosses $0$. 

Classical terms do not have a jump, but the one-loop quantum term described above does.    

In the $\eps \to 0$ limit, sending $L \to \infty$ does not change the answer.  As, the jump we see when $\eps$ crosses zero only depends on the singularities in the integrand.   Sending $L \to \infty$ does not change the singularities. 

Next, observe that the integral above is expressed using the propagator for the theory on $\C_z \times \C^\times_w$.  Here, the space of fields is 
$$\Omega^{0,\ast}(\C_z) \what{\otimes} \Omega^\ast(\R) \what{\otimes} \Omega^\ast(S^1)  \otimes \g[1].$$  
Let us write 
$$
\Omega^\ast(S^1) = \mc{H}(S^1) \oplus \mc{H}(S^1)^{\perp}
$$
where $\mc{H}(S^1)$ is the space of harmonic forms.  If we replace the propagator $P_0(0,\infty)$ in the above expression to the projection of the propagator onto 
$$\left\{\Omega^{0,\ast}(\C_z) \what{\otimes} \Omega^\ast(\R) \otimes \mc{H}(S^1)\right\}^{\otimes 2}$$ 
we get the same answer.  This is because the observables $X(\eps)$ and $Y(0)$ we put at $\eps$ and $0$ take value zero on fields in $\mc{H}(S^1)^{\perp}$.  

This projection of the propagator is the propagator for a three-dimensional theory, on $\C_z \times \R_w$. The fields of this $3$-dimensional theory, in the $BV$ formalism, are $\Omega^{0,\ast}(\C) \what{\otimes} \Omega^\ast(\R) \otimes \g[\delta][1]$ where $\delta$ corresponds to the class $\ d \theta / 2 \pi$ in $H^1(S^1)$. The action is the usual Chern-Simons type action. 

In traditional terms, the fields of this $3$-dimensional theory are a partial connection
$$
A = A_w \d w + A_z \d \zbar
$$
and 
$$
B \in \eps \Omega^0(\C_z \times \R_w). 
$$
The action functional is
$$
\int \d z \ip{B, F(A)}_{\g}
$$
and the Lie algebra of the gauge group is, as usual, $\Omega^0(\C_z \times \R_w) \otimes \g$. 

For $\eps \in \R$, let $X(\eps)$ be the observable for this field theory which sends the field $B$ to 
$$X(\eps)(B) = \ip{X,B}_{g}(z = 0, w = \eps).$$
Let $P$ denotes the propagator of this $3$-dimensional theory. Let us write $P = c \otimes P_0$ where $c \in \g \otimes \g$ is the inverse to the invariant pairing on $\g$. We are computing
$$
\int_{z,w}   P_0 (0,(z_1,w_1)) z P_0((z_1,w_1), (0,\eps) ).
$$
We want to show that, as a function of $\eps$, this is a Heaviside step function: it takes value $1$ if $\eps > 0$, and value $0$ if $\eps < 0$.

The propagator for the $3$-dimensional theory is defined using a gauge fixing condition
$$\GF=2 \dbar^\ast_z + \d^\ast_w.$$ 
(The factor of $2$ is a convenient normalization, as $[\dbar, 2 \dbar^\ast]$ is the usual Laplacian). 

Let
$$
\tr = - \dpa{z}\dpa{\zbar} - \dpa{w}^2 
$$
be the Laplacian.  Then, the propagator above is the kernel for the operator
$$
\GF \tr^{-1}
$$
which acts on the space $\Omega^{0,\ast}(\C)\what{\otimes} \Omega^\ast(\R)$. The integral above we need to compute is the kernel for the composition
$$
\GF \tr^{-1} z \GF \tr^{-1}
$$
Note that 
$$
[z,\GF] = 2 [z, \dbar^\ast_z] = -  \dpa{\d \zbar}.
$$
Since $(\GF)^2 = 0$, we have
$$
\GF \tr^{-1} z \GF \tr^{-1} = -  \GF \dpa{\d \zbar} \tr^{-2}.
$$
Next,
$$
-\GF \dpa{\d \zbar} = \dpa{\d \zbar} \d^\ast_w. 
$$

Further, the kernel for the operator $\tr^{-2}$ is 
$$\alpha r \d \zbar \d w,$$
where $r^2 = \abs{z}^2 + w^2$ and $\alpha$ is a certain non-zero constant.

It follows that the kernel for $\GF \tr^{-1} z \GF \tr^{-1}$ is 
$$
\alpha \frac{\d}{\d w} r = \alpha r^{-1} w.
$$

Ultimately, we are setting $z = 0$, so we find a Heaviside step function with value $\alpha^{-1}$ when $w > 0$ and $-\alpha^{-1}$ when $w < 0$.  Since the commutator we are computing is expressed in terms of the jump of this expression as we cross the origin, we are done. 
\end{proof}

\subsection{}

We need a categorical corollary of this theorem (which follow from the results of section \ref{Koszul_duality_categorical}). 
\begin{corollary}
\label{cor:Yangian_equiv_categories}
There are quasi-equivalences of dg monoidal categories
$$
\op{Perf}(\F_{z_0}) \simeq \op{Fin}(Y^\ast(\g)-\op{comod}) \simeq \op{Fin}(Y(\g)-\op{mod}).
$$
Here, $\op{Perf}$ refers to the category of perfect modules, and $\op{Fin}$ refers to categories of modules of finite rank.  (For precise definitions, see section \ref{Koszul_duality_categorical}: because $Y(\g)$ is a topological algebra, some care is needed). 
\end{corollary}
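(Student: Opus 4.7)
The plan is to deduce this corollary by feeding the main theorem (Theorem \ref{thm:main}) into the abstract Koszul duality machinery developed in Section \ref{Koszul_duality_categorical}. First I would observe that $\F_{z_0}$ is an augmented $E_2$ algebra in the category $\mc{C}^b$ of complete filtered cochain complexes over $\C[[\hbar]]$ (with $F^\bullet$ coming from powers of the augmentation ideal in the underlying commutative dga $C^\ast(\g[[z]])$ modulo $\hbar$), and that by Theorem \ref{thm:main} its Koszul dual satisfies $H^i(\F_{z_0}^!) = 0$ for $i \neq 0$ and $H^0(\F_{z_0}^!) = Y^\ast(\g)$, which is flat as a $\C[[\hbar]]$-module. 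These are precisely the hypotheses needed to invoke the corollary of Subsection \ref{subsection_Koszul_E2} which upgrades the quasi-equivalence $\op{Perf}(A) \simeq \op{Fin}(H^0(A^!))$ to an equivalence of \emph{monoidal} dg categories.

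The second step is then immediate: applying Proposition \ref{proposition_perfect_finite} together with the monoidality upgrade just mentioned to $A = \F_{z_0}$ yields the first quasi-equivalence
$$
\op{Perf}(\F_{z_0}) \simeq \op{Fin}(Y^\ast(\g)\text{-}\op{comod})
$$
as dg monoidal categories, where on the right the monoidal structure is the one induced by the Hopf algebra structure on $H^0(\F_{z_0}^!) = Y^\ast(\g)$. The bialgebra/Hopf structure here is forced on $Y^\ast(\g)$ by the $E_2$ structure on $\F_{z_0}$ via the Tannakian argument of Subsection \ref{subsection_Koszul_E2}, and by Proposition \ref{proposition_bialgebra} this Hopf structure is (uniquely) the one quantizing the Lie bialgebra on $\g[[z]]$ whose enveloping Hopf algebra is the Yangian, so it agrees with the standard Hopf structure on $Y^\ast(\g)$.

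For the second quasi-equivalence I would invoke Corollary \ref{corollary_dual_algebra_finite_equivalence} from Subsection \ref{sub:dual_coalgebra}, which gives an equivalence $\op{Fin}(C^\vee) \simeq \op{Fin}(C)$ for any coalgebra $C$ in $\mc{C}^b$ (with $C^\vee$ equipped with its natural filtered topological algebra structure). Applied to $C = Y^\ast(\g)$, whose continuous $\C[[\hbar]]$-linear dual is precisely the Yangian $Y(\g)$ as a topological Hopf algebra, this yields
$$
\op{Fin}(Y^\ast(\g)\text{-}\op{comod}) \simeq \op{Fin}(Y(\g)\text{-}\op{mod}).
$$
One then checks that this equivalence is monoidal: the monoidal structures on both sides are induced by the same Hopf algebra data (coproduct on $Y^\ast(\g)$, dually product on the coalgebra side of $Y(\g)$), and the equivalence is built from continuous linear duality, which interchanges product and coproduct compatibly.

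The main obstacle, and the place where one must be most careful, is the topological/filtration bookkeeping in the last step: $Y(\g)$ is a topological, completed Hopf algebra, so ``finite rank $Y(\g)$-module'' must be interpreted as in Corollary \ref{corollary_dual_algebra_finite_equivalence} (complexes of injective objects in $\op{Ind}F\op{Fin}(Y(\g))$ with finite-rank cohomology), and one has to verify that the monoidal structure coming from Koszul duality on the comodule side matches the one on $Y(\g)$-modules given by the coproduct. This is where the compatibility between the Tannakian reconstruction in Subsection \ref{subsection_Koszul_E2} and the duality lemma of Subsection \ref{sub:dual_coalgebra} has to be pinned down; once that is done, the three categories line up and the corollary follows.
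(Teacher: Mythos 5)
Your proposal is correct and follows essentially the same route as the paper, whose proof of this corollary is simply a one-line appeal to the results of section \ref{Koszul_duality_categorical} combined with Theorem \ref{thm:main}. Your elaboration — checking the hypotheses of the $E_2$ Koszul duality corollary of subsection \ref{subsection_Koszul_E2}, using Proposition \ref{proposition_perfect_finite} for $\op{Perf}(\F_{z_0}) \simeq \op{Fin}(Y^\ast(\g)\text{-}\op{comod})$, and Corollary \ref{corollary_dual_algebra_finite_equivalence} for the passage to $Y(\g)$-modules — is exactly the chain of results the paper's citation is invoking.
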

\begin{proof}
This follows immediately from the results proved in section \ref{Koszul_duality_categorical}. 
\end{proof}

\section{Wilson loops}
In this section, I will show how finite-dimensional representations of the Yangian give us Wilson loop operators for our twisted deformed supersymmetric gauge theory. 

We will consider the theory compactified on a circle.   We have seen that our theory makes sense on any Calabi-Yau surface $X$ equipped with a holomorphic volume form and a holomorphic $1$-form.  We will work with $\C_z \times \C^\times_w$, with volume form $w^{-1} \d z \d w$ and holomorphic one-form $\d z$.  With these choices, the theory is described by the dg Lie algebra
$$
\Omega^{0,\ast}(\C_z \times \C^\times_w, \g) \xto{w \partial_w} \Omega^{0,\ast}(\C_z \times \C^\times_w, \g) .
$$
As before, solutions to the equations of motion are holomorphic bundles on $\C_z \times \C^\times_w$ with a holomorphic (and therefore flat) connection along the fibers of the projection $\C_z \times \C^\times_w \to \C_z$.

The observables of this theory are given by a factorization algebra $\F$ on $\C_z \times \C^\times_w$.  We are interested in Wilson loop operators supported on the circles $z_0 \times S^1_R$, where $S^1_R = \{\abs{w} = R\}$.  Such operators will be elements of $\F(U)$ for any $U \subset \C_z \times \C^\times_w$ containing $z_0 \times R$. 

Let $D(z_0,r)$ be a disc around $z_0 \in \C_z$, and $A(r_1,r_2)$ be the annulus $r_1 < \abs{w} < \abs{r_2}$.   The cochain complex $\F(D(z_0,r)  \times A(r_1,r_2))$ has an $S^1$ action given, as before, by rotating in the $z$-plane with center $z_0$.  We let $\F^{k}((D(z_0,r) \times A(r_1,r_2 ))$ denote the $k$-eigenspace for this $S^1$ action.  Let
$$
\F(z_0 \times A(r_1,r_2) ) = \oplus_k  \F^{k}((D(z_0,r) \times A(r_1,r_2 )).
$$
As before, we should think of $\F(z_0 \times A(r_1,r_2))$ as being the $r \to 0$ limit of $\F((D(z_0,r) \times A(r_1,r_2))$.  

Note that the fact our factorization algebra $\F$ is locally constant in the $w$-direction means that, if $A(r_1,r_2) \subset A(r_1', r_2')$, the map
$$
\F(z_0 \times A(r_1,r_2) )  \into \F(z_0 \times A(r'_1,r'_2) ) 
$$
is a quasi-isomorphism.  Thus, we will just use the notation $\F(z_0 \times S^1)$ for this cochain complex. If we want to emphasize the radius of the circle $S^1_R \subset \C^\times_w$, we will use the notation $\F(z_0 \times S^1_R)$  to denote any $\F(z_0 \times A(r_1,r_2))$ where $r_1 < R < r_2$. 
\begin{lemma}
The cochain complex $\F(z_0 \times S^1)$ has a natural structure of $E_1$ (or homotopy associative) algebra, arising from the operator product of loop operators in the $w$-plane.  
\end{lemma}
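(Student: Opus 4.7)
The plan is to realize $\mathcal{F}(z_0 \times S^1)$ as the value of a locally constant factorization algebra on $\mathbb{R}_{>0}$ and then appeal to the standard result, used several times earlier in the paper, that locally constant factorization algebras on $\mathbb{R}$ are the same data as $E_1$ algebras.

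First, I would fix a small disc $D(z_0,r) \subset \mathbb{C}_z$ and consider the restriction of $\mathcal{F}$ to $D(z_0,r) \times \mathbb{C}^\times_w$. Pushing this forward along the radial projection $\rho : \mathbb{C}^\times_w \to \mathbb{R}_{>0}$, $w \mapsto \lvert w\rvert$, and then taking the $S^1$-eigenspace decomposition with respect to rotation around $z_0$ (exactly as in the construction of $\mathcal{F}_{z_0}$), one obtains a prefactorization algebra $\mathcal{G}$ on $\mathbb{R}_{>0}$ whose value on an open interval $(r_1,r_2)$ is $\mathcal{F}(z_0 \times A(r_1,r_2))$ in the notation introduced above. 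The key input is that our theory is topological in the $w$-direction: the same spectral-sequence argument that shows $\pi_\ast(\mathcal{F}\mid_{U_z \times \mathbb{C}_w})$ is locally constant on $\mathbb{C}_w$ shows that $\mathcal{G}$ is locally constant on $\mathbb{R}_{>0}$, i.e.\ the inclusion $(r_1,r_2) \subset (r_1',r_2')$ induces a quasi-isomorphism whenever it does on the associated graded modulo $\hbar$, and there it is clear from the Dolbeault/de~Rham model for the underlying dg Lie algebra.

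Second, with $\mathcal{G}$ locally constant on $\mathbb{R}_{>0}$, the general theorem of Lurie (cited for $n=2$ earlier, but classical in the case $n=1$) produces a canonical $E_1$-structure on the cochain complex $\mathcal{G}(r_1,r_2) \simeq \mathcal{F}(z_0 \times S^1)$, and this $E_1$-structure is independent, up to canonical equivalence, of the auxiliary interval chosen. Concretely, the multiplication is the structure map of the prefactorization algebra associated to a disjoint inclusion $(r_1,r_2) \sqcup (r_2',r_3) \hookrightarrow (r_1,r_3)$ with $r_2 < r_2'$: translated back to $\mathbb{C}^\times_w$ it is the map
\[
\mathcal{F}(z_0 \times S^1_{R_1}) \otimes \mathcal{F}(z_0 \times S^1_{R_2}) \longrightarrow \mathcal{F}\!\left(z_0 \times A(r_1,r_3)\right) \simeq \mathcal{F}(z_0 \times S^1),
\]
for any radii $r_1 < R_1 < r_2 < r_2' < R_2 < r_3$, where the quasi-isomorphism on the right is the local-constancy statement just established. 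Associativity and unitality follow from the prefactorization associativity axiom together with the fact that stacking three nested annuli can be bracketed in either order.

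The only point requiring a little care is compatibility of the $S^1$-eigenspace decomposition (used to cut down to the formal disc around $z_0$) with the multiplication. The rotation action on $\mathcal{F}$ constructed in the previous section is by automorphisms of the factorization algebra, and it acts trivially on the $w$-coordinate; hence each structure map of $\mathcal{G}$ is $S^1$-equivariant, and the passage to the direct sum of eigenspaces is a monoidal operation in the relevant sense. I expect this compatibility, together with the verification that the pushforward along $\rho$ really is locally constant (which uses the topological nature of the theory in $w$ and in particular that the underlying dg Lie algebra has vanishing higher de Rham cohomology on annuli modulo $\hbar$), to be the only nontrivial points; the $E_1$-structure itself is then formal from Lurie's theorem.
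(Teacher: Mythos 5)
Your proposal is correct and follows essentially the same route as the paper: the product comes from the factorization structure maps for disjoint nested annuli, the identification of all the annular complexes with $\F(z_0 \times S^1)$ uses the already-established local constancy in the $w$-direction, and coherent associativity is supplied by Lurie's results (your radial-pushforward-to-$\R_{>0}$ packaging is exactly the formulation the paper itself uses in the introduction). The extra care you take with $S^1$-equivariance of the structure maps is a reasonable addition, but the argument is the same.
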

\begin{proof}
If $r_1 < r_2 < s_1 < s_2$, the structure maps of a factorization algebra give us maps
$$
\F(z_0 \times A(r_1,r_2)) \otimes \F(z_0 \times A(s_1,s_2)) \to \F(z_0 \times A(r_1, s_2)). 
$$
All of the complexes appearing here are quasi-isomorphic to $\F(z_0 \times S^1)$, so we get a map defined up to homotopy
$$
\F(z_0 \times S^1 ) \otimes \F(z_0 \times S^1) \to \F(z_0 \times S^1). 
$$
It follows from Lurie's results (which we already used in the $E_2$ case) that this product is associative up to coherent homotopy.  
\end{proof}

\begin{proposition}
There is an isomorphism of associative algebras
$$
H^\ast ( \F(z_0 \times S^1) ) \iso HH_\ast( \F_{z_0})
$$
between the associative algebra of loop operators and the Hochschild homology of the $E_2$ algebra $\F_{z_0}$. 
\label{proposition_hochschild_factorization}
\end{proposition}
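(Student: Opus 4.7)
The plan is to reduce this to the Francis--Lurie identification of factorization homology on $S^{1}\times \R$ with Hochschild homology, the key intermediate object being a locally constant factorization algebra $\G$ on $\C_w$ whose associated $E_2$ algebra is $\F_{z_0}$ and whose global sections on an annulus compute $\F(z_0 \times S^1)$.

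I would first construct $\G$ by mimicking the definition of $\F_{z_0}$: for an open $U \subset \C_w$, set
$$
\G(U) \;=\; \bigoplus_{k \in \Z} \F^{k}(D(z_0,r) \times U),
$$
where $\F^{k}$ denotes the $k$-eigenspace under the $S^{1}$ action by rotation about $z_0$ in the $z$-plane (direct sum taken in the completed filtered sense, as for $\F_{z_0}$). Rotation invariance in the $z$-direction ensures this is, up to quasi-isomorphism, independent of $r$ as long as $r$ is small compared with the scale set by $U$, so that $\G$ is a bona fide prefactorization algebra on $\C_w$. The same spectral sequence argument used in the lemma preceding the definition of $\F_{z_0}$ shows that $\G$ is locally constant on $\C_w$. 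By Lurie's theorem (Theorem 5.2.4.9 of \cite{Lur12}), $\G$ is equivalent to an $E_2$ algebra, and this $E_2$ algebra is visibly $\F_{z_0}$: both assign the same cochain complex to a disc and the same structure maps come from the factorization product in the $w$-direction. Moreover, by construction $\G(A(r_1, r_2)) \simeq \F(z_0 \times S^{1}_R)$ whenever $r_1 < R < r_2$.

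Next I would invoke the theorem of Lurie and Francis (Theorem 5.5.3.11 of \cite{Lur12}, or equivalently Francis' calculation of factorization homology of an $E_n$ algebra on $S^{1}\times \R^{n-1}$): the factorization homology of the annulus $A(r_1,r_2) \simeq S^{1}\times \R$ with coefficients in the locally constant factorization algebra $\G$ is the Hochschild homology
$$
HH_{\ast}\bigl(\F_{z_0}\bigr),
$$
where $\F_{z_0}$ is viewed as an $E_1$ algebra via the embedding of a radial line $\R \hookrightarrow \R^{2}$. Since $\G$ satisfies descent on the annulus (local constancy together with the standard cover-and-glue argument), the cochain complex $\G(A(r_1,r_2))$ indeed computes this factorization homology, giving the underlying isomorphism of graded vector spaces.

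Finally, I would check that the two associative algebra structures match. On the geometric side, the product on $\F(z_0 \times S^{1})$ is defined by stacking annuli radially, and this is precisely the $E_1$ structure obtained by pushing $\G|_{\C^{\times}_w}$ forward along the radial projection $\C^{\times}_w \to \R_{>0}$ (turning $\G$ into a locally constant factorization algebra on $\R_{>0}$, i.e.\ an $E_1$ algebra). On the algebraic side, the standard $E_1$ product on $HH_{\ast}$ of an $E_2$ algebra is exactly the one that results from passing to factorization homology on $S^{1}\times \R$ and then composing along $\R$; this compatibility is built into the Francis--Lurie theorem. I expect this compatibility of associative structures to be the main technical step, as it requires one to track the $E_2$-structure through Lurie's equivalence between locally constant factorization algebras and $E_n$ algebras carefully enough to see that the radial operator product and the Hochschild product agree; once this is done, the proposition follows.
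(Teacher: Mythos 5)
Your strategy is viable, but it is not the route the paper takes, and it leans on exactly the point the paper deliberately avoids. The paper also starts from Lurie, but only uses the ``cheap'' half of his result: for a locally constant prefactorization algebra on $S^1$ with trivial monodromy there is a natural map from the Hochschild chain complex of the associated $E_1$ algebra to the value on the circle, and this produces the map $HC_\ast(\F_{z_0}) \to \F(z_0 \times S^1)$. It then proves this map is a quasi-isomorphism not by identifying $\F(z_0\times S^1)$ with factorization homology of $S^1\times\R$, but by an explicit check: it suffices to work modulo $\hbar$, where $\F_{z_0}\simeq C^\ast(\g[[z]])$, the (completed) Hochschild--Kostant--Rosenberg theorem gives $HC_\ast(C^\ast(\g[[z]]))\simeq C^\ast(\g[[z,\eps]])$, and the classical observables on $z_0\times S^1$ are also $C^\ast(\g[[z,\eps]])$ because $\L(D(z_0,r)\times A)\simeq \g\otimes\op{Hol}(D(z_0,r))[\eps]$; under these identifications the map is the identity, and a spectral sequence in $\hbar$ finishes the argument. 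Your route instead identifies $\mc{G}(A(r_1,r_2))$ with factorization homology and invokes the Francis--Lurie computation; what that buys is a structural argument in which the compatibility of the two associative products is automatic, whereas the paper's computation checks underlying complexes and gets the product compatibility from the naturality of Lurie's map.

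The genuine weak point is your descent step. Local constancy does not imply descent: it is a separate axiom, and the paper states explicitly that the descent condition is never used anywhere in it --- the sentence ``local constancy together with the standard cover-and-glue argument'' is precisely the part the author declines to carry out. To make your argument complete you would need (i) to verify the gluing (Weiss codescent) axiom for your object $\mc{G}$, which is a completed, filtered direct sum of $S^1$-eigenspaces of quantum observables and not literally the observables factorization algebra of \cite{CosGwi11}, and (ii) to reconcile the Costello--Gwilliam notion of factorization algebra with the setting of \cite{Lur12} so that the factorization-homology theorem applies in this filtered $\C[[\hbar]]$-linear context. Neither point is automatic, and the mod-$\hbar$ reduction plus Hochschild--Kostant--Rosenberg in the paper is exactly the device that avoids both.
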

\begin{remark}
I should explain why Hochschild homology of an $E_2$ algebra has the structure of associative algebra.  One way to see this is that, by a classic theorem of Dunn, $E_2$ algebra is an $E_1$ algebra in the category of $E_1$ algebras.  Hochschild homology is a symmetric monoidal functor from $E_1$ algebras to cochain complexes. Thus, it takes $E_1$ algebras in $E_1$ algebras to $E_1$ algebras in cochain complexes.  
\end{remark}
\begin{proof}
First, we need to write down a map
$$
HH_\ast( \F_{z_0} ) \to H^\ast (\F(z_0 \times S^1)). 
$$
The map we wants is constructed using a result of Lurie.   He shows that the Hochschild chain complex has a natural universal property expressed in terms of factorization algebras.   Let $\mc{G}$ be a locally-constant prefactorization algebra on $S^1$.  By restricting $\mc{G}$ to an interval around $1 \in S^1$, we see that $\mc{G}$ determines an associative algebra $A_{\mc{G}}$.  Monodromy around $S^1$ is an automorphism of this associative algebra; let us assume that this automorphism is trivial.  This happens in our example. 

In this situation, Lurie shows that there is a natural map from the Hochschild chain complex $HC_\ast(A_{\mc{G}})$ to $\mc{G}(S^1)$. The complex $\F(z_0 \times S^1)$ is the value on the annulus of a locally-constant prefactorization algebra on $\C^\times$. By restricting this to a locally-constant prefactorization algebra on $S^1$, we see that there's a natural map 
$$
HC(\F_{z_0} ) \to\F(z_0 \times S^1).
$$
We need to verify this is a quasi-isomorphism.  It's possible to show that this follows from the ``descent'' or ``gluing'' axiom of \cite{CosGwi11} which distinguishes a factorization algebra from a prefactorization algebra.   We will take a different approach, and observe that it suffices to check that this map is a quasi-isomorphism modulo $\hbar$.

Recall that $\F_{z_0}$ is quasi-isomorphic to the commutative dga $C^\ast(\g[[z]])$. The Hochschild-Kostant-Rosenberg theorem tells us that 
$$
HC_\ast( C^\ast(\g[[z]] ) \simeq C^\ast(\g[[z,\eps]])
$$
where $\eps$ is a parameter of cohomological degree $1$.  (Recall that these constructions are taking place in the category of cochain complexes with a complete decreasing filtration, so that the Hochschild chain complex is also completed.   This quasi-isomorphism does not hold otherwise). 

Now, modulo $\hbar$, $\F(z_0 \times S^1)$ is the the direct sum of the $S^1$-eigenspaces of $C^\ast(\L(D(z_0,r) \times A_w)$, where for $U \subset \C_z \times \C^\times_w$, $\L(U)$ is the dg Lie algebra
$$
\L(U) = \Omega^{0,\ast} (U, \g) \xto{w \partial_w} \Omega^{0,\ast}(U,\g)[-1]. 
$$
Note that $\L(D(z_0,r) \times A_w)$ is quasi-isomorphic to $\g \otimes \op{Hol} ( D(z_0,r)) [\eps]$. Thus, $\F(z_0 \times S^1)$ is also quasi-isomorphic to $C^\ast(\g[[z,\eps]])$; it is easy to verify that the map $HC_\ast(C^\ast(\g[[z]] )\to \F(z_0 \times S^1)$ corresponds to the identity on $C^\ast(\g[[z,\eps]])$. 
\end{proof}

\subsection{}
Now, we can state a precise relationship between the associative algebra $\F(z_0 \times S^1)$ of loop operators, and the Yangian $Y(\g)$.  The Yangian is a Hopf algebra; thus, the coproduct $Y(\g) \to Y(\g) \otimes Y(\g)$ is a map of associative algebras.  It follows from the functoriality of Hochschild homology that $HH_\ast(Y(\g))$ is a co-associative co-algebra.  we can thus form the linear dual $HH_\ast(Y(\g))^\vee$, and get an associative algebra. 
\begin{theorem}
There is an isomorphism of associative algebras
$$
H^\ast (\F(z_0 \times S^1)) \iso HH_\ast(Y(\g))^\vee. 
$$
\label{theorem_hochschild_yangian}
\end{theorem}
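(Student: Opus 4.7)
The plan is to string together several results already established in the paper. First, Proposition \ref{proposition_hochschild_factorization} provides an isomorphism of associative algebras
$$
H^\ast(\F(z_0 \times S^1)) \;\cong\; HH_\ast(\F_{z_0}),
$$
where the product on the right is the one coming from viewing the $E_2$ algebra $\F_{z_0}$ as an $E_1$ algebra in $E_1$ algebras (Dunn's theorem), so that $HH_\ast$ becomes a monoidal functor out of one $E_1$ direction. Thus it suffices to identify the associative algebra $HH_\ast(\F_{z_0})$ with $HH_\ast(Y(\g))^\vee$.

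Next I would apply Lemma \ref{lemma_coHochschild}, which gives a quasi-isomorphism of chain complexes
$$
HH_\ast(\F_{z_0}) \;\simeq\; \mathrm{coHH}_\ast(\F^!_{z_0}).
$$
The key point to check here is that this quasi-isomorphism promotes to an isomorphism of associative algebras, where the algebra structure on $\mathrm{coHH}_\ast(\F^!_{z_0})$ is induced by the product on the Hopf algebra $\F^!_{z_0}$. This compatibility is precisely the content of the more general factorization-homology statement of Francis and Matsuoka alluded to in the remark after Lemma \ref{lemma_coHochschild}: the extra $E_1$-structure on $\F_{z_0}$ (from it being $E_2$) is exactly dual to the algebra structure on the Hopf algebra $\F^!_{z_0}$, and both $HH_\ast$ and $\mathrm{coHH}_\ast$ are symmetric-monoidal in the appropriate homotopical sense.

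I would then invoke Theorem \ref{thm:main} to identify $H^\ast(\F^!_{z_0})$ with the dual Yangian $Y^\ast(\g)$ as Hopf algebras. Since $Y^\ast(\g)$ is concentrated in degree zero, the spectral sequence computing $\mathrm{coHH}_\ast(\F^!_{z_0})$ collapses to give $\mathrm{coHH}_\ast(Y^\ast(\g))$, still as an associative algebra using the product on $Y^\ast(\g)$. Finally, dualizing: by construction $Y(\g)$ and $Y^\ast(\g)$ are dual topological Hopf algebras (in the completed filtered sense of Section \ref{Koszul_duality_categorical}), so the product on $Y^\ast(\g)$ is dual to the coproduct on $Y(\g)$. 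Since the cobar model for $\mathrm{coHH}_\ast(Y^\ast(\g))$ is the $\C[[\hbar]]$-linear dual (in the appropriate topological sense) of the bar model for $HH_\ast(Y(\g))$, one obtains a natural isomorphism of associative algebras
$$
\mathrm{coHH}_\ast(Y^\ast(\g)) \;\cong\; HH_\ast(Y(\g))^\vee,
$$
where on the right the product is dual to the coalgebra structure on $HH_\ast(Y(\g))$ induced by the coproduct of the Hopf algebra $Y(\g)$. Composing the four isomorphisms gives the theorem.

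The main obstacle is the second step: tracking the associative algebra structures through Lemma \ref{lemma_coHochschild}. The chain-level quasi-isomorphism constructed there uses the bar/cobar formalism, but the product on $HH_\ast(\F_{z_0})$ comes from factorization-algebra structure in the radial direction of $\C^\times_w$, and identifying this with the Hopf-algebra-induced product on $\mathrm{coHH}_\ast(\F^!_{z_0})$ requires either an explicit chain-level model compatible with both $E_1$-structures, or an appeal to the uniqueness of such structures on $HH_\ast$ of an $E_2$ algebra. The rest of the steps are essentially formal once Theorem \ref{thm:main} and the topological-duality conventions of Section \ref{Koszul_duality_categorical} (in particular Corollary \ref{corollary_dual_algebra_finite_equivalence}) are in hand.
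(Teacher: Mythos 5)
Your proposal is correct and follows essentially the same route as the paper: Proposition \ref{proposition_hochschild_factorization} to pass to $HH_\ast(\F_{z_0})$, Lemma \ref{lemma_coHochschild} to pass to the co-Hochschild homology of the Koszul dual, Theorem \ref{thm:main} to identify that dual with $Y^\ast(\g)$, and duality to land on $HH_\ast(Y(\g))^\vee$. The only difference is that you are more explicit than the paper about tracking the associative algebra structures through Lemma \ref{lemma_coHochschild}, a point the paper leaves implicit under its appeal to ``general results about Koszul duality.''
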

\begin{proof}
This follows by putting together some facts we've already proved with general results about Koszul duality.   

We need the following general result, which is lemma \ref{lemma_coHochschild}. Let $A$ be an augmented associative algebra, and let $A^!$ denote the Koszul dual coassociative algebra to $A$.  We let $\op{Co}HH(A^!)$ denote the co-Hochschild homology of $A^!$: defined just like Hochschild homology of associative algebras, but with all arrows reversed.  Then, there is a natural isomorphism
$$
\op{Co}HH(A^!) \simeq HH(A). 
$$

This implies that the co-Hochschild homology of the dual Yangian $Y^\ast(\g)$ is isomorphic to the Hochschild homology of the $E_2$ algebra $\F_{z_{0}}$, which in turn is isomorphic to the cohomology of the algebra $\F(z_0 \times S^1)$.  The co-Hochschild homology of $Y^\ast(\g)$ is the dual to the Hochschild homology of $Y(\g)$, giving the desired result. 
\end{proof}
\begin{lemma}
There is an isomorphism of associative algebras
$$
H^\ast (\F(z_0 \times S^1)) \iso HH_\ast(\op{Fin}(Y(\g))
$$
where $\op{Fin}(Y(\g))$ is the monoidal dg category of finite-rank $A_\infty$- modules over the topological Hopf algebra $Y(\g)$. 
\end{lemma}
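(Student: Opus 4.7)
The plan is to chain together three identifications already established earlier in the paper. First I would invoke Proposition~\ref{proposition_hochschild_factorization}, which gives an isomorphism of associative algebras
\[
H^\ast(\F(z_0 \times S^1)) \iso HH_\ast(\F_{z_0}),
\]
where the associative structure on the right comes from viewing the $E_2$ algebra $\F_{z_0}$, via Dunn's theorem, as an $E_1$-algebra in $E_1$-algebras, so that $HH_\ast$ (a symmetric monoidal functor from $E_1$-algebras to cochain complexes) produces an $E_1$-algebra. Next I would apply Proposition~\ref{prop:HH_iso_perfect} to the $E_1$-algebra underlying $\F_{z_0}$ (regarded as an augmented algebra in $\mc{C}^b$, which is legitimate because $\F_{z_0}$ is defined as a graded direct sum of $S^1$-eigenspaces of a complete filtered cochain complex, and the augmentation is the one constructed from $\Obs_{\mbb{P}^1}$) to get
\[
HH_\ast(\F_{z_0}) \iso HH_\ast(\op{Perf}(\F_{z_0})).
\]
The extra $E_1$-structure on both sides is obtained by applying these identifications in the category of $E_1$-algebras; since the proof of Proposition~\ref{prop:HH_iso_perfect} proceeds by constructing a canonical map $HC(A) \to HC(\op{Perf}(A))$ coming from the tautological perfect module $A$, this canonical map is automatically a map of $E_1$-algebras when $A$ is itself $E_2$.

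Finally I would use Corollary~\ref{cor:Yangian_equiv_categories}, which provides a quasi-equivalence of monoidal dg categories
\[
\op{Perf}(\F_{z_0}) \simeq \op{Fin}(Y(\g)\text{-mod}).
\]
Since Hochschild homology of a monoidal dg category is functorial in monoidal dg functors, and a quasi-equivalence of monoidal categories induces a quasi-isomorphism on Hochschild chains compatible with the induced $E_1$-structures (from Dunn's theorem applied to the monoidal dg category regarded as an $E_1$-algebra in dg categories), this gives a final isomorphism of associative algebras
\[
HH_\ast(\op{Perf}(\F_{z_0})) \iso HH_\ast(\op{Fin}(Y(\g))).
\]
Composing the three isomorphisms yields the desired identification.

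The main obstacle is bookkeeping rather than genuine mathematical content: one must verify that the associative (i.e.\ $E_1$) structure produced at each step is the intended one. On the factorization side, the algebra structure on $H^\ast(\F(z_0 \times S^1))$ comes from the operator product in the $w$-direction, which matches the $E_1$-in-$E_1$ decomposition of $\F_{z_0}$ used by Dunn. On the categorical side, the algebra structure on $HH_\ast(\op{Fin}(Y(\g)))$ comes from the monoidal (tensor product) structure on $Y(\g)$-mod, i.e.\ ultimately from the coproduct on $Y(\g)$, which is precisely what Tamarkin's Koszul duality assigns to the second $E_1$-direction in $\F_{z_0}$. Once these compatibilities are spelled out, everything assembles formally from the results already proved, so no further analytic input is needed.
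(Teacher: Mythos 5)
Your proposal is correct and follows essentially the same route as the paper: it chains Proposition \ref{proposition_hochschild_factorization}, Proposition \ref{prop:HH_iso_perfect} (with the Dunn-theorem remark supplying the $E_1$-structure), and the monoidal equivalence $\op{Perf}(\F_{z_0}) \simeq \op{Fin}(Y(\g))$, which is exactly what the paper does, except that it cites the two constituent equivalences (Theorem \ref{theorem_equivalence_categories} and Corollary \ref{corollary_dual_algebra_finite_equivalence}) separately rather than the packaged Corollary \ref{cor:Yangian_equiv_categories}.
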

\begin{proof}
There is quasi-equivalence of monoidal dg categories
$$
\op{Fin}(Y(\g)) \simeq \op{Fin}(Y^\ast(\g)-\op{comod})
$$
(this is corollary \ref{corollary_dual_algebra_finite_equivalence}).  Also, there is a quasi-equivalence of monoidal dg categories
$$
\op{Perf}(\op{Fin}(\F_{z_0})  \simeq \op{Fin}(\F^!_{z_0}-\op{comod})
$$
(which is theorem \ref{theorem_equivalence_categories}).  Finally, for any $E_2$ algebra $A$, there is a natural isomorphism of $E_1$ algebras 
$$
HC_\ast(A) \simeq HC_\ast(\op{Perf}(A)) 
$$
(this follows from proposition \ref{prop:HH_iso_perfect}). 
\end{proof}
\subsection{}
This result may seem a little abstract, but it has a concrete corollary: it allows us to associate Wilson loop operators to finite-dimensional representations of the Yangian.

Let $A$ be any algebra, and let $V$ be a finite-dimensional representation of $A$.   Then, taking trace in $V$ defines a map $\op{Tr}_V  : A \to \C$.  Evidently, this kills the commutator $[A,A]$, and so factors through a map $A / [A,A] \to \C$.   Now, $A/[A,A]$ is $HH_0(A)$; thus, every finite-dimensional representation $V$ of $A$ gives an element of the dual to the Hochschild homology of $A$.  This is the character $\chi(V) \in HH_0(A)^\vee$ of $V$.  We can also think of the character $\chi(V) : HH_0(A) \to \C$ as arising by applying the Hochschild homology functor to the algebra homomorphism $A \to \op{End}(V)$, and observing that $HH(\op{End}(V))$ is canonically isomorphic to $\C$. 

We can apply this to our situation as follows.
\begin{definition}
Let $V$ be a finite-dimensional representation of the Yangian $Y(\g)$.  Define the Wilson operator associated to $V$ by
$$
\chi(V) \in HH_0(Y(\g))^{\vee} = H^0 (\F(z_0 \times S^1))
$$
to be the character of $V$. 
\end{definition}
The operator product makes $H^0(\F(z_0 \times S^1))$ into an associative algebra. 
\begin{lemma}
Let $V, W$ be finite-dimensional representations of $Y(\g)$.  Since $Y(\g)$ is a Hopf algebra, we can form the tensor product $V \otimes W$.  Then,
$$
\chi(V \otimes W) = \chi(V) \chi(W)
$$
where on the right we are using the operator product on $H^0 ( F(z_0 \times S^1))$.  
\end{lemma}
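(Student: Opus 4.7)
The plan is to trace through the identifications supplied by Theorem \ref{theorem_hochschild_yangian} and reduce the claim to an elementary compatibility between traces and the coproduct on $Y(\g)$. First, recall that under the isomorphism $H^0(\F(z_0 \times S^1)) \iso HH_0(Y(\g))^\vee$, the Wilson operator $\chi(V)$ attached to a finite-rank $Y(\g)$-module $V$ is by definition the character, i.e.\ the $\C[[\hbar]]$-linear functional $\op{Tr}_V : HH_0(Y(\g)) \to \C[[\hbar]]$. Equivalently, $\chi(V) = HH_0(\rho_V)$, where $\rho_V : Y(\g) \to \op{End}(V)$ is the representation and one uses the canonical identification $HH_0(\op{End}(V)) \iso \C[[\hbar]]$ via the trace.

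Next, I would unpack how the operator product on $H^0(\F(z_0 \times S^1))$ corresponds to a product on $HH_0(Y(\g))^\vee$. Because $Y(\g)$ is a Hopf algebra, the coproduct $\Delta : Y(\g) \to Y(\g) \otimes Y(\g)$ is a homomorphism of associative algebras. Applying Hochschild homology and using the K\"unneth isomorphism, this gives a coproduct
\[
HH_0(\Delta) : HH_0(Y(\g)) \to HH_0(Y(\g)) \otimes HH_0(Y(\g)).
\]
The associative product on $HH_0(Y(\g))^\vee$ that appears in Theorem \ref{theorem_hochschild_yangian} is, by construction, the dual of $HH_0(\Delta)$; this is the content of that theorem together with the general fact that the product on the dual of Hochschild homology of a bialgebra is induced by the coproduct.

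Granting these two identifications, the lemma becomes a tautology. The module structure on $V \otimes W$ is, by definition, $\rho_{V \otimes W} = (\rho_V \otimes \rho_W) \circ \Delta$, and the multiplicativity of trace on tensor products of endomorphisms gives $\op{Tr}_{V \otimes W} = (\op{Tr}_V \otimes \op{Tr}_W) \circ \Delta$ at the level of $Y(\g)$. Passing to $HH_0$ (using naturality of Hochschild homology and the K\"unneth isomorphism) yields
\[
\chi(V \otimes W) = (\chi(V) \otimes \chi(W)) \circ HH_0(\Delta),
\]
and the right-hand side is precisely the product $\chi(V) \cdot \chi(W)$ in $HH_0(Y(\g))^\vee$.

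The only real obstacle is verifying that the operator product constructed geometrically from the factorization algebra $\F$ matches the algebraic product dual to $HH_0(\Delta)$ under the isomorphism of Theorem \ref{theorem_hochschild_yangian}, as opposed to some opposite or twisted version. This is not an extra computation but a bookkeeping check: the proof of Theorem \ref{theorem_hochschild_yangian} proceeds by realizing the Hopf algebra $Y^\ast(\g)$ as the Koszul dual of the $E_2$ algebra $\F_{z_0}$ and applying lemma \ref{lemma_coHochschild} together with the monoidal nature of Hochschild homology (proposition \ref{prop:HH_iso_perfect} and corollary \ref{cor:Yangian_equiv_categories}). All of these identifications are manifestly compatible with the coproduct on $Y(\g)$, so the match of products is automatic.
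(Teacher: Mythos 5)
Your proposal is correct and follows essentially the same route as the paper: identify $\chi(V)$ with $HH_0$ applied to $\rho_V$ (via the trace identification $HH_0(\op{End}(V))\iso\C[[\hbar]]$), use that the product on $HH_0(Y(\g))^\vee$ is dual to the map induced by the coproduct, and conclude by applying $HH_0$ functorially to the square expressing $\rho_{V\otimes W} = (\rho_V\otimes\rho_W)\circ\Delta$. Your closing remark that the match of the geometric operator product with the coproduct-induced product is carried by Theorem \ref{theorem_hochschild_yangian} is exactly how the paper handles that point as well.
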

\begin{proof}
The tensor product on $Y(\g)$-modules arises from the coproduct 
$$
c : Y(\g) \to Y(\g) \otimes Y(\g).
$$
The coproduct $c$ is an algebra homomorphism, and so induces a map
$$
HH_0(Y(\g)) \to HH_0(Y(\g)) \otimes HH_0(Y(\g)).
$$
This is the map dual to the associative product on $H^0(F(z_0 \times S^1)) = HH_0(Y(\g))^\vee$.  

The definition of the tensor product of $Y(\g)$-modules means that the following diagram commutes:
$$
\xymatrix{
Y(\g) \ar[r] \ar[d] &  Y(\g) \otimes Y(\g) \ar[d] \\
\op{End}(V\otimes W ) \ar[r] & \op{End}(V) \otimes \op{End}(W)
}
$$
Here, the left hand vertical arrow is the map defining the action of  $Y(\g)$ on $V \otimes W$.  

Applying the Hochschild homology functor to this diagram of algebras we see that the following diagram also commutes:
$$
\xymatrix{
HH(Y(\g)) \ar[r] \ar[d]^{\chi(V \otimes W)} & HH( Y(\g)) \otimes HH( Y(\g) ) \ar[d]^{\chi(V) \otimes \chi(W)} \\
\C = HH(\op{End}(V \otimes W)) \ar[r] & \C = HH(\op{End}(V)) \otimes HH(\op{End}(W)) 
}
$$
Commutativity of this diagram expresses the fact that $\chi(V \otimes W) = \chi(V) \chi(W)$, as desired. 
\end{proof}
\begin{remark}
We can also view the character $\chi(V)$ as follows.  We have the following chain of quasi-equivalences of monoidal dg categories
$$
\op{Perf}(\F_{z_0}) \simeq \op{Fin}(\F_{z_0}^!-\op{comod}) = \op{Fin}(Y^\ast(\g)-\op{comod}) \simeq \op{Fin}(Y(\g)-\op{mod})).
$$
Applying Hochschild homology, we find an isomorphism of associative algebras
$$
HH_\ast(\op{Perf}(\F_{z_0}) \iso HH_\ast(\op{Fin}(Y(\g)-\op{mod}) ).
$$
We also have an isomorphism of associative algebras
$$
HH_\ast(\F_{z_0}) \iso HH_\ast(\op{Perf}(\F_{z_0}) ).
$$

If $V \in \op{Fin}(Y(\g)-\op{mod})$, then $\chi(V)$ is the element of $HH_0(\op{Fin}(Y(\g)-\op{mod})$ corresponding to the identity morphism of $V$.  It is immediate from this definition that $\chi(V \otimes W) =\chi(V)\chi(W)$.  This definition of the character is equivalent to the one described above. 
\end{remark}

\section{Compactification on a torus and integrability}

Let $E$ be an elliptic curve, and let us consider our twisted deformed $N=1$ gauge theory on $\C_z \times E$, where we use the holomorphic $1$-form $\d z$ to give the deformation.  

We can consider the cochain complex $\Obs(D(z_0,r) \times E)$ of observables on a neighborhood of the torus $z_0 \times E$.  This has an $S^1$ action, given by rotating in the $z$-plane with center $z_0$.  We let
$$
\Obs(z_0 \times E) = \oplus_{k} \Obs^{k} (D(z_0,r) \times E) 
$$
be the direct sum of the $S^1$-eigenspaces for this $S^1$ action.     We can compute the cohomology of $\Obs(z_0 \times E)$ in terms of the Yangian $Y(\g)$.   

We have seen (corollary \ref{cor:Yangian_equiv_categories}) that there is a quasi-equivalence of monoidal dg categories
$$
\op{Perf}(\F_{z_0}) \simeq \op{Fin}(Y(\g))
$$
(where $\op{Perf}$ is the category of perfect modules, and $\op{Fin}(Y(\g))$ is a certain category of continuous modules over the topological algebra $Y(\g)$. For details, see section \ref{Koszul_duality_categorical}). 

For any monoidal dg category $\mc{D}$, the Hochschild chain complex $HC_\ast(\mc{D})$ is a homotopy associative algebra. Thus, we can form the iterated Hochschild homology
$$
HH_\ast^{(2)} (\mc{D}) = HH_\ast (HC_\ast( \mc{D})).
$$
\begin{theorem}
There is an isomorphism of graded $\C[[\hbar]]$-modules 
$$
H^\ast ( \Obs(z_0 \times E)) \iso HH^{(2)}_\ast ( \op{Fin}(Y(\g)).
$$
\end{theorem}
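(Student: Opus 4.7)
My plan is to realize $\Obs(z_0 \times E)$ as the factorization homology on the torus $E$ of the $E_2$ algebra $\F_{z_0}$, and then transport this across the Koszul-duality equivalence $\op{Perf}(\F_{z_0}) \simeq \op{Fin}(Y(\g))$ established in corollary \ref{cor:Yangian_equiv_categories}. Recall that for any $E_2$ algebra $A$, the theorem of Francis--Lurie identifies factorization homology over the two-torus with the iterated Hochschild complex $HH^{(2)}_\ast(A) = HH_\ast(HC_\ast(A))$: one circle gives associative (Hochschild) homology of $A$ regarded as an $E_1$ algebra, and the second circle gives Hochschild homology of that $E_1$ algebra. So the target shape of the argument is the chain
$$H^\ast(\Obs(z_0 \times E)) \iso HH^{(2)}_\ast(\F_{z_0}) \iso HH^{(2)}_\ast(\op{Perf}(\F_{z_0})) \iso HH^{(2)}_\ast(\op{Fin}(Y(\g))).$$

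First, to obtain the leftmost isomorphism I would iterate the construction used in proposition \ref{proposition_hochschild_factorization}. Pick generators $S^1_a, S^1_b$ for $\pi_1(E)$. Compactifying in the $a$-direction: because the theory is topological in the $E$-direction, the pushforward of $\Obs$ along $\C_z \times E \to \C_z \times S^1_b$ is locally constant in the $b$-direction, and the cochain complex on a neighborhood of $z_0 \times S^1_a \times \{b_0\}$ is, by the very argument of proposition \ref{proposition_hochschild_factorization}, quasi-isomorphic to $HC_\ast(\F_{z_0})$. This endows the pushforward with the structure of a locally constant factorization algebra on $S^1_b$ whose underlying $E_1$ algebra is $HC_\ast(\F_{z_0})$ (with its canonical $E_1$ structure coming from $\F_{z_0}$ being an $E_2$ algebra, via Dunn's additivity). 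Applying Lurie's universal property of Hochschild homology a second time then yields
$$H^\ast(\Obs(z_0 \times E)) \iso HH_\ast(HC_\ast(\F_{z_0})) = HH^{(2)}_\ast(\F_{z_0}).$$
As in the proof of proposition \ref{proposition_hochschild_factorization}, it suffices to check this modulo $\hbar$ via a Hochschild--Kostant--Rosenberg computation; modulo $\hbar$ the two iterated Hochschild complexes both compute $C^\ast(\g[[z,\eps_a,\eps_b]])$ with $\eps_a, \eps_b$ of degree $1$, and the spectral sequence in $\hbar$ upgrades the mod-$\hbar$ quasi-isomorphism to one over $\C[[\hbar]]$.

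Second, for the rightmost isomorphism I would promote proposition \ref{prop:HH_iso_perfect} to the monoidal setting. Since $\F_{z_0}$ is an $E_2$ algebra, the dg category $\op{Perf}(\F_{z_0})$ is a monoidal dg category, and the same associated-graded/filtration argument used there shows that the natural map $HC_\ast(\F_{z_0}) \to HC_\ast(\op{Perf}(\F_{z_0}))$ is a quasi-isomorphism of $E_1$ algebras (the monoidal structure is transported through Koszul duality, so both sides carry compatible $E_1$ structures). Taking Hochschild homology once more gives $HH^{(2)}_\ast(\F_{z_0}) \iso HH^{(2)}_\ast(\op{Perf}(\F_{z_0}))$. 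Finally the monoidal quasi-equivalence $\op{Perf}(\F_{z_0}) \simeq \op{Fin}(Y(\g))$ from corollary \ref{cor:Yangian_equiv_categories} induces a quasi-isomorphism on iterated Hochschild homology, completing the chain.

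The main obstacle is the first step: making rigorous sense of ``factorization homology over $E$ equals iterated Hochschild homology'' in our setting of complete filtered $E_2$ algebras over $\C[[\hbar]]$. The abstract Francis--Lurie result must be checked to survive the passage to $\mc{C}^b$ with its completed tensor product; concretely, one needs to verify that the pushforward of $\Obs$ to $S^1_b$ really is a locally constant factorization algebra valued in $\mc{C}^b$ whose associated $E_1$ algebra is $HC_\ast(\F_{z_0})$, and that Lurie's universal property applies in this filtered topological context. Everything else (the equivalence of categories, the Hochschild comparison) is either already in the paper or is a direct filtered-monoidal upgrade of proposition \ref{prop:HH_iso_perfect}.
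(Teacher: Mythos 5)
Your proposal is correct and follows essentially the same route as the paper: the paper likewise obtains $H^\ast(\Obs(z_0\times E)) \iso HH_\ast(HC_\ast(\F_{z_0}))$ by iterating the reasoning of proposition \ref{proposition_hochschild_factorization}, then invokes proposition \ref{prop:HH_iso_perfect} (noting the comparison map is a map of homotopy associative algebras) and finally the monoidal quasi-equivalence $\op{Perf}(\F_{z_0}) \simeq \op{Fin}(Y(\g))$ of corollary \ref{cor:Yangian_equiv_categories}. The extra care you flag about the filtered, completed setting is exactly the kind of detail the paper's conventions (Hochschild complexes taken with completed sums in $\mc{C}^b$) are set up to handle, so your argument matches the intended one.
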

\begin{proof}
 First, following the reasoning in proposition \ref{proposition_hochschild_factorization}, we observe that we can write 
$$
H^\ast (\Obs (z_0 \times E)) = HH_\ast ( HC_\ast ( \Obs_{z_0} ) ) 
$$
That is, $H^\ast (\Obs (z_0 \times E))$ is the twice-iterated Hochschild homology of the $E_2$ algebra $\Obs_{z_0}$.  Then, we use proposition \ref{prop:HH_iso_perfect}, which states that the natural map
$$
HC_\ast(\F_{z_0}) \to HC_\ast(\op{Perf}(\F_{z_0})
$$
is a quasi-isomorphism. Since this map is a map of homotopy associative algebras, it follows that we have an isomorphism
$$
HH_\ast(HC_\ast(\F_{z_0})) \simeq HH^{(2)} (\op{Perf}(\F_{z_0}).
$$
Applying the quasi-equivalence of monoidal dg categories
$$
\op{Perf}(\F_{z_0}) \simeq \op{Fin}(Y(\g))
$$
gives the desired result. 
\end{proof}
\subsection{}
Thus, we have an implicit description of $H^\ast (\Obs (z_0 \times E))$ in terms of the Yangian.   The operator product in the $z$-direction gives these complexes a structure very like that of a vertex algebra.   (Later we will see how this operator product is encoded in the $R$-matrix). 

The aim of this section is to construct a maximal commutative subalgebra of $H^0 (\Obs(z_0 \times E))$. 

Let 
$$
C(\g) = H^0 ( \Obs(z_0 \times S^1)). 
$$
Let us choose generators for $\pi_1(E)$, and let $S^1_a \subset E$ be the circle corresponding to one of the generators.  Then, we get a natural map
$$
C(\g) = H^0 ( \Obs(z_0 \times S^1)) \to H^0 (\Obs( z_0 \times E)).
$$
The main theorem of this section is the following.
\begin{theorem}
The operator product between any elements $\alpha,\beta \in C(\g) \subset H^0 ( \Obs(z_0 \times E))$ is non-singular.  Thus, $C(\g)$ is a commutative subalgebra of the holomorphic vertex algebra $H^0 ( \Obs(z_0 \times E))$.

Further, if we work in type $A$ (so $\g = \sl_n$), then $C(\sl_n)$ is a maximal commutative subalgebra. 
\end{theorem}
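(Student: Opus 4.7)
\medskip

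\noindent\textbf{Proof plan.}

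For the commutativity statement, my plan is to formalize the displacement sketch already given for the OPE of transfer matrices. Fix classes $\alpha \in H^0(\Obs(z_0 \times S^1_a))$ and $\beta \in H^0(\Obs(z_1 \times S^1_a))$. Because $\F$ is locally constant in the $E$-direction, I can choose a parallel cycle $S^1_{a'} \subset E$ disjoint from $S^1_a$, and the structure map of the factorization algebra associated to the inclusion of a tubular neighborhood of $S^1_{a'}$ into one of a neighborhood of $S^1_a$ is a quasi-isomorphism. Thus $\beta$ is equally represented by a class supported on $z_1 \times S^1_{a'}$. The OPE $\alpha(z_0)\cdot\beta(z_1)$ is then computed as the factorization product
\[
H^0\Obs\bigl(D(z_0,r)\times V_a\bigr)\otimes H^0\Obs\bigl(D(z_1,r)\times V_{a'}\bigr)\longrightarrow H^0\Obs(W),
\]
where $V_a, V_{a'}$ are tubular neighborhoods of disjoint cycles and $W$ is a neighborhood of $\{z_0,z_1\}\times E$. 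Since the two supports remain disjoint even as $z_1 \to z_0$, this product map extends continuously to the diagonal, producing a non-singular limit. This also shows closure under the operator product, since the class produced is supported on (a thickening of) $z_0 \times S^1_a$ and is manifestly $S^1$-invariant in the $z$-direction.

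For maximality when $\g = \sl_n$, my plan is to pass to the associated graded with respect to the $\hbar$-adic filtration and there identify the commutant of $C(\sl_n)$ explicitly. Modulo $\hbar$, the complex $\Obs(z_0 \times E)$ is quasi-isomorphic to $C^\ast\bigl(\sl_n[[z]]\otimes H^\ast_{dR}(E)\bigr)$; writing $\epsilon_a,\epsilon_b$ for generators of $H^1(E)$ (of degree one), its $H^0$ is the completed symmetric algebra on $\sl_n[[z]]^\vee$ tensored with the exterior algebra on $(\sl_n[[z]]\otimes\langle\epsilon_a,\epsilon_b\rangle)^\vee$, with a Chevalley differential. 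The subalgebra $C(\sl_n)\bmod\hbar$ is identified with the $\sl_n[[z]]$-invariant polynomial functions on $\sl_n[[z]]$, which by Chevalley restriction in type $A$ is freely generated by $\partial_z^k P_i$ for $k\ge 0$ and $i=2,\dots,n$. I then compute the commutant (in the graded-commutative sense, i.e.\ the subspace whose OPE with $C(\sl_n)$ is non-singular) by exploiting the semi-classical Poisson bracket on $H^\ast\Obs(z_0\times E)$ coming from the $\hbar$-expansion, which up to normalization is induced by $\omega = \epsilon_a\wedge\epsilon_b$ paired with the Kirillov--Kostant bracket on $\sl_n[[z]]^\vee$. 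An observable commuting (in this Poisson sense) with every $\sl_n[[z]]$-invariant function on $\sl_n[[z]]$ must itself be an $\sl_n[[z]]$-invariant function, and then the Chevalley restriction theorem for $\sl_n[[z]]$ forces it to lie in $C(\sl_n)\bmod\hbar$.

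Finally, I would promote this classical maximality statement to the full quantum algebra via a spectral sequence argument: the $\hbar$-adic filtration on $\Obs(z_0\times E)$ is complete, and any element of a hypothetical strictly larger commutative subalgebra would have a nonzero leading term in the associated graded, contradicting the classical commutant computation. The main obstacle will be the classical identification of the commutant --- specifically, lifting Chevalley restriction from $\sl_n$ to $\sl_n[[z]]$ and verifying that the centralizer computation in the Chevalley--Eilenberg cohomology really recovers only invariant polynomials (rather than picking up classes involving $\epsilon_a$ or $\epsilon_b$). This is exactly the step where type $A$ is essential: the analogous statement for general simple $\g$ is the open question attributed to Chervov in the introduction, and relaxing it would immediately extend maximality to arbitrary semisimple $\g$.
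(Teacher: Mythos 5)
Your commutativity argument is the same displacement argument the paper uses (slide $\beta$ onto a disjoint parallel cycle $S^1_{a'}$ using topological invariance in the $E$-direction, so the supports stay disjoint as $z_1\to z_0$), and that part is fine.  For maximality your overall strategy --- compute a classical commutant and lift it through the $\hbar$-filtration --- is also the paper's strategy, but as you have set it up three steps either fail or are missing.  First, the ambient classical algebra is not ``functions on $\sl_n[[z]]$'': $H^0(\Obs^{cl}(z_0\times E))$ is the $H^0$ of the Chevalley complex of $\sl_n[[z]]\otimes H^\ast(E)$, and identifying it with $\Oo\bigl(((\mf{h}\oplus\mf{h})/W)[[z]]\bigr)$, i.e.\ with invariant functions on the commuting variety over the formal disc, is a genuine step (the paper uses irreducibility of the commuting variety, after Richardson, together with the Beilinson--Drinfeld description of $G[[z]]$-invariant functions on $\g[[z]]$).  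Your assertion that ``an observable commuting with every invariant function must itself be an invariant function on $\sl_n[[z]]$'' is not the statement you need: the commutant of $C^{cl}(\sl_n)$ has to be computed inside this larger algebra, whose spectrum fibers over $(\mf{h}/W)[[z]]$ with Lagrangian fibers.

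Second, non-singularity of the OPE in the $z$-direction is not detected by a single Poisson bracket ``induced by $\epsilon_a\wedge\epsilon_b$ paired with Kirillov--Kostant''.  The relevant semiclassical structure is the chiral (coisson) Poisson structure: a family of brackets $\pi_f$, one for each $f\in\C[z]$, defined by contour integrals of the order-$\hbar$ part of the OPE, satisfying Leibniz rules with respect to $\partial_z$ and restricting, on pullbacks along the evaluation map, to $f(0)$ times the Poisson bracket on the symplectic reduction of $T^\ast\g$.  Any single bracket only pairs finitely many $z$-modes (in the paper's local model on $T^\ast\C^n[[z]]$, $\pi_f(\partial^k x_i,\partial^l y_j)$ vanishes for all but finitely many $(k,l)$ once $f$ is fixed), so its centralizer is strictly larger than the one you need; the maximality computation must quantify over all $f$, and is carried out étale-locally over the regular semisimple locus.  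Third, your concluding filtration argument has a gap: an element commuting with $C(\sl_n)$ has leading term lying in the classical commutant $C^{cl}(\sl_n)$ --- no contradiction yet.  To conclude you must subtract an element of $C(\sl_n)$ with that leading term and induct, which requires that \emph{every} element of $C^{cl}(\sl_n)$ quantizes to an element of $C(\sl_n)$.  That surjectivity is precisely where type $A$ enters in the paper, via the evaluation homomorphism $Y(\sl_n)\to U(\sl_n)[[\hbar]]$ (equivalently, injectivity of $HH_0(U(\g))\to HH_0(Y(\g))$); the classical centralizer/Chevalley-restriction step you identify as the type-$A$ bottleneck in fact works for any semisimple $\g$, and your proposal never addresses the quantization (``big enough'') step at all, so the induction does not close as written.
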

\begin{remark}
Probably an expert on the Yangian will see how to generalize the proof given below to any semi-simple $\g$.  The main obstruction is showing that $C(\g)$ is ``large enough''; that is, that any element of the classical analog of $C(\g)$ quantizes.  This would follow from the statement that the natural map $U(\g) \to Y(\g)$ induces an injective map $HH_0(U(\g)) \to HH_0(Y(\g))$.   For type $A$, we know this is true because of the existence of the evaluation homomorphism.
\end{remark}

The proof will be in several steps.  The first is the following easy lemma.
\begin{lemma}
$C(\g)$ is a commutative subalgebra of the holomorphic vertex algebra $H^0 ( \Obs(z_0 \times E))$.  
\end{lemma}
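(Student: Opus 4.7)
The plan is to formalize the informal argument already sketched in the introduction (in the subsection on integrability in two dimensions). The key point is that our theory is topological in the $E$-direction, so any circle $S^1_a \subset E$ can be replaced, up to quasi-isomorphism, by a small parallel displacement $S^1_{a'}$ disjoint from $S^1_a$ but in the same free homotopy class. This disjointness lets us bring two observables together in the $z$-direction without their supports ever meeting.

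More precisely, I would proceed as follows. Fix $\alpha \in H^0(\Obs(z_0 \times S^1_a))$ and $\beta \in H^0(\Obs(z_1 \times S^1_a))$. Choose an isotopy $S^1_{a,t}$ of circles in $E$, for $t \in [0,1]$, with $S^1_{a,0} = S^1_a$ and $S^1_{a,1} = S^1_{a'}$ a parallel copy disjoint from $S^1_a$. Because the factorization algebra $\Obs$ restricted to $\{z_1\} \times E$ is locally constant in the $E$-direction, the inclusion of tubular neighborhoods gives a quasi-isomorphism
\[
\Obs(z_1 \times S^1_a) \xrightarrow{\simeq} \Obs(z_1 \times S^1_{a'}) .
\]
Let $\beta' \in H^0(\Obs(z_1 \times S^1_{a'}))$ be the image of $\beta$ under this quasi-isomorphism.

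Now $S^1_a$ and $S^1_{a'}$ are disjoint in $E$, so for any (even arbitrarily small) $r$ the subsets $D(z_0,r) \times N(S^1_a)$ and $D(z_1,r) \times N(S^1_{a'})$ are disjoint in $\C_z \times E$ for appropriately chosen tubular neighborhoods $N$, provided only $|z_0 - z_1|$ is smaller than $2r$. The structure maps of the factorization algebra $\Obs$ therefore give a product
\[
\Obs(D(z_0,r) \times N(S^1_a)) \otimes \Obs(D(z_1,r) \times N(S^1_{a'})) \longrightarrow \Obs(D \times E)
\]
for any disc $D \subset \C_z$ containing both $z_0$ and $z_1$, defined uniformly in $z_1$ as $z_1 \to z_0$. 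Passing to $S^1$-eigenspaces and taking the limit $r \to 0$ in the $z$-direction, this shows that the operator product $\alpha \cdot \beta'$ extends to a regular (non-singular) element of $H^0(\Obs(z_0 \times E))((\lambda))$ in the displacement parameter $\lambda = z_1 - z_0$; in fact it extends to $\lambda = 0$. Since $\beta' = \beta$ in $H^0(\Obs(z_0 \times E))$, this proves that the operator product $\alpha \cdot \beta$ has no singular terms.

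There is no real obstacle here: the only thing to be a bit careful about is making the $r \to 0$ limit compatible with the $S^1$-eigenspace decomposition used in the definition of $\Obs(z_0 \times E)$, but this is handled exactly as in the definition of $\F_{z_0}$ earlier in the paper. Commutativity of the resulting product then follows because $\alpha$ and $\beta'$ can be slid past each other through the region where their supports remain disjoint, so that $\alpha \cdot \beta = \beta \cdot \alpha$ in $H^0(\Obs(z_0 \times E))$, which is the statement that $C(\g)$ is a commutative sub-vertex-algebra.
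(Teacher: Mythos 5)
Your proposal is essentially the paper's own argument: displace one circle to a disjoint parallel copy using the fact that the theory is topological (locally constant) in the $E$-direction, then use disjointness of supports and the factorization structure to see the operator product is defined and non-singular even as the $z$-coordinates collide. The paper's proof is the same, phrased slightly more briefly in terms of holomorphic translation invariance in the $z$-direction, so there is nothing to add.
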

\begin{proof}
The language of factorization algebras allows one to make the proof sketched in the introduction rigorous.    Let $\alpha,\beta \in C(\g)$.   Let $\alpha(z_0),\beta(z_0) \in H^0 (\Obs(z_0 \times E))$ refer to the observables obtained by placing $\alpha$ and $\beta$ at $z_0$.  Then, because we have a holomorphic vertex algebra (i.e.\ a holomorphically translation invariant factorization algebra on $\C$), the operator product
$$
\alpha(z_0) \cdot \beta(0) \in H^0 (\Obs ( \C \times E)) 
$$
is a holomorphic function of $z_0 \in \C^\times$.   We need to verify that it extends over the locus where $z_0 = 0$.  

Let $S^1_{a'}$ be a circle obtained by displacing $S^1_a$ a small amount; so $S^1_{a}$ and $S^1_a$ don't intersect.  Since our theory is topological in the $w$-direction, the space of operators on $z_0 \times S^1_{a}$ and on $z_0 \times S^1_{a'}$ are the same, so we may as well view $\beta$ as an operator on $0 \times S^1_{a'}$. 

Then, the observables $\alpha(z_0)$ and $\beta(0)$ have disjoint support for all values of $z_0$.  The axioms of a factorization algebra on $\C \times E$ imply that the operator product $\alpha (z_0) \cdot \beta(0)$ is well-defined for all values of $z_0 \in \C$, as desired. 
\end{proof}

\subsection{}
The remainder of the proof involves verifying that $C(\sl_n)$ is maximal.  Along the way, we will give an explicit description of $C(\sl_n)$.

Let us start by computing the classical analog of $C(\g)$ for any $\g$.  Let
$$
C^{cl}(\g) = H^0 ( \Obs^{cl} (z_0\times S^1))
$$
where $\Obs^{cl}$ refers to the factorization algebra obtained by reducing $\Obs$ modulo $\hbar$.  Note that $C^{cl}(\g)$ is the Hochschild homology of the (filtered) algebra $C^\ast(\g[[z]])$.    It follows easily that
$$
C^{cl}(\g_n) = \Oo(\g[[z]]) ^{\g[[z]]} 
$$
is the algebra of adjoint-invariant formal power series on $\g[[z]]$. 

We want to describe explicitly $C^{cl}(\g)$ as a subalgebra of $\Oo(\g[[z]])$. 

The following basic technical result is proved in \cite{BeiDriHitchin}, page 23. 
\begin{proposition}
Let $G$ be a semi-simple algebraic group. Then, 
the algebra of $G[[z]]$-invariant polynomials on $\g[[z]]$ is naturally isomorphic to the algebra of polynomial functions on the pro-scheme $
\left( \mf{h} / W \right) [[z]] $ of maps from the formal disc to $\mf{h}/W$.  
\end{proposition}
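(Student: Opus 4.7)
The plan is to use Kostant's slice theorem to trivialize the Chevalley quotient at the level of arc spaces. Recall that for $G$ semi-simple, the classical Chevalley isomorphism $\Oo(\g)^G \iso \Oo(\mf{h})^W$ identifies $\g /\!/ G$ with $\mf{h}/W$, and Kostant's theorem refines this by producing a section $\sigma : \mf{h}/W \to \g$ whose image $\mc{S} \subset \g$ is a linear subspace (the Kostant slice) consisting entirely of regular elements. The conjugation map $\mu : G \times \mc{S} \to \g$, $(g,s) \mapsto \op{Ad}(g) s$, is smooth with image the regular locus $\g^{\op{reg}}$, which is open and dense in $\g$.

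Applying the arc-space functor to the slice embedding $\mc{S} \hookrightarrow \g$ yields a closed immersion of pro-schemes $\mc{S}[[z]] \hookrightarrow \g[[z]]$, and the Chevalley quotient map restricts to an isomorphism $\mc{S}[[z]] \iso (\mf{h}/W)[[z]]$ (since $\mc{S} \to \mf{h}/W$ is already an isomorphism of affine spaces before taking arcs). The central claim is that the restriction homomorphism
$$
\rho : \Oo(\g[[z]])^{G[[z]]} \longrightarrow \Oo(\mc{S}[[z]]) \iso \Oo((\mf{h}/W)[[z]])
$$
is an isomorphism of algebras.

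Surjectivity of $\rho$ is the easier direction. Let $P_1,\dots,P_r$ be the fundamental Chevalley invariants generating $\Oo(\g)^G$. For an arc $A(z) \in \g[[z]]$ write $P_i(A(z)) = \sum_{k\ge 0} P_i^{(k)}(A(z))\, z^k$; each coefficient $P_i^{(k)}$ is then a $G[[z]]$-invariant polynomial on $\g[[z]]$. Under the identification $\mc{S}[[z]] \iso (\mf{h}/W)[[z]]$, the restrictions $P_i^{(k)} \mid_{\mc{S}[[z]]}$ are exactly the standard polynomial generators of $\Oo((\mf{h}/W)[[z]])$, so $\rho$ hits a set of generators and is therefore surjective.

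The main obstacle is injectivity, which I would deduce by showing that $G[[z]] \cdot \mc{S}[[z]]$ is Zariski dense in $\g[[z]]$ at every finite jet level. Smoothness of $\mu$ combined with Hensel's lemma guarantees that any arc $A(z) \in \g[[z]]$ with $A(0) \in \g^{\op{reg}}$ admits a factorization $A(z) = \op{Ad}(g(z))\, s(z)$ with $g(z) \in G[[z]]$ and $s(z) \in \mc{S}[[z]]$, constructed by solving order by order in $z$. Thus at each truncation $\g[z]/z^n$, the image of $G[z]/z^n \times \mc{S}[z]/z^n \to \g[z]/z^n$ contains the open subset $\{A(z) : A(0) \in \g^{\op{reg}}\}$, which is a nonempty open subset of the irreducible affine space $\g[z]/z^n$ and hence dense. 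Any $G[z]/z^n$-invariant polynomial vanishing on $\mc{S}[z]/z^n$ must vanish on its $G[z]/z^n$-orbit, hence on this dense locus, hence identically. Passing to the colimit in $n$ gives injectivity of $\rho$ and completes the proof.
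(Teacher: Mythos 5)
Your argument is correct, and it is worth noting that the paper does not actually prove this statement: it is quoted from Beilinson--Drinfeld (\emph{Quantization of Hitchin's integrable system}, p.~23), so your write-up supplies a self-contained proof where the paper only gives a citation. Your route -- the Kostant section $\mc{S}\subset\g^{\op{reg}}$, the identification $\mc{S}\iso\mf{h}/W$ surviving the arc-space functor, surjectivity of restriction via the jet coefficients $P_i^{(k)}$ of the Chevalley generators, and injectivity via formal smoothness of $G\times\mc{S}\to\g^{\op{reg}}$ plus density of the ``regular at $z=0$'' locus at each finite jet level -- is essentially the standard argument and is close in spirit to the one in Beilinson--Drinfeld, which also exploits the regular locus and the principal slice. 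Two small points you should tidy up: the Kostant slice $e+\mf{z}(f)$ is an \emph{affine}, not linear, subspace of $\g$ (this changes nothing, since it is still isomorphic to affine space and the arc functor is applied to $\mc{S}\to\mf{h}/W$ as a map of affine spaces); and you silently use that every polynomial function on the pro-scheme $\g[[z]]$ is pulled back from some finite jet scheme $\g[z]/z^n$, and that $G[[z]]$-invariance of such a function is equivalent to $G[z]/z^n$-invariance of its finite-level representative (because the $G[[z]]$-action on $\g[z]/z^n$ factors through the jet group and the truncation maps are surjective). Both facts are easy, but they are exactly what lets your finite-level density argument compute the full invariant ring, so they deserve a sentence. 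With those additions the proof is complete.
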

\begin{remark}
$\mf{h}/W$ refers to the scheme quotient, not the stack quotient; thus $\mf{h}/W$ is the spectrum of the ring of invariant polynomials on $\g$. 
\end{remark}

Next, let us consider this case of two commuting elements of $\g$.
\begin{proposition}
Let $X \subset \g \oplus \g$ be the scheme of pairs of commuting elements.  Let $Y = (\mf{h} \oplus \mf{h}) / W$.  Then, the natural $G[[z]]$-equivariant map $X[[z]] \to Y[[z]]$ induces an isomorphism
$$
\Oo ( X[[z]])^{G[[z]]} \iso \Oo (Y[[z]] ).
$$
\end{proposition}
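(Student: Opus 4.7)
The plan is to reduce this to the ordinary (non-jet) Chevalley restriction theorem for the commuting variety and then upgrade to the jet spaces, in the same spirit as the previous proposition (which is attributed to Beilinson--Drinfeld).

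First I would verify the classical statement $\Oo(X)^G \iso \Oo(Y)$. This is Richardson's theorem on the commuting variety: the restriction map sends a commuting pair $(a,b)$ to the $W$-orbit of a simultaneous diagonalization of its semisimple parts in $\mf{h}\oplus\mf{h}$. Surjectivity is the content of Richardson's result, while injectivity is the standard density argument, since semisimple commuting pairs (which are simultaneously conjugate into $\mf{h}\oplus\mf{h}$) are Zariski-dense in $X$. Where $X$ is not already known to be reduced, one should interpret $\Oo(X)^G$ as functions on the reduced commuting scheme; in type $A$, reducedness is known.

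Second, I would lift this to the jet spaces. Expanding $a(z)=\sum a_i z^i$, $b(z)=\sum b_j z^j$ realizes $X[[z]]$ as the closed subscheme of $\g[[z]]\oplus\g[[z]]$ cut out by $\sum_{i+j=k}[a_i,b_j]=0$. For surjectivity, I would note that $\Oo(Y[[z]])$ is generated by polarizations in $z$ of the $W$-invariant polynomials on $\mf{h}\oplus\mf{h}$; by the classical case each such polynomial extends to a $G$-invariant polynomial on $\g\oplus\g$, and polarizing in the jet coordinates $a_i, b_j$ yields the required $G[[z]]$-invariant function on $X[[z]]$.

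Third, for injectivity I would show that the $G[[z]]$-saturation of $(\mf{h}\oplus\mf{h})[[z]]\cap X[[z]]$ is dense in $X[[z]]$. This is the jet-level analog of the density of semisimple pairs: if $(a(z),b(z))$ is a pair of commuting arcs whose constant terms $(a_0,b_0)$ are a regular semisimple commuting pair, one can conjugate by an element of $G[[z]]$ order by order in $z$ into $\mf{h}[[z]]\oplus\mf{h}[[z]]$, using that the centralizer of a regular semisimple element is the unique Cartan containing it. The main obstacle will be precisely this injectivity step, specifically controlling arcs that enter the singular locus of $X$ where the simultaneous diagonalization degenerates. In type $A$ this can be handled by matching the invariants against the explicit generators $\partial_z^k \op{Tr} A^i$ described earlier in the paper (which come from the evaluation homomorphism for $Y(\sl_n)$); for general semisimple $\g$ some additional input about the reducedness and normality of the commuting jet scheme is needed, which is the reason the maximality part of the integrability theorem is restricted to $\sl_n$.
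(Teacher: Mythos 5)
The central gap is your surjectivity step. What you call ``the content of Richardson's result'' is not: Richardson's theorem is the irreducibility of the commuting variety, nothing more. The statement you actually lean on --- that every $W$-invariant polynomial on $\mf{h}\oplus\mf{h}$ extends to a $G$-invariant polynomial on $\g\oplus\g$ (or even just to a $G$-invariant function on $X$), so that its polarizations give the needed elements of $\Oo(X[[z]])^{G[[z]]}$ --- is the two-variable Chevalley restriction/polarization problem, which is genuinely deep: in type $A$ it follows from classical invariant theory (multisymmetric power sums are restrictions of $\op{Tr}(A^aB^b)$), but for general semisimple $\g$ it is not available as an elementary ``classical case''. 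Since the proposition is stated and proved in the paper for arbitrary semisimple $\g$, an argument that only closes in type $A$ here does not prove it. Your closing diagnosis is also a misreading: the restriction to $\sl_n$ in the maximality theorem comes from the quantization step (lifting classical observables via the evaluation homomorphism), not from this classical proposition; and reducedness of the commuting scheme is not ``known in type $A$'' --- it is a well-known open problem; one simply works with the reduced variety, which is all that Richardson's irreducibility concerns.

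The paper's proof avoids extending invariants altogether, and it also repairs your injectivity step. Let $X^{rss}\subset X$ be the locus where one of the two elements is regular semisimple, and $Y^{rss}$ its image. The complement of $Y^{rss}$ in $Y$ has codimension $\ge 2$ and $Y$ is a finite quotient of a vector space, hence normal, so $\Oo(Y[[z]])=\Oo(Y^{rss}[[z]])$. Over $Y^{rss}$ the map $X^{rss}\to Y^{rss}$ is smooth with $G$ acting transitively on fibers, so $\Oo(Y^{rss}[[z]])\iso \Oo(X^{rss}[[z]])^{G[[z]]}$. Irreducibility of $X$ (this is where Richardson actually enters) gives injectivity of $\Oo(X[[z]])\to \Oo(X^{rss}[[z]])$: the arc space of an irreducible variety in characteristic zero is irreducible, and the arcs whose closed point lies in the open set $X^{rss}$ form a nonempty open, hence dense, subset. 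The sandwich $\Oo(Y[[z]])\subset \Oo(X[[z]])^{G[[z]]}\subset \Oo(X^{rss}[[z]])^{G[[z]]}=\Oo(Y[[z]])$ then closes for every semisimple $\g$, with no polarization input. Your order-by-order diagonalization of arcs with regular semisimple constant term is essentially the transitivity statement over $Y^{rss}$, but as set up you still owe both the density of such arcs (which is exactly what irreducibility of the arc space supplies) and a surjectivity argument that does not pass through extension of invariants (which is what the codimension-two argument on $Y$ supplies).
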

\begin{proof}
Let $X^{rss} \subset X$ be the locus where \emph{either} $A$ or $B$ is a regular semi-simple element.  Note that if $A$ is regular semi-simple, then $B$ is necessarily semi-simple.  Let $Y^{rss} \subset Y$ be the image of $X^{rss}$.  Thus, $Y^{rss}$ is the image of open subscheme 
$$\mf{h}^{rss} \times \mf{h} \cup \mf{h} \times \mf{h}^{rss} \subset \mf{h} \oplus \mf{h}$$
under the quotient map.

Note that the complement of $Y^{rss} \subset Y$ is of codimension $\ge 2$. Thus,
$$
\Oo ( Y^{rss}[[z]] ) = \Oo ( Y[[z]] ).
$$
According to \cite{Ric79}, $X$ is irreducible; therefore, the map
$$
\Oo (X[[z]]) \to \Oo( X^{rss}[[z]])
$$
is injective.   

Further, the map $X^{rss} \to Y^{rss}$ is smooth, and $G$ acts transitively on the fibers.  It follows that the map
$$
\Oo(Y^{rss}[[z]] ) \to \Oo (X^{rss} [[z]] )^{G[[z]]}
$$
is an isomorphism.   Therefore,
$$
\Oo (X^{rss} [[z]] )^{G[[z]]} = \Oo ( Y[[z]] ) .
$$
Since $\Oo (X[[z]])^{G[[z]]}$ is a subalgebra of the left hand side, and $\Oo(X[[z]])^{G[[z]]}$ contains $\Oo(Y[[z]])$ (as there is a natural $G[[z]]$-equivariant map $X[[z]] \to Y[[z]]$), the result follows. 

\end{proof}

\subsection{}
So far, we have shown that 
$$
H^0 (\Obs^{cl}( 0 \times E)) = \Oo ( ((\mf{h} \oplus \mf{h}) / W) [[z]] ) ;
$$
and that
$$
C^{cl}(\g) = H^0  ( \Obs^{cl} (0 \times E)) = \Oo ((\mf{h}/W)[[z]]).
$$
Next, we will analyze the quantum commutative algebra 
$$C(\g)= H^0 (\Obs ( 0 \times E)) $$ in the case $\g = \sl_n$. Note that $C(\g)$ is a commutative algebra equipped with a derivation, corresponding to infinitesimal translation in the $z$-direction.
\begin{theorem}
There is an isomorphism of commutative algebras
$$
C(\sl_n) = \Oo( (\mf{h}/W) [[z]] ) [[\hbar]]
$$
under which the derivation on $C(\sl_n)$ corresponds to the derivation $\partial_z$ on $\Oo( (\mf{h}/W) [[z]] ) [[\hbar]].$
\end{theorem}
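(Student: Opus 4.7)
My plan is to construct explicit quantum lifts of the classical generators $\partial_z^k P_i$ of $C^{cl}(\sl_n) = \Oo((\mf{h}/W)[[z]])$ inside $C(\sl_n)$, using the evaluation homomorphism of the Yangian that is available in type $A$, and then to identify the full algebra by a filtration argument based on the $\hbar$-adic filtration of $Y(\sl_n)$ and the identification $C(\sl_n) = HH_0(Y(\sl_n))^\vee$ from Theorem~\ref{theorem_hochschild_yangian}.

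First, I would construct quantum lifts of the fundamental invariants $P_i \in \Oo(\mf{h}/W)$ for $i = 2, \ldots, n$. The evaluation homomorphism $\op{ev}: Y(\sl_n) \to U(\sl_n)[[\hbar]]$, which exists precisely in type $A$, induces a homomorphism $HH_0(Y(\sl_n)) \to HH_0(U(\sl_n))[[\hbar]]$ of commutative algebras. Dualizing yields a $\C[[\hbar]]$-linear map $\Oo(\mf{h}/W)[[\hbar]] = HH_0(U(\sl_n))^\vee[[\hbar]] \to HH_0(Y(\sl_n))^\vee = C(\sl_n)$, and I let $\widetilde{P}_i \in C(\sl_n)$ be the image of the classical invariant $P_i$. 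Equivalently, $\widetilde{P}_i$ can be expressed as a linear combination of Wilson operator characters $\chi_V$ of fundamental representations $V$ of $\sl_n$, pulled back to $Y(\sl_n)$-modules via $\op{ev}$.

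Next, I would apply the derivation $\partial_z$ on $C(\sl_n)$---which arises from holomorphic translation on the factorization algebra $\Obs$ in the $z$-direction---to obtain elements $\widetilde{P}_i[k] := \partial_z^k \widetilde{P}_i$ for $k \geq 0$ and $i = 2, \ldots, n$. Dually on $Y(\sl_n)$, this $\partial_z$ is the derivation obtained by differentiating the shift automorphism $T_\lambda$ at $\lambda = 0$, which modulo $\hbar$ reduces to multiplication by $z$ on $U(\g[[z]])$. Hence the classical limit of $\widetilde{P}_i[k]$ is exactly the generator $\partial_z^k P_i \in \Oo((\mf{h}/W)[[z]])$ of the classical algebra identified earlier. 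In particular the collection $\{\widetilde{P}_i[k]\}$ realizes, in $C(\sl_n)$, a set of quantum observables whose symbols are the free polynomial generators of $C^{cl}(\sl_n)$.

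To conclude, let $R \subset C(\sl_n)$ be the $\C[[\hbar]]$-subalgebra generated by the $\widetilde{P}_i[k]$, and endow both $R$ and $C(\sl_n)$ with the complete decreasing filtration inherited from the $\hbar$-adic filtration on $Y(\sl_n)$. The associated graded of $C(\sl_n)$ is the classical algebra $\Oo((\mf{h}/W)[[z]])$, which is freely generated by the symbols of the $\widetilde{P}_i[k]$. It follows that any polynomial relation among the $\widetilde{P}_i[k]$ must descend to a trivial relation in the associated graded, so that $R$ is a polynomial algebra freely generated by them and $\op{Gr} R = \op{Gr} C(\sl_n)$. Completeness of the filtration then forces $R = C(\sl_n)$, giving the stated isomorphism $C(\sl_n) \cong \Oo((\mf{h}/W)[[z]])[[\hbar]]$ with the derivation corresponding to $\partial_z$. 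The main obstacle is the very first step: the construction of $\widetilde{P}_i$ relies essentially on the evaluation homomorphism, which has no analogue for $\g$ outside type $A$. Any extension would require a different mechanism for producing sufficiently many continuous traces $Y(\g) \to \C[[\hbar]]$ vanishing on commutators---for instance, by a direct construction of central elements in a completion of $Y(\g)$; everything after the first step is formal.
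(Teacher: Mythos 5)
Your proposal is correct and follows essentially the same route as the paper: use the evaluation homomorphism (available only in type $A$) to pull back trace functionals and thereby quantize the fundamental invariants $P_i$, apply the quantum lift of the derivation $\partial_z$ to generate lifts of all classical generators $\partial_z^k P_i$, and conclude since these generate $C^{cl}(\sl_n)$. The paper states this more tersely ("every element of $C^{cl}(\sl_n)$ quantizes"), while you spell out the final filtration/associated-graded step explicitly, but the argument is the same.
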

\begin{remark}
Concretely, this means the following.  Let $P_2,\dots,P_n$ be generators for the ring of functions on $\mf{h}/W$.   Let $P_i^k$ be the function on $(\mf{h}/W)[[z]]$ which sends $A$ to
$$
\partial_z^k P(A(z)) \mid_{z = 0}.
$$
Then, $\Oo ( (\mf{h}/W)[[z]])$ is the power series algebra $\C[[P_i^k \mid i = 2 \dots n, k = 0 \dots \infty]]$. 
\end{remark}
\begin{proof}
We use the existence of the evaluation homomorphism $Y(\mf{sl}_n) \to U(\mf{sl}_n) [[\hbar]]$.  Every linear map
$$
l : U(\mf{sl}_n) [[\hbar]] \to \C[[\hbar]]
$$
with $l([a,b]) = 0$ thus gives an element of $C(\sl_n)$.  The space of such linear maps is $\Oo ( \mf{h}/W)[[\hbar]]$.  Thus, the invariant polynomials $P_i \in \Oo(\mf{h}/W)$ all quantize in a natural way.  Since $C^{cl}(\sl_n)$ is generated by the $P_i$ and their $z$-derivatives, and since the derivation $\partial_z$ lifts to the quantum level, we see that every element of $C^{cl}(sl_n)$ quantizes, as desired.  
\end{proof}

The final result in this section is the following.
\begin{theorem}
The commutative holomorphic vertex algebra $C(\sl_n) \subset H^0  (\Obs ( 0 \times E))$ is maximal. 
\end{theorem}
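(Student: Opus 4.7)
The plan is to prove maximality by reducing to a classical statement about Poisson vertex algebras via the $\hbar$-filtration, and then lifting back to the quantum level order by order in $\hbar$.

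First, I would identify the classical picture. We have already shown that
$$H^0(\Obs^{cl}(0 \times E)) = \Oo(((\mf{h}\oplus\mf{h})/W)[[z]])$$
and
$$C^{cl}(\sl_n) = \Oo((\mf{h}/W)[[z]]),$$
where the inclusion $C^{cl}(\sl_n) \hookrightarrow H^0(\Obs^{cl}(0\times E))$ is induced by the projection $(\mf{h}\oplus\mf{h})/W \to \mf{h}/W$ onto the first factor (corresponding to the restriction map from a torus to a circle inside it). Moreover, the isomorphism $C(\sl_n) = \Oo((\mf{h}/W)[[z]])[[\hbar]]$ (via the evaluation homomorphism) shows the quantum deformation of $C^{cl}(\sl_n)$ is trivial as a commutative algebra, a fact we will exploit in the lifting step.

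Next, I would identify the Poisson vertex algebra structure induced on $H^0(\Obs^{cl}(0\times E))$ from the quantization, via a calculation analogous to proposition \ref{proposition_bialgebra}. The commutator in $\Obs(0\times E)$ vanishes at leading order in $\hbar$, and the first nonvanishing term defines a secondary bracket. I expect this to recover, on $(\mf{h}\oplus\mf{h})/W$, the Atiyah--Bott symplectic structure on the moduli of flat $G$-bundles on $E$ (via the two holonomies), for which the projection $(\mf{h}\oplus\mf{h})/W \to \mf{h}/W$ onto the first factor is a Lagrangian fibration. The classical maximality of $\Oo((\mf{h}/W)[[z]]) \subset \Oo(((\mf{h}\oplus\mf{h})/W)[[z]])$ as a Poisson vertex algebra would then follow from the Lagrangian property fiberwise together with a loop-space argument using the natural derivation $\partial_z$: any $f$ depending nontrivially on the second $\mf{h}$-factor or its $z$-derivatives will fail to Poisson-commute with a sufficiently generic $W$-invariant function of $A(z)$.

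Finally, I would lift to the quantum statement. Suppose $\alpha = \sum_{k\ge 0} \hbar^k \alpha_k \in H^0(\Obs(0\times E))$ has nonsingular OPE with every element of $C(\sl_n)$. Passing to the associated graded in $\hbar$, the image $\alpha_0$ Poisson-commutes with all of $C^{cl}(\sl_n)$, so by classical maximality $\alpha_0 \in C^{cl}(\sl_n)$. Using the splitting $C(\sl_n) \simeq C^{cl}(\sl_n)[[\hbar]]$, lift $\alpha_0$ to $\widetilde{\alpha}_0 \in C(\sl_n)$; then $\alpha - \widetilde{\alpha}_0$ is divisible by $\hbar$ and still nonsingularly OPE-commutes with $C(\sl_n)$, so dividing by $\hbar$ and iterating concludes the proof.

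The main obstacle is the middle step: explicitly identifying the Poisson vertex algebra structure on $\Oo(((\mf{h}\oplus\mf{h})/W)[[z]])$ and verifying classical maximality. This requires a Feynman-diagram style calculation similar to the one carried out in proposition \ref{proposition_bialgebra}, but now producing the full Poisson vertex bracket rather than just its restriction to linear observables, and then combining that with the standard fact that the zero section of a cotangent bundle is a maximal Lagrangian (upgraded to jet spaces in order to handle the vertex algebra layer).
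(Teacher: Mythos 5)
Your proposal follows essentially the same route as the paper's proof: establish that $C^{cl}(\sl_n)=\Oo((\mf{h}/W)[[z]])$ is a maximal chiral-Poisson-commutative subalgebra of $\Oo(((\mf{h}\oplus\mf{h})/W)[[z]])$, using that $(\mf{h}\oplus\mf{h})/W\to\mf{h}/W$ is (over the regular semisimple locus) a cotangent-type Lagrangian fibration, and then lift maximality to the quantum level by the standard order-by-order argument in $\hbar$, using the splitting $C(\sl_n)\cong C^{cl}(\sl_n)[[\hbar]]$. The step you flag as the main obstacle is handled in the paper along exactly the lines you indicate: after restricting to the regular semisimple locus (whose complement has codimension $\ge 2$, together with irreducibility of the commuting variety for injectivity), the chiral Poisson bracket is identified via the symplectic reduction $(\mf{h}\oplus\mf{h})/W = T^\ast\g /\!\!/ G$ and computed explicitly in the \'etale-local model $T^\ast\C^n[[z]]$, from which maximality of $\Oo(\C^n[[z]])$ is immediate.
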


\begin{proof}

The strategy of the proof is one which is standard in the literature on quantum-mechanical integrable systems.  If one has a Poisson algebra $A$ quantizing to an associative algebra $A^q$, and if $B \subset A$ is a Poisson commutative subalgebra quantizing to a commutative subalgebra $B^q \subset A^q$, then if $B$ is a maximal Poisson commuting subalgebra then $B^q$ is a maximal commutative subalgebra. 

Morally speaking, this type of argument applies directly to our situation.  The analog of $B$ in our situation is $\Oo(\mf{h}/W[[z]])$, and the analog of $A$ is $\Oo ( (\mf{h} \oplus \mf{h} )/ W[[z]] )$.  Since $\mf{h} / W$ is half the size of  $(\mf{h} \oplus \mf{h} )/ W$ -- or more precisely, since $(\mf{h} \oplus \mf{h} )/ W  \to \mf{h} / W$ has Lagrangian fibers -- the result should be immediate.  

However, in order to make this kind of argument precise, we need to use a Poisson-type structure on the commutative algebra of classical observables, which I call a chiral Poisson structure ((in \cite{BeiDri04} this structure is called a coisson algebra; I have just decompressed their notation).    Unfortunately, lack of space means I will only sketch the existence and formal properties of this Poisson structure.  However, this structure will be very familiar to experts on vertex and chiral algebras . 

The structure is the following. For every $f \in \C[z]$, we have a bracket
$$
\pi_f : H^0  (\Obs^{cl} ( 0 \times E)) \otimes H^0  (\Obs^{cl} ( 0 \times E)) \to H^0  (\Obs^{cl} ( 0 \times E))
$$
satisfying the following properties. (Not all properties are listed: in particular, I will not detail the analog of the Jacobi identity here). 
\begin{enumerate}
\item If $\alpha,\beta \in H^0  (\Obs^{cl} ( 0 \times E))$ are observables that quantize to $\til{\alpha}, \til{\beta}$ modulo $\hbar^2$, then
$$
\hbar \pi_f ( \alpha,\beta) = \frac{1}{2 \pi i}\oint_{\abs{z_0} = \eps}  \til{\alpha}(0) \til{\beta}(z_0) f(z_0) \d z_0.
$$
Here, $\til{\alpha}(0)$ and $\til{\beta}(z)$ refer to the observables $\til{\alpha}$, $\til{\beta}$ placed at $0 \times E$ and $z_0 \times E$; we then take their operator product, which is an observable on the disc $\abs{z} < 2 \eps$ depending holomorphically on the position of $z_0$; then we take the contour integral.    One can take this property as defining the bracket $\pi_f$ on the subalgebra of those elements of $H^0 ( \Obs^{cl}(0 \times E))$ which quantize modulo $\hbar^2$.  It is easy to check that the bracket is independent of the choice of the lifts $\til{\alpha}$ and $\til{\beta}$.  A little work is required to check that the bracket has a natural definition on all of $H^0 ( \Obs^{cl} ( 0\times E))$. 
\item $\pi_f$ is a derivation in each factor with respect to the commutative product on $H^0 ( \Obs^{cl} ( 0 \times E))$. 
\item $\pi_{f}(\alpha,\beta)$ is linear in $f$ (as well as in $\alpha$ and $\beta$). 
\item Let $\del$ denote the derivation of $H^0 ( \Obs^{cl} ( 0 \times E))$ corresponding to infinitesimal translation in the $z$-direction. Then, 
$$
\del \pi_f ( \alpha,\beta ) =  \pi_f (\del \alpha, \beta ) + \pi_f ( \alpha,\del \beta). 
$$ 
This follows from the contour integral expression above and the fact that $\del$ is a derivation for the operator product:
$$
\del ( \til{\alpha}(0) \til{\beta}(z_0) ) = \del ( \til{\alpha}(0))  \til{\beta}(z_0)  +  \til{\alpha}(0)  \del \til{\beta}(z_0).
$$
\item In addition, the Leibniz rule 
$$
\pi_f ( \alpha, \del \beta ) + \pi_{\dpa{z} f} (\alpha,\beta) = 0
$$
holds.  This follows from the fact that $\del \beta(z) = \frac{\d}{\d z} \beta(z)$, and integration by parts in the contour integral expression above. 

\item Recall that
$$
H^0 ( \Obs^{cl } (0 \times E)) = \Oo ( Y[[z]] ) 
$$
where $Y = (\mf{h} \oplus \mf{h} ) / W$, and $\Oo$ indicates formal power series around the origin in $Y$.   Note that $Y$ is a Poisson scheme in a natural way: indeed, we can describe $Y$ as the symplectic reduction of $\g \oplus \g = T^\ast \g$ under the adjoint $G$-action. 

There is an evaluation map $\op{ev} : Y[[z]] \to Y$.    If $\alpha,\beta \in \Oo(Y)$, then for all $f \in \C[z]$,
$$
\pi_f( \op{ev}^\ast \alpha, \op{ev}^\ast \beta) = f(0) \op{ev}^\ast \{\alpha,\beta\}
$$ 
where $\{-,-\}$ is the Poisson bracket on $\Oo(Y)$.   This fact is a little calculation: the main point is that the theory we are considering on $\C \times E$, when dimensionally reduced along $\C$, is classically equivalent to the theory describing chiral differential operators on the stack $\g/G$.  It follows that the chiral Poison bracket is the same as the one describing the classical limit of CDOs of $\g/G$, which lives inside the chiral Poisson algebra describing CDOs on $\g$.  
\end{enumerate}
From these properties, it is straightforward to complete the proof. There is a map $Y \to \mf{h}/W$ by projecting onto one factor of $\mf{h} \oplus \mf{h}$.   We need to show the following.  Suppose that $\alpha \in \Oo( Y[[z]])$ has the property that for all $\beta \in \Oo( \mf{h}/ W[[z]])$, and for all $f \in \C[z]$, then $\pi_f ( \alpha,\beta) = 0$.  Then, $\alpha \in \Oo ( \mf{h} / W[[z]] )$.  For brevity, we will describe this property by saying that $\Oo ( \mf{h}/W[[z]])$ is a maximal chiral Poisson commuting subalgebra.  

It suffices to prove the analogous result if we use polynomials instead of formal power series.  For the rest of the proof, $\Oo$ will denote polynomials. 

Let $\mf{h}^{rss} \subset \mf{h}$ be the locus of regular semi-simple elements.  Then, the inverse image of $\mf{h}^{rss}/ W$ in $Y$ is $T^\ast (\mf{h}^{rss} / W)$.  Because $\Oo(Y[[z]])$ is a subalgebra of $\Oo (((T^\ast (\mf{h}^{rss} / W) )[[z]])$, it suffices to show that $\Oo (\mf{h}^{rss}/W[[z]])$ is a maximal chiral Poisson commuting subalgebra of  $\Oo (((T^\ast (\mf{h}^{rss} / W) )[[z]])$.   We will show that the analogous fact holds if we replace $\mf{h}^{rss} / W$ by any smooth algebraic variety.  The statement is \'etale local, so it suffices to work on $\C^n$.  Then, the chiral Poisson algebra $\Oo ( T^\ast \C^n[[z]])$ is, as an algebra, freely generated by $\del^k x_i$ and $\del^k y_j$, where $x_i,y_j$ are coordinates on $T^\ast \C^n$.  The chiral Poisson bracket is 
$$
\pi_{f}  ( \del^k x_i, \del^l y_j) ) = \delta_{ij} \frac{1}{2 \pi i}\oint_{\abs{z} = 1} (-1)^{k+l} (k+l)! z^{-k-l} f(z) \d z . 
$$
It is easy to check from this expression that $\Oo(\C^n[[z]])$ is a maximal chiral Poisson commuting subalgebra. 
\end{proof}

\section{The operator product in the $z$-plane} 
\label{operator_product_z}
So far, we have seen that finite dimensional representations of the Yangian give Wilson loops for our theory; and that the operator product in the $w$-plane encodes the tensor product of such representations.

In this section we will analyze the operator product in the $z$-direction: we will see that this is encoded by the universal $R$-matrix for the Yangian.  

\subsection{}
Before we proceed, I should explain what I mean by the operator product in the $z$-direction.  We have constructed, for each disc $D(z_0,r)$ in the $z$-plane $\C_z$, an $E_2$ algebra $\F_{D(z_0,r)}$, associated by Lurie's theorem to the factorization algebra $\pi_\ast ( \F\mid_{D(z_0,r)})$ on $\C_w$.    We have also constructed a sub-$E_2$ algebra $\F_{z_0} \subset \F_{D(z_0,r)}$, as the direct sum of the $S^1$-eigenspaces of $\F_{D(z_0,r)}$.   We have seen that the $E_2$ algebra $\F_{z_0}$ is Koszul dual to the Yangian.  

All this structure came from a factorization algebra $\F$ on $\C_z\times \C_w$.   However, all of our constructions relied on pushing forward to give a factorization algebra on $\C_w$: we have not used the full structure of a factorization algebra on $\C_z \times \C_w$. 

The remaining structure is encoded by the following result.  If $U \subset \C_z$, let $\F_U$ denote the $E_2$ algebra associated to the factorization algebra $\pi_\ast \F \mid_{U \times \C_w}$ on $\C_w$.  
\begin{proposition}
Sending $U \to \F_U$ defines a holomorphically translation invariant prefactorization algebra on $\C_z$ valued in $E_2$ algebras.  
\end{proposition}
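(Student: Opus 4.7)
The plan is to deduce all three pieces of structure -- the $E_2$-algebra-valued prefactorization maps, their associativity, and holomorphic translation invariance -- directly from the fact that $\F$ is already a translation-invariant factorization algebra on the product $\C_z \times \C_w$ which is holomorphic in $z$ and topological in $w$.

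First I would construct the structure maps. For disjoint opens $U_1,\dots,U_n \subset V \subset \C_z$ and any open $D \subset \C_w$, the factorization structure of $\F$ on $\C_z \times \C_w$ provides cochain maps $\F(U_1 \times D) \otimes \dots \otimes \F(U_n \times D) \to \F(V \times D)$.  As $D$ varies in $\C_w$ these assemble into a morphism of locally constant prefactorization algebras $\pi_\ast \F|_{U_1 \times \C_w} \otimes \dots \otimes \pi_\ast \F|_{U_n \times \C_w} \to \pi_\ast \F|_{V \times \C_w}$ on $\C_w$.  Lurie's identification between locally constant factorization algebras on $\C_w$ and $E_2$-algebras is functorial, so this morphism corresponds to a map of $E_2$-algebras $\F_{U_1} \otimes \dots \otimes \F_{U_n} \to \F_V$.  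The genuine $E_2$-algebra compatibility -- as opposed to mere compatibility with the underlying associative product -- comes from the single 2D associativity axiom applied to boxes $U_i \times D_j$: for any configuration of disjoint discs $D_1,\dots,D_m \subset D$ in $\C_w$, the two possible composites of structure maps of $\F$ from $\bigotimes_{i,j} \F(U_i \times D_j)$ to $\F(V \times D)$ agree, and this common map is exactly what encodes the $E_2$-functoriality we need.

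Associativity of $\{\F_U\}$ as a prefactorization algebra on $\C_z$ with values in $E_2$-algebras is then automatic from the associativity of $\F$ on the product, applied to inclusions in the $z$-direction.  For holomorphic translation invariance, the key observation is that the sheaf of dg Lie algebras describing the theory is translation invariant on $\C_z \times \C_w$, and the antiholomorphic vector field $\dpa{\zbar}$ acts on it as the commutator $[\dbar, \iota_{\partial/\partial \zbar}]$, hence is homotopically trivial at the cochain level.  Since the quantization constructed in the appendix respects this symmetry, the induced action of $\C_z$-translations on each $\F_U$ descends to a homotopy-trivial action of antiholomorphic translations, naturally in $U$.  This is exactly what one means by saying that $\{\F_U\}$ is holomorphically translation invariant on $\C_z$ with values in $E_2$-algebras.

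The main obstacle is technical rather than conceptual: the passage from ``morphism of locally constant prefactorization algebras on $\C_w$'' to ``morphism of $E_2$-algebras'' with all higher-homotopical compatibilities.  One clean way is to invoke a parametrized version of Lurie's theorem identifying factorization algebras on $\C_z \times \C_w$ that are locally constant in the $w$-direction with $E_2$-algebra-valued factorization algebras on $\C_z$; a more hands-on alternative is to check the coherences directly for multi-disc configurations in $\C_w$ using the 2D associativity axiom of $\F$, exactly as in the construction step above.
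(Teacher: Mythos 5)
Your proposal is correct and follows essentially the same route as the paper: both derive the structure maps from the factorization structure on $\C_z \times \C_w$ (viewed as a prefactorization algebra on $\C_z$ valued in locally constant prefactorization algebras on $\C_w$), appeal to Lurie's theorem to translate this into $E_2$-algebra data with its coherences, and deduce holomorphic translation invariance from that of $\F$ itself. The only difference is cosmetic: you additionally unpack why $\F$ is holomorphically translation invariant (translation invariance of the dg Lie algebra plus the homotopy trivialization of $\dpa{\zbar}$, as in the appendix), whereas the paper simply cites this property of $\F$.
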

\begin{remark}
\begin{enumerate}
\item Unlike the factorization algebras we have discussed earlier, in this example the axioms of a factorization algebra only hold up to coherent homotopy.  This is because Lurie's theorem relating locally-constant factorization algebras and $E_n$ algebras takes place in the world of $\infty$-categories. 
\item The concept of a holomorphically translation invariant factorization algebra is explained in \cite{CosGwi11}.  Essentially, this means that the operator product in the $z$-direction is holomorphic. One can show that the cohomology of a holomorphically-translation invariant factorization algebra with a compatible $S^1$ action has the structure of a vertex algebra, as defined by Borcherds.  This is not quite good enough for our purposes, as this construction loses the homotopy coherences present in the structure of holomorphically-translation invariant factorization algebra. 
\end{enumerate}
\end{remark}
\begin{proof}
The cochain complex underlying $\F_U$ is $\F(U \times D_w)$, where $D_w \subset \C_w$ is any disc.  The structure of a factorization algebra on $\C_z \times \C_w$ gives us maps
$$
\F_{U_1} \times \dots \times \F_{U_k} \to \F_V
$$
if $U_1 \amalg \dots \amalg U_k \subset V$.  We need to show that these maps are maps of $E_2$ algebras, and that the associativity properties of a factorization algebra hold (up to coherent homotopy) in the category of $E_2$ algebras.  This follows immediately from Lurie's results. 

Indeed, the fact that we have a factorization algebra on $\C_z \times \C_w$ shows that we have maps, in the multicategory of prefactorization algebras on $\C_w$, 
$$
\pi_\ast (\F \mid_{U_1 \times \C_w}) \times \dots \times \pi_\ast (\F \mid_{U_k \times \C_w})  \to \pi_\ast (\F \mid_{V \times \C_w}) ,
$$
so that sending $U$ to $\pi_\ast (\F \mid_{U \times \C_w})$ defines a prefactorization algebra on $\C_z$ valued in prefactorization algebras on $\C_w$.  (See \cite{GinTraZei10} for more on this point).  Now, by Lurie's theorem, we can translate any structure in the world of locally-constant prefactorization algebras on $\C_w$ to one in the world of $E_2$ algebras, so that we have a prefactorization algebra on $\C_z$ valued in $E_2$ algebras.

The fact that we find a holomorphically translation invariant factorization algebra follows from the fact that the factorization algebra $\F$ on $\C_z \times \C_w$ is holomorphically translation invariant. 

\end{proof}

Let us unpack this result a little.  For each disc $D(z,r)$, we have an $E_2$ algebra $\F_{D(z,r)}$.  If $D(z_0,r_0), \dots, D(z_k,r_k) \subset D(z',r')$ are disjoint discs, we have a homomorphism of $E_2$ algebras 
$$
\F_{D(z_0,r_0)} \otimes \dots \otimes \F_{D(z_k,r_k)} \to \F_{D(z',r')},
$$
which doesn't depend on the ordering chosen on the $z_i$. 

Suppose that the first $i+1$ discs $D(z_0,r_0), \dots, D(z_i,r_i)$ are contained in $D(z'', r'')$, and that $D(z'',r'')$ is disjoint from the remaining discs $D(z_{i+1}, r_{i+1}), \dots D(z_k,r_k)$.  Then, the following diagram of $E_2$ algebras commutes up to homotopy: 
$$
\xymatrix{
\F_{D(z_0,r_0)} \otimes  \dots \otimes \F_{D(z_k,r_k)}    \ar[d] \ar[dr] &  \\
\F_{D(z'',r'')} \otimes \F_{D(z_{i+1},r_{i+1})} \dots \otimes \F_{D(z_k,r_k)}  \ar[r] & \F_{D(z',r')}. 
}
$$

The fact that we have a holomorphically translation invariant factorization algebra means that these product maps vary holomorphically with the position of the $z_i$.   Let 
$$P(r_1,\dots,r_k \mid s) \subset \C_z^k$$
be the open subset consisting of those $z_1,\dots, z_k$ where the closures of the discs $D(z_i,r_i)$ are disjoint and contained in $D(0,s)$.    For each $p \in P(r_1,\dots,r_k \mid s)$ we have a product map of $E_2$ algebras
$$
m_p : \F_{D(z_1,r_1)} \otimes \dots \otimes \F_{D(z_k,r_k)} \to \F_{D(0,s)}.
$$
This map varies holomorphically with $p$, in a certain rather technical sense which I will briefly describe. 

As explained in \cite{CosGwi11}, the cochain complex $\F(U)$ of observables on any open subset $U \subset \C_z \times \C_w$ has an extra structure: it is a cochain complex of differentiable vector spaces.   A differentiable vector space $V$ has enough structure to talk about smooth maps $M \to V$ from any manifold $M$, and to differentiate these smooth maps.   The differentiable structure on the factorization algebra $\F$ is compatible with products, and so makes $\F_{D(0,s)}$ into a differentiable $E_2$ algebra.  The differentiable structure on $\F_{D(0,s)}$ is enough structure that we can talk about the Dolbeault complex $\Omega^{0,\ast}( P(r_1,\dots,r_k \mid s) , \F_{D(0,s)})$: this is again an $E_2$ algebra. 

The map $m_p$ is holomorphic in the sense that it lifts to a map of $E_2$ algebras
$$
m_{P(r_1,\dots,r_k \mid s )} : \F_{D(0,r_1)} \otimes \dots \otimes \F_{D(0,r_k)} \to \Omega^{0,\ast}(P(r_1,\dots,r_k \mid s), \F_{D(0,s)}).
$$
(In this expression we have identified $\F_{D(z_i,r_i)}$ with $\F_{D(0,r_i)}$ using the translation isomorphism).  

\subsection{}
Let us now focus on the map of $E_2$ algebras that encodes the $R$-matrix. Fix $s,r_0,r_1 > 0$ with $2 r_0 + r_1 < s$.  Let  $A$ be the annulus consisting of those $z \in \C$ with 
$$
r_1 + r_0 < \abs{z} < s - r_0.
$$
The assumptions on $r, s_0, s_1$ guarantee that this set is non-empty.  

Note that $A$ is the locus where the discs $D(z, r_0)$ and $D(0,r_1)$ are disjoint and contained in $D(0,s)$. 

As above, the operator product in the $z$-direction gives a map in the homotopy category of $E_2$ algebras
\begin{equation*}
\F_{D(0,r_0)} \otimes \F_{D(0,r_1)} \to \F_{D(0,s)} \otimes \Omega^{0,\ast}(A). \tag{$\dagger$} 
\end{equation*}
Note that this map is $S^1$-equivariant, where $S^1$ acts on the left hand side by rotation the discs $D(0,r_i)$ and on the right hand side by rotating the disc $D(0,s)$ and the annulus $A$.  (This $S^1$-equivariance follows immediately from the fact that the factorization algebra $\F$ on $\C_z \times \C_w$ we started with is equivariant for the $S^1$ action which rotates around the origin in $\C_z$). 

By passing to the direct sum of $S^1$-eigenspaces on the right hand side, we get a map 
\begin{equation*}
\F_0 \otimes \F_0 \to \F_{D(0,s)} \otimes \Omega^{0,\ast}(A) \tag{$\ddagger$}.
\end{equation*}
\begin{lemma}
The map of $E_2$ algebras $\ddagger$ factors (up to homotopy) through $\F_{D(0,s)} \otimes \Omega^{0,\ast}(D(0,s))[z^{-1}]$. 

Further, there is a map in the homotopy category of $E_2$ algebras
$$
m_{OPE} : \F_0 \otimes \F_0 \to \F_0 \otimes \C((z)) 
$$
with the property that the diagram of $E_2$ algebras commutes up to homotopy:
$$
\xymatrix{
\F_0 \otimes \F_0 \ar[r] \ar[d]& \F_0 \otimes \C((z)) \ar[d] \\
\F_{D(0,s)} \otimes \Omega^{0,\ast}(D(0,r))[z^{-1}] \ar[r] & \F_{D(0,s)} \otimes \C((z)) }.
$$
Here the left hand vertical arrow is the map ($\ddagger$), and the bottom horizontal arrow is Laurent expansion (which we will see makes sense in this context).
\label{lemma_ope_exists} 
\end{lemma}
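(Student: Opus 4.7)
The plan is to prove both assertions by extending the domain of $(\ddagger)$ from the annulus $A$ to the punctured disc $D(0,s)\setminus\{0\}$, decomposing into $S^1$-eigenspaces for the rotation of $z$, and then bounding the pole order of the resulting map at $z=0$; once this pole-order bound is in hand, Laurent expansion will produce $m_{OPE}$ and make the asserted square commute tautologically.

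First I would promote $(\ddagger)$ to a holomorphically translation invariant, $S^1$-equivariant map of $E_2$ algebras
$$
\F_0 \otimes \F_0 \;\longrightarrow\; \F_{D(0,s)} \otimes \Omega^{0,*}(D(0,s)\setminus\{0\}).
$$
Because $\F_0$ is the completed direct sum of the $S^1$-eigenspaces of $\F_{D(0,r)}$ for any $r>0$ and the restriction maps are isomorphisms on eigenspaces, passing to $\F_0$ in each factor of $(\ddagger)$ amounts to sending $r_0,r_1\to 0$; in this limit the annulus $A$ exhausts the punctured disc. Since $D(0,s)\setminus\{0\}$ is Stein, its Dolbeault complex is quasi-isomorphic to $\Oo(D(0,s)\setminus\{0\})$, the space of Laurent series converging there. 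Under $z\mapsto\lambda z$ the weight-$n$ eigenspace of $\Oo(D(0,s)\setminus\{0\})$ is the one-dimensional space $\C\cdot z^n$, so $S^1$-equivariance forces the component on $\F_0^a\otimes\F_0^b$ to land in $\prod_{n\in\Z}\F_{D(0,s)}^{a+b-n}\otimes\C\cdot z^n$.

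The first assertion is now equivalent to the statement that, for fixed $(a,b)$, only finitely many $n<0$ contribute; i.e.\ the operator product has finite pole order at $z=0$. This is the main obstacle. I would establish it by invoking the construction of the quantum factorization algebra of \cite{CosGwi11}: modulo $\hbar^m$ and at fixed $S^1$-weight $a+b$, the map is computed by a finite sum of Feynman diagrams built from the propagator of the theory on $\C_z\times D_w$. This propagator is holomorphic in $z$ away from the diagonal and carries only a bounded-order $z$-singularity there (its $z$-dependent piece is supplied by $\dbar_z^{-1}$ applied to a delta kernel, the remaining factors being smooth or harmonic in the $w$-directions). Each diagram contributes a Laurent tail bounded by its number of internal edges, and $\deg(\hbar)=1$ forces only finitely many diagrams to contribute at fixed $(a,b)$ modulo $\hbar^m$; $\hbar$-adic completeness of $\F_0$ then gives a uniform bound $n\geq -N(a,b)$.

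With the pole-order bound in place, the map factors through $\F_{D(0,s)}\otimes\Omega^{0,*}(D(0,s))[z^{-1}]$, which is the first claim. For the second, I would define $m_{OPE}$ by Laurent expansion at $z=0$: each element of $\Oo(D(0,s))[z^{-1}]$ has a unique Laurent series in $\C((z))$, and because the $\F_{D(0,s)}$-component is by construction a completed direct sum of $S^1$-eigenspaces, it lies in $\F_0$. Commutativity of the square is then tautological from the compatibility of Laurent expansion with the inclusion of holomorphic functions on $D(0,s)\setminus\{0\}$ into formal Laurent series, and the $E_2$-homomorphism property and higher homotopy coherences for $m_{OPE}$ are inherited from those of $(\ddagger)$ via the holomorphically-translation-invariant factorization structure on $\C_z$.
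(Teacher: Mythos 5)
Your reduction of the first assertion to a \emph{uniform} finite-pole-order statement is where the argument breaks. The bound you assert, ``$\hbar$-adic completeness of $\F_0$ then gives a uniform bound $n\geq -N(a,b)$,'' is false: the diagrammatics only bound the pole order \emph{modulo each fixed power of $\hbar$}, with the bound growing with the $\hbar$-order, and $\hbar$-adic completeness does not collapse these per-order bounds into one independent of $\hbar$. Indeed, the eventual answer makes this visible: under Koszul duality the map $m_{OPE}$ is encoded by the universal $R$-matrix $R(z)=\sum_i z^{-i}R_i$ with $R_i$ of weight $i$, an infinite series whose pole order is unbounded (only the coefficient of each $\hbar^m$ has bounded poles). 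Equivalently, by $S^1$-equivariance the coefficient of $z^{-N}$ in the OPE of observables of weights $a,b$ has weight $a+b+N$, and since classical observables have non-positive weight while $\hbar$ has weight $1$, this coefficient is divisible by ever higher powers of $\hbar$ as $N\to\infty$ --- it need not vanish. So no single $N(a,b)$ exists, and a proof organized around establishing one cannot close.

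The statement is nevertheless correct because all tensor products here are the completed filtered ones, so ``$\F_{D(0,s)}\otimes\Omega^{0,\ast}(D(0,s))[z^{-1}]$'' and ``$\F_0\otimes\C((z))$'' admit $\hbar$-adically convergent infinite negative tails; with that reading the lemma is soft and needs no Feynman-diagram input at all. The paper's argument is: the restriction map
$$
\F_{D(0,s)} \otimes \Omega^{0,\ast}(D(0,s))[z^{-1}] \longrightarrow \F_{D(0,s)} \otimes \Omega^{0,\ast}(A)
$$
is a quasi-isomorphism on every $S^1$-eigenspace (the weight-$n$ part of either Dolbeault factor is quasi-isomorphic to $\C\cdot z^n$ for all $n\in\Z$); since the source $\F_0\otimes\F_0$ of $(\ddagger)$ is the (completed) sum of its eigenspaces, $(\ddagger)$ factors through the left-hand side up to homotopy. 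One then composes with the Taylor expansion $\Omega^{0,\ast}(D(0,s))\to\Omega^{0,\ast}(\what{D}_0)\simeq\C[[z]]$ and identifies the sum of $S^1$-eigenspaces of $\F_{D(0,s)}\otimes\C((z))$ with $\F_0\otimes\C((z))$, which gives $m_{OPE}$ and makes the square commute. So the ``main obstacle'' you identify is not the content of this lemma; the genuinely quantitative input in the paper occurs elsewhere (the explicit one-loop computation fixing the bialgebra structure), while this lemma is a formal consequence of equivariance and the eigenspace-wise comparison above.
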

This lemma thus states that the operator product map ($\dagger$)
admits a nice Laurent expansion when $r_i, s \to 0$. The Laurent expansion of the operator product is, of course, the OPE: but it is now viewed as a map of $E_2$ algebras.  

\begin{proof}
There is a natural map of $E_2$ algebras
$$
\F_{D(0,s)} \otimes \Omega^{0,\ast}(D(0,r))[z^{-1}] \to \F_{D(0,s)} \otimes \Omega^{0,\ast}(A).
$$
This map induces a quasi-isomorphism on each $S^1$-eigenspace. Therefore, by taking the direct sum of $S^1$-eigenspaces on the left hand side of ($\dagger$), we get a map of $E_2$ algebras
$$
\F_0 \otimes \F_0 \to \F_{D(0,s)} \otimes \Omega^{0,\ast}(D(0,s))[z^{-1}]
$$
Next, we use the Taylor expansion map
$$
\Omega^{0,\ast}(D(0,s)) \to \Omega^{0,\ast}(\what{D}_0),
$$
where $\what{D}_0$ denotes the formal punctured disc around $0$.  Since $\Omega^{0,\ast}(\what{D}_0)$ is quasi-isomorphic to $\C[[z]]$, we get a map
$$
\F_0 \otimes \F_0 \to \F_{D(0,s)} \otimes \C((z)). 
$$
Finally, we can identify the direct sum of the $S^1$-eigenspaces on the right hand side with $\F_0 \otimes \C((z))$, giving the desired map. 
\end{proof}
\section{The $R$-matrix}
\label{section_R_matrix}
In this section, we will prove the following theorem (whose statement will be made precise shortly).
\begin{theorem}
After applying Koszul duality, the operator product map
$$
m_{OPE} : \F_0 \otimes \F_0 \to \F_0 \otimes \C((\lambda))
$$
(constructed in lemma \ref{lemma_ope_exists}) is encoded by the universal $R$-matrix $R \in Y(\g) \otimes Y(\g) ((\lambda))$. 
\end{theorem}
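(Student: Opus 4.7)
The plan is to translate the statement into the language of monoidal dg categories via the Koszul duality machine of Section~\ref{Koszul_duality_categorical} (specifically Proposition~\ref{proposition_equivalence_lambda} in the $\lambda$-dependent setting), and then invoke Drinfeld's uniqueness theorem for the universal $R$-matrix. The map $m_{OPE}$ is a homomorphism of augmented $E_2$ algebras, where the augmentations come from the factorization-algebra map $\F_0 \to \F_{\mbb{P}^1} \simeq \C[[\hbar]]$ (and its base change along $\C \to \C((\lambda))$). By the general machine, $m_{OPE}$ therefore induces a monoidal $\C((\lambda))$-linear functor
$$
F_{OPE} \colon \op{Fin}(Y(\g)) \times \op{Fin}(Y(\g)) \to \op{Fin}(Y(\g)((\lambda))),
$$
and the theorem amounts to identifying $F_{OPE}$ with the functor $F_R$ defined in the introduction, whose monoidal coherences are encoded by the universal $R$-matrix.

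The first substantive step is to identify the underlying (non-monoidal) functor. Since $\F_0$ is the $E_2$ algebra associated to the formal disc at~$0$, and $m_{OPE}$ arises by Laurent expansion about~$0$ of the OPE of observables supported at $0$ and at $\lambda$, on Koszul duals the first factor must induce the identity homomorphism $T_0 \colon Y(\g) \into Y(\g)((\lambda))$, while the second factor must induce the translation homomorphism $T_\lambda$ sending $z \mapsto z + \lambda$. This can be checked at the classical level: modulo $\hbar$, $\F_0 = C^\ast(\g[[z]])$ and the OPE is induced by the map of Lie algebras $\g[[z]] \oplus \g[[z]] \to \g[[z]]((\lambda))$ given by $(X(z),Y(z)) \mapsto (X(z), Y(z+\lambda))$, which Koszul-dualizes to $T_0 \otimes T_\lambda$; the $\hbar$-deformation is forced to agree with this because both $T_0$ and $T_\lambda$ are the unique Hopf algebra lifts of the corresponding maps of universal enveloping algebras.

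Next we identify the monoidal coherence data. A monoidal structure on the functor $V \times W \mapsto T_0(V) \otimes T_\lambda(W)$ is precisely a natural family of $Y(\g)((\lambda))$-linear isomorphisms
$$
T_0(V) \otimes T_0(V') \otimes T_\lambda(W) \otimes T_\lambda(W') \iso T_0(V) \otimes T_\lambda(W) \otimes T_0(V') \otimes T_\lambda(W'),
$$
satisfying the pentagon and hexagon axioms. Such data is equivalent to a natural isomorphism
$$
\sigma \circ R_{V', W} \colon T_0(V') \otimes T_\lambda(W) \iso T_\lambda(W) \otimes T_0(V'),
$$
and the coherence axioms translate into the quasi-triangular Hopf algebra axioms for an element $R \in Y(\g) \otimes Y(\g)((\lambda))$. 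By Drinfeld's theorem (Theorem~12.5.1 of \cite{ChaPre95}; see also Section~12 of \cite{Dri87}), there is a \emph{unique} such $R$ satisfying these axioms and equal to $1$ modulo~$\hbar$, namely the universal $R$-matrix. Thus the monoidal functor $F_{OPE}$ coincides with $F_R$ on the nose, provided its $R$-matrix has the correct semi-classical limit and is nontrivial.

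The main obstacle is the last step: matching normalizations, i.e.\ verifying that the leading term of the $R$-matrix extracted from $F_{OPE}$ is $1 + \hbar c/\lambda + O(\hbar^2)$ with $c$ the quadratic Casimir, rather than some nonzero rescaling. This is done by an explicit one-loop Feynman diagram computation: with the normalization $\tfrac{1}{2\pi i} \d z\, \op{CS}(A)$ for the Chern-Simons deformation, the two-point function of the circle observables $X[0]$ and $Y[0]$ at $z = 0$ and $z = \lambda$ matches the action of $\hbar c/\lambda$ on $V \otimes W$. This is essentially the same diagram as in the proof of Proposition~\ref{proposition_bialgebra}, now interpreted as a two-point function in the $z$-direction; it is closely related to (and will be stated as) Lemma~\ref{lemma_Rmatrix_expectation_value}. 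Once this normalization is in hand, Drinfeld's uniqueness forces $F_{OPE} = F_R$, completing the proof.
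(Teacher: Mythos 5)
Your proposal follows essentially the same route as the paper: Koszul duality turns $m_{OPE}$ into a monoidal functor, the underlying functor is identified with $V\times W \mapsto T_0(V)\otimes T_\lambda(W)$, and Drinfeld's uniqueness of the universal $R$-matrix identifies the monoidal coherence data with $R(\lambda)$. Two differences are worth flagging. First, the step you dispose of in one clause --- that the quantum lift of $z\mapsto z+\lambda$ is unique, so the second factor is forced to be $T_\lambda$ --- is precisely the lemma the paper proves inside this theorem's proof, and its proof genuinely uses the $\C^\times$-equivariance hypothesis: working order by order in $\hbar$ and $\lambda$, any discrepancy between two lifts is a Lie algebra derivation of $\g[[z]]$ of sufficiently negative weight, which is shown to vanish; without the weight constraint the asserted uniqueness is not clear. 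Second, the paper's proof of this theorem contains no Feynman-diagram step at all: Drinfeld's characterization (theorem \ref{theorem_existence_R_matrix}) pins down $R(\lambda)$ by the three $R$-matrix equations together with $\C^\times$-invariance (which the OPE construction has for free, being $S^1$-equivariant), not by a semiclassical normalization, so once theorem \ref{theorem_monoidal_dual} produces \emph{some} $R$-matrix from the monoidal structure on $F_{OPE}$, uniqueness finishes the argument. The matching of $z$ with the standard spectral parameter --- fixing the rescaling $z\mapsto cz$ left open by proposition \ref{proposition_bialgebra} --- is treated separately and later, in lemma \ref{lemma_Rmatrix_expectation_value}, and there by a four-loop cumulant of Wilson operators rather than a one-loop two-point function, since for simple $\g$ the one-loop contribution to Wilson-loop expectation values vanishes (traces of Lie algebra elements are zero). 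So your final normalization step is a legitimate way to fix conventions, but it is neither needed for nor part of the proof of this particular theorem, and as described (a one-loop computation) it is closer to re-proving proposition \ref{proposition_bialgebra} than to the computation the paper actually performs.
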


Before we can state this result more precisely, and prove it, we need to state some more facts about the Yangian.  

Recall that $Y(\g)$ is a topological Hopf algebra over $\C[[\hbar]]$ which $U(\g[[z]])$. We let
$$
Y(\g) ((\lambda)) = Y(\g) \what{\otimes} \C((\lambda))
$$
be the completed projective tensor product of $Y(\g)$ with $\C((\lambda))$, where $\C((\lambda))$ is also given its natural topology.  

Sending $z \mapsto z + \lambda$ gives an embedding $U(\g[[z]]) \into U(\g[[z]])((\lambda))$.  This embedding quantizes to an embedding of Hopf algebras
$$
T_\lambda : Y(\g) \into Y(\g)((\lambda))
$$
where $Y(\g)((\lambda))$ is viewed as a topological Hopf algebra over the field $\C((\lambda))$. 

We will let $T_0 : Y(\g) \into Y(\g)\otimes \C((\lambda))$ denote the obvious embedding, sending $\alpha \mapsto \alpha \otimes 1$.  

As above, $Y(\g)$ has a $\C^\times$-action where $\hbar$ has weight $1$ and, modulo $\hbar$, the parameter $z$ has weight $1$. 

\begin{theorem}[Drinfeld, \cite{Dri87} page 814; see also \cite{ChaPre95}, page 418]
\label{theorem_existence_R_matrix}
There is a unique element
$$
R(\lambda) \in Y(\g) \otimes_{\C[[\hbar]]} Y(\g) ((\lambda))
$$
with the following properties.
\begin{enumerate}
\item $R(\lambda)$ is invariant under the $\C^\times$-action which on $Y(\g)$ is the one discussed before, and which gives the parameter $\lambda$ weight $1$. 
\item $R(\lambda)$ satisfies the following identity.  Let $c$ denote the coproduct on $Y(\g)$, and $\sigma$ the isomorphism of $Y(\g) \otimes Y(\g)$ which switches the factors.  Then, for all $\alpha \in Y(\g)$, 
\begin{equation*}
(T_\lambda \otimes T_0) \sigma c (\alpha)  = R(\lambda)((T_\lambda \otimes T_0) c(\alpha)) R(\lambda)^{-1} \in Y(\g) \otimes Y(\g) ((\lambda)).  \tag{$R$-matrix equation $1$}
\end{equation*}
\item  The following equations hold:
\begin{align*}
(c \otimes 1) R(\lambda) &= R_{13}(\lambda)R_{23}(\lambda) \tag{$R$-matrix equation $2$}\\
(1 \otimes c) R(\lambda) &= R_{13}(\lambda)R_{12}(\lambda)  \tag{$R$-matrix equation $3$}.
\end{align*}
\end{enumerate}
\end{theorem}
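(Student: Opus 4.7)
The plan is to prove uniqueness and existence separately, in both cases working order by order in $\hbar$ using the rigidity imposed by the $\C^\times$-action. Modulo $\hbar$, the Yangian reduces to the cocommutative $U(\g[[z]])$, so $R$-matrix equation $1$ is trivially satisfied by $R = 1$; hence any solution must satisfy $R(\lambda) \equiv 1 \pmod{\hbar}$. Combined with the $\C^\times$-invariance (where $\hbar$ and $\lambda$ both have weight $1$, while the $z$-grading on $Y(\g)$ gives $z$ weight $1$), this forces $R(\lambda) - 1$ to lie in a tightly graded completed subspace: a monomial $\hbar^k \lambda^{-\ell} X \otimes Y$ can appear only when the total $z$-degree of $X \otimes Y$ exactly matches $\ell - k$.

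For uniqueness, suppose $R$ and $R'$ are two solutions, and let $n$ be minimal with $R' \equiv R(1 + \hbar^n S) \pmod{\hbar^{n+1}}$. Reducing $R$-matrix equation $1$ modulo $\hbar^{n+1}$ forces $S$ to commute with $(T_\lambda \otimes T_0) c_0(\alpha)$ for every $\alpha \in U(\g[[z]])$, where $c_0$ is the classical coproduct. Because this image is rich enough to separate points in the appropriate completion of $(U(\g[[z]]) \otimes U(\g[[z]]))((\lambda))$, $S$ must be central, and the strict $\C^\times$-weight constraint (no central element of the required bidegree exists except zero) then kills $S$.

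For existence, I would build $R$ inductively. At first order, $R$-matrix equation $1$ reduces to the classical $r$-matrix equation for the Manin triple $(\g((z)), \g[z], z^{-1}\g[z^{-1}])$ recalled in the definition of $Y(\g)$, whose unique $\C^\times$-equivariant solution is the standard rational $r$-matrix $r(\lambda) = c/\lambda$ with $c \in \g \otimes \g$ the quadratic Casimir dual to the invariant pairing. At each higher order, $R$-matrix equation $1$ becomes an inhomogeneous linear equation for the next correction $R_{n+1}$; solvability follows because the obstruction class lies in a Chevalley-Eilenberg cohomology group that vanishes by the very fact that $Y(\g)$ has already been constructed as a Hopf algebra (so the coproduct $c$ itself exists to all orders).

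The main obstacle is to show that the $R$ produced from equation $1$ automatically satisfies the hexagon-type identities in equations $2$ and $3$. I would handle this by a three-tensor version of the uniqueness argument: both sides of each of equations $2,3$ are $\C^\times$-invariant elements of $Y(\g)^{\otimes 3}((\lambda))$, both reduce to $1$ modulo $\hbar$, and a short calculation (using coassociativity of $c$ and the definition of $T_\lambda$) shows that both satisfy the same quasi-triangularity relation with $(T_\lambda \otimes T_0 \otimes T_0) c^{(2)}(\alpha)$. The rigidity argument from the uniqueness step, applied now in $Y(\g)^{\otimes 3}((\lambda))$, then forces equality. Alternatively, one may invoke Drinfeld's quantum double construction as in \cite{Dri87, ChaPre95}, which realizes $Y(\g)((\lambda))$ as a Drinfeld double of two Hopf subalgebras and produces $R(\lambda)$ as the canonical pairing element, for which equations $2$ and $3$ are automatic consequences of the general theory of the double.
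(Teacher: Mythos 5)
This theorem is not proved in the paper at all: it is quoted verbatim from Drinfeld (and Chari--Pressley), and the only thing the paper adds is a remark on how to interpret $R(\lambda)$ inside the completed tensor product of the completed Yangian. So the only part of your proposal that matches the paper is the last sentence, where you fall back on citing Drinfeld's double construction; that is legitimate, but it is a citation, not a proof. Your self-contained argument, taken on its own terms, has genuine gaps. First, the normalization $R \equiv 1$ modulo $\hbar$ does not follow from equation $1$: modulo $\hbar$ the Yangian is the cocommutative algebra $U(\g[[z]])$, so $\sigma c = c$ and equation $1$ only says that $R \bmod \hbar$ centralizes the image of the classical (shifted) coproduct -- it does not force it to be $1$. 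The standard way to get the normalization is via the counit identities $(\varepsilon \otimes 1)R = (1 \otimes \varepsilon)R = 1$, which are consequences of equations $2$ and $3$; since your plan is to deduce equations $2$ and $3$ at the end from a uniqueness statement whose proof already needs this normalization, the argument as organized is circular. Second, in the uniqueness induction you assert that the correction $S$ must be central because the image of $(T_\lambda \otimes T_0)c$ ``separates points.'' Separating points is irrelevant to computing a centralizer, and the centralizer of a coproduct image is in general far larger than the center: every $\g$-invariant element of $U(\g)^{\otimes 2}$ (for instance any polynomial in the quadratic Casimir $c$) commutes with $c_0(U(\g))$. What is actually needed is a computation of the $\C^\times$-weight-graded centralizer of the shifted coproduct image in the completed tensor square, showing that its relevant graded pieces vanish; that computation is the real content and is absent.

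Third, the existence induction is not an argument: ``the obstruction lies in a Chevalley--Eilenberg cohomology group that vanishes because $Y(\g)$ has already been constructed as a Hopf algebra'' is a non sequitur -- the existence of the coproduct to all orders does not imply vanishing of the cohomology group controlling order-by-order solvability of the intertwining equation for $R$, and identifying and killing that obstruction group is exactly the nontrivial input that Drinfeld's quantum double construction supplies (it produces $R(\lambda)$ as a canonical element, for which equations $2$ and $3$ hold automatically). Finally, your three-factor rigidity argument for equations $2$ and $3$ leans on the same unproved (and, as above, circular) uniqueness mechanism. If you want a proof compatible with this paper, the honest options are either to cite Drinfeld and Chari--Pressley, as the paper does, or to carry out the double construction in the completed, $\C^\times$-graded setting the paper uses, checking that $R(\lambda)$ indeed lies in $Y(\g) \otimes Y(\g)((\lambda))$ in the completed sense explained in the remark following the theorem.
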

\begin{remark}
The $R$-matrix $R(\lambda)$ is also a solution to the Yang-Baxter equation, but the equations listed above are enough to characterize it.  
\end{remark}
\begin{remark}
The $R$-matrix is of the form
$$
R(\lambda) = \sum \lambda^{-i} R_i
$$
where $R_i \in Y(\g)^{\otimes 2}$ is of weight $i$ with respect to the $\C^\times$ action on $Y(\g)$.  This sum is infinite.  However, because we are treating $Y(\g)$ as a topological Hopf algebra, and we are using the completed projective tensor product in defining $Y(\g) \otimes Y(\g) ((\lambda))$, we can still view $R(\lambda)$ as being in this ring.  Indeed, $Y(\g)$ is the product of it's $\C^\times$-eigenspaces, and the topology on $Y(\g)$ is the product topology.  The completed projective tensor product commutes with countable products, so 
$$Y(\g)((\lambda)) = \prod_i \left\{Y_i(\g)((\lambda)) \right\}$$
where $Y_i(\g)$ is the $i$'th $\C^\times$-weight space. Also, an element is a series of the form
$$
\sum_n \lambda^n \alpha_n
$$
where $\alpha_n \in Y_i(\g)$ and $\alpha_n = 0$ for $n << 0$. 

There are some subtle issues that arise when dealing with these completed tensor products. The main subtlety is that the product map
$$
\C((\lambda)) \times \C((\lambda)) \to \C((\lambda))
$$
is \emph{not} continuous, but only separately continuous.   Thus, it does not extend to the completed projective tensor product of $\C((\lambda))$ with itself. One can see this by observing that if it was continuous, it would extend to the completed projective tensor product; and that
$$
\sum_{i = 0}^\infty \lambda_1^i \lambda_2^{-i} \in \C[[\lambda_1]] \what{\otimes} \C[\lambda_2^{-1}]
$$
is in the completed projective tensor product. Clearly applying multiplication to this element does not converge.  
\end{remark}
\subsection{}
It will be helpful to understand in more categorical terms the meaning of the equations characterizing $R(\lambda)$.    If $V$ is a $Y(\g)$-module, let 
$$T_\lambda V = V \otimes_{Y(\g)} Y(\g)((\lambda))$$
where the tensor product is taken using the Hopf algebra homomorphism $T_\lambda : Y(\g) \to Y(\g)((\lambda))$.  Similarly, let $T_0 V = V((\lambda))$ denote the obvious extension of $V$ to a $Y(\g)((\lambda))$-module. 
\begin{remark}
One must be careful with the definition of $T_\lambda$: it \emph{does not} extend to a $\C((\lambda))$-linear map map
$$
Y(\g) ((\lambda)) \to Y(\g)((\lambda)).
$$
This is because the action map
$$
Y(\g) ((\lambda)) \times \C((\lambda)) \to Y(\g)((\lambda))
$$
is not continuous, only separately continuous. 
\end{remark}
The main result of this subsection is the following. 
\begin{theorem}
An $R$-matrix satisfying the $R$-matrix equations $1$, $2$ and $3$ gives rise to a natural transformation making the following functor monoidal: 
\begin{align*}
F: Y(\g)-\op{mod} \times Y(\g)-\op{mod} &\to Y(\g)((\lambda))-\op{mod} \\
V \times W & \mapsto T_0(V) \otimes_{\C[[\hbar]]((\lambda))} T_\lambda(W).
\end{align*}
\label{theorem_monoidal}
\end{theorem}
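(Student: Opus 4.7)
The plan is to construct the required natural isomorphism directly from the $R$-matrix and check two conditions: that it is a map of $Y(\g)((\lambda))$-modules (using R-matrix equation $1$), and that it satisfies the pentagon coherence of a monoidal functor (using R-matrix equations $2$ and $3$).

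First I would exhibit the natural transformation.  Since $T_0$ and $T_\lambda$ are Hopf algebra maps, each is monoidal, so there are canonical identifications $T_\eps(V \otimes V') \iso T_\eps(V) \otimes T_\eps(V')$ for $\eps \in \{0,\lambda\}$.  Consequently $F(V \otimes V', W \otimes W')$ is canonically $T_0(V) \otimes T_0(V') \otimes T_\lambda(W) \otimes T_\lambda(W')$, while $F(V,W) \otimes F(V',W')$ equals $T_0(V) \otimes T_\lambda(W) \otimes T_0(V') \otimes T_\lambda(W')$.  The discrepancy is a transposition of the two middle tensorands, and the $R$-matrix supplies exactly such a map via
$$J_{V,V',W,W'} = \op{id}_{T_0(V)} \otimes (\sigma \circ R(\lambda))_{T_0(V'),\, T_\lambda(W)} \otimes \op{id}_{T_\lambda(W')}.$$
Naturality in all four module arguments is clear.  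The claim that $J$ is a morphism of $Y(\g)((\lambda))$-modules reduces, after unwinding how $Y(\g)((\lambda))$ acts on the two four-fold tensor products through iterated coproducts, to the assertion that $\sigma \circ R(\lambda)$ intertwines the $Y(\g)((\lambda))$-actions on $T_0(V') \otimes T_\lambda(W)$ (via $(T_0 \otimes T_\lambda) \circ c$) and on $T_\lambda(W) \otimes T_0(V')$ (via $(T_\lambda \otimes T_0) \circ c$) — and this is precisely what R-matrix equation $1$ asserts.

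For the pentagon/associativity coherence of the resulting monoidal functor, given three pairs $(V_i, W_i)$ for $i=1,2,3$ one has to compare the two canonical ways of rearranging $\bigotimes_i T_0(V_i) \otimes \bigotimes_i T_\lambda(W_i)$ into the correctly interleaved order $\bigotimes_i (T_0(V_i) \otimes T_\lambda(W_i))$ by iterated swaps $\sigma \circ R(\lambda)$ in adjacent positions.  The two orderings differ by exactly the substitution rules governed by R-matrix equations $2$ and $3$: the identity $(c \otimes 1) R(\lambda) = R_{13}(\lambda) R_{23}(\lambda)$ says that swapping a single $T_0(V_i)$ past a tensor product $T_\lambda(W_j) \otimes T_\lambda(W_k)$ agrees with swapping it successively past $T_\lambda(W_j)$ and then past $T_\lambda(W_k)$, and $(1 \otimes c) R(\lambda) = R_{13}(\lambda) R_{12}(\lambda)$ is the symmetric statement on the other side.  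The unit axiom is immediate since $F$ sends $(\C[[\hbar]], \C[[\hbar]])$ to $\C[[\hbar]]((\lambda))$, the unit of the target monoidal category.

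The main obstacle will be pure bookkeeping in the pentagon verification: tracking which positions the $R$-matrix occupies at each stage among up to six tensor factors, and making sure that the $\lambda$-parameters on the two copies of $T_\lambda$ are consistently threaded through the manipulations.  Conceptually, however, R-matrix equations $2$ and $3$ are tailor-made for exactly this comparison, so no further analytic or structural input should be required beyond a careful indexing argument.
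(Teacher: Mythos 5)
Your proposal is essentially the paper's own argument: the monoidal structure map is obtained by inserting $\sigma\circ R(\lambda)$ between the two middle tensor factors of $T_0(V)\otimes T_\lambda(W)\otimes T_0(V')\otimes T_\lambda(W')$, with $R$-matrix equation $1$ giving $Y(\g)((\lambda))$-linearity of the swap and equations $2$, $3$ giving the coherence of iterated swaps (this is exactly Proposition \ref{proposition_Rmatrix_categorical} combined with Lemma \ref{lemma_monoidal}). Two minor points of bookkeeping: with the paper's conventions the reduction of the intertwining property to equation $1$ also uses Drinfeld's unitarity relation $\sigma(R)(-\lambda)=R(\lambda)^{-1}$ together with the substitution $\lambda\mapsto-\lambda$, and you have the roles of equations $2$ and $3$ interchanged --- $(c\otimes 1)R=R_{13}R_{23}$ governs moving $T_0(V_1)\otimes T_0(V_2)$ past a single $T_\lambda(W)$, while $(1\otimes c)R=R_{13}R_{12}$ handles a single $T_0$ past two $T_\lambda$'s --- neither of which changes the structure of the proof.
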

\begin{remark}
In the statement of the theorem, and what follows, we only work with  $Y(\g)$-modules $V$ which are of finite rank as $\C[[\hbar]]$-modules. Also, for simplicity, we will deal with ordinary (not dg) modules. Later we will see a generalization of this, whereby the $R$-matrix gives rise to a similar natural transformation on the category of all $Y^\ast(\g)$-comodules. (Recall that we can identify the category of $Y^\ast(\g)$-comodules with a certain category of inductive objects in the category of finite $Y(\g)$-modules). When we work in this generality, we will find that to give an $R$-matrix satisfying equations $1$,$2$ and $3$ is the same as to give a natural transformation making the functor $F$ monoidal. 
\end{remark}

The main technical result of this subsection is the following.
\begin{proposition}
An $R$-matrix $R \in Y(\g) \otimes Y(\g)((\lambda))$ satisfying $R$-matrix equation $1$ gives, for all finite $Y(\g)$-modules $V$ and $W$, an isomorphism of $Y(\g)((\lambda))$-modules
$$
\Phi:  T_0(V) \otimes_{\C[[\hbar]]((\lambda))} T_\lambda( W) \to T_\lambda(W) \otimes_{\C[[\hbar]]((\lambda))} T_0 (V)
$$
functorial in $V$ and $W$.   

Similarly, $R$-matrix equations $2$ and $3$ imply that for all $Y(\g)$-modules $V_1,V_2,V_3$, the following two diagrams  of $Y(\g)((\lambda))$-modules commute. In these diagrams, all tensor products are over  $\C[[\hbar]]((\lambda))$ unless otherwise noted. Further $\Phi_{ij}$ will mean $\Phi$ acting on the $ij$ factor.
\begin{equation*}
\xymatrix{
T_0(V_1) \otimes T_0(V_2) \otimes T_\lambda(V_3)  \ar[r]^{\Phi_{23} }   \ar[d]_{ = } &  T_0(V_1) \otimes T_\lambda (V_3)  \otimes T_0(V_2)  \ar[d]^{\Phi_{12}} \\
T_0 \left(V_1\otimes_{\C[[\hbar]]} V_2\right) \otimes T_\lambda(V_3) \ar[r]_{\Phi}  & T_\lambda(V_3) \otimes T_0(V_1) \otimes  T_0 (V_2)  
} \tag{$R$-matrix diagram $1$}
\end{equation*}
\begin{equation*}
\xymatrix{
T_0(V_1) \otimes T_\lambda(V_2) \otimes T_\lambda(V_3)  \ar[r]^{\Phi_{12} }   \ar[d]_{ = } &  T_\lambda(V_2) \otimes T_0 (V_1)  \otimes T_\lambda(V_3)  \ar[d]^{\Phi_{23}} \\
T_0(V_1) \otimes T_\lambda\left(V_2\otimes_{\C[[\hbar]]} V_3\right)  \ar[r]_{\Phi}  & T_\lambda(V_2) \otimes T_\lambda(V_3) \otimes  T_0 (V_1)  
} \tag{$R$-matrix diagram $2$}
\end{equation*}
\label{proposition_Rmatrix_categorical}
\end{proposition}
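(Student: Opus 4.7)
The plan is to define $\Phi$ directly from the $R$-matrix as $\Phi = \sigma \circ R(\lambda)^{\pm 1}$, where $\sigma$ flips the tensor factors and $R(\lambda)$ acts by multiplication on the underlying $\C((\lambda))$-vector space $V((\lambda)) \otimes W((\lambda))$ through the embeddings $T_0$ and $T_\lambda$ (so that the first leg of $R(\lambda)$ acts on $T_0(V)$ via the original $Y(\g)$-action and the second leg acts on $T_\lambda(W)$ via the shifted action). Functoriality in $V$ and $W$ is automatic, since $\Phi$ is multiplication by a universal element followed by a universal flip. Invertibility of $\Phi$ reduces to invertibility of $R(\lambda)$ in the appropriate completion of $Y(\g)^{\otimes 2}((\lambda))$, which is part of theorem \ref{theorem_existence_R_matrix}.

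For $Y(\g)((\lambda))$-equivariance of $\Phi$, by $\C((\lambda))$-linearity it suffices to check equivariance under $\alpha \in Y(\g)$. The module structure on $T_0(V) \otimes T_\lambda(W)$ is multiplication by $(T_0 \otimes T_\lambda)c(\alpha)$ and that on $T_\lambda(W) \otimes T_0(V)$ is multiplication by $(T_\lambda \otimes T_0)c(\alpha)$. Using the elementary identity $\sigma^{-1}(T_\lambda \otimes T_0)c(\alpha)\sigma = (T_0 \otimes T_\lambda)\sigma c(\alpha)$, the equivariance of $\Phi = \sigma \circ R(\lambda)$ rearranges into precisely $R$-matrix equation $1$ (after possibly relabeling $V \leftrightarrow W$ to match the stated convention).

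For the two coherence diagrams, use that $T_0$ and $T_\lambda$ are Hopf algebra homomorphisms, so that $T_0(V_1 \otimes V_2) \cong T_0(V_1) \otimes T_0(V_2)$ and $T_\lambda(V_2 \otimes V_3) \cong T_\lambda(V_2) \otimes T_\lambda(V_3)$ as $Y(\g)((\lambda))$-modules (with action through the coproduct). For $R$-matrix diagram $1$, under these identifications the diagonal map $\Phi$ on $T_0(V_1 \otimes V_2) \otimes T_\lambda(V_3)$ acts (before flipping) by $(c \otimes 1)R(\lambda)$; by $R$-matrix equation $2$ this equals $R_{13}(\lambda)R_{23}(\lambda)$. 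A Sweedler-notation calculation shows that the composite $\Phi_{12} \circ \Phi_{23}$ acts by the same operator $R_{13}(\lambda)R_{23}(\lambda)$ once the intermediate flip is reconciled with the final placement of $T_\lambda(V_3)$ at the front. $R$-matrix diagram $2$ is entirely parallel, exchanging the roles of $T_0$ and $T_\lambda$ and invoking $R$-matrix equation $3$.

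The only genuine obstacle is topological bookkeeping: $R(\lambda)$, its inverse, and the products $R_{13}R_{23}$ and $R_{13}R_{12}$ are infinite series in a completed projective tensor product, and one must verify that they act convergently on $T_0(V) \otimes T_\lambda(W)$ and that the formal manipulations above are legal inside the completion. The finite-$\C[[\hbar]]$-rank hypothesis on $V$ and $W$, together with the $\C^\times$-equivariance of $R(\lambda)$ (which forces only finitely many weight components of $R(\lambda)$ to contribute nontrivially at each filtration level modulo any fixed power of $\hbar$), collapses each of these infinite sums to a finite sum on any given element, reducing all the identities to finite graded-piece calculations.
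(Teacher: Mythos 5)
Your strategy is the same as the paper's: define $\Phi$ as the flip composed with multiplication by the universal $R$-matrix, read equivariance off from $R$-matrix equation $1$, and match the coherence diagrams with equations $2$ and $3$ using that $T_0$ and $T_\lambda$ are Hopf algebra maps. However, the equivariance step has a genuine gap, on two counts. First, your convention for how $R(\lambda)$ acts is off: you let its second leg act on $T_\lambda(W)$ ``via the shifted action,'' i.e.\ through $T_{-\lambda}$. In the paper's proof $R(\lambda)$ acts through the \emph{untwisted} $Y(\g)\otimes Y(\g)$-module structure on $V\otimes W$, extended $\C((\lambda))$-linearly: the $\lambda$-dependence is already built into $R(\lambda)$, so twisting a leg again double-counts the shift, and moreover $T_{-\lambda}$ does not extend to a $\C((\lambda))$-linear endomorphism of $Y(\g)((\lambda))$ (the paper flags exactly this point), so applying it leg-wise to the series $R(\lambda)=\sum \lambda^{-i}R_i$ is not an innocent operation. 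With your convention the equivariance condition does not reduce to equation $1$ at all; it would reduce to the spectral-parameter-free intertwining $R\,c(\alpha)=c^{op}(\alpha)\,R$, which is neither among the listed equations nor true for $R(\lambda)$. Second, even with the correct (plain) action, equation $1$ intertwines the coproducts twisted by $T_\lambda\otimes T_0$, so it directly produces a module map adapted to $T_{-\lambda}(V)\otimes T_0(W)$: the shift sits in the first slot and with the opposite sign from the $\Phi$ you want. ``Relabeling $V\leftrightarrow W$'' (or your hedge $R(\lambda)^{\pm 1}$) does not repair this. The paper bridges the mismatch by writing $\sigma\circ R_{V,W}=R'_{W,V}\circ\sigma$ with $R'=\sigma(R)$, viewing $\sigma$ as a map into the $\otimes^{op}$-structure, substituting $\lambda\mapsto -\lambda$, and invoking the unitarity identity $\sigma(R)(-\lambda)=R(\lambda)^{-1}$, imported from Drinfeld. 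That identity (or an equivalent extra property, such as translation invariance of $R$) is a genuinely additional input beyond equations $1$--$3$, and your sketch never supplies it.

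The rest of your outline matches the paper's argument: identifying the bottom arrow of diagram $1$ with $\sigma_{123}\,(c\otimes 1)R$ and the composite $\Phi_{12}\circ\Phi_{23}$ with $\sigma_{12}R_{12}\,\sigma_{23}R_{23}$, so that commutativity is exactly $(c\otimes 1)R=R_{13}R_{23}$, is the paper's computation (and likewise diagram $2$ versus equation $3$); and your final remark that finite $\C[[\hbar]]$-rank of $V,W$ together with the weight decomposition makes the completed series act sensibly is the same finiteness hypothesis under which the paper defines $R_{V,W}$ (and notes it fails otherwise). So the proof becomes correct once you (i) let $R(\lambda)$ act through the untwisted structure and (ii) add the unitarity identity as an explicit ingredient in the reduction to equation $1$.
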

\begin{proof}
Let $F,F'$ be the functors
 \begin{align*}
F,F': Y(\g)-\op{mod} \times Y(\g)-\op{mod} &\to Y(\g)((\lambda))-\op{mod} \\
F (V \times W) &= T_0(V) \otimes_{\C[[\hbar]]((\lambda))} T_\lambda(W)\\
F' (V \times W) &= T_\lambda(W) \otimes_{\C[[\hbar]]((\lambda))} T_0(V)
\end{align*}
Let us first show that an $R$-matrix gives rise to a natural isomorphism $F \iso F'$. 

Note that, as $\C((\lambda))$-modules, $T_0(V) \otimes T_\lambda(W) = V \otimes W((\lambda))$.  Since $V \otimes W$ is a $Y(\g) \otimes Y(\g)$-module, multiplication by the $R$-matrix acts on $V \otimes W((\lambda))$.   We let
$$
R_{V,W} \in \op{End}_{\C((\lambda))}(V \otimes W((\lambda)))
$$
be this endomorphism. 

\begin{remark}
If $V$ and $W$ are not of finite rank over $\C[[\hbar]]$, then $R_{V,W}$ is not well-defined. 
\end{remark}
Let $\sigma : V \otimes W \to W \otimes V$ be the isomorphism which interchanges the two factors.

We will show that the map
$$
\sigma \circ R_{V,W} : T_0(V) \otimes T_\lambda(W) \to T_\lambda(W) \otimes T_0(V)
$$
is an isomorphism of $Y(\g)((\lambda))$-modules. 

Let $R' = \sigma(R) \in Y(\g) \otimes Y(\g)$.  Then,
$$
\sigma \circ R_{V,W} = R'_{W,V} \circ \sigma.
$$
In the above equations, we have used $\otimes$ to denote the tensor product on the category of $Y(\g)$-modules coming from the coproduct on $Y(\g)$.  Let us use the notation $\otimes^{op}$ for the tensor product coming from the opposite coproduct.  Then, $\sigma$ is an isomorphism
$$
\sigma : T_0(V) \otimes T_\lambda(W) \to T_\lambda(W) \otimes^{op} T_0(V) 
$$
of $Y(\g)((\lambda))$-modules. 

Thus, to verify that $R'_{W,V} \circ \sigma$ is a map of $Y(\g)((\lambda))$-modules, it suffices to verify that
$$
R'_{W,V} : T_\lambda(W) \otimes^{op} T_0(V) \to T_\lambda(W) \otimes T_0(V)
$$
is a map of $Y(\g)((\lambda))$-modules.   

Both sides of this equation are, as $\C((\lambda))$-modules, $W \otimes V((\lambda))$, but with different $Y(\g)$-module structures.  In one case, corresponding to $T_\lambda(W) \otimes T_0(V)$, the action of $\alpha \in Y(\g)$ is through the element
$$(T_{-\lambda} \otimes T_0) c(\alpha) \in Y(\g) \otimes Y(\g).$$
We are using here the fact that, if $V$ is a $Y(\g)$-module, an element $\beta \in Y(\g)$ acts on the $Y(\g)$-module $T_\lambda(V)$ by the action of $T_{-\lambda}(\beta)$ on $V((\lambda))$. 

In the other $Y(\g)$-module structure on $W \otimes V((\lambda))$, corresponding to $T_\lambda(W) \otimes^{op} T_0(V)$, the action is through the element 
$$
(T_{-\lambda}) \otimes T_0) c^{op}(\alpha)
$$
where $c^{op}$ is the opposite coproduct. 

Thus, to show that $R'_{W,V}$ gives a map of $Y(\g)((\lambda))$-modules, it suffices to show that, for all $\alpha \in Y(\g)$, we have the following identity in $Y(\g) \otimes Y(\g)((\lambda))$:
$$
R'(\lambda) (T_{-\lambda} \otimes T_0) (c^{op}(\alpha)) 
= (T_{-\lambda} \otimes T_0) c(\alpha) R'(\lambda) .
$$
Now, we can replace $\lambda$ by $-\lambda$ in this expression, and use the identity $R'(-\lambda) = R(\lambda)^{-1}$ (\cite{Dri87}).  This gives us the equation
$$
R(\lambda)^{-1} (T_{\lambda} \otimes T_0) (c^{op}(\alpha)) 
= (T_{\lambda} \otimes T_0) c(\alpha) R(\lambda)^{-1} .
$$
This is $R$-matrix equation $1$. 

It remains to check that commutativity of $R$-matrix diagrams $1$ and $2$ in the statement of the proposition correspond to $R$-matrix equations $2$ and $3$.  We will check that the equations for the $R$-matrix imply commutativity of the first diagram: the commutativity of the second diagram is straightforward.

In what follows, we will use the notation $\sigma$ with an appropriate subscript to denote an element of $S_3$ acting on a tensor product of $3$ vector spaces.  We will use the notation $R_{ij}$ to denote the $R$-matrix in the $i,j$ factors. 

Note that the map
$$
T_0(V_1 \otimes V_2) \otimes T_\lambda(V_3) \to T_\lambda(V_3) \otimes T_0(V_1 \otimes V_2) 
$$
is $\sigma_{123}R_{V_1 \otimes V_2, V_3}$.  Now, by the way the tensor product is defined, 
$$
R_{V_1 \otimes V_2, V_3} = (c \otimes 1)R
$$
acting on $V_1 \otimes V_2 \otimes V_3$. 

On the other hand, the composition
$$
T_0(V_1) \otimes T_0(V_2) \otimes T_\lambda(V_3) \to T_0(V_1) \otimes T_\lambda(V_3) \otimes T_0(V_2) \to T_\lambda(V_3) \otimes T_0(V_1) \otimes T_0(V_2)
$$
is the composition of $\sigma_{23}R_{23}$ with $\sigma_{12}R_{12}$. 

From this, we see that commutativity of $R$-matrix diagram $1$ amounts to the equation
$$
\sigma_{123} ((c \otimes 1) R)  = \sigma_{12} R_{12} \sigma_{23} R_{23}.
$$
This is equivalent to the equation
$$
(c \otimes 1)R = R_{13} R_{23},
$$
which is $R$-matrix equation $2$.  The prove that $R$-matrix diagram $2$ is equivalent to $R$-matrix equation $3$ is similar. 
\end{proof}

Now we can show how this proposition leads to a proof of theorem \ref{theorem_monoidal}.
\begin{lemma}
Define the functor 
$$
F : Y(\g)-\op{mod} \times Y(\g)-\op{mod} \to Y(\g)((\lambda))-\op{mod}
$$
by
$$
F(V \times W) = T_0(V) \otimes T_\lambda(W).
$$
Then, giving a natural isomorphism making the functor $F$ monoidal is equivalent to giving the natural transformations presented in the previous proposition.
\label{lemma_monoidal}  
\end{lemma}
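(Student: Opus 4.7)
We unpack both sides of the claimed equivalence and match them up directly.

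\textbf{Step 1 (Reduction of the monoidal datum to a swap).} Since $T_0$ and $T_\lambda$ are pullbacks along Hopf algebra homomorphisms, each of the functors $V \mapsto T_0(V)$ and $W \mapsto T_\lambda(W)$ is itself monoidal in the ordinary sense; write $t_0 : T_0(V)\otimes T_0(V') \iso T_0(V\otimes V')$ and $t_\lambda$ similarly. By definition, a monoidal structure on $F$ is a natural family of $Y(\g)((\lambda))$-module isomorphisms
$$
\mu_{(V,W),(V',W')} : T_0(V\otimes V') \otimes T_\lambda(W\otimes W') \iso T_0(V)\otimes T_\lambda(W) \otimes T_0(V')\otimes T_\lambda(W')
$$
compatible with unit and associativity constraints. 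Using $t_0$ and $t_\lambda$, the domain of $\mu$ is canonically identified with $T_0(V)\otimes T_0(V')\otimes T_\lambda(W)\otimes T_\lambda(W')$, so that the data of $\mu$ is equivalent to a natural family of isomorphisms
$$
\Phi_{V',W} : T_0(V')\otimes T_\lambda(W) \iso T_\lambda(W)\otimes T_0(V')
$$
of $Y(\g)((\lambda))$-modules. One recovers $\Phi$ from $\mu$ by specializing $V=W'=\mathbf{1}$, and conversely defines $\mu$ from $\Phi$ by
$$
\mu \;=\; (t_\lambda\otimes t_0)\circ(\mathrm{id}_{T_0(V)}\otimes \Phi_{V',W}\otimes \mathrm{id}_{T_\lambda(W')}) \circ (t_0^{-1}\otimes t_\lambda^{-1}).
$$

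\textbf{Step 2 (Hexagon reduces to the two R-matrix diagrams).} The associativity axiom for $\mu$, applied to a triple $(V_i,W_i)_{i=1,2,3}$, equates the two ways of decomposing $F(V_1\otimes V_2\otimes V_3,\,W_1\otimes W_2\otimes W_3)$ into a fully ordered tensor product. Specializing to $W_1=V_3=\mathbf{1}$ suppresses every move except the ones that shuffle the block $T_0(V_1)\otimes T_0(V_2)$ past $T_\lambda(W_2)$; the hexagon collapses on the nose to R-matrix diagram 1. Symmetrically, specializing to $V_1=W_3=\mathbf{1}$ collapses it to R-matrix diagram 2.

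\textbf{Step 3 (Converse and main obstacle).} For the converse one must check that R-matrix diagrams 1 and 2, together with naturality of $\Phi$, imply the pentagon for the $\mu$ built from $\Phi$ in Step 1 for \emph{arbitrary} triples $(V_i,W_i)$. This is a combinatorial diagram chase: any full pentagon rearrangement factors as a sequence of elementary transpositions of a single $T_0(V_i)$ past a single $T_\lambda(W_j)$; each such elementary move is either an instance of naturality of $\Phi$ in one argument or one of R-matrix diagrams 1,2. This reduction is the main bookkeeping obstacle, and is analogous to Mac Lane's coherence theorem applied to the product monoidal category $Y(\g)\text{-mod}\times Y(\g)\text{-mod}$; no additional identities beyond the two listed appear. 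The unit coherences for $\mu$ unpack to $\Phi_{\mathbf{1},W}=\mathrm{id}$ and $\Phi_{V,\mathbf{1}}=\mathrm{id}$, which are immediate from the construction $\Phi_{V,W}=\sigma\circ R_{V,W}$ together with the normalization of $R$ on a trivial tensor factor.
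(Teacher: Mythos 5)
Your proposal is correct and follows essentially the same route as the paper: reduce the monoidal structure on $F$ to the swap isomorphism $\Phi$ by specializing to identity objects (and reconstructing the general constraint from $\Phi$ via the monoidality of $T_0$ and $T_\lambda$), then identify the coherence diagram for triples with repeated applications of $R$-matrix diagrams $1$ and $2$, with the converse obtained by specialization. The only quibble is terminological — the coherence condition for a monoidal functor is not the pentagon axiom — but your diagram chase is the same one the paper carries out.
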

\begin{proof}
Recall that, if $\mc{C}, \mc{D}$ are monoidal categories, a monoidal functor $F: \mc{C} \to \mc{D}$ is a functor with a natural isomorphism
$$
F(c \otimes c') \iso F(c) \otimes F(c')
$$
such that the following diagram commutes:
$$
\xymatrix{
F(c_0 \otimes c_1 \otimes c_2)  \ar[r] \ar[d] & F(c_0 \otimes c_1) \otimes F(c_2) \ar[d]  \\
F(c_0) \otimes F(c_1 \otimes c_2) \ar[r] & F(c_0) \otimes F(c_1) \otimes F(c_2).  
}$$

Thus, the first thing we need to check is that giving a natural isomorphism
\begin{equation*}
T_0(V) \otimes T_\lambda(W) \iso T_\lambda(W) \otimes T_0(V)  \tag{$\dagger$}
\end{equation*}
is equivalent to giving a natural isomorphism
\begin{equation*}
T_0 (V_0 \otimes V_1) \otimes T_\lambda (W_0 \otimes W_1) 
\iso (T_0(V_0) \otimes T_\lambda(W_0)) \otimes (T_0(V_1) \otimes T_\lambda(W_1)) \tag{$\ddagger$}. 
\end{equation*}
The natural isomorphism $\ddagger$ is obtained by applying the isomorphism in $\dagger$ to the middle two objects in the tensor product $T_0(V_0) \otimes T_\lambda(W_0) \otimes T_0(V_1) \otimes T_\lambda(W_1)$.

Conversely, the natural isomorphism $\dagger$ is the special case of $\ddagger$ when $V_0$ and $W_1$ are the identity objects. 

Next, we need to check that the natural isomorphism $\ddagger$ satisfies the coherence condition necessary for a monoidal functor; and this coherence condition is equivalent to the commutativity of $R$-matrix diagrams $1$ and $2$ above. 

Let us write out the diagram that needs to commute to ensure that $F$ is monoidal: 
$$
\xymatrix{
T_0(V_0 \cdot V_1 \cdot V_2) \cdot T_\lambda(W_0 \cdot W_1 \cdot W_2)   \ar[r] \ar[d] 
& T_0(V_0 \cdot V_1) \cdot T_\lambda (W_0 \cdot W_1) \cdot T_0(V_2) \cdot T_\lambda(V_2)  \ar[d]  \\
T_0(V_0) \cdot T_\lambda(W_0) \cdot T_0(V_1 \cdot V_2) \cdot T_\lambda(W_1 \cdot W_2) \ar[r] 
& T_0(V_0) \cdot T_\lambda(W_0) \cdot T_0(V_1) \cdot T_\lambda(W_1) \cdot T_0(V_2) \cdot T_\lambda(W_2).  
}$$
In this diagram, the symbol $\otimes$ has been replaced by $\cdot$ to save space.

Commutativity of this diagram follows from repeatedly using the commutativity of $R$-matrix diagrams $1$ and $2$ above.

Conversely, $R$-matrix diagrams $1$ and $2$ are specializations of this commutative diagram, to the case when $V_0,V_1$ and $W_2$ are the identity object, and to the case when $V_0, W_1$ and $W_2$ are the identity object.  

\end{proof}

\subsection{}
In this paper, the dual Yangian plays a more fundamental role than the Yangian.   In this section, we will prove an analog of theorem \ref{theorem_monoidal} for the dual Yangian.  When we work with the dual Yangian, the theorem applies to all modules, not just finite ones.  This allows us to deduce a converse: giving an $R$-matrix satisfying equations $1$, $2$ and $3$ above is \emph{equivalent} to giving a natural transformation making the functor $F$ monoidal.

There are a few subtle points in the statement and proof of this result, to do with various different completed tensor products we need to use.   For a vector space $V$, there are two different tensor products of $V$ with $\C((\lambda))$ we need to consider. The first is simply the algebraic tensor product, which we denote $V \br{\otimes} \C((\lambda))$.  In the algebraic tensor product only finite sums of decomposable elements $v \otimes f(\lambda)$ appear.

The second is the completed projective tensor product $V \what{\otimes} \C((\lambda))$, which uses the natural topology on $\C((\lambda))$.  Concretely, we can write $V \what{\otimes} \C((\lambda))$ as a colimit of a limit:
$$
V \what{\otimes} \C((\lambda)) = \colim_{k \to - \infty} \lim_{j \to \infty} \lambda^{-k} \left(V[\lambda] / \lambda^j  \right).
$$

Similarly, if $V$ is a complete filtered vector space, we let 
\begin{align*}
V \what{\otimes} \C((\lambda)) &= \liminv \left\{ (V / F^i V) \what{\otimes} \C((\lambda)) \right\} \\
V \br{\otimes} \C((\lambda)) &= \liminv \left\{ (V / F^i V) \br{\otimes} \C((\lambda) \right\} .
\end{align*}

In particular, we have two different Hopf algebras $Y^\ast(\g) \what{\otimes} \C((\lambda))$, and $Y^\ast(\g) \br{\otimes} \C((\lambda))$ over $\C((\lambda))$, associated to the dual Yangian. Note, however, that the coproduct on $Y^\ast(\g) \what{\otimes} \C((\lambda))$ lands in 
$$
(Y^\ast(\g) \otimes Y^\ast(\g) ) \what{\otimes} \C((\lambda)).
$$
Thus, we can view $Y^\ast(\g) \what{\otimes} \C((\lambda))$ as a \emph{lax} Hopf algebra over $\C((\lambda))$. In section \ref{subsection_laxHopf} I explain how to construct a dg category of comodules over a lax Hopf algebra of this form. 

The translation map $T_\lambda : Y(\g) \to Y(\g)[[\lambda]]$ has a dual, which is a map $T_\lambda : Y^\ast(\g) \to Y^\ast(\g) \what{\otimes} \C[[\lambda]]$.  It extends to a $\C((\lambda))$-linear map of Hopf algebras over $\C((\lambda))$
$$
Y^\ast(\g) \br{\otimes} \C((\lambda)) \to Y^\ast(\g) \what{\otimes} \C((\lambda)). 
$$

If $V$ is a $Y^\ast(\g)$-comodule, we let 
$$
T_\lambda(V) = V \br{\otimes} \C((\lambda))
$$
be the $Y^\ast(\g) \what{\otimes} \C((\lambda))$-comodule where the coaction map is that obtained from the natural $Y^\ast(\g) \br{\otimes} \C((\lambda))$-comodule structure via the homomorphism of coalgebras
$$
T_{-\lambda} : Y^\ast(\g) \br{\otimes} \C((\lambda)) \to Y^\ast(\g) \what{\otimes} \C((\lambda)). 
$$
Similarly, we let $T_0(V)$ be the $Y^\ast(\g) \what{\otimes} \C((\lambda))$ module obtained from $V \br{\otimes} \C((\lambda))$ via the inclusion
$$
T_0 : Y^\ast(\g) \br{\otimes} \C((\lambda)) \to Y^\ast(\g) \what{\otimes} \C((\lambda))
$$
which is the identity on $Y^\ast(\g)$. 

Define a functor $F$ by
\begin{align*}
F : Y^\ast(\g)-\op{comod} \times Y^\ast(\g)-\op{comod} & \to \left\{ Y^\ast(\g)\what{\otimes} \C((\lambda))\right\}-\op{comod} \\
F( V \times W) &= T_\lambda(V) \otimes T_0(W).  
\end{align*}

Finally, recall that the Yangian $Y(\g)$ is the $\C[[\hbar]]$-linear dual of $Y^\ast(\g)$, and thus has a filtration dual to the complete decreasing filtration on $Y^\ast(\g)$. In particular, $F^0 Y(\g)$ is the space of filtration-preserving $\C[[\hbar]]$-linear maps $Y(\g) \to C[[\hbar]]$. 
\begin{theorem}
To give an $R$-matrix 
$$
R(\lambda) \in F^0 \left( Y(\g) \otimes Y(\g) \right) ((\lambda))
$$
satisfying $R$-matrix equations $1$, $2$ and $3$ is the same as to give a natural transformation making the functor $F$ into a monoidal functor.  
\label{theorem_monoidal_dual}
\end{theorem}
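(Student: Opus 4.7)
The plan is to parallel the argument of Proposition \ref{proposition_Rmatrix_categorical} and Lemma \ref{lemma_monoidal}, but to take advantage of the fact that on the comodule side we have access to all $Y^\ast(\g)$-comodules (not just those whose underlying module is of finite rank over $\C[[\hbar]]$). This extra room is what lets us \emph{reconstruct} the $R$-matrix from the monoidal data, rather than merely see that it produces one.

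First, I would reduce the problem to a more manageable form. As in the proof of Lemma \ref{lemma_monoidal}, giving a monoidal structure on $F$ is equivalent to giving a single natural isomorphism
$$
\Phi_{V,W}: T_\lambda(V) \otimes T_0(W) \iso T_0(W) \otimes T_\lambda(V)
$$
of $Y^\ast(\g)\what{\otimes}\C((\lambda))$-comodules (for all pairs of $Y^\ast(\g)$-comodules $V,W$), which satisfies the two hexagon-type coherences analogous to $R$-matrix diagrams $1$ and $2$ from Proposition \ref{proposition_Rmatrix_categorical}. The argument is formal: the general $\ddagger$-style identification is obtained from the $\dagger$-style one by specializing one factor to the unit comodule, and conversely the coherence of the monoidal structure boils down to the two hexagons.

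For the ``$R$-matrix implies natural transformation'' direction, given $R(\lambda) \in F^0(Y(\g)\otimes Y(\g))((\lambda))$ satisfying the three $R$-matrix equations, I would define $\Phi_{V,W}$ as the composition $\sigma \circ R_{V,W}(\lambda)$, where $R_{V,W}(\lambda)$ denotes the endomorphism of $V\otimes W \br{\otimes}\C((\lambda))$ coming from the natural action of elements of $F^0(Y(\g)\otimes Y(\g))$ on tensor products of $Y^\ast(\g)$-comodules (this action is well defined precisely because we took $R$ in $F^0$, so that the filtration-continuity required to pair with a filtration-complete comodule is available). That $\Phi_{V,W}$ is a comodule map for the required Hopf algebra is a direct transcription of the computation in Proposition \ref{proposition_Rmatrix_categorical}: it rests on $R$-matrix equation $1$ after the change of variable $\lambda \mapsto -\lambda$ and the standard identity $R(-\lambda) = \sigma(R)(\lambda)^{-1}$. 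The two hexagon coherences translate, exactly as in the Yangian case, into $R$-matrix equations $2$ and $3$.

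The main new content, and the harder direction, is the converse: to recover $R(\lambda)$ from a given monoidal structure on $F$. Here the plan is Tannakian. Apply $\Phi$ to $V = W = Y^\ast(\g)$, regarded as the regular comodule over itself. Then
$$
\Phi_{Y^\ast(\g),Y^\ast(\g)} : T_\lambda(Y^\ast(\g)) \otimes T_0(Y^\ast(\g)) \iso T_0(Y^\ast(\g)) \otimes T_\lambda(Y^\ast(\g))
$$
is an isomorphism of $Y^\ast(\g)\what{\otimes}\C((\lambda))$-comodules whose underlying $\C((\lambda))$-module map is, after composition with $\sigma$, an endomorphism of $Y^\ast(\g)\otimes Y^\ast(\g) \br{\otimes}\C((\lambda))$. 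Naturality of $\Phi$ with respect to maps of comodules $\C \to Y^\ast(\g)$ (choosing counits) together with left-right $Y^\ast(\g)$-colinearity forces this endomorphism to be given by multiplication by a uniquely determined element
$$
R(\lambda) \in (Y^\ast(\g) \otimes Y^\ast(\g))^{\vee} ((\lambda)) = Y(\g) \otimes Y(\g) ((\lambda));
$$
the filtration-compatibility of $\Phi$ with respect to the filtrations on $Y^\ast(\g)$ places $R(\lambda)$ in $F^0(Y(\g)\otimes Y(\g))((\lambda))$. This is the step I expect to require the most care: one must verify that the endomorphism really is right-translation by a single tensor, as opposed to a more general natural endomorphism, and this uses the fact that $Y^\ast(\g)$ is free over itself together with the general fact that right-colinear endomorphisms of a free comodule are exactly left multiplications by elements of the dual algebra. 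Naturality of $\Phi$ in the two coordinates then shows that this same element $R(\lambda)$ implements $\Phi_{V,W}$ for every pair $(V,W)$.

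Finally, I would verify the $R$-matrix equations. The colinearity of $\Phi_{V,W}$ for every $V,W$, applied to $V=W=Y^\ast(\g)$, gives precisely $R$-matrix equation $1$ (by the computation in the first direction, run backwards). The two hexagon coherences, applied to triples of copies of the regular comodule, yield $R$-matrix equations $2$ and $3$. Since by Drinfeld's uniqueness result (Theorem \ref{theorem_existence_R_matrix}) the element $R(\lambda)$ is uniquely determined by these equations (together with $\C^\times$-invariance, which holds automatically here because every construction is $\C^\times$-equivariant when $\lambda$ is given weight $1$), the two constructions are mutually inverse.
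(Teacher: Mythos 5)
Your proposal is correct and follows essentially the same route as the paper: reduce via Lemma \ref{lemma_monoidal} to a single natural isomorphism satisfying the two coherence diagrams, produce it from $R(\lambda)$ exactly as in Proposition \ref{proposition_Rmatrix_categorical}, and recover $R(\lambda)$ from a given monoidal structure by evaluating on the (co)regular comodule and using that right-colinear endomorphisms of the cofree comodule are pairings against elements of $F^0\left(Y(\g)\otimes Y(\g)\right)((\lambda))$ — which is precisely the paper's corepresenting-cobimodule argument. The only blemish is the closing appeal to Drinfeld's uniqueness theorem, which is unnecessary (and not quite the right tool): mutual inverseness of the two constructions already follows from the uniqueness of the dual element representing a given colinear map, which your own recovery step establishes.
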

\begin{proof}
According to lemma \ref{lemma_monoidal}, it suffices to show that an $R$-matrix is equivalent to giving a natural isomorphism
$$
T_0(V) \otimes T_\lambda(W) \iso T_\lambda(W) \otimes T_0(V)
$$
satisfying the diagrams in proposition \ref{proposition_Rmatrix_categorical}. 

Let $V,W$ be $Y^\ast(\g)$ modules. We let $c_V, c_W$ denote the coproduct on $V$ and $W$.  Any element $\alpha \in Y(\g) \otimes Y(\g)$ gives a linear map
$$
\alpha_{V,W} : V \otimes W \to V \otimes W
$$
by the composition
$$
V \otimes W \xto{c_V \otimes c_W} Y^\ast(\g) \otimes V \otimes Y^\ast(\g) \otimes W \xto{\alpha_{13}} V \otimes W 
$$
where we view $\alpha$ as a linear map $Y^\ast(\g) \otimes Y^\ast(\g) 
\to \C$.  

Note that
$$
F^0 \left( Y(\g) \what{\otimes} Y(\g)  \what{\otimes} \C((\lambda)) \right)
$$
is the space of filtration-preserving $\C[[\hbar]]$-linear maps 
$$
Y^\ast(\g) \otimes Y^\ast(\g) \to \C((\lambda)).
$$
Thus, the $R$-matrix $R(\lambda)$ gives such a map. In this way, we construct a linear map
$$
R_{V,W} : V \otimes W  \to V \otimes W \br{\otimes} \C((\lambda))
$$
which we extend by $\C((\lambda))$-linearity to a map
$$
R_{V,W} : V \otimes W \br{\otimes} \C((\lambda)) \to V \otimes W \br{\otimes} \C((\lambda)).
$$

Let $\otimes^{op}$ denote the opposite tensor product on the category of $Y^\ast(\g)$-modules.  Recall that $T_\lambda(W) \otimes T_0(V)$ and $T_\lambda(W) \otimes^{op} T_0(V)$ are both $Y^\ast(\g) \what{\otimes} \C((\lambda))$-comodules with the same underlying $\C((\lambda))$-module, namely $W \otimes V \br{\otimes} \C((\lambda))$.   However, the $Y^\ast(\g)\what{\otimes}\C((\lambda))$-comodule structure is different in the two cases. 

Let $R' = \sigma(R) \in Y(\g) \otimes Y(\g)((\lambda))$, and let 
$$
R'_{W,V} : W \otimes V \br{\otimes} \C((\lambda)) \to W \otimes V \br{\otimes} \C((\lambda)) 
$$
be the map constructed as above. 

We need to show that the map $R'_{W,V}$ intertwines the two different $Y^\ast(\g) \what{\otimes} \C((\lambda))$-comodule structures on $W \otimes V \br{\otimes} \C((\lambda))$, and so gives us a map of $Y^\ast(\g) \what{\otimes} \C((\lambda))$-comodules
$$
R'_{W,V} : T_\lambda(W) \otimes^{op} T_0(V) \to T_\lambda(W) \otimes T_0(V).
$$
As in proposition \ref{proposition_Rmatrix_categorical}, this follows from $R$-matrix equation $1$.

Conversely, we need to show how a natural isomorphism of $Y^\ast(\g) \what{\otimes} \C((\lambda))$-modules 
\begin{equation*}
T_\lambda(W) \otimes^{op} T_0(V) \to T_\lambda(W) \otimes T_0(V) \tag{$\dagger$}
\end{equation*}
arises from an $R$-matrix satisfying equation $1$. To see this, we consider the cobimodules corepresenting these two functors.  The functor
$$
T_\lambda(W) \otimes T_0(V)
$$
is represented by the $Y^\ast(\g) \what{\otimes} \C((\lambda)) - \left( Y^\ast(\g) \otimes Y^\ast(\g) \br{\otimes} \C((\lambda)) \right)$-cobimodule
$$
Y^\ast(\g) \otimes Y^\ast(\g) \br{\otimes} \C((\lambda)).
$$
The right coaction of $Y^\ast(\g) \otimes Y^\ast(\g) \br{\otimes} \C((\lambda))$ is the right regular corepresentation (composed with the automorphism $\sigma$ which switches the factors). The left coaction of $Y^\ast(\g) \what{\otimes} \C((\lambda))$ is obtained by composing the left regular corepresentation with the $\C((\lambda))$-linear homomorphism
\begin{multline*}
m \circ (T_\lambda \otimes T_0 ) : Y^\ast(\g) \otimes Y^\ast(\g) \br{\otimes} \C((\lambda)) \xto{T_\lambda \otimes T_0} (Y^\ast(\g)\what{\otimes} \C((\lambda)) ) \otimes_{\C((\lambda))} (Y^\ast(\g) \what{\otimes} \C((\lambda)) ) \\
 \xto{m} Y^\ast(\g) \what{\otimes} \C((\lambda)).  
\end{multline*}
The cobimodule representing the functor
$$
T_\lambda W \otimes^{op} T_0 V 
$$
is defined in the same way, except that we use $m^{op}$ in place of $m$ in defining the coaction of $Y^\ast(\g) \what{\otimes} \C((\lambda))$. 

A map of these cobimodules is necessarily colinear over $Y^\ast(\g) \otimes Y^\ast(\g) \br{\otimes} \C((\lambda))$.

In general, if $C$ is a coalgebra, a map
$$
\phi : C \to C
$$
which commutes with the right regular coaction of $C$ is necessarily of the form
$$
\phi(c) = (\alpha \otimes 1) \tr(c)
$$
for some $\alpha \in C^\ast$ (where $\tr(c)$ is the coproduct). 

Applying this to $Y^\ast(\g) \otimes Y^\ast(\g) \br{\otimes} \C((\lambda))$, we see that any natural isomorphism as in equation $(\dagger)$ is of the form $R'_{W,V}$ for some $R' \in F^0 (Y(\g) \otimes Y(\g) ) \what{\otimes} \C((\lambda))$.  The point is that the space of $\C((\lambda))$-linear filtration preserving linear maps
$$
Y^\ast(\g) \otimes Y^\ast(\g) \br{\otimes} \C((\lambda)) \to \C((\lambda))
$$
is $F^0 (Y(\g) \otimes Y(\g) ) \what{\otimes} \C((\lambda))$. 

The statement that $R'_{W,V}$ intertwines the left coactions of $Y^\ast(\g) \what{\otimes} \C((\lambda))$ is $R$-matrix equation $1$. 

The proof of proposition \ref{proposition_Rmatrix_categorical} shows that if $R$ satisfies $R$-matrix equations $1$ and $2$, then $R$-matrix diagrams $1$ and $2$ in \ref{proposition_Rmatrix_categorical}.  The converse is immediate. 
\end{proof}

\subsection{}
Our claim is that operator product expansion, which is a homomorphism of $E_2$ algebras 
$$
m_{OPE} : \F_{0}  \otimes \F_0 \to \F_0 \what{\otimes} \C((z)),
$$
is encoded by the $R$-matrix. In fact, there is a sign difference between the natural convention for the operator product expansion and the usual convention for the $R$-matrix. We thus let
$$
m'_{OPE} :\F_{0}  \otimes \F_0 \to \F_0 \what{\otimes} \C((\lambda))
$$
where we set $\lambda = -z$.  In other words, this is the OPE for the collision of an operator at $0$ with an operator at $-\lambda$, as $\lambda$ tends to zero. 

Let $F_{OPE}$ be the functor 
$$
\F_0-\op{mod} \times \F_0-\op{mod} \to \F_0((\lambda))-\op{mod} 
$$
associated to $m'_{OPE}$.  Since $m'_{OPE}$ is a homomorphism of $E_2$ algebras, $F_{OPE}$ is a monoidal functor.

We have seen in corollary \ref{cor:Yangian_equiv_categories} that there is a quasi-equivalence of dg monoidal categories
$$
\F_{0}-\op{mod} \simeq Y^\ast(\g)-\op{comod}.
$$
Further, by applying proposition \ref{proposition_equivalence_lambda} we see that there is an equivalence of monoidal dg categories
$$
\F_{0}((\lambda))-\op{mod} \simeq Y^\ast(\g)((\lambda))-\op{comod}
$$
where the category $Y^\ast(\g)((\lambda))-\op{comod}$ is defined to be the dg category consisting of certain lax $A_\infty$ comodules over $Y^\ast(\g)((\lambda))$.

After applying these equivalences, we see that the OPE map $m_{OPE}$ gives a monoidal functor which we continue to call $F_{OPE}$: 
$$
F_{OPE}(\lambda) : Y^\ast(\g)-\op{comod} \times Y^\ast(\g)-\op{comod} \to Y^\ast(\g) \what{\otimes} \C((\lambda))-\op{comod}. 
$$
\begin{theorem}
There is a natural equivalence of monoidal functors
$$
F_{OPE} \iso F_R
$$
where $F_R$ is the functor with monoidal structure constructed from the $R$-matrix in theorem \ref{theorem_monoidal}. 
\end{theorem}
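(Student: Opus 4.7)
The strategy is to invoke Drinfeld's uniqueness theorem \ref{theorem_existence_R_matrix} for the $R$-matrix. By theorem \ref{theorem_monoidal_dual}, giving a natural monoidal structure on the functor $V\times W\mapsto T_0(V)\otimes T_\lambda(W)$ is equivalent to giving an element $R_{\mathrm{OPE}}(\lambda)\in F^0(Y(\g)\otimes Y(\g))((\lambda))$ satisfying the three $R$-matrix equations. So it suffices to (i) identify the underlying functor of $F_{\mathrm{OPE}}$ with that of $F_R$ and (ii) check that the resulting element $R_{\mathrm{OPE}}(\lambda)$ satisfies Drinfeld's characterization, i.e.\ the three $R$-matrix equations together with $\C^\times$-invariance and an appropriate non-vanishing (normalization) condition.

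First I would identify the underlying functor. The homomorphism of $E_2$ algebras $m'_{\mathrm{OPE}}:\F_0\otimes\F_0\to \F_0\what\otimes\C((\lambda))$ arising from the OPE at $0$ and at $-\lambda$ restricts, on each factor, to two homomorphisms of $E_2$ algebras $\F_0\to\F_0\what\otimes\C((\lambda))$. The first is the obvious inclusion $\alpha\mapsto\alpha\otimes 1$; the second is obtained by placing the observable at the point $-\lambda$, i.e.\ it is holomorphic translation by $-\lambda$ in the $z$-direction. Under the Koszul duality equivalence of corollary \ref{cor:Yangian_equiv_categories} (extended over $\C((\lambda))$ via proposition \ref{proposition_equivalence_lambda}), these two $E_2$-algebra maps correspond to the two Hopf-algebra embeddings $T_0$ and $T_\lambda$ of the Yangian into $Y(\g)((\lambda))$: the first is tautological, and the second is the quantization of the translation $z\mapsto z+\lambda$ on $U(\g[[z]])$, which by the Drinfeld uniqueness of Hopf-algebra quantizations of the bialgebra translation must agree with the map $T_\lambda$ used to define $F_R$. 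Since an $E_2$-algebra homomorphism $A\otimes A\to B$ whose two restrictions are prescribed determines a monoidal functor $A\text{-mod}\times A\text{-mod}\to B\text{-mod}$ via relative tensor product of cobimodules, this identifies the underlying functor of $F_{\mathrm{OPE}}$ with $V\times W\mapsto T_0(V)\otimes T_\lambda(W)$.

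Next, the monoidal structure on $F_{\mathrm{OPE}}$ (which it automatically has as $m'_{\mathrm{OPE}}$ is a map of $E_2$ algebras) produces, by theorem \ref{theorem_monoidal_dual}, an element $R_{\mathrm{OPE}}(\lambda)\in F^0(Y(\g)\otimes Y(\g))((\lambda))$ satisfying $R$-matrix equations $1$, $2$ and $3$. The $\C^\times$-equivariance of $R_{\mathrm{OPE}}(\lambda)$ follows from the $S^1$-equivariance of the factorization algebra $\F$ and of the OPE map, together with the observation (used throughout the paper) that $\hbar$ has weight $1$ and $\lambda$ has weight $1$ under the chosen $\C^\times$. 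At this point Drinfeld's theorem \ref{theorem_existence_R_matrix} pins down $R_{\mathrm{OPE}}(\lambda)$ up to the choice of a non-zero overall normalization: any solution of equations $1$, $2$, $3$ that is $\C^\times$-invariant and non-trivial modulo $\hbar$ agrees with the universal $R$-matrix after an overall rescaling.

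The last step is to fix this normalization, which is also where I expect the main technical obstacle to sit. The computation of proposition \ref{proposition_bialgebra} already shows that the first non-trivial coefficient of $R_{\mathrm{OPE}}$ (equivalently, of the co-Lie bracket on $\g[[z]]^\vee$) is a non-zero scalar times Drinfeld's Lie cobracket; together with lemma \ref{lemma_Rmatrix_expectation_value} (which fixes the scalar using the normalization $\tfrac{1}{2\pi i}\d z\,\mathrm{CS}(A)$ of the action), this determines $R_{\mathrm{OPE}}=R$ exactly. The obstacle is essentially bookkeeping: verifying that the normalization implicit in the Koszul-dual identification $\F_0^!\simeq Y^\ast(\g)$ agrees with the normalization implicit in the Feynman-diagram computation of the OPE, so that one gets the standard $R(\lambda)$ rather than a rescaled version. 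Once this is checked, the natural equivalence $F_{\mathrm{OPE}}\simeq F_R$ as monoidal functors follows from Drinfeld's uniqueness.
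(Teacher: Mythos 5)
Your plan follows essentially the same route as the paper: reduce via monoidality to identifying $F_{OPE}(V\times 1)\simeq T_0(V)$ and $F_{OPE}(1\times W)\simeq T_\lambda(W)$ (the paper establishes the needed uniqueness of the quantized translation $z\mapsto z+\lambda$ by proving an explicit lemma about Hopf-algebra derivations of $\g[[z]]$, rather than citing a Drinfeld uniqueness statement), and then apply theorem \ref{theorem_monoidal_dual} together with Drinfeld's uniqueness of the $R$-matrix. The only divergence is your final normalization step, which is not needed for this theorem: theorem \ref{theorem_existence_R_matrix} gives exact uniqueness (not uniqueness up to rescaling) once $\C^\times$-invariance and the three $R$-matrix equations hold, and the rescaling ambiguity $z\mapsto cz$ only enters when matching $z$ to the standard spectral parameter, which the paper treats separately in the section on normalizing the coupling constant.
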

\begin{remark}
By natural equivalence I mean a natural transformation which induces a quasi-isomorphism on all objects. 
\end{remark}
\begin{proof}
The first thing we need to verify is that there is such an equivalence just as functors (not as monoidal functors). 

Since $F_{OPE}$ is monoidal, we have
$$
F_{OPE}(V \times W) = F_{OPE}(V\times 1) \otimes F_{OPE}(1 \times W).
$$
Here $1$ denotes the trivial $Y(\g)$-module, that is, $\C[[\hbar]]$ with the trivial action. 

To check that there is an isomorphism of functors $F_{OPE} \iso F_R$, we need to verify that there is a natural quasi-isomorphism
$$
F_{OPE}(V \times 1) \otimes F_{OPE}(1 \times W) \iso T_0(V) \otimes T_\lambda(W).
$$
Thus, we need to construct natural quasi-isomorphisms
\begin{align*}
F_{OPE}(V \times 1) &\simeq T_0(V) \\
F_{OPE}(1 \times W) & \simeq T_\lambda(W). 
\end{align*}
Now $F_{OPE}$ was constructed from a homomorphism of $E_2$ algebras
$$
m'_{OPE} : \F_0 \otimes \F_0 \to \F_0((\lambda)).
$$
This homomorphism of $E_2$ algebras was the $\C((\lambda))$-linear extension of the identity when restricted to $\F_0 \otimes 1$: this makes it clear that $F_{OPE}(V \times 1) = T_0(V)$.

Note that infinitesimal translation in the $z$-direction is a derivation of $\F_0$.  In fact, it's a derivation of $\F_0$ as an $E_2$ algebra. Let us call this derivation $\til{\del}_z$.  Any derivation of $\F_0$ gives rise (by naturality) to a derivation of the Koszul dual Hopf algebra.   Let us use the notation $\til{\del}_z$ for the corresponding Hopf algebra derivation of $Y(\g)$. 

Let 
$$
\til{T}_\lambda : \F_0 \to \F_0[[\lambda]]
$$
be defined by $\exp (\lambda \del_z)$.   Thus, $\til{T}_\lambda$ is the one-parameter group of automorphisms associated to the derivation $\del_z$. 

By the construction of $m'_{OPE}$, the homomorphism of $E_2$ algebras $$m'_{OPE}(1 \otimes - ) : \F_0 \to \F_0((\lambda))$$
is simply $\til{T}_{-\lambda}$.  Thus, the functor 
$$
V \mapsto F_{OPE}(1 \times V)
$$
sends an $\F_0$-module $V$ to $\til{T}_{-\lambda}(V)$, which is defined by tensoring with the $\F_0 - \F_0$-bimodule $\F_0$ where the right action of an element $\beta \in \F_0$ is by right multiplication by $\til{T}_{-\lambda}(\beta)$. 

Let us use the notation $\til{T}_{-\lambda}$ for the one-parameter family of automorphisms of $Y(\g)$ and of $Y^\ast(\g)$ obtained by applying Koszul duality to the corresponding family of automorphisms of $\F_0$.  A small calculation shows that, under the equivalence between
$$
\F_0-\op{mod} \simeq Y^\ast(\g)-\op{comod} $$
the functor $\til{T}_{-\lambda}$ on $\F_0-\op{mod}$ corresponds to $\til{T}_{\lambda}$ on $Y^\ast(\g)-\op{comod}$: i.e. the functor on $Y^\ast(\g)-\op{comod}$ which sends a comodule $M$ to the module where the coaction of $Y(\g)$ is composed with $\til{T}_{-\lambda}$. 

Thus, in terms of functors on $Y^\ast(\g)-\op{comod}$, our discussion so far shows that
$$
F_{OPE}( 1 \times W ) \simeq \til{T}_{\lambda}(W).
$$
It remains to verify that $\til{T}_{\lambda}$ and $T_{\lambda}$ coincide. We will show the following.
\begin{lemma}
There is a unique homomorphism of Hopf algebras
$$
\Phi : Y(\g) \to Y(\g)[[\lambda]]
$$
with the following properties.
\begin{enumerate}
\item Classically, $\Phi$ arises from the map
\begin{align*}
\g[[z]] & \to \g[[z]][[\lambda]] \\
z &\mapsto z + \lambda.
\end{align*}
\item The map $\Phi$ is $\C^\times$-equivariant, where $\C^\times$ acts on the Yangian in the way we have discussed, and where $\lambda$ has weight $1$. 
\item Modulo $\lambda$, $\Phi$ is the identity. 
\end{enumerate}
\end{lemma}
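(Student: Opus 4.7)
The strategy is to construct $\Phi$ as the exponential of a quantum translation derivation on $Y(\g)$, and to use Drinfeld's uniqueness theorem for the Yangian to obtain both existence of that derivation and uniqueness of $\Phi$.

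First I would construct a $\C[[\hbar]]$-linear Hopf algebra derivation (i.e.\ a $\C[[\hbar]]$-linear map which is simultaneously a derivation and a coderivation) $\tilde\partial_z : Y(\g) \to Y(\g)$ of weight $-1$ for the $\C^\times$-action, whose reduction modulo $\hbar$ is the standard derivation $\partial_z : X z^k \mapsto k X z^{k-1}$ on $U(\g[[z]])$. Existence follows from Drinfeld's characterization of the Yangian applied over the Artinian base $\C[\eta]/(\eta^2)$ (with $\eta$ of weight $-1$): the automorphism $\alpha \mapsto \alpha + \eta \partial_z \alpha$ of $U(\g[[z]]) \otimes \C[\eta]/(\eta^2)$ is a graded Hopf algebra automorphism preserving the standard Lie bialgebra structure on $\g[[z]]$, and by the uniqueness theorem it lifts to a graded Hopf algebra automorphism of $Y(\g)\otimes \C[\eta]/(\eta^2)$ whose linear-in-$\eta$ part defines $\tilde\partial_z$.

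Next I would set $\Phi(\alpha) = \exp(\lambda\, \tilde\partial_z)(\alpha) = \sum_{n \geq 0} \tfrac{\lambda^n}{n!} \tilde\partial_z^{\,n}(\alpha)$. Because $\tilde\partial_z$ lowers weight by $1$ and $\lambda$ has weight $1$, each summand has the same weight as $\alpha$, and on any fixed weight space of the target only finitely many $\lambda$-powers contribute; thus the series converges in $Y(\g)[[\lambda]]$ (which, recall, is the product of its weight spaces). Since $\tilde\partial_z$ is a Hopf algebra derivation, $\Phi$ is a Hopf algebra homomorphism. Condition (3) is built in, condition (2) is automatic, and modulo $\hbar$ we get $\exp(\lambda \partial_z)$, which sends $X z^k$ to $X(z+\lambda)^k$, giving condition (1).

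For uniqueness, suppose $\Phi, \Phi'$ both satisfy the three conditions, write $\Psi = \Phi - \Phi'$, and let $n \geq 1$ be its $\lambda$-adic valuation, so $\Psi = \lambda^n \chi + O(\lambda^{n+1})$. Since $\Phi \equiv \Phi' \pmod{\lambda^n}$ are both Hopf algebra maps, $\chi$ is a Hopf algebra biderivation of $Y(\g)$ of weight $-n$; and since both maps reduce to $\phi_0$ modulo $\hbar$, one has $\chi \equiv 0 \pmod \hbar$. The analogous argument I sketched for existence, applied to biderivations (equivalently, to Hopf algebra automorphisms of $Y(\g) \otimes \C[\eta]/(\eta^2)$ with $\eta$ of weight $-n$) which vanish modulo $\hbar$, shows that a classical obstruction must vanish for such a $\chi$ to exist, and by Drinfeld rigidity the only graded biderivations of $Y(\g)$ are spanned by $\tilde\partial_z$ and the grading operator; writing $\chi = \hbar \chi_1$ and iterating the weight/$\hbar$-divisibility argument forces $\chi = 0$.

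The main obstacle will be pinning down the relative/infinitesimal form of Drinfeld's uniqueness theorem that is used twice above: once to produce $\tilde\partial_z$ from the classical derivation $\partial_z$, and once to bound the space of graded biderivations of $Y(\g)$. Essentially one needs to verify that in the deformation-theoretic picture of the Yangian as \emph{the} graded Hopf algebra quantization of $U(\g[[z]])$ with the standard Lie bialgebra structure, the relevant obstruction groups (controlling infinitesimal automorphisms compatible with both the product and coproduct) reduce classically to the Lie bialgebra automorphism groups of $\g[[z]]$, so that the compatibility of $\partial_z$ with the Lie bialgebra structure is precisely what is needed to lift it. Everything else is formal bookkeeping with the $\C^\times$-grading and the completed tensor product $Y(\g)[[\lambda]]$.
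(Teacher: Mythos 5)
Your overall order-by-order architecture is parallel to the paper's, but the step that actually carries the proof is missing. The content of the lemma is uniqueness (the paper takes existence of the translation homomorphism $T_\lambda : Y(\g) \to Y(\g)((\lambda))$ as already known), and in your sketch uniqueness rests on the assertion that ``by Drinfeld rigidity the only graded biderivations of $Y(\g)$ are spanned by $\tilde\partial_z$ and the grading operator.'' Drinfeld's theorem as used in this paper characterizes $Y(\g)$ as the unique graded Hopf algebra quantizing $U(\g[[z]])$ with the given cobracket; it does not, off the shelf, classify Hopf-algebra derivations of $Y(\g)$, nor does it say anything about lifting automorphisms over Artinian thickenings such as $\C[\eta]/(\eta^2)$ (which you also need for your construction of $\tilde\partial_z$). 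Moreover the classification you invoke is false as stated: $\mathrm{ad}_x$ for primitive $x \in \g \subset Y(\g)$ is a graded, $\C[[\hbar]]$-linear Hopf biderivation of weight $0$ not in your span. You flag this ``infinitesimal form of Drinfeld uniqueness'' as the main obstacle, but it is exactly where the mathematics lives, so the proposal defers the proof rather than giving it.

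What closes the argument---and what the paper supplies---is an elementary computation at the classical level that makes any quantum rigidity statement unnecessary. Reducing modulo $\hbar$ at each stage, a Hopf-algebra derivation of $U(\g[[z]])$ preserves primitives (since $D$ is both a derivation and a coderivation and $D(1)=0$), hence is determined by a Lie algebra derivation of $\g[[z]]$; the grading together with the condition that the discrepancy vanishes modulo $\lambda$ forces these derivations to have weight at most $-2$; and for simple $\g$ every Lie algebra derivation $\Gamma$ of $\g[[z]]$ of weight $-m$ with $m \ge 2$ vanishes: $\Gamma$ kills $\g$ (no negative powers of $z$), so it is $\g$-equivariant, so by Schur $\Gamma(z^k X) = c_k z^{k-m} X$, and the Leibniz rule gives $c_{k+l} = c_k + c_l$, i.e.\ $\Gamma = c\, z^{1-m} \partial_z$, which is not a well-defined operator on $\C[[z]]$ when $m \ge 2$. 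With this vanishing in hand, your own bookkeeping (write $\chi = \hbar \chi_k$ iteratively, note the weight drops with each power of $\hbar$, and use $\hbar$-adic separatedness) does complete the uniqueness proof; without it the induction has no content. For existence, the simplest repair is to do what the paper does and quote the known quantized translation map $T_\lambda$, rather than manufacture $\tilde\partial_z$ from an unproven infinitesimal version of the uniqueness theorem.
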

\begin{proof}
We already know that $\Phi$ exists.  We need to verify that it is unique.  To prove this, we will work term by term in $\hbar$ and in $\lambda$.  If we have $\Phi$ modulo $\hbar^n$, then any two lifts of $\Phi$ to a homomorphism defined modulo $\hbar^{n+1}$ differ by a Hopf algebra derivation
$$
U(\g[[z]]) \to U(\g[[z]])[[\lambda]].
$$
Such a derivation is necessarily given by a derivation of Lie algebras 
$$
\Psi_n : \g[[z]] \to \g[[z]][[\lambda]].
$$
Modulo $\lambda$, $\Psi_n = 0$. 

Further, the homogeneity constraint implies that $\Psi_n$ must be of weight $-n$ with respect to the $\C^\times$ action (we can think of $\Psi_n$ as being accompanied by $\hbar^n$, and $\hbar$ has weight $1$). 

We will show $\Psi_n$ is unique by working order by order in $\lambda$. By considering the $\lambda^k$ term, for $k > 0$, we see that we need to show that any Lie algebra derivation
$$
\Gamma_{k+n} : \g[[z]] \to \g[[z]]
$$
which is of weight $\le -k-n$, is zero.  

Thus, let us fix $n > 1$, and let us consider a derivation $\Gamma_n$ of weight $-n$. We will show that $\Gamma_n = 0$. 

The map $\Gamma_n : z^k\g \to z^{k-n} \g$ must be a map of $\g$-modules.  Since $\g$ is simple, we see it it must be of the form $\Gamma_n(z^k X) = c_k z^{k-n} X$, for some sequence of constants $c_k$ with $c_k = 0$ for $k-n < 0$.

By considering $\Gamma_n( [z^k X, z^l Y] )$ and applying the derivation rule, we see that $c_k + c_l = c_{k+l}$. This implies $\Gamma_n = c z^{-n+1} \dpa{z}$ for some constant $c$.  If $n > 1$ this is not a well-defined derivation of $\C[[z]]$. 

\end{proof}

This lemma allows us to conclude that 
$$
F_{OPE}(V \otimes W) = T_0(V) \otimes T_{-\lambda}(W). 
$$
Thus, as functors, $F_{OPE}$ coincides with the functor $F_R(-\lambda)$ given by the $R(-\lambda)$.  According to \ref{theorem_monoidal_dual}, $F_{OPE}$ must be given by \emph{some} $R$-matrix satisfying $R$-matrix equations $1$, $2$ and $3$.  Uniqueness of the $R$-matrix tells us that $F_R$ and $F_{OPE}$ coincide as monoidal functors. 
\end{proof}

\section{Expectation value of Wilson operators and lattice models}
Detailed proofs have been given of almost all the statements made in the introduction. The exception is the statement relating the expectation values of Wilson operators and integrable lattice models.

Consider our theory on $\mbb{P}^1_z \times E_w$.  Recall that our theory can be defined at the quantum level on any complex surface equipped with a meromorphic volume form with quadratic poles along a divisor and no other poles or zeroes.  The meromorphic volume form here is $\d z \d w$.  

Recall also that 
$$
H^\ast ( \Obs(\mbb{P}^1_z \times E_w) ) = \C[[\hbar]].
$$
This means that we can define correlation functions.  Indeed, given any collection of disjoint open subsets $U_1,\dots, U_n \subset \mbb{P}^1_z \times E_w$, the structure map of the factorization algebra is a map
$$
m_{U_1,\dots,U_n}^{\mbb{P}^1_z \times E_w} : \Obs(U_1) \times \dots \times \Obs(U_n) \to \Obs(\mbb{P}^1) \times E_w).
$$ 
If $\alpha_i \in \Obs(U_i)$ are observables, which are closed
 and of cohomological degree $0$, then we define the expectation value or correlation function of the $\alpha_i$ by
$$
\ip{\alpha_1 , \dots, \alpha_n} = [m_{U_1,\dots, U_n}^{\mbb{P}^1} ] \in H^0 (\Obs(\mbb{P}^1 \times E)) = \C[[\hbar]]. 
$$
This expectation value has the usual properties: for example, if some $\alpha_i$ is cohomological exact, whereas the rest are closed, then the expectation value is zero. 

Let $V$ be a representation of the Yangian.   Let $a_1,\dots, a_m$ and $b_1,\dots,b_n$ be $a$- and $b$-cycles on $E$, such that the $a_i$ are disjoint from each other, as are the $b_i$. We have Wilson operators 
\begin{align*}
\chi_V(z, S^1_{a_i}) &\in H^0 ( \Obs( z \times S^1_{a_i} ) ) \\
\chi_V(0, S^1_{b_j}) &\in H^0 ( \Obs( 0 \times S^1_{b_j} ) ).
\end{align*}
Then, we can consider the expectation value
$$
\ip{\chi_V(z, S^1_{a_1}  ), \dots, \chi_V(z, S^1_{a_m}  ), \chi_V(0, S^1_{b_1}  ) , \dots, \chi_V(0, S^1_{b_n}  ) } \in \C[[\hbar]].
$$
We want to show that this is equal to the partition function of an integrable lattice model.   The partition function of this model is the trace of the $R$-matrix $R_{V^{\otimes m}, V^{\otimes n}}$ acting on $V^{\otimes m} \otimes V^{\otimes n}$.

\begin{theorem}
There is an equality
$$
\ip{\chi_V(z, S^1_{a_1}), \dots, \chi_V(0, S^1_{a_m}), \chi_V(0, S^1_{b})   } = \op{Tr}_{V^{\otimes m} \otimes V^{\otimes n}} R_{V^{\otimes m}, V^{\otimes n}}(z).
$$
\end{theorem}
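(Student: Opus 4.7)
The plan is to reduce to the case $m=n=1$ and then apply the Ayala-Rozenblyum model for $HH^{(2)}$ together with the $R$-matrix theorem proved in Section \ref{section_R_matrix}. First I would use the corollary $\chi_{V\otimes W} = \chi_V\cdot\chi_W$ together with topological invariance in the $w$-direction to slide the $m$ parallel $a$-cycles together and the $n$ parallel $b$-cycles together, collapsing the correlator by the factorization-algebra product to
$$
\langle \chi_V(z,S^1_{a_1})\cdots\chi_V(z,S^1_{a_m})\,\chi_V(0,S^1_{b_1})\cdots\chi_V(0,S^1_{b_n})\rangle_{\mbb{P}^1\times E} = \langle \chi_{V^{\otimes m}}(z,S^1_a),\,\chi_{V^{\otimes n}}(0,S^1_b)\rangle_{\mbb{P}^1\times E}.
$$
So it suffices to prove the identity for a single $a$-cycle carrying a module $M=V^{\otimes m}$ and a single $b$-cycle carrying a module $N=V^{\otimes n}$, with target $\op{Tr}_{M\otimes N}R_{M,N}(z)$.

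Next I would represent each Wilson loop as a $0$-cycle in $HH^{(2)}(\op{Fin}^{(2)}(Y(\g)))$ via the Ayala-Rozenblyum model, using the isomorphism $H^\ast(\Obs(z_0\times E))\simeq HH^{(2)}(\op{Fin}^{(2)}(Y(\g)))$. The $a$-cycle is an oriented $1$-cell labeled by $M$ on the otherwise trivially-labeled torus, and similarly for $N$ on the $b$-cycle; there are no intersections so far, since the loops sit at different points of $\mbb{P}^1_z$. To compute the operator product in the $z$-direction I would apply the monoidal functor $F_{OPE}$, which by the main theorem of Section \ref{section_R_matrix} is identified with $F_R$: the $R$-matrix. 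Under $F_R$, bringing the two loops into coincidence in $\mbb{P}^1_z$ produces a new cell complex on $T^2$ in which $S^1_a$ and $S^1_b$ intersect transversely at a single point, labeled by the $2$-morphism $\sigma\circ R_{M,N}(z)$ (as explicitly described in the introductory sketch of this result). The cells between the crossings get labeled by $T_0(M)$ and $T_z(N)$ over the coefficient ring $\C((z))$.

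I would then push forward to $\mbb{P}^1\times E$ by applying the map $H^\ast(\Obs(0\times E))\to H^\ast(\Obs(\mbb{P}^1\times E)) = \C[[\hbar]]$, which was identified with the forgetful monoidal functor $\op{Fin}^{(2)}(Y(\g))\to\op{Fin}^{(2)}(\C[[\hbar]])$. The $R$-matrix labels become $\C[[\hbar]]$-linear maps $\sigma\circ R_{M,N}(z):M\otimes N\to N\otimes M$, and the surviving $0$-chain in $HH^{(2)}(\op{Fin}^{(2)}(\C[[\hbar]]))((z)) = \C[[\hbar]]((z))$ is exactly the one gotten by cutting the torus along the two cycles into a single square, producing the closed-up trace $\op{Tr}_{M\otimes N}R_{M,N}(z)$. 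This is the analog, at the level of $HH^{(2)}$ of $\op{Fin}^{(2)}(\C[[\hbar]])$, of the fact that the Hochschild character of a cylinder is a trace.

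The main obstacle is the purely combinatorial Ayala-Rozenblyum computation: verifying that for the two-category $\op{Fin}^{(2)}(\C[[\hbar]])$ the higher Hochschild $0$-chain defined by two transverse cycles on $T^2$ with labels $M,N$ and crossing label $\sigma\circ R$ evaluates to $\op{Tr}_{M\otimes N}R$. Once this is in hand, the remainder is formal: it combines the reduction $\chi_{V^{\otimes k}} = \chi_V^{\,k}$, the $R$-matrix identification of $F_{OPE}$ (Section \ref{section_R_matrix}), the identification of $\Obs_{\mbb{P}^1}$ with the trivial $E_2$ algebra, and the fact that there are no non-trivial stable solutions to the equations of motion on $\mbb{P}^1\times E$ (Lemma \ref{lemma_isolated_solution}), which is what makes the perturbative answer a bona fide number rather than an integral over moduli.
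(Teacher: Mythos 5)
Your proposal follows essentially the same route as the paper: collide the cycles using $\chi_{V\otimes W}=\chi_V\cdot\chi_W$, represent the two Wilson loops as Ayala-Rozenblyum configurations, apply $HH^{(2)}(F_R)$ to place $\sigma\circ R$ at the single crossing, and push forward along the forgetful functor to $\op{Fin}(\C[[\hbar]])$. The one step you flag as the remaining obstacle is exactly where the paper's proof ends, and it is disposed of there by a short argument you could adopt verbatim: reduce to decomposable $F=F'\otimes F''$, use the symmetric monoidal structure of $\op{Fin}(\C[[\hbar]])$ to factor the labeled torus into the product of two single-cycle configurations, each of which evaluates to an ordinary trace, giving $\op{Tr}_{W_1}(F')\,\op{Tr}_{W_2}(F'')=\op{Tr}_{W_1\otimes W_2}(F)$.
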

\begin{remark}
The sign change on the right hand side comes from the sign change we discussed above between the natural conventions for the OPE and $R$-matrix. 
\end{remark}
\begin{proof}
We know that the operator product of Wilson operators corresponds to tensor product of the Yangian.  By colliding the $a$- and $b$-cycles together, we see that it suffices to show that
$$
\ip{\chi_{V^{\otimes m}}(z, S^1_{a}), \chi_{V^{\otimes n}}(0, S^1_{b})} = \op{Tr}_{V \otimes W} R(z).
$$
Equivalently, by translation invariance, we need to show that
$$
\ip{\chi_{V^{\otimes m}}(0, S^1_{a}), \chi_{V^{\otimes n}}(-z, S^1_{b})} = \op{Tr}_{V \otimes W} R(z).
$$
We will prove that this holds if we replace $V^{\otimes m}$ and $V^{\otimes n}$ by two arbitrary representations of the Yangian, which we call $W_1$ and $W_2$.  

As I sketched in the introduction, the proof relies on the Ayala-Rozenblyum \cite{AyaRoz13} model for the iterated Hochschild homology of a monoidal category. They define a theory they call \emph{composition homology} for an $(\infty,n)$-category with adjoints\footnote{Their theory is more general than this, but the case of categories with adjoints is what is relevant here.} (i.e.\ every $k$-morphism for $0 < k < n$ has an adjoint).  If $M$ is a compact framed manifold, and $\mc{C}$ is an $(\infty,n)$-category with adjoints, the composition homology is defined as a certain colimit over configurations consisting of stratified subsets $S \subset M$, with certain framing data, and labels on $S$ by objects and morphisms of $\mc{C}$.  More precisely, every connected component of $M \setminus S$ is labeled by an object of $\mc{C}$. Every codimension $1$ stratum of $S$ is labeled by a $1$-morphism of $\mc{C}$, in a direction determined by the framing data. (The framing data is, in particular, a trivialization of the Gauss map on each stratum).  These configurations are allowed to collide in various ways, as part of the colimit process.  The details will not concern us, however; we only need to deal with individual configurations. 

We are interested in the following special case.  The category 
$$\op{Fin}(Y(\g)) \simeq \op{Fin}(Y^\ast(\g)) \simeq \op{Perf}( \Obs_{0}) $$
is a monoidal dg category with duals, enriched in the category $\mc{C}^b$ of complete filtered $\C[[\hbar]]$-modules.  Let $\op{Fin}^{(2)}(Y(\g))$ be the $(\infty,2)$-category which as one object and whose $1$-morphisms are objects of $\op{Fin}(Y(\g))$. This $(\infty,2)$-category is enriched at the level of $2$-morphisms; the Ayala-Rozenblyum theory can deal with enriched categories. 

Then, the Ayala-Rozenblyum composition homology, applied to $\op{Fin}^{(2)}(Y(\g))$, yields the iterated Hochschild homology of $\op{Fin}(Y(\g))$.   We let $HH^{(2)}$ denote composition homology on the two-torus.

Let us denote composition homology of a the two-torus with coefficients in some two-category by $CH_\ast$.  Suppose we have the following data:
\begin{enumerate}
\item Two objects $V_1,V_2$ of $\op{Fin}(Y(\g))$.
\item A map
$$
f : V_1 \otimes V_2 \to V_2 \otimes V_1.
$$
\end{enumerate}
Then, we can construct an element of composition homology  $HH_\ast^{(2)}(\op{Fin}^{(2)}(Y(\g))$ as follows.  The stratified submanifold of the two-torus is 
$$
S^1_a \cup S^1_b \subset S^1_a \times S^1_b. 
$$
The stratification is the evident one: there are three closed strata, namely the circles and their intersections.  We label $S^1_a$  by $V_1$ and $S^1_b$ by $V_2$, both viewed as $1$-morphisms in $\op{Fin}^{(2)}(Y(\g))$.  The intersection point is labeled by $f$, which is a $2$-morphism in $\op{Fin}^{(2)}(Y(\g))$ making the following square commutative:
$$
\xymatrix{
1 \ar[r]^{V_1} \ar[d]_{V_2} & 1 \ar[d]^{V_2} \\
1 \ar[r]_{V_1} & 1
}
$$
where $1$ indicates the unique object of $\op{Fin}^{(2)}(Y(\g))$.   

The Wilson operator associated to the representation $W_1$ on $S^1_a$ is associated to such a configuration, where we label the $a$-cycle by $W_1$, the $b$-cycle by the identity object $1$ (i.e. $\C[[\hbar]]$ with the trivial action of the Yangian), and the intertwiner $f$ is the identity map
$$
f : 1 \otimes W_1 \to W_1 \otimes 1.
$$
The Wilson operator associated to the representation $W_2$ on the $b$-cycle is given by labeling $S^1_b$ by $W_2$, $S^1_a$ by the identity object, and the intertwiner again being the identity. 

We have a functor
$$
F_R : \op{Fin}(Y(\g)) \times \op{Fin}(Y(\g)) \to \op{Fin}(Y(\g)((\lambda)))
$$
associated to the operator product in the $z$-plane.  This induces a map on composition homology:
$$
HH^{(2)} (F_R) : HH^{(2)}( \op{Fin}^{(2)}(Y(\g)) ) \otimes HH^{(2)}( \op{Fin}^{(2)}(Y(\g)) ) \to HH^{(2)}(\op{Fin}(Y(\g)) ((\lambda)).
$$
We will apply $HH^{(2)} (F_R)$ to Wilson operators on the $a$- and $b$-cycles described above. 

What we find is the following configuration: on $S^1_a$, we put $T_0(W_1)$, on $S^1_b$ we put $T_{-\lambda}(W_2)$, and on the intersection we put
$$
\sigma \circ R_{W_1,W_2}(\lambda) : T_0(W_1) \otimes T_{\lambda}(W_2) \to T_{\lambda}(W_2) \otimes T_0(W_1). 
$$
We have seen that $\sigma \circ R_{W_1,W_2}$ is an isomorphism of $Y(\g)((\lambda))$-modules.  
 
Recall that
$$
H^\ast (\Obs (0 \times E)) ((\lambda)) =  HH^{(2)} (\op{Fin}^{(2)}(Y(\g)) ) ((\lambda)).
$$
Taking expectation values is the map
$$
H^\ast (\Obs (0 \times E)) ((\lambda)) \to H^\ast (\Obs (\mbb{P}^1 \times E)) ((\lambda))
$$
coming from the inclusion $0 \times E \into \mbb{P}^1 \times E$. 

We need to calculate the image of our element in $HH^{(2)} (\op{Fin}^{(2)}(Y(\g)) ) ((\lambda))$ under this map.

The map of $E_2$ algebras
$$
\Obs_{0} \to \Obs{\mbb{P}^1} \simeq \C[[\hbar]]
$$
is the augmentation map of the $E_2$ algebra $\Obs_0$.  Any homomorphism of $E_2$ algebras induces a monoidal functor; the corresponding monoidal functor
$$
\op{Perf}( \Obs_0) \simeq \op{Fin}(Y(\g)) \to \op{Fin}(\C[[\hbar]])
$$
is the forgetful functor, which takes a $Y(\g)$-module and views it as a $\C[[\hbar]]$-module.

The expectation value map arises by applying iterated Hochschild homology to this monoidal functor.  Since Ayala-Rozenblyum's composition homology is functorial, and agrees with iterated Hochschild homology in this case, it remains to perform the following calculation in composition homology of $\op{Fin}(\C[[\hbar]])$.  Consider the element in $HH^{(2)} (\op{Fin}^{(2)}(\C[[\hbar]])((\lambda))$ obtained by labeling the $a$-cycle by $W_1$, the $b$-cycle by $W_2$, and the intersection by the isomorphism
$$
\sigma \circ R_{W_1,W_2}(\lambda) : W_1 \otimes W_2 \to W_2 \otimes W_1.
$$
We need to check that this element is equivalent to $\op{Tr}_{W_1 \otimes W_2} R_{W_1, W_2}(\lambda)$. The argument is general: we need to show that if $F$ is any endomorphism of $W_1 \otimes W_2$, and we put $\sigma \circ F$ on the intersection, we find $\op{Tr} F$.  Thus, we can assume, without loss of generality, that $F$ is decomposable: $F = F' \otimes F''$ where $F'$, $F''$ are endomorphisms of $W_1$ and $W_2$ respectively.

Now, since $\op{Fin}(\C[[\hbar]])$ is a symmetric monoidal category, $HH^{(2)} (\op{Fin}^{(2)} (\C[[\hbar]])$ has a commutative product.  At the level of configurations on the torus $E$ labeled by objects and morphisms of $\op{Fin}^{(2)} (\C[[\hbar]])$, this tensor product is given by applying the symmetric monoidal structure to objects, morphisms, and two-morphisms.

The configuration where we label the $a$-cycle by $W_1$, the $b$-cycle by $W_2$ and the intersection by $\sigma \circ F$ can be realized as the tensor product (in the above sense) of two configurations. The first is where we label the $a$-cycle by $W_1$, the $b$-cycle by the identity object $1$ of $\op{Fin}(Y(\g))$, and the intersection by the endomorphism $F'$ which we view as an isomorphism
$$
W_1 \otimes 1 \iso 1 \otimes W_1.
$$
The second configuration labels the $a$-cycle by the identity object, the $b$-cycle by $W_2$, and the intersection by $F''$ viewed as an isomorphism
$$
1 \otimes W_2 \iso W_2 \otimes 1. 
$$

The configurations is in the image of composition homology along the $a$-cycle and and $b$-cycle, respectively.  The first configuration is easily seen to give $\op{Tr}_{W_1}(F')$, whereas the second gives $\op{Tr}_{W_2} F''$.   Therefore the answer is the product
$$
(\op{Tr}_{W_1} F' ) (\op{Tr}_{W_2} F'' ) = \op{Tr}_{W_1 \otimes W_2} (F' \otimes F'')
$$
as desired.

\end{proof}

\section{Normalization of the coupling constant}
In the introduction, I stated that to find the standard normalization of the spectral parameter in the theory of integrable systems, we need to normalize the Chern-Simons term in our theory as $\tfrac{1}{2 \pi i} \d z CS(A)$.  What we've proved so far tells us that we find some non-trivial homogeneous quantization of the Yangian, and the corresponding $R$-matrix.  There is precisely one quantization up to  change of coordinates of the form $z \to c z$ where $c$ is a constant. Thus, we need to normalize so that the coordinate $z$ matches the spectral parameter. This normalization amounts to a normalization of the one-loop $R$-matrix.

We will do this by a four-loop calculation. (It will be a very simple $4$-loop calculation, involving only the propagator). It turns out that, if $\g$ is simple, the one-loop contribution to the expectation value of any configuration of Wilson operators is zero. This is simply because the trace of any element of $\g$ in any representation is zero.
 
However, the one-loop $R$-matrix contributions to higher loop terms; we will isolate a certain $4$-loop expectation value which will be able to normalize the one-loop $R$-matrix.

Thus, let $\g$ be a simple Lie algebra. Let $V$ be a representation of $\g$ which lifts to a representation of $Y(\g)$. Let us assume that the invariant pairing on $\g$ defined by $\op{Tr}_V (X Y)$ is non-zero (and so non-degenerate).

Let $\chi_V$ be the Wilson operator corresponding to $V$. Let
$$
\chi'_V = \chi_V - \op{dim} V = \chi_V - \ip{\chi_V} .
$$ 
We are going to calculate the leading order contribution to
\begin{multline*}
C\left(\chi'_V(z, S^1_{a_1}), \chi'_V(z, S^1_{a_2}) , \chi'_V(0,S^1_{b_1}), \chi'_V(0,S^1_{b_2})\right) \\ 
\defeq \ip{\chi'_V(z, S^1_{a_1}), \chi'_V(z, S^1_{a_2}) , \chi'_V(0,S^1_{b_1}), \chi'_V(0,S^1_{b_2})} \\ 
-  2 \ip{\chi'_V(z, S^1_{a}), \chi'_V(0, S^1_{b}) }^2
\end{multline*}
(This is the cumulant of the four $\chi'_V$ operators we are considering). 

\begin{lemma}
We have
\label{lemma_Rmatrix_expectation_value}
\begin{multline*}
C\left(\chi'_V(z, S^1_{a_1}), \chi'_V(z, S^1_{a_2}) , \chi'_V(0,S^1_{b_1}), \chi'_V(0,S^1_{b_2})\right)
= \hbar^4 z^{-4} \alpha^{-4} (2 \pi i)^4 \left(\sum_{i,j} \op{Tr}_V c_i c_j \op{Tr}_V c^i c^j \right)^2 + O(\hbar^5)
\end{multline*}
where 
$$c = \sum c_i \otimes c^i \in \g \otimes \g$$ is the quadratic Casimir, dual to the chosen invariant pairing.  
\end{lemma}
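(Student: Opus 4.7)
The plan is to reduce the correlators in the cumulant to traces of $R$-matrices via the main theorem of the paper, and then isolate the ``box'' Feynman diagram as the sole contribution at order $\hbar^4$.

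First, I would apply the main theorem to write $\ip{\chi_V(z,a_1)\chi_V(z,a_2)\chi_V(0,b_1)\chi_V(0,b_2)} = \op{Tr}_{V^{\otimes 4}} R_{V^{\otimes 2}, V^{\otimes 2}}(z)$ and $\ip{\chi_V(z,a)\chi_V(0,b)} = \op{Tr}_{V^{\otimes 2}} R_{V,V}(z)$. Since the product of two Wilson operators on the same cycle-direction reduces (by the tensor-product rule) to the Wilson operator of $V^{\otimes 2}$, whose expectation is $d^2$, the ``same-direction'' two-point functions $\ip{\chi'_V(z,a_1)\chi'_V(z,a_2)}$ and $\ip{\chi'_V(0,b_1)\chi'_V(0,b_2)}$ are identically zero. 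This justifies the reduction of the cumulant to $\ip{\prod \chi'_V} - 2\ip{\chi'_V(z,a)\chi'_V(0,b)}^2$ written in the lemma.

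Second, the coproduct identities for the universal $R$-matrix (Theorem \ref{theorem_existence_R_matrix}, $R$-matrix equations 2 and 3) factor $R_{V^{\otimes 2}, V^{\otimes 2}}(z) = R_{14}(z)R_{13}(z)R_{24}(z)R_{23}(z)$, where $R_{ij}(z) = R_{V,V}(z) \in \op{End}(V_i \otimes V_j)$. I would expand each factor as $R_{V,V}(z) = 1 + \hbar\, r(z) + \hbar^2 R_2(z) + \cdots$ with $r(z) = \kappa\, c_V / z$, the coefficient $\kappa = (2\pi i)/\alpha$ being forced by Proposition \ref{proposition_bialgebra}: that proposition identified the one-loop Lie bialgebra cobracket as $\alpha$ times the standard one, and tracing this through the normalization $\tfrac{1}{2\pi i}\,\d z\, CS(A)$ and the standard relation between the classical $r$-matrix and the cobracket pins down $\kappa$.

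Third, the crucial input is that $\g$ is simple with non-degenerate trace-form $\op{Tr}_V(XY)$, which forces $\op{Tr}_V(X) = 0$ for every $X \in \g$. This implies that the partial trace of $r_{ij}$ over either of its two factors vanishes. I would then expand the $\hbar^4$ coefficient of the cumulant term by term over partitions $(k_{14}, k_{13}, k_{24}, k_{23})$ of $4$ (one for each basic $R$-factor). The partial-trace vanishing kills every partition in which some $V_k$ is touched by a single $r$-leg, leaving only: the ``pure $R_4$'' and ``pure $R_3\cdot r$'' contributions (which cancel against the $8d^2 T_0^{(4)}$ and disconnected pieces), the ``two $R_2$'s sharing an index'' cross-terms (which cancel against the $-2d(T_1^{(4)}+T_2^{(4)})$ piece of the cumulant formula), and finally the ``fully connected'' box $T_{r^4}(z) := \op{Tr}_{V^{\otimes 4}}(r_{14} r_{13} r_{24} r_{23})$.

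Fourth, I would evaluate $T_{r^4}(z)$. Since $r_{ij} = (\kappa/z)\sum_\mu (c_\mu)_i(c^\mu)_j$, the operator decouples across the four factors of $V^{\otimes 4}$, giving
$$T_{r^4}(z) = \Bigl(\tfrac{\kappa}{z}\Bigr)^{\!4} \sum_{\mu,\nu,\rho,\sigma} \op{Tr}_V(c_\mu c_\nu)\,\op{Tr}_V(c_\rho c_\sigma)\,\op{Tr}_V(c^\nu c^\sigma)\,\op{Tr}_V(c^\mu c^\rho).$$
A short index calculation, using the duality of the bases $\{c_\mu\}, \{c^\mu\}$ and the invariance of $\op{Tr}_V(XY)$ on the simple algebra $\g$, reduces this multi-sum to $S^2$ with $S = \sum_{ij}\op{Tr}_V(c_i c_j)\op{Tr}_V(c^i c^j)$, and substituting $\kappa^4 = (2\pi i)^4/\alpha^4$ gives the claim. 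The hard part is step three: the combinatorial cancellation of $R_2$-cross-terms between the connected correlator and the disconnected subtraction must be tracked with care, and one must verify that each partition $(2,1,1,0)$ contribution indeed vanishes by exhibiting the single $r$-leg whose partial trace is zero.
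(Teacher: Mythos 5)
Your proposal is circular at the decisive point. The one-loop coefficient of the $R$-matrix in the field theory's coordinate $z$ --- your steps two and three, where you set $r(z)=\kappa\, c_V/z$ with $\kappa$ ``forced by Proposition \ref{proposition_bialgebra}'' --- is precisely what this lemma exists to determine, and it is not available from that proposition. Proposition \ref{proposition_bialgebra} fixes the semiclassical cobracket only \emph{up to a non-zero constant}; its proof leaves that constant unevaluated, and the accompanying remark explicitly defers the normalization to Lemma \ref{lemma_Rmatrix_expectation_value} (indeed Proposition \ref{proposition_bialgebra_alternative} then runs the implication in the opposite direction, deducing the precise bialgebra structure \emph{from} this lemma). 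Likewise the abstract uniqueness results identify the Koszul-dual Hopf algebra with the Yangian only up to a rescaling $z\mapsto cz$ of the spectral parameter, so combining the main theorem with a guessed one-loop $R$-matrix cannot produce the coefficient of $z^{-4}$: some genuine analytic input from the field theory is unavoidable. In the paper that input is the propagator integral $\Phi(z)=\int_{S^1_a\times S^1_b}P_0$, evaluated to $\alpha^{-1}(2\pi i)^{-1}z^{-1}$ using the defining equation of $P_0$ and the transverse intersection of the two cycles; the lemma then follows from the path-ordered-exponential form of $\chi'_V$, the vanishing of $\op{Tr}_V X$ for $X\in\g$, and the count of connected four-propagator diagrams. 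Your proposal contains no such computation, so the $z^{-4}$ behaviour and, above all, the prefactor are assumed rather than proved.

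A symptom of the circularity: having read the constant off the statement rather than derived it, your $\kappa=(2\pi i)/\alpha$ reproduces the displayed factor $(2\pi i)^4$, but it is then inconsistent with the conclusion $\alpha=\tfrac{1}{2\pi i}$ drawn in the following section, whereas the paper's own computation yields $\Phi(z)^4=\alpha^{-4}(2\pi i)^{-4}z^{-4}$, which is consistent with that normalization; only an honest derivation can settle which coefficient is correct, which is exactly the job of this lemma. Secondary issues: your reduction of the cumulant (vanishing of same-direction two-point functions) and the partial-trace bookkeeping are reasonable, but the cancellations involving the higher coefficients $R_2$, $R_3$, $R_4$ against the disconnected subtraction are asserted in undefined notation ($T_0^{(4)}$, $T_1^{(4)}$, $T_2^{(4)}$) and would still need to be carried out; in the paper's direct Feynman-diagram route these terms never appear, since to order $\hbar^4$ any diagram containing an interaction vertex or a Wilson line meeting at most one propagator vanishes.
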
 
We will prove this lemma shortly.

Our main theorem tells us that this must be the same as a certain term in the partition function of the integrable lattice model associated to $V$ and $\g$, up to normalization of the spectral parameter $z$.   
\begin{lemma}
Modulo $\hbar^5$, we can compute the cumulant 
$$
C\left(\chi'_V(z, S^1_{a_1}), \chi'_V(z, S^1_{a_2}) , \chi'_V(0,S^1_{b_1}), \chi'_V(0,S^1_{b_2})\right)
$$
as the partition function of a $4 \times 4$ toroidal lattice, where a state as usual is a way of labelling an edge by a basis element of $V$, but the interaction at a vertex is given by 
$$R'_{V \otimes V} = R_{V \otimes V} - \op{Id}_{V \otimes V}.$$
\end{lemma}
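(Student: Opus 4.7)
I will prove the lemma in three steps: (i) identify $C$ with the genuine four-point cumulant $c_4$ of the observables $\chi'_V$; (ii) apply the main partition-function theorem and expand $R = \mathrm{Id} + R'$ at each of the four vertices of the $2\times 2$ lattice, organizing the resulting sum over vertex subsets $S$ into a cluster expansion; (iii) show that all cluster contributions other than $Z^{R'}(2,2)$ are $O(\hbar^5)$. Step (iii) will be the main obstacle.

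\emph{Step 1: $C$ is the four-point cumulant.} Since $\ip{\chi'_i}=0$ for every $i$, the moment--cumulant formula applied to $\prod_i \chi'_i$ collapses to
\[
 \ip{\chi'_{a_1}\chi'_{a_2}\chi'_{b_1}\chi'_{b_2}} \;=\; c_4 \;+\; \sum_{\text{pair partitions}} c_2(\chi'_{i_1},\chi'_{i_2})\,c_2(\chi'_{i_3},\chi'_{i_4}).
\]
Of the three pair partitions, the one that pairs the two $a$-cycles (and the two $b$-cycles) gives zero, since by the main theorem $\ip{\chi_{a_1}\chi_{a_2}} = Z(2,0;V,z) = (\dim V)^2$, so $c_2(\chi'_{a_1},\chi'_{a_2})=0$, and likewise for the $b$-cycles. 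The remaining two pair partitions each contribute $c_2(\chi'_a,\chi'_b)^2 = \ip{\chi'\chi'}^2$. Hence $C = \ip{\chi'^4} - 2\ip{\chi'\chi'}^2 = c_4$ exactly.

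\emph{Step 2: Cluster expansion.} The main theorem gives $\ip{\chi_{a_1}\chi_{a_2}\chi_{b_1}\chi_{b_2}} = Z(2,2;V,z)$. Writing $R = \mathrm{Id}+R'$ and expanding vertex-by-vertex yields
\[
 Z(2,2;V,z) \;=\; \sum_{S \subseteq \{v_{11},v_{12},v_{21},v_{22}\}} Z_S,
\]
where $Z_S$ inserts $R'$ at each $v\in S$ and $\mathrm{Id}$ at each $v\notin S$. An identity insertion at $v$ forces the labels on the two horizontal (resp.\ vertical) edges at $v$ to coincide, so the two Wilson loops meeting at $v$ are decoupled by that vertex. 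Introduce the bipartite graph $G_S$ on the four loops with one edge for each $v\in S$; then $Z_S$ factors as a product over the connected components of $G_S$, with any isolated loop contributing $\dim V$. Since the decomposition $Z(2,2)=\sum_S Z_S$ is organized by the partition $\pi(S)$ of loops into connected components, the standard moment--cumulant inversion (equivalently, applying the same expansion to every sub-expectation entering $C$) gives
\[
 C = c_4 \;=\; \sum_{\substack{S\ :\ G_S\ \text{is connected}\\ \text{and touches all four loops}}} Z_S.
\]
The connected spanning subgraphs of $K_{2,2}$ are exactly the complete graph itself (corresponding to $S=\{v_{11},v_{12},v_{21},v_{22}\}$, contributing $Z^{R'}(2,2)$) and its four spanning trees of size $|S|=3$, so
\[
 C \;=\; Z^{R'}(2,2) \;+\; \sum_{\text{spanning trees }S} Z_S.
\]

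\emph{Step 3: Spanning-tree terms are $O(\hbar^5)$.} In each spanning tree of $K_{2,2}$, exactly two of the four Wilson loops have degree one; call them the ``leaves''. A leaf loop $L$ has exactly one $R'$-vertex and one $\mathrm{Id}$-vertex on it, and the identity insertion forces the $L$-label to be constant around $L$, so the $R'$ at the leaf enters $Z_S$ only through the partial trace $\mathrm{tr}_{V_L}(R')$. From $R' = \tfrac{\hbar}{z}\,c + O(\hbar^2)$ with $c = \sum_i c_i \otimes c^i$ the Casimir of $\g$,
\[
 \mathrm{tr}_{V_L}(R') \;=\; \frac{\hbar}{z}\sum_i c_i\,\op{Tr}_V(c^i) \;+\; O(\hbar^2) \;=\; O(\hbar^2),
\]
because $\op{Tr}_V$ vanishes on the derived subalgebra $\g=[\g,\g]$ of a simple Lie algebra. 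Thus each of the two leaves contributes a factor of at least $\hbar^2$, and the third (non-leaf) $R'$ contributes at least $\hbar$, giving $Z_S = O(\hbar^5)$ for every spanning tree $S$. Combining this with Step 2 yields $C = Z^{R'}(2,2) + O(\hbar^5)$, as the lemma asserts.

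\emph{Main obstacle.} Steps 1 and 2 are formal combinatorics (moment--cumulant inversion plus cluster factorization). The substantive content is Step 3: the vanishing of $\mathrm{tr}_{V_L}(R_1)$ at order $\hbar$, which depends essentially on the simplicity of $\g$ and on the fact, established earlier in the paper, that the one-loop $R$-matrix is proportional to $c/z$. Without both inputs the spanning-tree subsets $S$ would produce genuine $O(\hbar^4)$ contributions to $C$ and the lemma would fail at the order relevant to Lemma~\ref{lemma_Rmatrix_expectation_value}.
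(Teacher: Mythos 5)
Your argument is correct and is essentially the paper's own: the paper's proof is only the sketch ``translate the cumulant into partition functions via the main theorem and do a small combinatorial calculation,'' and your Steps 1--3 are the natural execution of exactly that calculation (expand $R=\mathrm{Id}+R'$ vertex by vertex, factor through identity insertions, invert moments to cumulants), with the one substantive ingredient the paper leaves implicit made explicit in your Step 3 --- that tracelessness of $\mathfrak{g}$ in $V$ forces the leaf partial traces of $R'$ to be $O(\hbar^2)$, so the four spanning-tree terms are $O(\hbar^5)$ rather than $O(\hbar^3)$. Note also that your reading of the lattice as $2\times 2$ (four vertices) is the correct one; the statement's ``$4\times 4$'' is a slip, as the paper's own leading term $\hbar^4 z^{-4}$ requires exactly four $R'$ insertions.
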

\begin{proof}
Our main theorem gives an equality between expectation values of Wilson operators and the partition function of the lattice model with interaction $R_{V \otimes V}$. The cumulant is a linear combination of products of various Wilson operators, and so can be translated into a sum of products of partition functions of the integrable model.  A small combinatorial calculation gives the result.
\end{proof}

Using the standard integrable systems normalization of the spectral parameter, the partition function of this $4 \times 4$ lattice model with interaction $R'$ is easily computed to be
$$
\hbar^4 z^{-4} \left(\sum_{i,j} \op{Tr}_V c_i c_j \op{Tr}_V c^i c^j \right)^2 + O(\hbar^5).
$$
The point is that
$$
R_{V \otimes V} = 1 + \frac{\hbar}{z} c_{V \otimes V} + O(\hbar^2)
$$
where $c_{V \otimes V}$ is the quadratic Casimir acting on $V \otimes V$. 

Thus, to match the normalization with that seen in the theory of integrable systems, we will need to set the coupling constant $\alpha$ to be $\alpha = \tfrac{1}{2 \pi i}$. 

Let us now turn to the proof of the lemma about calculating the cumulant. 
\begin{proof}
Let 
$$
\E = \Omega^{0,\ast}(\mbb{P}^1, \Oo(-1)) \what{\otimes} \Omega^\ast(E)
$$
and let $\br{\E}$ be the distributional completion of $\E$. The space of fields of our theory is $\E \otimes \g$. 

We will calculate the expectation value using the propagator. 
The propagator is of the form
$$
P = c \otimes P_0 \in \g \otimes \g \otimes \br{\E} \what{\otimes} \br{\E} 
$$
where $c \in \g \otimes \g$ is the quadratic Casimir for $\g$. 

The propagator is defined with respect to a gauge fixing condition, which we choose to be $\dbar^\ast_z + \d^\ast_w$.  The propagator $P_0$ is uniquely characterized by the following properties.
\begin{enumerate}
\item $P_0$ is symmetric and of cohomological degree $0$. Thus, $P_0$ is a $2$-form on $\mbb{P}^1 \times E \times \mbb{P}^1 \times E$. 
\item 
$$(\dbar^\ast_{z_1}+ \d^\ast_{w_1}) P_0 = 0,$$
where $z_1,z_2$ and $w_1,w_2$ are coordinates in $\mbb{P}^1_{z_1} \times E_{w_1} \times \mbb{P}^1_{z_1} \times E_{w_2}$. 
\item 
$$
\alpha ( \d z_1 - \d z_2)  ( \dbar_{z_1} + \dbar_{z_2} + \d_{w_1} + \d_{w_2} ) P_0 = - \delta_{\op{Diag}}  
$$
is minus the delta function on the diagonal, viewed as a de Rham current on $\mbb{P}^1 \times E \times \mbb{P}^1 \times E$.  Here $\alpha$ is the coupling constant of our theory, so the action functional is $\alpha \d z CS(A)$. (This sign is a little tricky to get right, but it's not important: change of orientation of the cycles on which we're putting the Wilson loop changes the sign. Our choice of orientation on $E$ is such that the $S^1_a$ and $S^1_b$ intersect positively). 
\end{enumerate}
It is easy to see that, in the expectation value we are computing, only the propagator (and not the interaction terms) appears. Further, the expectation value only depends on the classical Wilson operator.

 We will compute this expectation value by using the path-ordered exponential formulation of the Wilson operator: 
$$
\chi_V(z,S^1_a) (A) = \op{dim} V + \sum_{n = 1}^{\op \infty} \tfrac{1}{n} \op{Tr}_V  \int_{\theta_1 <  \dots < \theta_n \in S^1_a} A_{\theta_1} \wedge \cdots \wedge A_{\theta_n}.
$$
Saying $\theta_1 < \dots < \theta_n$ indicates that these points are cyclically ordered.  The factor of $\tfrac{1}{n}$ appears because we are using cyclically ordered configurations in a circle instead of totally ordered configurations in a line.   $A$ is the partial-connection field of our theory, viewed as a one-form. $A_{\theta_i}$ refers to the one-form on the configuration space of the $\theta_i \in S^1_a$ obtained by pulling back $A$ via the map from this configuration space to $z \times E$, sending such a configuration to $\theta_i$. 

We are interested in the expectation value not of the Wilson operator $\chi_V$ but of  
$$\chi'_V = \chi_V -\op{dim} V.$$
This has the effect of removing the constant term ($n= 0$) from the path-ordered exponentation.  

Further, since the trace of any element of $\g$ in the representation $V$ is zero, we find
$$
\chi'_V(z,S^1_a) (A) = \sum_{n = 2}^{\op \infty} \frac{1}{n} \op{Tr}_V  \int_{\theta_1 <  \dots < \theta_n \in S^1_a} A_{\theta_1} \cdots A_{\theta_n}.
$$
Only the $n = 2$ term will contribute to the $4$-loop expectation value. 

Let 
$$
\Phi(z) = \int_{\substack{\theta \in z \times S^1_a \\ \gamma \in 0 \times S^1_b }}  P_0 (\theta,\gamma) .
$$
Note that 
$$
\int_{\substack{(\theta_1,\theta_2) \in z \times S^1_a \\ (\gamma_1,\gamma_2) \in 0 \times S^1_b }}  P_0 (\theta_1, \gamma_1) P_0(\theta_2,\gamma_2) \\
= \Phi(z)^2. 
$$

Standard Feynman diagrammatics tell us that the $4$-loop expectation value we are interested in is
\begin{multline*}
C\left( \chi'_V(z, S^1_{a_1}), \chi'_V(z, S^1_{a_2}) ,  \chi'_V(0,S^1_{b_1}), \chi'_V(0,S^1_{b_2}) \right) \\
= \hbar^4 \Phi(z)^4 \left(\sum_{i,j} \op{Tr}_V c_i c_j \op{Tr}_V c^i c^j \right)^2 + O(\hbar^5).
\end{multline*}
Indeed, we are computing the sum over diagrams which consist of $4$ propagators, where each propagator links one modified Wilson line on an $a$-cycle to one on a $b$-cycle. Diagrams which contain an interaction must have the property that, modulo $\hbar^5$, some Wilson line has at most one propagator attached to it, and so gives zero. 

Further, because we are using the cumulant $C$ instead of the expectation  value $\ip{-,-}$ we only count connected diagrams.  The only non-zero contributions to order $\hbar^4$ occur when each modified Wilson line is connected to precisely two others, each by a single propagator. Because we are using a total of 4 propagators we find $\Phi(z)^4$. There are $4$ diagrams that contribute

Thus, to calculate our expectation value, we need to calculate $\Phi(z)$. 

We can compute $\d z \dbar\Phi(z)$ using properties of $P_0$. Let $D$ be a disc around $0$ in $\mbb{P}^1$.  We have
\begin{align*}
\int_{z \in D} \d z \dbar \Phi(z) &=  \int_{(z_1,w_1,z_2,w_2) \in (D \times S^1_a \times 0 \times S^1_b)}  \d z_1  \dbar_{z_1} P_0 \\
&=  \int_{(z_1,w_1,z_2,w_2) \in (D \times S^1_a \times 0\times S^1_b)}  ( \d z_1 - \d z_2)  ( \dbar_{z_1} + \dbar_{z_2} + \d_{w_1} + \d_{w_2} ) P_0 \\
&= - \int_{(z_1,w_1,z_2,w_2) \in (D \times S^1_a \times 0 \times S^1_b)}\alpha^{-1} \delta_{Diag}
\end{align*}
where we have used one of the equations defining $P_0$. 

Since $D \times S^1_a$ intersects $0 \times S^1_b$ transversely in a single point, the integral of the delta-current for the diagonal over this region is one.  Therefore,
$$
\d z \dbar \Phi(z) =  -\alpha^{-1} \delta_{0}
$$
is  the delta-current for $0 \in \C$.  Since
$$
\d z \dbar z^{-1} = - 2 \pi i \delta_{0}
$$
and since $\Phi(z)$ is holomorphic away from $0$, and tends to zero as $z \to \infty$, we must have
we have
$$
\Phi(z) = \alpha^{-1} \frac{1}{2 \pi i} z^{-1}.
$$
This proves the lemma.
\end{proof}

Next, we will show derive from this lemma another proof of proposition \ref{proposition_bialgebra}.

\begin{proposition}
The Hopf algebra structure Koszul dual to the $E_2$ algebra $\Obs_{z_0}$ is described, to leading order, by the Lie bialgebra structure on the Yangian. 
\label{proposition_bialgebra_alternative}
\end{proposition}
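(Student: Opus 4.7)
The plan is to bypass the direct Feynman-diagram computation of the Poisson bracket used in the first proof of Proposition \ref{proposition_bialgebra}, and instead read off the classical $r$-matrix of the quantization from the four-loop calculation in Lemma \ref{lemma_Rmatrix_expectation_value}. Write the universal $R$-matrix of the Hopf algebra $Y'(\g)$ Koszul dual to $\Obs_{z_0}$ as
\[
R(z) = 1 + \hbar\, r(z) + O(\hbar^2),
\]
with $r(z) \in \g \otimes \g[[z^{-1}]]$ the semiclassical $r$-matrix. The Lie cobracket on $\g[[z]]$ controlling the bialgebra structure of $Y'(\g)$ is determined from $r(z)$ by the standard formula
\[
\delta(X)(z_1,z_2) = \bigl[r(z_1 - z_2),\, X(z_1) \otimes 1 + 1 \otimes X(z_2)\bigr],
\]
so it is enough to identify $r(z)$ as $(2\pi i\,\alpha)^{-1}\, c/z$, where $c \in \g \otimes \g$ is the quadratic Casimir.

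First I would apply the already-proved identity $\ip{\chi_V(z, S^1_a), \chi_W(0,S^1_b)}_{\mbb{P}^1 \times E} = \op{Tr}_{V \otimes W} R_{V,W}(z)$ together with the compatibility of Wilson operators with tensor products to express the four-fold cumulant $C\bigl(\chi'_V(z,S^1_{a_1}), \chi'_V(z,S^1_{a_2}), \chi'_V(0,S^1_{b_1}), \chi'_V(0,S^1_{b_2})\bigr)$ as a polynomial in the endomorphism $r_V(z) \in \op{End}(V \otimes V)$ obtained from $r(z)$. Because $\chi'_V$ is normalized to have vanishing expectation value, every connected Feynman graph contributing at order $\hbar^4$ consists of four propagators, each linking an $a$-cycle to a $b$-cycle with each of the four Wilson loops incident to exactly two propagators; the sum over such graphs equals a combinatorial constant times $\bigl(\op{Tr}_{V \otimes V}\, r_V(z)^2\bigr)^2$. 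Comparing with the explicit expression in Lemma \ref{lemma_Rmatrix_expectation_value}, and using $\op{GL}(V)$-equivariance together with the freedom to vary $V$ among faithful Yangian representations, forces
\[
r_V(z) = \frac{1}{2\pi i\, \alpha\, z}\, c_{V \otimes V} + \text{(scalar multiple of identity)},
\]
the scalar ambiguity being killed by the Lie bracket in the formula for $\delta$. Hence $r(z) = (2\pi i\, \alpha)^{-1}\, c/z$.

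Finally, this is, up to the overall rescaling $z \mapsto 2\pi i\, \alpha\, z$ of the spectral parameter, the classical $r$-matrix of the Yangian, and the corresponding cobracket $X \mapsto [X,c]/(z_1 - z_2)$ is exactly the Lie bialgebra structure appearing in the definition of $Y(\g)$. This identifies the semiclassical limit of $Y'(\g)$ with that of the Yangian, as required. The main obstacle is the combinatorial step in the previous paragraph: ruling out contributions from the higher-order terms of $R_{V \otimes V}(z) - 1$ to the four-loop cumulant and pinning down the precise combinatorial factor relating the cumulant to $\bigl(\op{Tr}_{V \otimes V}\, r_V(z)^2\bigr)^2$. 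Once the vanishing of $\op{Tr}_V$ on $\g$ is used to eliminate graphs with a Wilson loop receiving fewer than two propagators, the bookkeeping is clean and the rest of the argument is formal.
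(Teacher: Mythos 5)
Your overall strategy is the same as the paper's second proof: use the four-loop cumulant of Lemma \ref{lemma_Rmatrix_expectation_value}, together with the theorem expressing Wilson-loop expectation values as traces of the $R$-matrix, to normalize the semiclassical $r$-matrix and hence the cobracket. The bookkeeping you describe (only four-propagator connected graphs survive because $\op{Tr}_V$ vanishes on $\g$, and the answer is $\bigl(\op{Tr}_{V\otimes V} r_V(z)^2\bigr)^2$ up to the stated prefactor) is exactly the content of the paper's comparison of the cumulant with the lattice model with interaction $R-1$, so that part is fine.

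The genuine gap is at the very first step, where you write $R(z) = 1 + \hbar\, r(z) + O(\hbar^2)$ with $r(z) \in \g \otimes \g[[z^{-1}]]$ and then use the formula $\delta(X)(z_1,z_2) = [r(z_1-z_2), X(z_1)\otimes 1 + 1 \otimes X(z_2)]$. A priori the Koszul-dual Hopf algebra is only known to be \emph{some} homogeneous quantization of $U(\g[[z]])$, and its first-order term lives in $\g[[z_1]] \otimes \g[[z_2]]((\lambda))$: by $\C^\times$-weight and $G$-invariance it has the general form
$$
R_1 = \sum_{i,j \ge 0} \beta_{i,j}\, \lambda^{-i-j-1} z_1^i z_2^j\, c ,
$$
and the terms with $i,j > 0$ act by zero (mod $\hbar$) in the evaluation-type modules carried by your Wilson lines, so no amount of varying $V$ among finite-rank representations and comparing traces in the four-loop cumulant can detect or exclude them. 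Your ansatz that $r$ depends only on the spectral parameter, i.e.\ that the cobracket is the rational one, is therefore essentially assuming the conclusion up to normalization; the four-loop calculation by itself only fixes the single coefficient $\beta_{0,0}$. The paper closes this gap with an extra input you omit: translation invariance in the $z$-direction, which gives the equations $\partial_{z_1} R_1 + \partial_{z_2} R_1 = 0$ and $\partial_{z_2} R_1 + \partial_{\lambda} R_1 = 0$ and forces $R_1 = c/(z_1 - z_2 + \lambda)$ once $\beta_{0,0}$ is known; only then does $R$-matrix equation $1$ (your ``standard formula'') yield $\delta(\alpha) = -[c/(z_1-z_2), \alpha \otimes 1 + 1 \otimes \alpha]$. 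Alternatively you could invoke the algebraic uniqueness statement used in the paper's first proof of Proposition \ref{proposition_bialgebra} — for simple $\g$ there is, up to scale, a unique Lie bialgebra structure on $\g[[z]]$ of the relevant weight — and let the cumulant fix the scale; but as written your argument contains neither ingredient, so the cobracket is not determined. A smaller point: what pins $r$ to be a multiple of the Casimir is adjoint ($\g$-) invariance plus simplicity of $\g$ (so $(\g\otimes\g)^{\g}$ is one-dimensional) together with the weight constraint, not ``$\op{GL}(V)$-equivariance''; knowing the scalar $\op{Tr}_{V\otimes V}(r_V(z)^2)$ alone does not determine an operator without these structural constraints.
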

\begin{proof}
Without assuming the statement of the proposition we are proving,  the general formalism we have developed shows that the Koszul dual Hopf algebra to $\Obs_{z_0}$ is a topological Hopf algebra quantizing $U(\g[[z]])$.  Let us call this Hopf algebra $\til{Y}(\g)$. This Hopf algebra has a filtration where $\hbar$ has filtered degree $2$	and $\g[[z]]$ has filtered degree $-1$.    Further, the general formalism implies that this Hopf algebra (whatever it is) has an $R$-matrix with spectral parameter
$$
R(\lambda) \in \til{Y}(\g) \otimes_{\C[[\hbar]]} \til{Y}(\g) ((\lambda))
$$
satisfying $R$-matrix equations $1$, $2$ and $3$ in theorem \ref{theorem_existence_R_matrix}. This $R$-matrix is in the $0$'th filtered piece, and is $\C^\times$-invariant, where $z$, $\hbar$ and $\lambda$ all have filtered weight $1$. Further, modulo $\hbar$, the $R$-matrix is easily seen to be $1$.

Thus, there is an element 
$$
R_1 \in U(\g[[z]]) \otimes U(\g[[z]]) ((\lambda))
$$
with the property that
$$
R - 1 = \hbar R_1 \op{mod} \hbar^2.
$$
The fact that $R$ is in the $0$'th filtered piece implies that 
$$
R_1 \in \g[[z]] \otimes \g[[z]] ((\lambda)).
$$
Let us assume that $\g$ is simple.  Then the $\C^\times$-equivariance property, together with $G$-invariance, implies that $R_1$ is of the form 
$$
R_1(z_1,z_2,\lambda) = \sum_{i,j \ge 0} \beta_{i,j}\lambda^{-i -j-1} z_1^i z_2^j c \in \g[[z_1]] \otimes \g[[z_2]]
$$
where $c \in \g \otimes \g$ is the quadratic Casimir. 

The general formalism we have developed tells us that the one-loop expectation value of Wilson operators on $\mbb{P}^1 \times E$ must be the trace of the $R$-matrix. If we normalize our action functional to be $\frac{1}{2 \pi i} \int \d z CS(A)$, the expectation value calculated above normalizes the coefficient of $\lambda^{-1}$ in $R_1$ to be $\lambda^{-1} c$. 

It follows from translation invariance of our field theory that 
\begin{align*}
\dpa{z_1} R + \dpa{z_2} R_1 &= 0 \\
\dpa{z_2} R + \dpa{\lambda} R_1 &= 0.
\end{align*}
(We can also see this directly by considering what translation invariance applies about the $4$-loop, $4$-Wilson operator expectation value computed above). 
 
The only solution to these equations, with initial condition the coefficient of $\lambda^{-1}$ in $R_1$, is
$$
R = c / (z_1 - z_2 + \lambda) = c \sum \lambda^{-i-1}(z_1 - z_2)^i.
$$

Let us now consider $R$-matrix equation $1$, which will relate the cocommutator of the Hopf algebra $\til{Y}(\g)$ to the $R$-matrix.  We have
$$
(T_\lambda \otimes T_0) \tr^{op} (\alpha)  = R(\lambda)((T_\lambda \otimes T_0) \tr(\alpha)) R(\lambda)^{-1} \in Y(\g) \otimes Y(\g) ((\lambda)),
$$
for all $\alpha \in \til{Y}(\g)$, where $\tr$ is the coproduct and $\tr^{op}$ is the opposite coproduct.  Let's work to leading order in in $\hbar$, and apply this to an element $\alpha \in \g[[z]]$, lifted arbitrarily modulo $\hbar^2$.  We find
$$
(T_\lambda \otimes T_0) ( \tr^{op}(\alpha) - \tr(\alpha) )  = \hbar [R_1,  (T_\lambda \otimes T_0) (\alpha \otimes 1 + 1 \otimes \alpha) ] \op{mod} \hbar^2. 
$$
Let $\delta(\alpha)$ denote the Lie bialgebra structure coming from the leading order cocommutator.  We have
$$
\tr^{op}(\alpha) - \tr(\alpha) =  - \hbar \delta(\alpha).
$$
Plugging this in, together with our calculation of $R_1$, gives us
$$
 \delta(\alpha)(z_1 + \lambda, z_2)   =  - \left[\frac{c}{z_1 - z_2 + \lambda} ,  (\alpha(z_1 + \lambda) \otimes 1 + 1 \otimes \alpha(z_2)) \right] .
$$
Here, we are viewing $\alpha$ as a map from the $z$-plane to $\g$.  

This implies
$$
\delta(\alpha) = -\left[ \frac{c}{z_1 - z_2}, \alpha \otimes 1 + 1 \otimes \alpha\right] 
$$
which is the standard expression for the Lie coalgebra structure on the Yangian; see \cite{Dri87}.
\end{proof}

\section{Appendix: Existence of quantum theories}
\label{appendix_existence}
In this appendix, we will prove that the twisted and deformed $N=1$ supersymmetric gauge theories can be quantized on any Calabi-Yau surface.   The precise statement will be that there exists a unique quantization compatible with certain natural symmetries.  Thus, we need to start by explaining what symmetries we will consider. 

First, let us consider the twisted $N=1$ supersymmetric gauge theory, i.e.\ the holomorphic BF theory on a complex surface $X$. We will consider the most general version we need, defined on a complex surface $X$ with a reduced divisor $D \subset X$ and a meromorphic volume form $\omega \in K_X(2 D)$ with quadratic poles on $D$ and no other poles or zeroes.  The case when $D$ is empty is allowed; then $\omega$ is just a holomorphic volume form. 

Let $P$ be a holomorphic $G$-bundle on $X$ trivialized along $D$. As we have seen, the Lie algebra describing the theory is 
$$\L = \Omega^{0,\ast}(X \otimes \g_P(-D))[\eps],$$ where $\eps$ is a parameter of degree $1$. 

Let $\op{At}_{P}$ be the Atiyah algebra of $P$. This is the sheaf on $X$ of infinitesimal symmetries of the pair $(X,P)$: equivalently, it is the sheaf on $X$ of $G$-equivariant holomorphic vector fields on $P$.  There is a short exact sequence
$$
0 \to \g_P \to \op{At}_P \to \op{Vect}_X \to 0
$$
of sheaves of Lie algebras, where $\op{Vect}_X$ is the sheaf of holomorphic vector fields on $X$. 

Let $\op{At}_{P,D} \subset \op{At}_P$ be the sub-sheaf consisting of those symmetries of $(X,P)$ which preserve the divisor $D$ and the trivialization of $P$ on $D$.  Thus, $\op{At}_{P,D}$ is the sheaf on $X$ of $G$-equivariant holomorphic vector fields on the total space $P$, which preserve the divisor $p^{-1} D = G \times D$, and which on this divisor come from vector fields on $D$.   There is a short exact sequence
$$
0 \to \g_P(-D) \to \op{At}_{P,D} \to \op{Vect}_{X,D} \to 0
$$
where $\op{Vect}_{X,D}$ is the sheaf of holomorphic vector fields on $X$ which preserve $D$. (Sometimes this is called the sheaf of logarithmic vector fields). 

Finally, let
$$
\op{At}^0_{P,D} \subset \op{At}_{P,D}
$$
be the subsheaf consisting of those symmetries which preserve the meromorphic volume form on $X$.  Thus, if $\op{Vect}^0_{X,D}  \subset \op{Vect}_{X,D}$ is the sheaf of divergence-free holomorphic vector fields on $X$ preserving $D$, there is a short exact sequence
$$
0 \to \g_P(-D) \to \op{At}^0_P  \to \op{Vect}^0_{X,D} \to 0. 
$$
We let 
$$
\mf{h} = H^0 (X, \op{At}^0_P)[\eps].
$$
This Lie algebra acts on $\L = \Omega^{0,\ast}(X \otimes \g_P(-D))[\eps]$ in an evident way, by dg Lie algebra derivations, preserving the invariant pairing on $\L$. 

The odd Lie algebra spanned by $\dpa{\eps}$ also acts on $\L$ in the obvious way.  Again, this action is by Lie algebra derivations, commutes with the differential, and is compatible with the invariant pairing, and so is an action by symmetries of the theory.    Putting these actions together, we have an action of the semi-direct product Lie algebra $\C \cdot \dpa{\eps} \oplus \mf{h}$ on $\L$. (Of course, $\dpa{\eps}$ is in cohomological degree $-1$. 

There is also a $\C^\times$ action on $\L$ given by rescaling the parameter $\eps$.  This action is by automorphisms of the dg Lie algebra $\L$, but it scales the invariant pairing.  As explained in \cite{Cos11a}, it makes sense to ask for $\C^\times$-equivariant quantizations, as long as we give the parameter $\hbar$ weight $-1$ under the $\C^\times$-action.  This way, the quantity $S / \hbar$ is $\C^\times$-invariant (where $S$ is the action functional).   

This $\C^\times$ also acts on the semi-direct product Lie algebra $\C \cdot \dpa{\eps} \oplus \mf{h}$ in a way compatible with the action of everything on $\L$.

\begin{theorem}
\label{theorem_existence_symmetries}
The twisted $N=1$ theory admits a unique quantization, compatible with the action of $\C \cdot \dpa{\eps} \oplus \mf{h}$ and $\C^\times$, on any Calabi-Yau surface $X$; where we perturb around any holomorphic $G$-bundle on $X$. 
\end{theorem}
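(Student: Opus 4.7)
The plan is to apply the obstruction-deformation machinery of \cite{Cos11}.  A perturbative quantization in our setting is a collection of scale-$L$ effective interactions $I[L]$ satisfying the renormalization group equation, the scale-$L$ quantum master equation, and invariance under $\mf{h}\oplus\C\cdot\dpa{\eps}$ and $\C^\times$ (where $\hbar$ is given $\C^\times$-weight $-1$ so that $S/\hbar$ is invariant).  Such quantizations are constructed order by order in $\hbar$: the obstruction to extending a quantization defined modulo $\hbar^{n+1}$ to one modulo $\hbar^{n+2}$ lives in $H^1$ of the complex $C^\ast_{\op{red,loc}}(\L)^{\mf{h},\,\C^\times,\,\dpa{\eps}}$ of symmetry-invariant reduced local functionals on the dg Lie algebra $\L=\Omega^{0,\ast}(X,\g_P(-D))[\eps]$ describing the theory, while the space of inequivalent extensions is $H^0$ of the same complex.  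Thus existence and uniqueness reduce to the vanishing of these two cohomology groups.

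First I would replace the complex of local functionals by its jet-space model: $C^\ast_{\op{red,loc}}(\L)$ is quasi-isomorphic to the $\dbar$-hypercohomology on $X$, with coefficients in (the reduced Chevalley--Eilenberg complex of) the jet bundle of $\L$, twisted by the Berezinian.  Away from $D$, the fiber of this jet Lie algebra is $\g[[z_1,z_2]][\eps]$ with $\eps$ in cohomological degree $1$; along $D$, the twist by $\Oo(-D)$ and the trivialization condition pin down the fiber accordingly.  The problem then becomes the computation of a Gelfand--Fuks-type cohomology with values in $\dbar$-cohomology of $X$, and our task is to cut this cohomology down to nothing using the symmetry constraints.

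Here the symmetries are very restrictive.  The $\C^\times$-scaling of $\eps$, together with $\hbar$ having weight $-1$, forces the $n$-loop piece of any class to contain exactly $n$ net factors of $\eps$; the derivation $\dpa{\eps}$ then kills all classes where these $\eps$'s appear merely as a prefactor.  The Lie algebra $\mf{h}$ contains, at each point of $X$, the divergence-free linear vector fields preserving the meromorphic volume form, giving an action of $\mf{sl}_2$ on the jet variables $z_1,z_2$; imposing $\mf{sl}_2\times G$-invariance on the jet-level Chevalley--Eilenberg cocycles reduces the candidate space of anomalies to a small, explicit list of local functionals indexed by invariant polynomials on $\g$ and by possible holomorphic differential-form types on $X$ built from $\omega$ and $K_X$.

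The main obstacle will be eliminating the surviving one-loop anomaly class, which after all these reductions is a single candidate local functional built from the cubic Casimir of $\g$ and the holomorphic volume form.  This is the twisted incarnation of the familiar $N=1$ $R$-symmetry anomaly; as noted in the remark following the theorem, the key physical input is that this anomaly is $Q$-exact, and one must translate this into the statement that the candidate cocycle in the jet model is in fact a coboundary once $\omega$ is used to trivialize $K_X(2D)$.  Concretely, one performs the small Feynman-diagram calculation of the one-loop BV obstruction and exhibits an explicit symmetric counter-term that cancels it, using $\omega$ to contract the relevant jet indices.  Once the one-loop obstruction is killed, higher-loop obstructions and all of $H^0$ vanish automatically: the symmetry-restricted jet cohomology is concentrated in a single loop degree, as is standard for holomorphic theories of cotangent type, so the inductive step from order $n$ to $n+1$ is unobstructed for all $n\ge 2$, and the uniqueness statement follows from the parallel $H^0$-vanishing after quotienting by the choice of gauge-fixing operator.
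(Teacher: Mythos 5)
Your setup matches the paper's: quantize order by order in $\hbar$, identify obstructions and deformations with $H^1$ and $H^0$ of the symmetry-invariant complex of local functionals, and pass to the jet model, where the relevant fiber is (cochains of) $\g[[z_1,z_2,\eps]]$, twisted near $D$. But the heart of your argument is where the gap lies. The paper's proof establishes a \emph{complete} vanishing: since $\hbar$ has $\C^\times$-weight $-1$, all quantum corrections live in positive weight, and the paper shows (Lemma \ref{lemma_vanishing}) that the positive-weight part of $C^\ast\bigl(\C\cdot\partial_\eps, C^\ast(\mf{a}[\eps], C^\ast_{red}(\mf{b}[\eps]))\bigr)$ is acyclic, because after a short exact sequence reduction the weight-$k$ piece is identified with Chevalley--Eilenberg cochains of the \emph{contractible} dg Lie algebra $(\mf{b}[\eps],\partial_\eps)$. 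So there is no surviving anomaly candidate at any loop order, no Feynman diagram is computed, and both existence and uniqueness follow at once. Your proposal instead claims a surviving one-loop class (built from a cubic Casimir, identified with the $R$-symmetry anomaly) and defers its cancellation to an unspecified ``small Feynman-diagram calculation'' producing an ``explicit symmetric counter-term.'' That is precisely the step that needs a proof, and you have not supplied it; asserting it via the physical fact that the anomaly is $Q$-exact gets the logic backwards, since in the paper the $Q$-exactness of the $R$-symmetry anomaly is deduced \emph{as a consequence} of this theorem (see the remark following Theorem \ref{theorem_twisted_existence}), not used as an input.

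Two further points. First, your mechanism for using the symmetries is not the operative one: you invoke $\mf{sl}_2$-invariance of linear vector fields inside $\mf{h}$ and a counting of ``net factors of $\eps$,'' whereas the argument that actually closes the proof uses only the odd symmetry $\partial_\eps$ together with the fact that $\mf{h}=H^0(X,\op{At}^0_{P,D})[\eps]$ contains the $\eps$-multiples, fed into the exact sequence $0\to C^\ast(\mf{a}[\eps],C^\ast_{red}(\mf{b}[\eps]))\to C^\ast_{red}(\mf{a}[\eps]\oplus\mf{b}[\eps])\to C^\ast_{red}(\mf{a}[\eps])\to 0$. Second, your claim that ``higher-loop obstructions and all of $H^0$ vanish automatically\dots as is standard for holomorphic theories of cotangent type'' is again exactly the content that must be proved (and for the deformed theory with the $\d z\,CS$ term present one cannot appeal to cotangent-type genericity); the uniform weight argument above is what delivers it, including the uniqueness statement, without any case analysis of candidate cocycles.
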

This result will be strong enough to prove the existence of the twist of the deformation of the $N=1$ theory also. 

\begin{proof}

The obstruction-deformation complex for quantizing our theory has a slick description in terms of $D_X$-modules.  In what follows, we will always work with smooth (as opposed to holomorphic) differential operators on $X$, and we will denote this algebra by $D_X$.  Thus, a $D_X$-module is a (possibly infinite-dimensional) smooth vector bundle on $X$ with a flat connection.   Tensor products of $D_X$-modules are always tensor products over the sheaf $\cinfty_X$ of smooth functions on $X$; this corresponds to ordinary tensor product of vector bundles. 

The jets $J(\L)$ of $\L$ are a $D_X$ dg Lie algebra.First, we form the complex $C^\ast_{red}(J(\L))$ of reduced Chevalley-Eilenberg cochains of $J(\L)$, taken in the category of $D_X$-modules.  Thus $C^\ast_{red}(J(\L))$ is again a  $D_X$-module.   Then, we form the sheaf $\omega_X \otimes_{D_X}^{\mbb{L}} C^\ast_{red}(J(\L))$ (where $\omega_X = \Omega^{top}_X$ is the right $D_X$-modules of top forms on $X$).  Finally, we take invariants with respect to the symmetries we are interested in.    

We thus need to compute the $\C^\times$-invariants of 
$$
\hbar C^\ast \left( \C \cdot \partial_\eps \oplus \mf{h} ,  \omega_X \otimes_{D_X}^{\mbb{L}}  C^\ast_{red} (J(\L))  \right) [[\hbar]] . 
$$
We can rewrite this as the $\C^\times$-invariants of 
$$
\omega_X \otimes_{D_X}^{\mbb{L}} C^\ast \left( \C \cdot \partial_\eps \oplus \mf{h} ,   C^\ast_{red} (J(\L))  \right)  \otimes \hbar \C[[\hbar]]
$$
Note that the $D_X$-modules $J(\L)$ is quasi-isomorphic to $\op{J}(\g_P(-D))$, the $D$-module of jets of holomorphic sections of $g_P(-D)$.  

Away from the divisor $D$,the $D_X$-module $ C^\ast \left( \C \cdot \partial_\eps \oplus \mf{h},   C^\ast_{red} (J(\L))  \right) $ has stalks quasi-isomorphic to
$$
C^\ast \left(\C \cdot \partial_\eps \oplus \mf{h}, C^\ast_{red} ( \g[[z_1,z_2, \eps]] )  \right). 
$$
Near the divisor, if $f$ is a function generating the ideal $\Oo(-D)$, then the stalk is quasi-isomorphic to 
$$
C^\ast \left(\C \cdot \partial_\eps \oplus \mf{h}, C^\ast_{red} ( f \g[[z_1,z_2, \eps]] )  \right). 
$$

Since $\hbar \C[[\hbar]]$ is concentrated in negative $\C^\times$-weights, we need only show that the positive $\C^\times$ weight spaces of $C^\ast \left(\C \cdot \partial_\eps \oplus \mf{h}, C^\ast_{red} ( f \g[[z_1,z_2, \eps]] )  \right)$ all have vanishing cohomology. 

Note that we can rewrite this complex as 
$$
C^\ast \left( \C \cdot \partial_\eps, C^\ast ( H^0 (X, \op{At}_{P,D}^0 )[\eps], C^\ast_{red} ( f \g[[z_1,z_2,\eps]] ) ) \right). 
$$
Vanishing of positive $\C^\times$-weights of this complex follows from a general fact.

\begin{lemma}
Let $\mf{a},\mf{b}$ be any two Lie algebras where $\mf{a}$ acts on $\mf{b}$.    Then, the positive $\C^\times$ weight spaces in the cohomology of 
$$C^\ast  \left ( \C \cdot \partial_\eps ,C^\ast(\mf{a}[\eps],  C^\ast_{red}(\mf{b}[\eps]) )\right)$$ all vanish. 
\label{lemma_vanishing}
\end{lemma}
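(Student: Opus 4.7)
The plan is to show that the positive $\C^\times$-weight part of the complex is acyclic via a spectral sequence whose $E_1$-page is computed by a Koszul-type argument. Write the total complex as $V \otimes \C[t]$, where $V = C^\ast(\mf{a}[\eps], C^\ast_{red}(\mf{b}[\eps]))$ and $t$ is a formal variable of cohomological degree $2$ and $\C^\times$-weight $+1$ dual to $\partial_\eps$; the differential decomposes as $\delta_V + t\cdot \partial_\eps$, where $\delta_V$ is the full CE differential on $V$ coming from the bracket on $\mf{a}[\eps] \ltimes \mf{b}[\eps]$ and $\partial_\eps$ acts on $V$ as the derivation induced by $\partial_\eps$ on $\mf{a}[\eps]\oplus\mf{b}[\eps]$. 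The polynomial generators of $V$ carry $\C^\times$-weights only in $\{0,-1\}$ (weight $0$ from $\mf{a}^\vee[-1]$ and $\mf{b}^\vee[-1]$, weight $-1$ from $(\eps\mf{a})^\vee[-1]$ and $(\eps\mf{b})^\vee[-1]$), so any element of positive total weight must involve positive powers of $t$.

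First I would filter $V\otimes\C[t]$ by polynomial degree in the generators of $V$ (not counting powers of $t$). The operator $\delta_V$ strictly raises polynomial degree, since as the dual of a Lie bracket it sends each generator to a bilinear expression, whereas $t\partial_\eps$ preserves polynomial degree, since $\partial_\eps$ sends a single $\mf{g}^\vee[-1]$ generator to a single $(\eps\mf{g})^\vee[-1]$ generator. On the associated graded the differential reduces to $t\partial_\eps$ alone, and the $E_1$-page is $H^\ast(V\otimes\C[t], t\partial_\eps)$.

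Next I would compute $E_1$ in positive weight. For $W > 0$, the weight-$W$ subcomplex of $(V\otimes\C[t], t\partial_\eps)$ is canonically isomorphic as a chain complex to $(V, \partial_\eps)$ via $v\otimes t^{W+j}\mapsto v$ for $v\in V_{-j}$. In generators, $\partial_\eps$ acts on $C^\ast(\mf{g}[\eps]) = \Sym(\mf{g}^\vee[-1])\otimes\Sym((\eps\mf{g})^\vee[-1])$ as the classical Koszul differential $\sum y^i\partial_{x^i}$, so it is a free resolution of $\C$ over the polynomial ring on the $(\eps\mf{g})^\vee[-1]$ generators, with cohomology $\C$ concentrated in polynomial degree $0$. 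The long exact sequence associated to $0 \to \C \to C^\ast(\mf{b}[\eps]) \to C^\ast_{red}(\mf{b}[\eps]) \to 0$ then shows that $(C^\ast_{red}(\mf{b}[\eps]), \partial_\eps)$ is acyclic, and K\"unneth gives $H^\ast(V, \partial_\eps) = H^\ast(C^\ast(\mf{a}[\eps])) \otimes H^\ast(C^\ast_{red}(\mf{b}[\eps])) = \C\otimes 0 = 0$. Hence $E_1$ vanishes in positive weight.

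Finally I would verify convergence. In fixed cohomological degree $D$ and weight $W > 0$, a monomial with $a,c$ odd degree-$1$ generators, $b,d$ even degree-$0$ generators, and power $k$ of $t$ satisfies $D = a + c + 2k$ and $W = k - (b+d)$; in particular the polynomial degree $a + b + c + d = D - 2W - (b+d)$ is bounded above by $D-2W$. The filtration is therefore bounded on each $(D, W)$ eigenspace with $W > 0$, so the spectral sequence converges strongly to the cohomology of the total complex. The main technical obstacle is the identification of $(C^\ast_{red}(\mf{b}[\eps]), \partial_\eps)$ as acyclic --- one has to pin down the precise Koszul-resolution structure and use the short exact sequence above to kill the lone surviving cohomology class on the $\mf b$-side --- after which the remaining steps are routine spectral-sequence bookkeeping.
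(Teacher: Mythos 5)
Your proof is correct, but it is organized differently from the paper's, so a comparison is worthwhile. The paper first uses the short exact sequence $0 \to C^\ast(\mf{a}[\eps], C^\ast_{red}(\mf{b}[\eps])) \to C^\ast_{red}(\mf{a}[\eps]\oplus\mf{b}[\eps]) \to C^\ast_{red}(\mf{a}[\eps]) \to 0$ to reduce the two-Lie-algebra statement to the single-Lie-algebra one, and then identifies the weight-$k$ part ($k>0$) of $C^\ast(\C\cdot\partial_\eps, C^\ast_{red}(\mf{g}[\eps]))$ \emph{with its full differential} with the reduced Chevalley--Eilenberg complex of the contractible dg Lie algebra $(\mf{g}[\eps],\partial_\eps)$, whose acyclicity is then asserted. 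You skip the short exact sequence and instead filter $V\otimes\C[t]$ by polynomial degree, so that only the Koszul piece $t\,\partial_\eps$ survives on the associated graded; your identification of the positive-weight subcomplex with $(V,\partial_\eps)$ (via dropping the $t$-power) is the same bookkeeping move the paper makes, but applied only at the associated-graded level, which is why you then need the Koszul/K\"unneth computation and the convergence check on each $(D,W)$-eigenspace -- both of which you carry out correctly (the bound $a+b+c+d\le D-2W$ does make the filtration finite in each bidegree). In effect your spectral sequence unpacks the paper's closing phrase ``this is, of course, zero'': acyclicity of reduced cochains of a contractible dg Lie algebra is itself proved by exactly such a filtration argument, and you have run that argument directly on the mixed complex without the preliminary reduction. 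What the paper's route buys is brevity and a conceptual statement (contractibility of $(\mf{g}[\eps],\partial_\eps)$ does all the work, with the full differential handled at once); what yours buys is a self-contained, elementary verification in which the only input is the acyclicity of the Koszul complex in positive polynomial degree, at the modest cost of the explicit convergence argument.
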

\begin{proof}
Observe that there is a short exact sequence 
$$
0 \to  C^\ast(\mf{a}[\eps],  C^\ast_{red}(\mf{b}[\eps]) )  \to C^\ast_{red} (\mf{a}[\eps] \oplus \mf{b}[\eps])  \to C^\ast_{red}(\mf{a}[\eps])\to 0.
$$
By applying $C^\ast(\C \cdot \partial_\eps, -)$ to the terms in this sequence, we see that it suffices to prove that, for any Lie algebra $\mf{a}$, the positive weight spaces of $C^\ast(\C\cdot \dpa{\eps}, C^\ast_{red}(\mf{a}[\eps]))$ vanish.  

For any weight $k \ge 0$, one can identify the weight $k$ part of the complex $C^\ast(\C\cdot \dpa{\eps}, C^\ast_{red}(\mf{a}[\eps]))$ with the Chevalley complex of the contractible dg Lie algebra  $\mathfrak{a}[\eps]$ with differential $\partial_\eps$.  This is, of course, zero. 
\end{proof}
\end{proof}
Now let us see why this result proves the existence of the deformed twisted theory.  

The twisted, deformed theory on $X$ is defined using a holomorphic vector field $V$ preserving the divisor $D$ and the meromorphic volume form $\omega$.    Let $P$ be a principal $G$-bundle on $X$, equipped with a lift of the divergence-free vector field $V \in H^0(X, \op{Vect}^0_{X,D})$ to an element $\nabla_V \in H^0(X, \op{At}^0_{P,D})$.   The choice of $\nabla_V$ is the choice of a holomorphic connection in the direction spanned by $V$, which is trivial along the divisor $D$. 

The Lie algebra describing the twisted and deformed theory is 
$$\L^V =\left(  \Omega^{0,\ast}(X, \g_P(-D)[\eps]), \dbar + \lambda \nabla_V \right)$$
where $\lambda$ is a coupling constant, which we can set to be $1$. 

We will show how, using the dictionary between symmetries and deformations, we can use theorem \ref{theorem_existence_symmetries} to produce a quantization of this deformed theory.   The point is that the quantum theory produced by \ref{theorem_existence_symmetries} (and the corresponding factorization algebra) lives in the world of $\C^\times$-equivariant $C^\ast(\mf{h})[[\hbar]]$-modules.   Recall that
$$
\mf{h}= H^0 ( \op{At}^0_{P,D} ) [\eps].
$$
Since $\nabla_V$ is an element of $H^0 ( \op{At}^0_P)$, $\eps \nabla_V$ is an element of $\mf{h}$.    Thus there's a map
$$
C^\ast(\mf{h} ) \to C^\ast ( \C \cdot (\eps \nabla_V) ) = \C[[\lambda]]. 
$$
By tensoring with $\C[[\lambda]]$ using this map, we get a $\C^\times$-equivariant family of theories over the ring $\C[[\lambda]]$.  Classically, this is the twisted deformed theory, because the Lie algebra $\L^V$ is obtained from $\L$ by adding $\lambda \eps \nabla_V$ to the differential.

Thus, we've shown how to construct the twisted deformed theory at the quantum level, with the coupling constant $\lambda$ treated as a formal parameter.  It remains to explain how to evaluate $\lambda$ to be $1$. 

The factorization algebra corresponding to this theory is a $\C^\times$-equivariant factorization algebra over $\C[[\lambda,\hbar]]$, where $\hbar$ has weight $-1$ and $\lambda$ has weight $1$.  Inverting $\lambda$ and taking $\C^\times$ invariants gives us a theory over $\C[[\lambda^{-1} \hbar]] \iso \C[[\hbar]]$. This has the effect of setting $\lambda = 1$ but keeping $\hbar$ to be a formal parameter, and so removing the redundancy in the parameters. 

\subsection{Holomorphically translation invariant theories on $\C^2$}
As a variant of this calculation, we will show that theory on $\C^2$ has some additional symmetries which means that the corresponding factorization algebra is ``holomorphically translation invariant'': that is, the operator product varies holomorphically with the location of the operators.  In this subsection the divisor $D$ is taken to be empty. 

On $\C^2$, the Lie algebra 
$$\L= \Omega^{0,\ast}(\C^2,\g[\eps]) = \cinfty(\C^2) [ \d \zbar, \d \br{w}, \eps] \otimes \g$$ is invariant under translation.  Thus, it has an action of the real group $\R^4$, and lift of this to an action of the complexified Lie algebra $\C^4$ with basis $\dpa{z}, \dpa{w}, \dpa{\zbar}, \dpa{\br{w}}$.    Further, we have an action of the odd Lie algebra $\C^2[1]$, spanned by $\dpa{\d \zbar}$ and $\dpa {\d \br{w}}$.   Note that $[\dbar, \dpa{\d \zbar}] = \dpa{\zbar}$ and similarly for $\br{w}$. 

A small generalization of the arguments of \cite{Cos11} (which we plan to include in \cite{CosGwi11}) show that the obstruction-deformation complex controlling holomorphically translation invariant quantizations is $\C \otimes_{\C[\dpa{z}, \dpa{w}]}^{\mbb{L}} C^\ast_{red}( \g[[z,w,\eps]])$.    The same argument as before allows us to conclude that the relevant summands of the obstruction-deformation complex have no cohomology.

\newcommand{\etalchar}[1]{$^{#1}$}
\def\cprime{$'$}

\end{document}